\documentclass[journal, 11pt]{IEEEtran}
\onecolumn
\usepackage{cite}
\usepackage{amsmath}
\interdisplaylinepenalty=2500

\usepackage{graphicx,amsfonts,amsthm, amssymb,bm,url,color,latexsym,mathtools}
\usepackage{subcaption}
\usepackage{bbm}
\usepackage{microtype}
\usepackage{algorithm}
\usepackage{algorithmicx}
\usepackage{algpseudocode} 
\usepackage{hyperref}
\hypersetup{
    colorlinks=true,%
    citecolor=blue,%
    filecolor=blue,%
    linkcolor=blue,%
    urlcolor=blue
}
\usepackage{verbatim}
\usepackage{framed}

\newtheorem{theorem}{Theorem}[section]
\newtheorem{lemma}[theorem]{Lemma}

\newtheorem{proposition}[theorem]{Proposition}

\usepackage{tikz}
\usepackage{fge}

\renewcommand{\mathbf}{\boldsymbol}

\newcommand{\mb}{\mathbf}
\newcommand{\mc}{\mathcal}

\newcommand{\bb}{\mathbb}
\newcommand{\magnitude}[1]{ \left| #1 \right| } 
\newcommand{\set}[1]{\left\{ #1 \right\}}

\newcommand{\reals}{\bb R}

\newcommand{\eps}{\varepsilon}
\newcommand{\R}{\reals}

\newcommand{\N}{\bb N}

\newcommand{\indicator}[1]{\mathbbm 1_{#1}}

\newcommand{ \Brac }[1]{\left\lbrace #1 \right\rbrace}
\newcommand{ \brac }[1]{\left[ #1 \right]}
\newcommand{ \paren }[1]{ \left( #1 \right) }


\DeclareMathOperator{\trace}{tr}
\DeclareMathOperator{\supp}{supp}

\DeclareMathOperator{\sign}{sign}
\DeclareMathOperator{\grad}{grad}
\DeclareMathOperator{\Hess}{Hess}
\DeclareMathOperator{\mini}{minimize}

\DeclareMathOperator{\st}{subject\; to}

\numberwithin{equation}{section}


\newcommand{\event}{\mc E}

\newcommand{\rconcave}{r_\fgecap}
\newcommand{\Lconcave}{L_\fgecap}
\newcommand{\rconvex}{r_\fgecup}

\newcommand{\Lconvex}{L_\fgecup}

\newcommand{\wh}{\widehat}
\newcommand{\wt}{\widetilde}
\newcommand{\ol}{\overline}

\newcommand{\betaconcave}{\beta_\fgecap}
\newcommand{\rI}{R_{\mathtt{I}}}
\newcommand{\rII}{R_{\mathtt{II}}}
\newcommand{\rIII}{R_{\mathtt{III}}}
\newcommand{\dI}{d_{\mathtt{I}}}
\newcommand{\dII}{d_{\mathtt{II}}}
\newcommand{\dIII}{d_{\mathtt{III}}}
\newcommand{\betagrad}{\beta_{\mathrm{grad}}}

\newcommand{\norm}[2]{\left\| #1 \right\|_{#2}}
\newcommand{\abs}[1]{\left| #1 \right|}
\newcommand{\innerprod}[2]{\left\langle #1,  #2 \right\rangle}
\newcommand{\prob}[1]{\bb P\left[ #1 \right]}
\newcommand{\expect}[1]{\bb E\left[ #1 \right]}

\newcommand{\js}[1]{#1}


\begin{document}
\title{Complete Dictionary Recovery over the Sphere \\ II: Recovery by Riemannian Trust-Region Method}
\author{Ju~Sun,~\IEEEmembership{Student Member,~IEEE,}
        Qing~Qu,~\IEEEmembership{Student Member,~IEEE,}
        and~John~Wright,~\IEEEmembership{Member,~IEEE}
\thanks{JS, QQ, and JW are all with Electrical Engineering, Columbia University, New York, NY 10027, USA. Email: \{js4038, qq2105, jw2966\}@columbia.edu. An extended abstract of the current work has been published in~\cite{sun2015complete_conf}. Proofs of some secondary results are contained in the combined technical report~\cite{sun2015complete_tr}.}
\thanks{Manuscript received xxx; revised xxx.}}

\markboth{IEEE Transaction on Information Theory,~Vol.~xx, No.~xx, xxxx~2016}%
{Sun \MakeLowercase{\textit{et al.}}: Complete Dictionary Recovery over the Sphere}



\maketitle

\begin{abstract}
We consider the problem of recovering a complete (i.e., square and invertible) matrix $\mb A_0$, from $\mb Y \in \R^{n \times p}$ with $\mb Y = \mb A_0 \mb X_0$, provided $\mb X_0$ is sufficiently sparse. This recovery problem is central to theoretical understanding of dictionary learning, which seeks a sparse representation for a collection of input signals and finds numerous applications in modern signal processing and machine learning. We give the first efficient algorithm that provably recovers $\mb A_0$ when $\mb X_0$ has $O\paren{n}$ nonzeros per column, under suitable probability model for $\mb X_0$.

Our algorithmic pipeline centers around solving a certain nonconvex optimization problem with a spherical constraint, and hence is naturally phrased in the language of manifold optimization. In a companion paper~\cite{sun2015complete_a}, we have showed that with high probability our nonconvex formulation has no ``spurious'' local minimizers and around any saddle point the objective function has a negative directional curvature. In this paper, we take advantage of the particular geometric structure, and describe a Riemannian trust region algorithm that provably converges to a local minimizer with from arbitrary initializations. Such minimizers give excellent approximations to rows of $\mb X_0$. The rows are then recovered by linear programming rounding and deflation. 
\end{abstract}

\begin{IEEEkeywords}
Dictionary learning, Nonconvex optimization, Spherical constraint, Escaping saddle points, Trust-region method, Manifold optimization, Function landscape, Second-order geometry, Inverse problems, Structured signals, Nonlinear approximation
\end{IEEEkeywords}

%
\IEEEpeerreviewmaketitle


\section{Introduction}
Recently, there is a surge of research studying nonconvex formulations and provable algorithms for a number of central problems in signal processing and machine learning, including, e.g., low-rank matrix completion/recovery~\cite{keshavan2010matrix, jain2013low, hardt2014understanding, hardt2014fast, netrapalli2014non, jain2014fast, sun2014guaranteed, zheng2015convergent, tu2015low, chen2015fast,wei2015guarantees,li2016recovery,gamarnik2016note,wei2016guarantees,zheng2016convergence,yi2016fast,jin2016provable,park2016finding,park2016provable,cherapanamjeri2016nearly,ge2016matrix,bhojanapalli2016global}, phase retreival~\cite{netrapalli2013phase, candes2015phase, chen2015solving, white2015local,sun2016geometric,zhang2016provable,zhang2016reshaped,wang2016solving,kolte2016phase,gao2016gauss,bendory2016non}, tensor recovery~\cite{jain2014provable, anandkumar2014guaranteed, anandkumar2014analyzing, anandkumar2015tensor,hopkins2015fast}, mixed regression~\cite{yi2013alternating, sedghi2014provable}, structured element pursuit~\cite{qu2014finding,hopkins2015fast}, blind deconvolution~\cite{lee2013near,lee2015blind,lee2015rip,cambareri2016non,li2016rapid}, noisy phase synchronization and community detection~\cite{boumal2016non,boumal2016nonconvex,bandeira2016low}, deep learning~\cite{kawaguchi2016deep,soudry2016no}, numerical linear algebra and optimization~\cite{jain2015computing, bhojanapalli2015dropping}. The research efforts are fruitful in producing more practical and scalable algorithms and even significantly better performance guarantees than known convex methods. 

In a companion paper~\cite{sun2015complete_a}, we set out to understand the surprising effectiveness of nonconvex heuristics on the dictionary learning (DL) problem. In particular, we have focused on the complete dictionary recovery (DR) setting: given $\mb Y = \mb A_0 \mb X_0$, with $\mb A_0 \in \R^{n \times n}$ complete (i.e., square and invertible), and $\mb X_0 \in \R^{n \times p}$ obeying an i.i.d. Bernoulli-Gaussian (BG) model with rate $\theta$ (i.e., $[X_0]_{ij} = \Omega_{ij} Z_{ij}$ with $\Omega_{ij} \sim \mathrm{Ber}(\theta)$ and $Z_{ij} \sim \mc N(0, 1)$), recover $\mb A_0$ and $\mb X_0$. In this setting, $\mathrm{row}(\mb Y) = \mathrm{row}(\mb X_0)$, where $\mathrm{row}(\cdot)$ denotes the row space. To first recover rows of $\mb X_0$, we have tried to find the sparsest vectors in $\mathrm{row}(\mb Y)$, and proposed solving the nonconvex formulation
\begin{align} \label{eq:main_obj}
\mini\; \quad f(\mb q; \widehat{\mb Y}) \doteq \frac{1}{p}\sum_{k=1}^p h_{\mu}( \mb q^* \widehat{\mb y}_k) \quad \st \quad \mb q \in \bb S^{n-1}, 
\end{align}
where $\wh{\mb Y}$ is a proxy of $\mb Y$ (i.e., after appropriate processing), $\wh{\mb y}_k$ is the $k$-th column of $\wh{\mb Y}$, and $h_{\mu}(z) \doteq \mu \log \cosh(z/\mu)$ is a (convex) smooth approximation to the absolute-value function. The spherical constraint renders the problem nonconvex. 

Despite the apparent nonconvexity, our prior analysis in~\cite{sun2015complete_a} has showed that all local minimizers of~\eqref{eq:main_obj} are qualitatively equally good, because each of them produces a close approximation to certain row of $\mb X_0$ (Corollary II.4 in~\cite{sun2015complete_a}). So the central issue is how to escape from saddle points. Fortunately, our previous results (Theorem II.3 in~\cite{sun2015complete_a}) imply that \js{all saddle points under consideration are \emph{ridable}, i.e., the associated Hessians have both strictly positive and strictly negative values (see the recapitulation in Section~\ref{sec:geo_reproduce}). Particularly, eigenvectors of the negative eigenvalues are direction of negative curvature, which intuitively serve as directions of local descent. }

\js{Second-order methods can naturally exploit the curvature information to escape from ridable saddle points}. To gain some intuition, consider an unconstrained optimization problem
\begin{align*}
\mini_{\mb x \in \R^n} \phi(\mb x). 
\end{align*} 
The second-order Taylor expansion of $\phi$ at a saddle point $\mb x_0$ is
\begin{align*}
\wh{\phi}(\mb \delta; \mb x_0) = \phi(\mb x_0) + \tfrac{1}{2} \mb \delta^* \nabla^2 \phi(\mb x_0) \mb \delta. 
\end{align*}
When $\mb \delta$ is chosen to align with an eigenvector of a negative eigenvalue $\lambda_{\mathrm{neg}}[\nabla^2 \phi(\mb x_0)] < 0$, it holds that 
\begin{align*}
\wh{\phi}(\mb \delta; \mb x_0) - \phi(\mb x_0) \le -\abs{\lambda_{\mathrm{neg}}[\nabla^2 \phi(\mb x)]} \norm{\mb \delta}{}^2. 
\end{align*}
Thus, minimizing $\wh{\phi}(\mb \delta; \mb x_0)$ returns a direction $\mb \delta_\star$ that tends to decrease the objective $\phi$, provided local approximation of $\wh{\phi}$ to $\phi$ is reasonably accurate. Based on this intuition, we derive a (second-order) Riemannian trust-region algorithm that exploits the second-order information to escape from saddle points and provably returns a local minimizer to~\eqref{eq:main_obj}, from arbitrary initializations. We provide rigorous guarantees for recovering a local minimizer in Section~\ref{sec:algorithm}. 

Obtaining a local minimizer only helps approximate one row of $\mb X_0$. To recover the row, we derive a simple linear programming rounding procedure that provably works. To recover all rows of $\mb X_0$, one repeats the above process based on a carefully designed deflation process. The whole algorithmic pipeline and the related recovery guarantees are provided in Section~\ref{sec:main_result}. Particularly, we show that when $p$ is reasonably large, with high probability (w.h.p.), our pipeline efficiently recovers $\mb A_0$ and $\mb X_0$, even when each column of $\mb X_0$ contains $O(n)$ nonzeros. 

\subsection{Prior Arts and Connections}
In Section II.E of the companion paper~\cite{sun2015complete_a}, we provide detailed comparisons of our results with prior theoretical results on DR; we conclude that this is the first algorithmic framework that guarantees efficient recovery of complete dictionaries when the coefficients have up to constant fraction of nonzeros. We also draw methodological connections to work on understanding nonconvex heuristics, and other nonconvex problems with similar geometric structures. Here we focus on drawing detailed connections to the optimization literature. 

Trust-region method (TRM) has a rich history dating back to 40's; see the monograph~\cite{conn2000trust} for accounts of the history and developments. \js{The main motivation for early developments was to address limitations of the classic Newton's method (see, e.g., Section 3 of~\cite{sorensen1982newton}). The limitations include the technical subtleties to establish local and global convergence results.  Moreover, when the Hessian is singular or indefinite, the movement direction is either not well-defined, or does not improve the objective function.} \cite{fletcher1971modified, gander1978linear, goldfarb1980curvilinear, mccormick1977modification, mukai1978second, more1979use} initialized the line of work that addresses the limitations. Particularly, \cite{sorensen1982newton, more1983computing} proposed using local second-order Taylor approximation as model function in the trust-region framework for unconstrained optimization.  They showed that under mild conditions, the trust-region iterate sequence has a limit point that is critical and has positive semidefinite Hessian; see also Section 6.5-6.6 of~\cite{conn2000trust}. Upon inspecting the relevant proofs, it seems not hard to strengthen the results to sequence convergence to local minimizers, under a ridable saddle condition as ours, for unconstrained optimization.  

Research activities to port theories and algorithms of optimization in Euclidean space to Riemannian manifolds are best summarized by three monographs: \cite{helmke1994optimization, udriste1994convex, absil2009}. \cite{edelman1998geometry} developed Newton and conjugate-gradient methods for the Stiefel manifolds, of which the sphere is a special case; \cite{absil2009} presents a complete set of first- and second-order Riemannian algorithms and convergence analyses; see also the excellent associated optimization software toolbox~\cite{boumal2014manopt}. Among these, trust-region method was first ported to the Riemannian setting in~\cite{absil2007trust}, with emphasis on efficient implementation which only approximately solves the trust-region subproblem according to the Cauchy point scheme. The Cauchy point definition adopted there was the usual form based on the gradient, not strong enough to ensure the algorithm escape from ridable saddle points even if the true Hessian is in use in local approximation. In comparison, in this work we assume that the trust-region subproblem is \emph{exactly} solved, such that ridable saddles (the only possible saddles for our problem) are properly skipped. By this, we obtain the strong guarantee that the iterate sequence converges to a local minimizer, in contrast to the weak global convergence (gradient sequence converging to zero) or local convergence (sequence converging to a local minimizer within a small radius) established in~\cite{absil2007trust}. To the best of our knowledge, our convergence result is first of its kind for a specific problem on sphere. \js{After our initial submission, \cite{boumal2016global} has recently established worst-case iteration complexity of Riemannian TRM to converge to second-order critical points (i.e., critical points with positive semidefinite Hessians), echoing the results in the Euclidean case~\cite{cartis2012complexity}. Their results are under mild Lipschitz-type assumptions and allow inexact subproblem solvers, and hence are very practical and general. However, on our particular problem, their result is considerably pessimistic, compared to our convergence result obtained from a specialized analysis. } 

Solving the trust-region subproblem exactly is expensive. Practically, often a reasonable approximate solution with controlled quality is adequate to guarantee convergence. In this regard, the truncated conjugate gradient (tCG) solver with a good initial search direction is commonly employed in practice (see, e.g., Section 7.5 in~\cite{conn2000trust}). To ensure ridable saddle points be properly escaped from, the eigenpoint idea (see, e.g., Section 6.6 of~\cite{conn2000trust}) is particularly relevant; see also Algorithm 3 and Lemma 10 in~\cite{boumal2016global}. 

\js{The benign function landscape we characterized in the first paper allows any reasonable iterative method that is capable of escaping from ridable saddles to find a local minimizer, with possibly different performance guarantees. The trust-region method we focus on here, and the curviliear search method~\cite{goldfarb1980curvilinear} are second-order methods that guarantee global optimization from arbitrary initializations. Typical first-order methods such as the vanilla gradient descent can only guarantee convergence to a critical point. Nonetheless, for our particular function, noisy/stochastic gradient method guarantees to find a local minimizer from an arbitrary initialization with high probability~\cite{ge2015escaping}. 
} 

\subsection{Notations, and Reproducible Research}
We use bold capital and small letters such as $\mb X$ and $\mb x$ to denote matrices and vectors, respectively. Small letters are reserved for scalars. Several specific mathematical objects we will frequently work with: $O_k$ for the orthogonal group of order $k$, $\bb S^{n-1}$ for the unit sphere in $\R^n$, $\bb B^n$ for the unit ball in $\R^n$, and $[m] \doteq \set{1, \dots, m}$ for positive integers $m$. We use $\paren{\cdot}^*$ for matrix transposition, causing no confusion as we will work entirely on the real field. We use superscript to index rows of a matrix, such as $\mb x^i$ for the $i$-th row of the matrix $\mb X$,  and subscript to index columns, such as $\mb x_j$. All vectors are defaulted to column vectors. So the $i$-th row of $\mb X$ as a row vector will be written as $\paren{\mb x^i}^*$. For norms, $\norm{\cdot}{}$ is the usual $\ell^2$ norm for a vector and the operator norm (i.e., $\ell^2 \to \ell^2$) for a matrix; all other norms will be indexed by subscript, for example the Frobenius norm $\norm{\cdot}{F}$ for matrices and the element-wise max-norm $\norm{\cdot}{\infty}$. We use $\mb x \sim \mc L$ to mean that the random variable $\mb x$ is distributed according to the law $\mc L$. Let $\mc N$ denote the Gaussian law. Then $\mb x \sim \mc N\paren{\mb 0, \mb I}$ means that $\mb x$ is a standard Gaussian vector. Similarly, we use $\mb x \sim_{i.i.d.} \mc L$ to mean elements of $\mb x$ are independently and identically distributed according to the law $\mc L$. So the fact $\mb x \sim \mc N\paren{\mb 0, \mb I}$ is equivalent to that $\mb x \sim_{i.i.d.} \mc N\paren{0, 1}$. One particular distribution of interest for this paper is the Bernoulli-Gaussian with rate $\theta$: $Z \sim B \cdot G$, with $G \sim \mc N\paren{0, 1}$ and $B \sim \mathrm{Ber}\paren{\theta}$. We also write this compactly as $Z \sim \mathrm{BG}\paren{\theta}$. We frequently use indexed $C$ and $c$ for numerical constants when stating and proving technical results. The scopes of such constants are local unless otherwise noted. We use standard notations for most other cases, with exceptions clarified locally. 

The codes to reproduce all the figures and experimental results are available online: 
\begin{quote}
\centering
\url{https://github.com/sunju/dl_focm} . 
\end{quote}

\section{Finding One Local Minimizer via the Riemannian Trust-Region Method} \label{sec:algorithm}
We are interested to seek a local minimizer of~\eqref{eq:main_obj}. The presence of saddle points have motivated us to develop a second-order Riemannian trust-region algorithm over the sphere; the existence of descent directions at nonoptimal points drives the trust-region iteration sequence towards one of the minimizers asymptotically. We will prove that under our modeling assumptions, this algorithm with an arbitrary initialization efficiently produces an accurate approximation\footnote{By ``accurate'' we mean one can achieve an arbitrary numerical accuracy $\eps > 0$ with a reasonable amount of time. Here the running time of the algorithm is on the order of $\log \log ( 1/\eps )$ in the target accuracy $\eps$, and polynomial in other problem parameters. } to one of the minimizers. Throughout the exposition, basic knowledge of Riemannian geometry is assumed. We will try to keep the technical requirement minimal possible; the reader can consult the excellent monograph~\cite{absil2009} for relevant background and details. 

\subsection{Some Basic Facts about the Sphere and $f$} \label{sec:review_facts_sphere}
For any point $\mb q \in \bb S^{n-1}$, the tangent space $T_{\mb q} \bb S^{n -1}$ and the orthoprojector $\mc P_{T_{\mb q} \bb S^{n -1}}$ onto $T_{\mb q} \bb S^{n -1}$ are given by
\begin{align*}
	T_{\mb q} \bb S^{n -1 } &= \set{ \mb \delta\in \bb R^n:  \mb q^* \mb \delta = 0 }, \\
	\mc P_{T_{\mb q} \bb S^{n -1}} &= \mb I - \mb q \mb q^*  = \mb U \mb U^*,
\end{align*}
where $\mb U \in \bb R^{n\times (n-1)}$ is an arbitrary orthonormal basis for $T_{\mb q}\bb S^{n-1}$ (note that the orthoprojector is independent of the basis $\mb U$ we choose). \js{Consider any $\mb \delta \in T_{\mb q} \bb S^{n-1}$. The map 
\begin{align*}
\gamma(t): t \mapsto \mb q \cos\paren{t \norm{\mb \delta}{}} + \frac{\mb \delta}{\norm{\mb \delta}{}} \sin \paren{t \norm{\mb \delta}{}}
\end{align*}
defines a smooth curve on the sphere that satisfies $\gamma(0) = \mb q$ and $\dot{\gamma}(0) = \mb \delta$. Geometrically, $\gamma(t)$ is a segment of the great circle that passes $\mb q$ and has $\mb \delta$ as its tangent vector at $\mb q$. The exponential map for $\mb \delta$ is defined as 
\begin{align*}
	\exp_{\mb q}(\mb \delta) \doteq \gamma(1) = \mb q \cos\norm{\mb \delta}{} + \frac{\mb \delta}{\norm{\mb \delta}{}} \sin\norm{\mb \delta}{}.
\end{align*}
It is a canonical way of pulling $\mb \delta$ to the sphere. 
\begin{figure}[!htbp]
\centering
\includegraphics[width=0.3\linewidth]{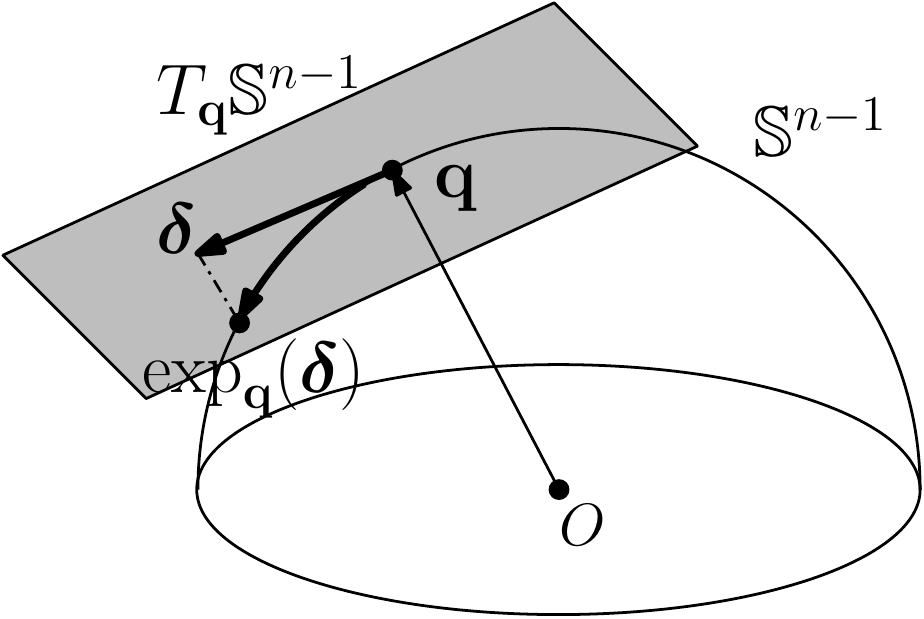}
\caption{Illustrations of the tangent space $T_{\mb q}\bb S^{n-1}$ and exponential map $\exp_{\mb q}\paren{\mb \delta}$ defined on the sphere $\bb S^{n-1}$.} \label{fig:exp-map}
\end{figure}

In this paper we are interested in the restriction of $f$ to the unit sphere $\bb S^{n-1}$. For the sake of performing optimization, we need local approximations of $f$. Instead of directly approximating the function in $\bb R^n$, we form quadratic approximations of $f$ in the tangent spaces of $\bb S^{n-1}$. We consider the smooth function $f \circ \exp_{\mb q}(\mb \delta): T_{\mb q} \bb S^{n-1} \mapsto \R$, where $\circ$ is the usual function composition operator. An applications of vector space Taylor's theorem gives 
\begin{align*}
f \circ \exp_{\mb q}(\mb \delta) \approx f(\mb q; \wh{\mb Y}) + \innerprod{ \nabla f(\mb q; \widehat{\mb Y}) }{\mb \delta } + \frac{1}{2} \mb \delta^* \left( \nabla^2 f( \mb q; \widehat{\mb Y}) - \innerprod{ \nabla f(\mb q; \widehat{\mb Y}) }{ \mb q } \mb I \right) \mb \delta 
\end{align*}
when $\norm{\mb \delta}{}$ is small. Thus, we form a quadratic approximation $\wh{f}(\mb \delta; \mb q): T_{\mb q} \bb S^{n-1} \mapsto \R$ as 
\begin{align} \label{eqn:f-appx}
\widehat{f}(\mb \delta; \mb q, \widehat{\mb Y}) \;\doteq\; f(\mb q; \wh{\mb Y}) + \innerprod{ \underbracket{\nabla f(\mb q; \widehat{\mb Y})}_{} }{\mb \delta } + \frac{1}{2} \mb \delta^* \left( \underbracket{\nabla^2 f( \mb q; \widehat{\mb Y}) - \innerprod{ \nabla f(\mb q; \widehat{\mb Y}) }{ \mb q } \mb I}_{} \right) \mb \delta. 
\end{align}
}
Here $\nabla f(\mb q)$ and $\nabla^2 f(\mb q)$ denote the usual (Euclidean) gradient and Hessian of $f$ w.r.t. $\mb q$ in $\R^n$. For our specific $f$ defined in~\eqref{eq:main_obj}, it is easy to check that
\begin{align}
\nabla f(\mb q; \widehat{\mb Y}) & = \frac{1}{p}\sum_{k=1}^p \tanh\paren{\frac{\mb q^*\widehat{\mb y}_k}{\mu}} \widehat{\mb y}_k, \label{eq:fq_grad}\\
\nabla^2 f(\mb q; \widehat{\mb Y}) & = \frac{1}{p} \sum_{k=1}^p \frac{1}{\mu}\brac{1-\tanh^2\paren{\frac{\mb q^* \widehat{\mb y}_k}{\mu}}} \widehat{\mb y}_k \widehat{\mb y}^*_k. \label{eq:fq_hess}
\end{align}
The quadratic approximation also naturally gives rise to the Riemannian gradient and Riemannian Hessian defined on $T_{\mb q} \bb S^{n-1}$ as 
\begin{align} 
\grad f (\mb q; \wh{\mb Y}) & = \mc P_{T_{\mb q} \bb S^{n-1}} \nabla f(\mb q; \wh{\mb Y}), \label{eq:fq_rie_grad} \\
\Hess f (\mb q; \wh{\mb Y}) &= \mc P_{T_{\mb q} \bb S^{n-1}} \left( \nabla^2 f(\mb q; \wh{\mb Y}) - \innerprod{\nabla f(\mb q; \wh{\mb Y}) }{\mb q} \mb I \right) \mc P_{T_{\mb q} \bb S^{n-1}}. \label{eq:fq_rie_hess}
 \end{align}
Thus, the above quadratic approximation can be rewritten compactly as 
\begin{align*}
	\widehat{f}\paren{\mb \delta; \mb q, \widehat{\mb Y}} = f(\mb q; \wh{\mb Y}) + \innerprod{\mb \delta}{\grad f(\mb q; \wh{\mb Y})}+ \frac{1}{2} \mb \delta^* \Hess f(\mb q; \wh{\mb Y})\mb \delta,\qquad \forall~\mb \delta \in T_{\mb q} \bb S^{n-1}.
\end{align*}
The first order necessary condition for {\em unconstrained} minimization of function $\widehat{f}$ over $T_{\mb q} \bb S^{n-1}$ is
\begin{align}\label{eqn:ts-optimal-solution-1}
	\grad f(\mb q; \wh{\mb Y}) + \Hess f(\mb q; \wh{\mb Y}) \mb \delta_\star = \mb 0. 
\end{align}
If $\Hess f(\mb q; \wh{\mb Y})$ is positive semidefinite and has ``full rank'' $n-1$ (hence ``nondegenerate"\footnote{Note that the $n \times n$ matrix $\Hess f(\mb q; \wh{\mb Y})$ has rank at most $n-1$, as the nonzero $\mb q$ obviously is in its null space. When  $\Hess f(\mb q; \wh{\mb Y})$ has rank $n-1$, it has no null direction in the tangent space. Thus, in this case it acts on the tangent space like a full-rank matrix. }),  
the unique solution $\mb \delta_\star$ is
\begin{align*}
	\mb \delta_\star = -\mb U \paren{\mb U^*\brac{\Hess f(\mb q; \wh{\mb Y})} \mb U }^{-1} \mb U^* \grad f(\mb q), 
\end{align*}
which is also invariant to the choice of basis $\mb U$. Given a tangent vector $\mb \delta \in T_{\mb q}\bb S^{n-1}$, let $\gamma(t) \doteq \exp_{\mb q}(t \mb \delta)$ denote a geodesic curve on $\bb S^{n-1}$. Following the notation of \cite{absil2009}, let 
 \begin{align*}
 	\mc P_{\gamma}^{\tau \leftarrow 0} : T_{\mb q} \bb S^{n-1} \to T_{\gamma(\tau)} \bb S^{n-1}
 \end{align*}
denotes the parallel translation operator, which translates the tangent vector $\mb \delta$ at $\mb q = \gamma(0)$ to a tangent vector at $\gamma(\tau)$, in a ``parallel'' manner. In the sequel, we identify $\mc P_{\gamma}^{\tau \leftarrow 0}$ with the following $n \times n$ matrix, whose restriction to $T_{\mb q} \bb S^{n-1}$ is the parallel translation operator (the detailed derivation can be found in Chapter 8.1 of \cite{absil2009}):
 \begin{eqnarray} \label{eq:alg_tsp_op}
 \mc P_{\gamma}^{\tau \leftarrow 0} &=& \paren{ \mb I - \frac{\mb \delta \mb \delta^*}{\norm{\mb \delta}{}^2} } - \mb q \sin\left( \tau \norm{\mb \delta}{} \right) \frac{\mb \delta^*}{\norm{\mb \delta}{}} + \frac{\mb \delta}{\norm{\mb \delta}{}} \cos\left( \tau \norm{\mb \delta }{} \right) \frac{\mb \delta^* }{\norm{\mb \delta}{}} \nonumber \\
 &=& \mb I + \left( \cos( \tau \norm{\mb \delta}{} ) - 1 \right) \frac{\mb \delta \mb \delta^*}{\norm{\mb \delta }{}^2} -  \sin\left( \tau \norm{\mb \delta }{} \right) \frac{\mb q \mb \delta^*}{\norm{\mb \delta}{}}. 
 \end{eqnarray}
Similarly, following the notation of \cite{absil2009}, we denote the inverse of this matrix by $\mc P_{\gamma}^{0 \leftarrow \tau}$, where its restriction to $T_{\gamma(\tau)} \bb S^{n-1}$ is the inverse of the parallel translation operator $\mc P_{\gamma}^{\tau \leftarrow 0}$.

\subsection{The Geometric Results from~\cite{sun2015complete_a}} \label{sec:geo_reproduce}
\begin{figure}[t]
\centering
\includegraphics[width=0.3\textwidth]{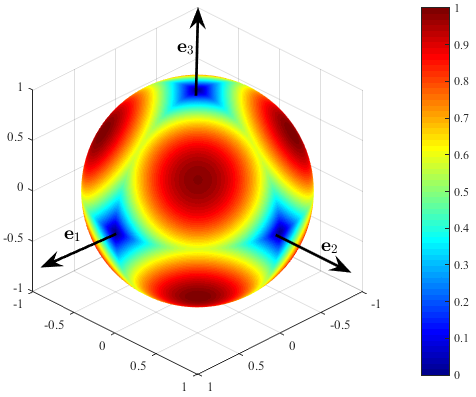}
\hspace{0.1in}
\includegraphics[width=0.3\textwidth]{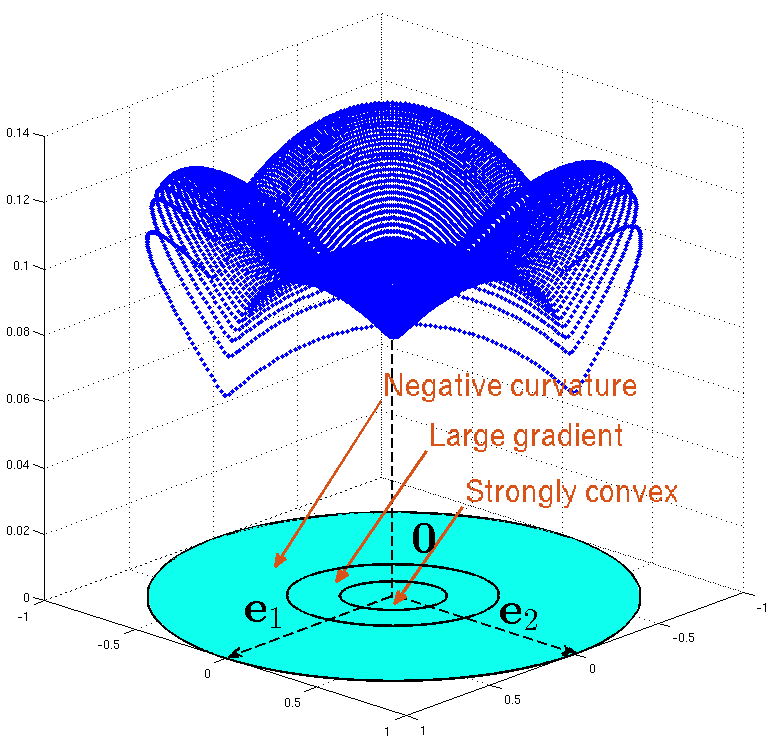} 
\caption{\textbf{Why is dictionary learning over $\bb S^{n-1}$ tractable?} \js{Assume the target dictionary $\mb A_0 = \mb I$. \textbf{Left:} Large sample objective function $\bb E_{\mb X_0}\brac{f\paren{\mb q}}$. The only local minimizers are the standard basis vectors $\mb e_i$'s and their negatives. \textbf{Right:} A visualization of the function as a height above the equatorial section $\mb e_3^\perp$, i.e., $\mathrm{span}\{\mb e_1, \mb e_2\} \cap \bb B^3$. The derived function is obtained by assigning values of points on the upper hemisphere to their corresponding projections on the equatorial section $\mb e_3^\perp$. The minimizers for the derived function are $\mb 0, \pm \mb e_1, \pm \mb e_2$. Around $\mb 0$ in $\mb e_3^\perp$, the function exhibits a small region of strong convexity, a region of large gradient, and finally a region in which the direction away from $\mb 0$ is a direction of negative curvature.}}
\label{fig:large-sample-sphere}
\end{figure}
We reproduce the main geometric theorems from~\cite{sun2015complete_a} here for the sake of completeness. To characterize the function landscape of $f\paren{\mb q; \mb X_0}$ over $\bb S^{n-1}$, we mostly work with the function 
\begin{align}\label{eqn:function-g}
g\paren{\mb w; \mb X_0} \doteq f\paren{\mb q\paren{\mb w}; \mb X_0} = \frac{1}{p} \sum_{k=1}^p h_{\mu}\paren{\mb q\paren{\mb w}^* \paren{\mb x_0}_k}, 
\end{align}
induced by the reparametrization
\begin{align}
\mb q\paren{\mb w} = \paren{\mb w, \sqrt{1-\norm{\mb w}{}^2} },  \quad \mb w \in \bb B^{n-1}. 
\end{align}
Geometrically, this corresponds to projection of the function $f$ above the equatorial section $\mb e_n^\perp$ onto $\mb e_n^\perp$ (see Fig.~\ref{fig:large-sample-sphere} (right) for illustration). In particular, we focus our attention to the smaller set of the ball: 
\begin{align}
\Gamma = \set{\mb w: \norm{\mb w}{} < \sqrt{\frac{4n-1}{4n}}} \subsetneq \bb B^{n-1},  
\end{align}
because $\mb q\paren{\Gamma}$ contains all points $\mb q \in \bb S^{n-1}$ with $n \in \mathop{\arg\max}_{i \in \pm [n]} \mb q^* \mb e_i$. We can similarly characterize other parts of $f$ on $\bb S^{n-1}$ using projection onto other equatorial sections.

\begin{theorem}[High-dimensional landscape - orthogonal dictionary]\label{thm:geometry_orth}
Suppose $\mb A_0 = \mb I$ and hence $\mb Y = \mb A_0 \mb X_0 = \mb X_0$. There exist positive constants $c_\star$ and $C$, such that for any $\theta \in (0,1/2)$ and $\mu < c_a\min\Brac{\theta n^{-1}, n^{-5/4}}$, whenever 
\begin{align}
p \ge \frac{C}{\mu^2 \theta^2} n^3 \log \frac{n}{\mu \theta},
\end{align}
the following hold simultaneously with probability at least $1 - c_b p^{-6}$: 
\begin{align}
\nabla^2 g(\mb w; \mb X_0) &\succeq \frac{c_\star \theta}{\mu} \mb I \quad &\forall \, \mb w \quad \text{s.t.}& \quad \norm{\mb w}{} \le \frac{\mu}{4\sqrt{2}},  \label{eqn:hess-zero-uni-orth} \\
\frac{\mb w^* \nabla g(\mb w; \mb X_0)}{\norm{\mb w}{}} &\ge c_\star \theta \quad &\forall \, \mb w \quad \text{s.t.}& \quad \frac{\mu}{4\sqrt{2}} \le \norm{\mb w}{} \le \frac{1}{20 \sqrt{5}} \label{eqn:grad-uni-orth} \\
\frac{\mb w^* \nabla^2 g(\mb w; \mb X_0) \mb w}{\norm{\mb w}{}^2} &\le - c_\star \theta \quad &\forall \, \mb w \quad \text{s.t.}& \quad \frac{1}{20 \sqrt{5}} \le \norm{\mb w}{} \le \sqrt{\frac{4n-1}{4n}},   \label{eqn:curvature-uni-orth}
\end{align}
{\em and} the function $g(\mb w; \mb X_0)$ has exactly one local minimizer $\mb w_\star$ over the open set $\Gamma \doteq \set{ \mb w: \norm{\mb w}{} < \sqrt{ \tfrac{4n-1}{4n} } }$, which satisfies
\begin{equation}
\norm{\mb w_\star - \mb 0 }{} \;\le\; \min\Brac{\frac{c_c\mu}{\theta} \sqrt{\frac{n \log p}{p}}, \frac{\mu}{16}}.    
\end{equation}
Here $c_a$ through $c_c$ are all positive constants. 
\end{theorem}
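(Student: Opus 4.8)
The plan is a two-stage argument. First, establish the three regional inequalities and the uniqueness statement for the \emph{population} objective $\ol g(\mb w) \doteq \expect{g(\mb w; \mb X_0)}$ with a constant factor of slack; then transfer them to the empirical $g(\mb w; \mb X_0)$ via a concentration bound that is \emph{uniform} over $\Gamma$. Since $\mb A_0 = \mb I$, the columns $\paren{\mb x_0}_k$ are i.i.d.\ $\mathrm{BG}\paren{\theta}$; writing $\mb q = \mb q(\mb w) = \paren{\mb w, \sqrt{1 - \norm{\mb w}{}^2}}$ and using $h_\mu' = \tanh(\cdot/\mu)$, $h_\mu'' = \mu^{-1}(1 - \tanh^2(\cdot/\mu))$, the population Euclidean gradient and Hessian of $f$ at $\mb q$ are $\expect{\tanh(\mb q^* \mb x/\mu)\,\mb x}$ and $\mu^{-1}\expect{(1 - \tanh^2(\mb q^* \mb x/\mu))\,\mb x \mb x^*}$ for $\mb x \sim \mathrm{BG}\paren{\theta}$, and those of $\ol g$ follow by the chain rule through $\mb q(\mb w)$. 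I would evaluate these by conditioning on the Bernoulli support of $\mb x$, reducing each conditional expectation to a Gaussian integral; since $\mu$ is taken tiny ($\mu < c_a \min\Brac{\theta n^{-1}, n^{-5/4}}$), $h_\mu$ is uniformly close to $\abs{\cdot}$, so the leading terms are closed-form Gaussian moments of $\sign(\cdot)$ and of its smoothed derivative, with an explicit $\mu$-dependent remainder tracked throughout.

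The heart of the population analysis is the regional dichotomy, organized by the scale of $\norm{\mb w}{}$. Near the origin ($\norm{\mb w}{} \le \mu/(4\sqrt 2)$) I would Taylor-expand in $\mb w$: $\mb q^* \mb x$ is dominated by $\sqrt{1-\norm{\mb w}{}^2}\,x_n$, so $\nabla^2 \ol g(\mb w) \approx \theta\,\expect{h_\mu''(x_n)}\,\mb I_{n-1}$ up to lower-order terms (the prefactor $\theta$ coming from $\expect{x_i^2}=\theta$, $i<n$, and the neglected $-\innerprod{\nabla f(\mb q)}{\mb q}\mb I$ piece being only of order $\theta$), and since $\expect{h_\mu''(x_n)}\ge (1-\theta)/\mu$ is of order $1/\mu$ this yields \eqref{eqn:hess-zero-uni-orth}. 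In the intermediate annulus I would lower-bound $\innerprod{\mb w}{\nabla \ol g(\mb w)}/\norm{\mb w}{}$ directly: once $\norm{\mb w}{}$ is bounded below, sparsity of $\mb x$ makes $\tanh(\mb q^* \mb x/\mu)\approx\sign(\mb q^*\mb x)$ and the surviving expectation is dominated by the $\mb e_n$-aligned part of the signal, which is positive and of order $\theta$, giving \eqref{eqn:grad-uni-orth}. In the outer region I would show the radial curvature $\mb w^* \nabla^2 \ol g(\mb w)\,\mb w/\norm{\mb w}{}^2 \le -c\,\theta$: here $\mb q$ is bounded away from $\mb e_n$, the $h_\mu''$ weight suppresses the contribution of the dominant coordinate of $\mb q^*\mb x$, and the $-\innerprod{\nabla f(\mb q)}{\mb q}$ term built into the derived Hessian (cf.\ \eqref{eq:fq_rie_hess}) dominates. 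This last estimate, which must hold uniformly up to $\norm{\mb w}{} = \sqrt{(4n-1)/(4n)}$ and requires disentangling the joint law of a few coordinates of $\mb x$ against the geometry of $\mb q(\mb w)$ and of the radial tangent direction, I expect to be the main obstacle.

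For the concentration step I would show that for $p$ as in the statement, with probability at least $1 - c_b p^{-6}$ both $\sup_{\mb w \in \Gamma}\norm{\nabla g(\mb w;\mb X_0) - \nabla \ol g(\mb w)}{}$ and $\sup_{\mb w \in \Gamma}\norm{\nabla^2 g(\mb w;\mb X_0) - \nabla^2 \ol g(\mb w)}{}$ fall below half the corresponding population margins ($\sim\theta$ for the gradient, $\sim\theta$ for the radial curvature, $\sim\theta/\mu$ for the Hessian near the origin), which suffices to carry the three population inequalities and the uniqueness onto $g(\cdot;\mb X_0)$. The recipe is standard: (i) truncate, using $\norm{\mb X_0}{\infty} \lesssim \sqrt{\log p}$ and $\max_k \norm{\paren{\mb x_0}_k}{} \lesssim \sqrt{n \log p}$ w.h.p., so every summand in \eqref{eq:fq_grad}--\eqref{eq:fq_hess} is bounded; (ii) apply a vector / matrix Bernstein inequality pointwise in $\mb w$; (iii) union-bound over an $\eps$-net of $\Gamma$ of cardinality $e^{O(n\log(1/\eps))}$; (iv) pass from the net to all of $\Gamma$ using Lipschitz continuity of $\nabla g$ and $\nabla^2 g$ in $\mb w$ on the truncated event (Lipschitz constant polynomial in $n,1/\mu$, so $\log(1/\eps)\sim\log(n/\mu\theta)$). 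Matching the Bernstein deviation---governed by the truncated summand magnitude ($\lesssim\mu^{-1}\norm{\paren{\mb x_0}_k}{}^2$ for the Hessian) and the net cardinality $e^{O(n\log(n/\mu\theta))}$---against the population margins is what produces the stated $\mu^{-2}\theta^{-2}n^3\log(n/\mu\theta)$ requirement on $p$.

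Finally I would assemble the landscape on this event. Strong convexity on $\mc B \doteq \set{\mb w : \norm{\mb w}{} \le \mu/(4\sqrt 2)}$ allows at most one critical point of $g$ in $\mc B$, necessarily a local minimizer; the strictly positive radial gradient on $\partial \mc B$ (from \eqref{eqn:grad-uni-orth} and continuity) forces $g$ restricted to $\ol{\mc B}$ to attain its minimum in the interior, so such a critical point exists. The annulus of positive radial gradient contains no critical point, and at any critical point in the outer region the radial direction has strictly negative curvature, so it is not a local minimizer; hence $g(\cdot;\mb X_0)$ has exactly one local minimizer $\mb w_\star$ in $\Gamma$ and $\mb w_\star\in\mc B$. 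For the quantitative bound, $\mb w_\star$ is interior so $\nabla g(\mb w_\star;\mb X_0)=\mb 0$, and $\tfrac{c_\star\theta}{\mu}$-strong convexity on $\mc B$ gives $\norm{\mb w_\star}{}\le\tfrac{\mu}{c_\star\theta}\norm{\nabla g(\mb 0;\mb X_0)}{}$; since $\nabla\ol g(\mb 0)=\mb 0$ by the coordinatewise sign symmetry of $\mb x$, pointwise concentration yields $\norm{\nabla g(\mb 0;\mb X_0)}{}\lesssim\sqrt{n\log p/p}$, hence $\norm{\mb w_\star}{}\le\tfrac{c_c\mu}{\theta}\sqrt{n\log p/p}$, which in the stated range of $(\mu,\theta,p)$ is also at most $\mu/16$.
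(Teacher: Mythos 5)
This theorem is not proved in the present paper: it is reproduced verbatim from the companion paper \cite{sun2015complete_a} (Section~\ref{sec:geo_reproduce} states this explicitly), so there is no in-paper proof to compare against. Your outline — population landscape via conditioning on the Bernoulli supports with the three-scale regional dichotomy, uniform concentration by truncation, Bernstein, an $\eps$-net over $\Gamma$, and Lipschitz extension, then assembly of uniqueness from strong convexity near $\mb 0$ plus the boundary gradient bound, with $\norm{\mb w_\star}{}\le \tfrac{\mu}{c_\star\theta}\norm{\nabla g(\mb 0;\mb X_0)}{}$ — is essentially the strategy of that companion paper, with the genuinely hard estimates (the intermediate-annulus gradient bound and the outer-region negative radial curvature, which you correctly flag as the main obstacle) left at the level of plausible heuristics rather than carried out.
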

Recall that the reason we just need to characterize the geometry for the case $\mb A_0 = \mb I$ is that for other orthogonal $\mb A_0$, the function landscape is simply a rotated version of that of $\mb A_0 = \mb I$. 
\begin{theorem}[High-dimensional landscape - complete dictionary]\label{thm:geometry_comp}
Suppose $\mb A_0$ is complete with its condition number $\kappa\paren{\mb A_0}$. There exist positive constants $c_\star$ (particularly, the same constant as in Theorem~\ref{thm:geometry_orth}) and $C$, such that for any $\theta \in (0,1/2)$ and $\mu < c_a\min\Brac{\theta n^{-1}, n^{-5/4}}$, when 
\begin{align}
p \ge \frac{C}{c_\star^2 \theta^2} \max\set{\frac{n^4}{\mu^4}, \frac{n^5}{\mu^2}} \kappa^8\paren{\mb A_0} \log^4\paren{\frac{\kappa\paren{\mb A_0} n}{\mu \theta}}
\end{align} 
and $\overline{\mb Y} \doteq \sqrt{p\theta}\paren{\mb Y \mb Y^*}^{-1/2} \mb Y$, $\mb U \mb \Sigma \mb V^* = \mathtt{SVD}\paren{\mb A_0}$, the following hold simultaneously with probability at least $1 - c_b p^{-6}$: 
\begin{align}
\nabla^2 g(\mb w; \mb V \mb U^* \overline{\mb Y}) &\succeq \frac{c_\star \theta}{2\mu} \mb I \quad &\forall \, \mb w \quad \text{s.t.}& \quad \norm{\mb w}{} \le \frac{\mu}{4\sqrt{2}},  \label{eqn:hess-zero-uni-comp} \\
\frac{\mb w^* \nabla g(\mb w; \mb V \mb U^* \overline{\mb Y})}{\norm{\mb w}{}} &\ge \frac{1}{2}c_\star \theta \quad &\forall \, \mb w \quad \text{s.t.}& \quad \frac{\mu}{4\sqrt{2}} \le \norm{\mb w}{} \le \frac{1}{20 \sqrt{5}} \label{eqn:grad-uni-comp} \\
\frac{\mb w^* \nabla^2 g(\mb w; \mb V \mb U^* \overline{\mb Y}) \mb w}{\norm{\mb w}{}^2} &\le -\frac{1}{2} c_\star \theta \quad &\forall \, \mb w \quad \text{s.t.}& \quad \frac{1}{20 \sqrt{5}} \le \norm{\mb w}{} \le \sqrt{\frac{4n-1}{4n}},   \label{eqn:curvature-uni-comp}
\end{align}
{\em and} the function $g(\mb w; \mb V \mb U^* \overline{\mb Y})$ has exactly one local minimizer $\mb w_\star$ over the open set $\Gamma \doteq \set{ \mb w: \norm{\mb w}{} < \sqrt{ \tfrac{4n-1}{4n} } }$, which satisfies
\begin{equation}
\norm{\mb w_\star - \mb 0 }{} \;\le\; \mu /7.    
\end{equation}
Here $c_a, c_b$ are both positive constants.
\end{theorem}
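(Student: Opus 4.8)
\emph{Proof proposal.} The plan is to reduce Theorem~\ref{thm:geometry_comp} to the orthogonal case, Theorem~\ref{thm:geometry_orth}, by showing that the preconditioned data matrix $\mb V \mb U^* \overline{\mb Y}$ is, with high probability, a minuscule perturbation of the Bernoulli--Gaussian matrix $\mb X_0$, so that the entire function landscape survives with the constants merely halved. Since $\overline{\mb Y} = \sqrt{p\theta}\,(\mb Y \mb Y^*)^{-1/2}\mb A_0 \mb X_0$ and $\mb A_0 = \mb U \mb \Sigma \mb V^*$, we may write $\mb V \mb U^* \overline{\mb Y} = \mb R\, \mb X_0$ where $\mb R \doteq \sqrt{p\theta}\,\mb V \mb U^* (\mb Y \mb Y^*)^{-1/2} \mb U \mb \Sigma \mb V^*$; if $\mb Y \mb Y^*$ were exactly $p\theta\,\mb A_0 \mb A_0^* = p\theta\,\mb U \mb \Sigma^2 \mb U^*$ then $\mb R = \mb I$. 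Thus the whole argument reduces to (i) a quantitative bound on $\norm{\mb R - \mb I}{}$, and (ii) converting it into a bound, \emph{uniform} over $\mb w \in \Gamma$, on how much $\nabla g(\mb w;\cdot)$ and $\nabla^2 g(\mb w;\cdot)$ change when $\mb X_0$ is replaced by $\mb R \mb X_0$.

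For (i): the columns of $\mb X_0$ are i.i.d.\ with covariance $\theta \mb I$ and well-behaved (sub-exponential) norms, so a standard matrix concentration estimate (matrix Bernstein, or an $\varepsilon$-net over $\bb S^{n-1}$, as already used in~\cite{sun2015complete_a}) gives $\norm{\tfrac1p \mb X_0 \mb X_0^* - \theta \mb I}{} \le C\sqrt{n \log p / p}$ on an event of probability at least $1 - cp^{-6}$; consequently $\norm{\mb Y \mb Y^* - p\theta\,\mb A_0 \mb A_0^*}{} \le C p\theta\,\norm{\mb A_0}{}^2 \sqrt{n\log p/p}$, and feeding this into a perturbation bound for the map $\mb M \mapsto \mb M^{-1/2}$ on positive definite matrices (together with $\lambda_{\min}(\mb Y\mb Y^*) \gtrsim p\theta\,\sigma_{\min}^2(\mb A_0)$) yields $\norm{\mb R - \mb I}{} \le C\,\kappa^{c_0}(\mb A_0)\sqrt{n\log p/p}$. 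For (ii): differentiating $g(\mb w;\mb M) = \tfrac1p\sum_k h_\mu(\mb q(\mb w)^*\mb m_k)$ through the chain rule on $\mb q(\mb w)=(\mb w,\sqrt{1-\norm{\mb w}{}^2})$ — whose first and second $\mb w$-derivatives are $O(\sqrt n)$ and $O(n)$ on $\Gamma$, since $\sqrt{1-\norm{\mb w}{}^2}\ge 1/(2\sqrt n)$ there — and invoking $\abs{h_\mu'}\le 1$, $0\le h_\mu''\le 1/\mu$ and the Lipschitz constants $1/\mu$ of $h_\mu'$ and $O(1/\mu^2)$ of $h_\mu''$, one obtains
\[
\sup_{\mb w\in\Gamma}\norm{\nabla g(\mb w;\mb R\mb X_0)-\nabla g(\mb w;\mb X_0)}{}\;\le\;\frac{C\,\mathrm{poly}(n)}{\mu}\,\norm{\mb R-\mb I}{}\cdot\Big(\tfrac1p\norm{\mb X_0}{F}^2+\max_k\norm{(\mb x_0)_k}{}^2\Big),
\]
and an analogous estimate for the Hessian difference carrying an extra $1/\mu$. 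Since $\tfrac1p\norm{\mb X_0}{F}^2\le Cn\theta$ and $\max_k\norm{(\mb x_0)_k}{}\le C\sqrt{n\log p}$ w.h.p., the lower bound on $p$ is calibrated so that these suprema are $\le\tfrac12 c_\star\theta$ for the gradient-type quantities and $\le\tfrac12 c_\star\theta/\mu$ for the Hessian-type quantities. Applying Theorem~\ref{thm:geometry_orth} to $\mb X_0$ on the same event and adding these perturbations then gives~\eqref{eqn:hess-zero-uni-comp}, \eqref{eqn:grad-uni-comp} and \eqref{eqn:curvature-uni-comp}.

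For the last claim, note that \eqref{eqn:hess-zero-uni-comp}--\eqref{eqn:curvature-uni-comp} describe the familiar three-region structure: $g(\cdot;\mb V\mb U^*\overline{\mb Y})$ is $\tfrac{c_\star\theta}{2\mu}$-strongly convex on $\{\norm{\mb w}{}\le\mu/(4\sqrt2)\}$, has strictly positive radial derivative on the annulus out to $\norm{\mb w}{}=1/(20\sqrt5)$, and has strictly negative radial curvature beyond that up to the boundary of $\Gamma$. The latter two rule out any critical point outside the small ball, so every local minimizer of $g(\cdot;\mb V\mb U^*\overline{\mb Y})$ in $\Gamma$ lies in that ball, where strong convexity forces the critical point to be unique; call it $\mb w_\star$. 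Since $\bb E[\nabla g(\mb 0;\mb X_0)]=\mb 0$ (the $\mb e_n$-component cancels against the spherical constraint and the remaining components vanish by independence and zero mean), $\nabla g(\mb 0;\mb X_0)$ is tiny by concentration, and so is $\nabla g(\mb 0;\mb V\mb U^*\overline{\mb Y})$ after the perturbation of step (ii) evaluated at $\mb w=\mb 0$; strong convexity then yields $\norm{\mb w_\star}{}\le\tfrac{2\mu}{c_\star\theta}\norm{\nabla g(\mb 0;\mb V\mb U^*\overline{\mb Y})}{}\le\mu/7$ once $p$ is large enough. The main obstacle is making steps (i)--(ii) quantitatively tight: the geometric margins of Theorem~\ref{thm:geometry_orth} are only of order $\theta$ and $\theta/\mu$, so one must first drive $\norm{\mb R-\mb I}{}$ below roughly $\mu/\mathrm{poly}(n)$ and then absorb a $\mathrm{poly}(n)/\mu$ (indeed $\mathrm{poly}(n)/\mu^2$ for the Hessian) amplification of it, uniformly over the whole ball $\Gamma$, while simultaneously handling the heavy-tailed column norms $\norm{(\mb x_0)_k}{}$ of a Bernoulli--Gaussian matrix and the $\kappa(\mb A_0)$-dependence of the matrix-inverse-square-root perturbation. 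It is exactly this cascade of losses that dictates the $\kappa^8(\mb A_0)$ and $\max\{n^4/\mu^4,\,n^5/\mu^2\}$ scalings in the sample-complexity hypothesis.
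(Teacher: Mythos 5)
Your proposal is correct and follows essentially the same route as the actual argument: the paper does not prove Theorem~\ref{thm:geometry_comp} here (it is reproduced from the companion paper~\cite{sun2015complete_a}), but the proof there — and the perturbation machinery this paper reuses in Section~\ref{sec:alg_comp} — is exactly your reduction, namely writing $\mb V\mb U^*\overline{\mb Y} = (\mb I + \wt{\mb\Xi})\mb X_0$ with $\|\wt{\mb\Xi}\| \lesssim \kappa^4(\mb A_0)\sqrt{\theta n\log p/p}$ (Lemma~\ref{lem:pert_key_mag}), propagating this through uniform Lipschitz-in-data bounds on $\nabla g$ and $\nabla^2 g$ at cost $\mathrm{poly}(n)/\mu$ and $\mathrm{poly}(n)/\mu^2$, and invoking Theorem~\ref{thm:geometry_orth} with the margins halved. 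Your concluding argument for the unique minimizer (no critical points outside the strongly convex ball, plus a bound on $\|\nabla g(\mb 0)\|$) also matches the standard treatment.
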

From the above theorems, it is clear that for any saddle point in the $\mb w$ space, the Hessian of $g$ has at least one negative eigenvalue. Now the problem is whether all saddle points of $f$ on $\bb S^{n-1}$ are ``ridable'', because as alluded to in previous discussion, we need to perform actual optimization in the $\mb q$ space. Instead of presenting a rigorous technical statement and detailed proof, we include here just an informal argument; our actual proof for algorithmic convergence runs back and forth in $\mb w$ and $\mb q$ space and such lack will not affect our arguments there. 

It is very easy to verify the following fact (see proof of Lemma~\ref{lem:alg_gradient_lb} on page~\pageref{proof:lem_alg_gradient_lb}):
\begin{align*}
\innerprod{\grad f(\mb q)}{\mb q - \mb e_n/q_n} = \innerprod{\mb w}{\nabla g(\mb w)}. 
\end{align*} 
Thus, $\innerprod{\grad f(\mb q)}{\mb q - \mb e_n/q_n} \neq 0$ if and only if $\innerprod{\mb w}{\nabla g(\mb w)} \ne 0$, implying that $\grad f(\mb q)$ will never be zero in the spherical region corresponding to $\set{\mb w: \mu/(4\sqrt{2}) \le \|\mb w\| \le 1/(20\sqrt{5})}$. Moreover, it is shown in Lemma~\ref{lem:alg_strcvx_lb} below that the Riemannian Hessian is positive definite for the spherical region corresponding to $\set{\mb w: \|\mb w\| \le \mu/(4\sqrt{2})}$, so there is no saddle point in this region either. Over $\mb q(\Gamma) \cap \bb S^{n-1}$, potential saddle points lie only in the region corresponding to $\{\mb w: 1/(20\sqrt{5}) \le \|\mb w\| \le \sqrt{(4n-1)/(4n)} \}$. Theorem~\ref{thm:geometry_orth} and Theorem~\ref{thm:geometry_comp} imply that around each point in this region, a cross section of the function $g(\mb w)$ is \emph{strictly} concave locally. Intuitively, by the $\mb q(\mb w)$ mapping the same happens in the $\mb q$ space, i.e., the Riemannian Hessian has a strictly negative eigenvalue.

\subsection{The Riemannian Trust-Region Algorithm over the Sphere}\label{subsec:TRM-algorithm}
For a function $f$ in the Euclidean space, the typical TRM starts from some initialization $\mb q^{(0)} \in \bb R^{n}$, and produces a sequence of iterates $\mb q^{(1)}, \mb q^{(2)}, \dots$, by repeatedly minimizing a quadratic approximation $\widehat{f}$ to the objective function $f(\mb q)$, over a ball centered around the current iterate. 

For our $f$ defined over $\bb S^{n-1}$, given the previous iterate $\mb q^{(r-1)}$, the TRM produces the next movement by generating a solution $\widehat{\mb \delta}$ to 
\begin{equation} \label{eqn:subproblem-1}
\mini_{\mb \delta \in T_{\mb q^{(r-1)}} \bb S^{n-1}, \; \norm{\mb \delta}{} \le \Delta} \quad \widehat{f}\paren{\mb \delta; \mb q^{(r-1)}},
\end{equation}
where $\wh{f}\paren{\mb \delta; \mb q^{(r-1)}}$ is the local quadratic approximation defined in~\eqref{eqn:f-appx}. The solution $\widehat{\mb \delta}$ is then pulled back to $\bb S^{n-1}$ from $T_{\mb q} \bb S^{n-1}$. If we choose the exponential map to pull back the movement $\widehat{\mb \delta},$\footnote{The exponential map is only one of the many possibilities; also for general manifolds other retraction schemes may be more practical. See exposition on retraction in Chapter 4 of~\cite{absil2009}. } the next iterate then reads 
\begin{align}
\mb q^{(r)} = \mb q^{(r-1)} \cos \|\widehat{\mb \delta}\| + \frac{\widehat{\mb \delta}}{\|\widehat{\mb \delta}\|} \sin \|\widehat{\mb \delta}\|. 
\end{align}

To solve the subproblem \eqref{eqn:subproblem-1} numerically, we can take any matrix $\mb U \in \bb R^{n \times (n-1)}$ whose columns form an orthonormal basis for $T_{\mb q^{(r-1)}}\bb S^{n-1}$, and produce a solution $\widehat{\mb \xi}$ to
\begin{equation}
\mini_{\norm{\mb \xi}{} \le \Delta} \quad \widehat{f}\paren{\mb U \mb \xi; \mb q^{(r-1)}}.   \label{eqn:trsp} 
\end{equation}
Solution to~\eqref{eqn:subproblem-1} can then be recovered as $\widehat{\mb \delta} = \mb U\widehat{\mb \xi}$. 

The problem \eqref{eqn:trsp} is an instance of the classic {\em trust region subproblem}, i.e., minimizing a quadratic function subject to a single quadratic constraint. \js{Albeit potentially nonconvex, this notable subproblem can be solved in polynomial time by several numerical methods~\cite{more1983computing, conn2000trust, rendl1997semidefinite, ye2003new, fortin2004trust, hazan2014linear}. Approximate solution of the subproblem suffices to guarantee convergence in theory, and lessens the storage and computational burden in practice. We will deploy the approximate version in simulations. For simplicity, however, our subsequent analysis assumes the subproblem is solved \emph{exactly}. We next briefly describe how one can deploy the semidefinite programming (SDP) approach~\cite{rendl1997semidefinite, ye2003new, fortin2004trust, hazan2014linear} to solve the subproblem exactly. This choice is due to the well-known effectiveness and robustness of the SDP approach on this problem. } We introduce
\begin{align}
\tilde{\mb \xi} = \brac{\mb \xi^*,1}^*,~\mb \Theta = \tilde{\mb \xi} \tilde{\mb \xi} ^*,~\mb M = \brac{\begin{array}{ll}
\mb A &\mb b\\
\mb b^* &0
\end{array}},
\end{align}
where $\mb A = \mb U^* \Hess f(\mb q^{(r-1)}; \wh{\mb Y})\mb U$ and $\mb b = \mb U^* \grad \nabla f(\mb q^{(r-1)}; \widehat{\mb Y})$. The resulting SDP to solve is 
\begin{align}
\mini_{\; \mb \Theta} \innerprod{\mb M}{\mb \Theta},~\st~\trace({\mb \Theta})\le \Delta^2+ 1,~\innerprod{\mb E_{n+1}}{\mb \Theta}=1,~\mb \Theta \succeq \mb 0, \label{eqn:SDP_relaxation}
\end{align}
where $\mb E_{n+1} = \mb e_{n+1} \mb e_{n+1}^*$. Once the problem \eqref{eqn:SDP_relaxation} is solved to its optimum $\mb \Theta_\star$, one can provably recover the minimizer $\mb \xi_\star$ of \eqref{eqn:trsp} by computing the SVD of $\mb \Theta_\star= \widetilde{\mb U}\mb \Sigma \widetilde{\mb V}^*$, and extract as a subvector the first $n-1$ coordinates of the principal eigenvector $\widetilde{\mb u}_1$ (see Appendix B of \cite{boyd2004optimization}).

\subsection{Main Convergence Results}
Using general convergence results on Riemannian TRM (see, e.g., Chapter 7 of \cite{absil2009}), it is not difficult to prove that the gradient sequence $\grad f(q^{(r)}; \wh{\mb Y})$ produced by TRM converges to zero (i.e., global convergence), or the sequence converges (at quadratic rate) to a local minimizer if the initialization is already close a local minimizer (i.e., local convergence). In this section, we show that under our probabilistic assumptions, these results can be substantially strengthened. In particular, the algorithm is guaranteed to produce an accurate approximation to a local minimizer of the objective function, in a number of iterations that is polynomial in the problem size, from arbitrary initializations. The arguments in the companion paper~\cite{sun2015complete_a} showed that w.h.p. every local minimizer of $f$ produces a close approximation to a row of $\mb X_0$. Taken together, this implies that the algorithm efficiently produces a close approximation to one row of $\mb X_0$.

Thorough the analysis, we assume the trust-region subproblem is exactly solved and the step size parameter $\Delta$ is fixed. Our next two theorems summarize the convergence results for orthogonal and complete dictionaries, respectively.
 
\begin{theorem}[TRM convergence - orthogonal dictionary] \label{thm:trm_orth}
Suppose the dictionary $\mb A_0$ is orthogonal. There exists a positive constant $C$, such that for all $\theta \in \paren{0, 1/2}$ and $\mu < c_a \min \set{\theta n^{-1}, n^{-5/4}}$, whenever 
\begin{align*}
p \ge \frac{C}{\mu^2 \theta^2} n^3 \log \frac{n}{\mu \theta},
\end{align*} 
with probability at least
$
1 - c_b p^{-6}, 
$
the Riemannian trust-region algorithm with input data matrix $\widehat{\mb Y} = \mb Y$, any initialization $\mb q^{(0)}$ on the sphere, and a step size satisfying 
\begin{align}
\Delta \le \frac{c_c c_\star^3 \theta^3 \mu^2}{n^{7/2}\log^{7/2}\paren{np}}
\end{align}
returns a solution $\widehat{\mb q}\in \bb S^{n-1} $ which is $\eps$ near to one of the local minimizers $\mb q_\star$ (i.e., $\norm{\widehat{\mb q}-\mb q_\star}{}\leq \eps$) in at most 
\begin{align} \label{eq:cplx_orth}
\max\set{\frac{c_d n^6 \log^3\paren{np}}{c_{\star}^3 \theta^3 \mu^4 }, \frac{c_e n }{c_\star^2 \theta^2 \Delta^2}} f(\mb q^{(0)}) + \log\log \frac{c_f c_\star \theta \mu}{\eps n^{3/2} \log^{3/2}\paren{np}}
\end{align}
iterations. Here $c_\star$ is as defined in Theorem~\ref{thm:geometry_orth}, and $c_a$ through $c_f$ are all positive constants. 
\end{theorem}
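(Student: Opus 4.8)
\emph{Proof strategy.} The plan is to analyze the trust-region iterate sequence $\{\mb q^{(r)}\}$ in two phases, using the regional decomposition of Theorem~\ref{thm:geometry_orth} transported to the whole sphere. By the rotational equivalence of the problem for different orthogonal $\mb A_0$ we take $\mb A_0=\mb I$; and since $\sum_i q_i^2=1$ forces $\max_i q_i^2\ge 1/n>1/(4n)$, every $\mb q\in\bb S^{n-1}$ lies in the image of the reparametrization $\mb q(\mb w)$ attached to its largest‑magnitude coordinate, so by the symmetry among coordinate directions the estimates~\eqref{eqn:hess-zero-uni-orth}--\eqref{eqn:curvature-uni-orth} hold in each of the $2n$ such charts. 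At an iterate $\mb q^{(r)}$ we pass to the chart around its dominant signed basis vector and obtain $\mb w^{(r)}$ in exactly one of three regimes: (i) the trapping ball $\norm{\mb w}{}\le\mu/10$ around the chart's unique minimizer $\mb w_\star$ (recall $\norm{\mb w_\star}{}\le\mu/16$), which sits strictly inside the strongly convex cap $\norm{\mb w}{}\le\mu/(4\sqrt2)$ of~\eqref{eqn:hess-zero-uni-orth}; (ii) the ``descent annulus'' $\mu/10\le\norm{\mb w}{}\le 1/(20\sqrt5)$, the union of the outer part of that cap with the large‑gradient annulus~\eqref{eqn:grad-uni-orth}; and (iii) the negative‑curvature shell $1/(20\sqrt5)\le\norm{\mb w}{}\le\sqrt{(4n-1)/(4n)}$ of~\eqref{eqn:curvature-uni-orth}. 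In phase one I will exhibit a fixed additive decrease of $f$ per step whenever $\mb w^{(r)}$ lies in regime (ii) or (iii); since $f\ge 0$, this caps the number of such steps by $f(\mb q^{(0)})$ divided by that decrease, after which the iterate lies in some chart's regime (i). In phase two I will show this trap is invariant and the iterates converge---linearly, then quadratically---to $\mb q_\star$, producing the $\log\log(1/\eps)$ term.

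\emph{Phase one: fixed per-step decrease.} The first task is to convert the $\mb w$-space bounds into statements about the Riemannian gradient and Hessian, which is what the subproblem~\eqref{eqn:subproblem-1} responds to. Using the identity $\innerprod{\grad f(\mb q)}{\mb q-\mb e_n/q_n}=\innerprod{\mb w}{\nabla g(\mb w)}$ from the proof of Lemma~\ref{lem:alg_gradient_lb} together with $\norm{\mb q-\mb e_n/q_n}{}=\norm{\mb w}{}/q_n$ and Cauchy--Schwarz, \eqref{eqn:grad-uni-orth} gives $\norm{\grad f(\mb q)}{}\ge q_n\,c_\star\theta\gtrsim c_\star\theta$ on the large‑gradient annulus ($q_n$ being bounded below by a constant there); on the remaining shell $\mu/10\le\norm{\mb w}{}\le\mu/(4\sqrt2)$ the same bound follows from $\norm{\grad f(\mb q)}{}\ge(c_\star\theta/2\mu)\,d(\mb q,\mb q_\star)\gtrsim c_\star\theta$, using the Riemannian strong convexity of Lemma~\ref{lem:alg_strcvx_lb} and $d(\mb q,\mb q_\star)\gtrsim\mu$. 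A parallel but substantially more delicate transfer of~\eqref{eqn:curvature-uni-orth} shows that in regime (iii) the shifted, projected Hessian~\eqref{eq:fq_rie_hess} has an eigenvalue $\le-c$ for some strictly positive $c\gtrsim c_\star^{O(1)}\theta^{O(1)}/\mathrm{poly}(n,\log p)$; the loss relative to the clean $-c_\star\theta$ in the $\mb w$-picture comes partly from $q_n$ being as small as $1/(2\sqrt n)$ near the equator of the chart and partly from the gradient‑shift term $-\innerprod{\nabla f(\mb q)}{\mb q}\mb I$, which can partially offset the radial concavity of $g$. Because~\eqref{eqn:subproblem-1} is solved \emph{exactly}, the trust-region \emph{model} decrease is at least $\tfrac12\norm{\grad f(\mb q)}{}\min\{\Delta,\norm{\grad f(\mb q)}{}/\norm{\Hess f(\mb q)}{}\}$ in regime (ii) (Cauchy point along $-\grad f$) and at least $\tfrac12\abs{\lambda_{\min}(\Hess f(\mb q))}\Delta^2$ in regime (iii) (eigenpoint along the most‑negative‑curvature direction)---the latter is exactly where exact rather than gradient‑only subproblem solution is essential. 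To turn model decrease into realized decrease I need a uniform third‑order estimate $\abs{f(\exp_{\mb q}(\mb\delta))-\wh f(\mb\delta;\mb q)}\le C_L\norm{\mb\delta}{}^3$ with $C_L\lesssim n^{3/2}\log^{3/2}(np)/\mu^2$, following from $\max_k\norm{\wh{\mb y}_k}{}\lesssim\sqrt{n\log p}$ w.h.p., from $\abs{h_\mu^{(j)}}\lesssim\mu^{1-j}$, and from the bounded derivatives of $\exp_{\mb q}$ on the sphere; then the realized decrease is at least half the model decrease provided $\Delta$ is small relative to $\abs{\lambda_{\min}(\Hess f(\mb q))}/C_L$ in regime (iii) and to $\sqrt{\norm{\grad f(\mb q)}{}/C_L}$ in regime (ii). Imposing the worst of these (dominated by regime (iii)) is what pins down the admissible $\Delta\le c_c c_\star^3\theta^3\mu^2/(n^{7/2}\log^{7/2}(np))$; dividing $f(\mb q^{(0)})$ by the resulting per‑step decreases in regimes (ii) and (iii) yields the $f(\mb q^{(0)})\cdot\max\{\cdot,\cdot\}$ term of~\eqref{eq:cplx_orth}.

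\emph{Phase two: trapping and quadratic convergence.} Once $\norm{\mb w^{(r)}}{}\le\mu/10$ I claim the iterate never leaves the chart's strongly convex cap: a single step moves $\mb q$ by at most geodesic distance $\Delta\ll 1$, hence changes $\norm{\mb w}{}$ by $O(\Delta)$, so the next iterate stays well inside $\norm{\mb w}{}\le\mu/(4\sqrt2)$ where $\Hess f(\mb q)\succeq(c_\star\theta/2\mu)\mb I$ on the tangent space; the standard strongly convex trust-region argument then shows that while $d(\mb q^{(r)},\mb q_\star)\gtrsim\Delta$ the exact step hits the trust-region boundary, still decreasing $f$ by $\gtrsim(c_\star\theta/\mu)\Delta^2$ and moving monotonically towards $\mb q_\star$ (so the cap is never exited), costing at most $\mathrm{poly}(n,\log p)\cdot(\mu/\Delta)^2$ further iterations, which is absorbed into the phase‑one term of~\eqref{eq:cplx_orth} since $f(\mb q^{(0)})=\Omega(\theta)$. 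Once $d(\mb q^{(r)},\mb q_\star)\lesssim\Delta$, the unconstrained model minimizer---the Riemannian Newton step---has length below $\Delta$, hence equals the exact subproblem solution, and combining the strong‑convexity lower bound, $\norm{\Hess f(\mb q)}{}\lesssim 1/\mu$, and $C_L$ gives the quadratic contraction $d(\mb q^{(r+1)},\mb q_\star)\lesssim(\mu C_L/c_\star\theta)\,d(\mb q^{(r)},\mb q_\star)^2$; iterating from the quadratic‑convergence radius $\rho_0\sim c_\star\theta/(\mu C_L)\sim c_\star\theta\mu/(n^{3/2}\log^{3/2}(np))$ down to accuracy $\eps$ takes $\log\log(\rho_0/\eps)$ steps, exactly the last term of~\eqref{eq:cplx_orth}; finally $\norm{\wh{\mb q}-\mb q_\star}{}\le d(\wh{\mb q},\mb q_\star)\le\eps$.

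\emph{Main obstacle and bookkeeping.} The crux is the transfer of~\eqref{eqn:curvature-uni-orth}: it asserts only a negative \emph{directional} second derivative of $g$ along the radial direction in a chart, and one must certify that a genuinely negative eigenvalue, of controlled magnitude, survives in $\mc P_{T_{\mb q}\bb S^{n-1}}(\nabla^2 f(\mb q)-\innerprod{\nabla f(\mb q)}{\mb q}\mb I)\mc P_{T_{\mb q}\bb S^{n-1}}$ after the reparametrization degenerates near the equator ($q_n\to 0$) and after the gradient‑shift term is taken into account---this is the ``running back and forth between $\mb w$ and $\mb q$ space'' the preceding discussion alludes to, and it is the source of the extra $n$ (and $c_\star\theta$) factors and hence of the precise exponents in the admissible $\Delta$. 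A secondary, purely bookkeeping difficulty is that a single step can cross chart and regime boundaries, so every per-step estimate must be stated pointwise and chart‑independently and combined only through the scalar bound $f\ge 0$; one also checks that the dominant‑coordinate chart is stable under small steps inside regime (i), since $\norm{\mb w}{}\le\mu/(4\sqrt2)$ forces $q_n^2\ge 1-\mu^2/32$, far above every other coordinate. The complete‑dictionary case is handled by an entirely parallel argument with Theorem~\ref{thm:geometry_comp} replacing Theorem~\ref{thm:geometry_orth}, the preprocessed data $\mb V\mb U^*\ol{\mb Y}$ replacing $\mb Y$, and the $\kappa(\mb A_0)$ factors carried along.
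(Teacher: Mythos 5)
Your two-phase architecture (fixed per-step decrease away from the minimizers, bounded by $f(\mb q^{(0)})$ divided by that decrease, followed by trapping and quadratic convergence giving the $\log\log(1/\eps)$ term) is the same as the paper's, and your phase two is essentially Lemmas~\ref{lem:alg_strcvx_func}--\ref{lem:TR-grad-opt} and Propositions~\ref{lem:TRM-lemma-iv}--\ref{lem:TRM-lemma-v}. But there is a genuine gap at the heart of your phase one in regime (iii): you assert, without argument, that the radial negative curvature~\eqref{eqn:curvature-uni-orth} of $g$ in the $\mb w$-chart transfers to a strictly negative eigenvalue of controlled magnitude of the Riemannian Hessian $\mc P_{T_{\mb q}\bb S^{n-1}}(\nabla^2 f(\mb q)-\innerprod{\nabla f(\mb q)}{\mb q}\mb I)\mc P_{T_{\mb q}\bb S^{n-1}}$, and you yourself flag this transfer as the ``crux.'' That is precisely the statement the paper never proves and is structured to avoid (the text before Section~\ref{subsec:TRM-algorithm} explicitly gives only an informal argument for it). The paper's mechanism is different: Lemma~\ref{lem:alg_approx_bd2} shows that, because the subproblem is solved \emph{exactly} and the pulled-back model is third-order accurate ($|f(\exp_{\mb q}(\mb\delta))-\wh f(\mb\delta;\mb q)|\le\eta_f\norm{\mb\delta}{}^3/6$), the realized decrease of the trust-region step is within $\eta_f\Delta^3/3$ of the decrease achieved by \emph{any} feasible tangent vector; Lemmas~\ref{lem:alg_gradient_func} and~\ref{lem:alg_neg_cuv_func} then exhibit such a vector by moving $\mb w$ radially toward the origin and applying the integral Taylor theorem to $g$ itself, where the clean bounds~\eqref{eqn:grad-uni-orth} and~\eqref{eqn:curvature-uni-orth} live, and by checking that the resulting point $\mb q(\mb w')$ equals $\exp_{\mb q}(\mb\delta)$ with $\norm{\mb\delta}{}\le\Delta$. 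No spectral statement about $\Hess f$ in $\rIII$ is ever needed. Your eigenpoint route could perhaps be completed, but the missing step is exactly the hard one, so as written the per-step decrease in the negative-curvature region is unproven.

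A secondary issue: in phase two you claim the constrained steps inside the strongly convex cap move ``monotonically towards $\mb q_\star$ (so the cap is never exited).'' This is asserted, not proved, and is also unnecessary: the paper simply counts constrained $\rI$ steps against the same $f(\mb q^{(0)})/\min\{\dI,\dII,\dIII\}$ budget as $\rII$ and $\rIII$ steps (Proposition~\ref{lem:TRM-lemma-iii}), and establishes invariance only for \emph{unconstrained} steps, via the gradient-norm comparison of Lemma~\ref{lem:TR-step} against the lower bound of Lemma~\ref{lem:alg_gradient_lb} on $\rII$, rather than via a distance argument. If you keep your formulation you need to either prove the monotonicity claim or restructure the counting along these lines.
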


\begin{theorem}[TRM convergence - complete dictionary] \label{thm:trm_comp}
Suppose the dictionary $\mb A_0$ is complete with condition number $\kappa\paren{\mb A_0}$. There exists a positive constant $C$, such that for all $\theta \in \paren{0, 1/2}$, and $\mu < c_a\min \set{\theta n^{-1}, n^{-5/4}}$, whenever 
\begin{align*}
p \ge \frac{C}{c_\star^2 \theta^2} \max\set{\frac{n^4}{\mu^4}, \frac{n^5}{\mu^2}} \kappa^8\paren{\mb A_0} \log^4\paren{\frac{\kappa\paren{\mb A_0} n}{\mu \theta}},
\end{align*} 
with probability at least
$
1 - c_b p^{-6}, 
$
the Riemannian trust-region algorithm with input data matrix $\overline{\mb Y} \doteq \sqrt{p\theta}\paren{\mb Y \mb Y^*}^{-1/2} \mb Y$ where $\mb U \mb \Sigma \mb V^* = \mathtt{SVD}\paren{\mb A_0}$, any initialization $\mb q^{(0)}$ on the sphere and a step size satisfying 
\begin{align}
\Delta \le \frac{c_c c_\star^3 \theta^3 \mu^2}{n^{7/2}\log^{7/2}\paren{np}}
\end{align}
returns a solution $\widehat{\mb q}\in \bb S^{n-1} $ which is $\eps$ near to one of the local minimizers $\mb q_\star$ (i.e., $\norm{\widehat{\mb q}-\mb q_\star}{}\leq \eps$) in at most 
\begin{align} \label{eq:cplx_comp}
\max\set{\frac{c_d n^6 \log^3\paren{np}}{c_{\star}^3 \theta^3 \mu^4 }, \frac{c_e n }{c_\star^2 \theta^2 \Delta^2}} f(\mb q^{(0)}) + \log\log \frac{c_f c_\star \theta \mu}{\eps n^{3/2} \log^{3/2}\paren{np}}
\end{align}
iterations. Here $c_\star$ is as in Theorem~\ref{thm:geometry_orth}, and $c_a$ through $c_f$ are all positive constants.
\end{theorem}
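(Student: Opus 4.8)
The plan is to reduce the complete-dictionary statement to the orthogonal-dictionary statement of Theorem~\ref{thm:trm_orth} via the preconditioning $\overline{\mb Y} = \sqrt{p\theta}(\mb Y\mb Y^*)^{-1/2}\mb Y$, and to prove both by a region-by-region descent argument driven by the landscape estimates of Theorems~\ref{thm:geometry_orth}--\ref{thm:geometry_comp}. First I would partition the sphere --- equivalently, the ball $\Gamma$ in the $\mb w$-parametrization --- into three regions: $\rI$ where $\norm{\mb w}{} \le \mu/(4\sqrt 2)$ and the Riemannian Hessian is positive definite (Lemma~\ref{lem:alg_strcvx_lb}), $\rII$ the annulus $\mu/(4\sqrt 2) \le \norm{\mb w}{} \le 1/(20\sqrt 5)$ where the radial Riemannian gradient is bounded below by $\sim c_\star\theta$ (Lemma~\ref{lem:alg_gradient_lb}), and $\rIII$ the outer shell $1/(20\sqrt 5) \le \norm{\mb w}{} \le \sqrt{(4n-1)/(4n)}$ where there is a direction of curvature at most $-\tfrac12 c_\star\theta$. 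Since the algorithm runs in $\mb q$-space, I would transfer these statements from $\mb w$-space using the identity $\innerprod{\grad f(\mb q)}{\mb q - \mb e_n/q_n} = \innerprod{\mb w}{\nabla g(\mb w)}$ and its Hessian analogue, and cover the complement of $\mb q(\Gamma)$ by the symmetry of the $2n$ coordinate charts.

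The core of the argument is three per-step decrease lemmas for one iteration of exact-subproblem TRM: (i) in $\rIII$, solving the trust-region subproblem exactly over a radius-$\Delta$ ball with an indefinite model whose smallest eigenvalue is $\le -\tfrac12 c_\star\theta$ yields predicted decrease $\gtrsim c_\star\theta\,\Delta^2$; (ii) in $\rII$, the gradient lower bound gives model decrease $\gtrsim c_\star\theta\,\Delta$; (iii) in $\rI$ away from $\mb w_\star$, the strongly convex model yields a Newton-like step whose decrease is proportional to the squared Riemannian gradient. To convert predicted decrease into actual decrease I need quantitative control on the accuracy of the quadratic model $\wh f(\mb\delta;\mb q)$ over balls of radius $\Delta$, i.e. uniform (in $\mb q$, w.h.p.) bounds on $\norm{\nabla f}{}$, $\norm{\nabla^2 f}{}$, and on the variation of $\nabla^2 f$ --- effectively a Lipschitz bound on the Hessian of $f\circ\exp_{\mb q}$ including the parallel-transport corrections in \eqref{eq:alg_tsp_op}. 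This is where the bulk of the probabilistic work lies, and it fixes the admissible step size $\Delta \lesssim c_\star^3\theta^3\mu^2/(n^{7/2}\log^{7/2}(np))$. Given these estimates, standard trust-region bookkeeping (the ratio test $\rho$ stays bounded below, so $\Delta$ is never shrunk) shows that the number of ``Phase~1'' iterations --- those before the sequence permanently enters $\rI$ --- is at most the initial objective value divided by the per-step decrease, namely $O(\max\{n^6\log^3(np)/(c_\star^3\theta^3\mu^4),\ n/(c_\star^2\theta^2\Delta^2)\})\, f(\mb q^{(0)})$; here one also verifies that the iterates cannot oscillate between regions, since the objective decreases monotonically and $\rI$ is forward-invariant once $\Delta$ is small enough that one step cannot escape it.

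Phase~2 is the local analysis inside $\rI$: there the geodesically strongly convex model makes the exact TRM step coincide with the Riemannian Newton step as soon as the trust-region constraint is inactive, so classical quadratic convergence of Newton's method on a manifold (Chapter~6 of~\cite{absil2009}) applies, yielding the $\log\log(1/\eps)$ term for convergence to the unique minimizer $\mb q_\star$ in that region; Lemma~\ref{lem:alg_strcvx_lb} together with the Hessian-Lipschitz bound certifies both the quadratic rate and the eventual inactivity of the trust-region constraint. Finally, for the complete dictionary I would observe that after preconditioning, $g(\mb w; \mb V\mb U^*\overline{\mb Y})$ obeys exactly the three regional inequalities of the orthogonal case up to halved constants (Theorem~\ref{thm:geometry_comp}), and that the analytic estimates --- gradient/Hessian bounds and Hessian-Lipschitz bounds --- transfer to $\overline{\mb Y}$ at the cost of the extra $\kappa^8(\mb A_0)$ and $\log^4$ factors already present in the stated sample complexity; hence the identical argument yields the iteration bound \eqref{eq:cplx_comp}.

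I expect the main obstacle to be the Hessian-Lipschitz / model-accuracy estimate: one must bound, uniformly over $\bb S^{n-1}$ and with probability at least $1-c_b p^{-6}$, how much $\nabla^2 f(\mb q;\wh{\mb Y})$ and its pullback through $\exp_{\mb q}$ can vary over a geodesic ball of radius $\Delta$, since this is precisely what couples the admissible step size to the geometry and determines the polynomial iteration count; controlling the $\tanh^2$ nonlinearity and the heavy-tailed products $\wh{\mb y}_k\wh{\mb y}_k^*$ uniformly is the delicate part. A secondary nuisance is the bookkeeping of passing between the $\mb w$- and $\mb q$-parametrizations and stitching together the $2n$ symmetric charts so that the regional partition and the forward-invariance of $\rI$ hold globally on the sphere.
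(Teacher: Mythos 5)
Your proposal is correct and follows essentially the same route as the paper: partition the sphere into the strongly convex, strong-gradient, and negative-curvature regions, bound the per-step decrease in each (with the Hessian-Lipschitz/model-accuracy estimate dictating the admissible $\Delta$), count Phase-1 steps by dividing $f(\mb q^{(0)})$ by the minimum decrease, establish quadratic local convergence once the steps become unconstrained, and handle the complete case by writing $\overline{\mb Y}$ as a small perturbation of a rotated $\mb X_0$ so that every estimate from the orthogonal analysis survives up to constant factors. The only cosmetic discrepancies are that the analyzed algorithm uses a \emph{fixed} step size (no ratio test is needed), and that what is forward-invariant is the property of taking unconstrained $\rI$ steps rather than membership in $\rI$ itself --- constrained $\rI$ steps are simply folded into the fixed-decrease count --- neither of which affects the argument.
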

Our convergence result shows that for any target accuracy $\eps > 0$ the algorithm terminates within polynomially many steps. \js{Specifically, the first summand in~\eqref{eq:cplx_orth} or~\eqref{eq:cplx_comp} is the number of steps the sequence takes to enter the strongly convex region and be ``reasonably" close to a local minimizer. All subsequent trust-region subproblems are then unconstrained (proved below) -- the constraint is inactive at optimal point, and hence the steps behave like Newton steps. The second summand reflects the typical quadratic local convergence of the Newton steps. }

Our estimate of the number of steps is pessimistic: the running time is a relatively high-degree polynomial in $p$ and $n$. We will discuss practical implementation details that help speed up in Section~\ref{sec:exp}. Our goal in stating the above results is not to provide a tight analysis, but to prove that the Riemannian TRM algorithm finds a local minimizer in polynomial time. For nonconvex problems, this is not entirely trivial -- results of~\cite{murty1987some} show that in general it is NP-hard to find a local minimizer of a nonconvex function. 

\subsection{Sketch of Proof for Orthogonal Dictionaries} \label{sec:alg_orth}
The reason that our algorithm is successful derives from the geometry formalized in Theorem~\ref{thm:geometry_orth}. Basically, the sphere $\bb S^{n-1}$ can be divided into three regions. Near each local minimizer, the function is strongly convex, and the algorithm behaves like a standard (Euclidean) TRM algorithm applied to a strongly convex function -- in particular, it exhibits a quadratic asymptotic rate of convergence. Away from local minimizers, the function always exhibits either a strong gradient, or a direction of negative curvature (i.e., the Hessian has a strictly negative  eigenvalue). The Riemannian TRM algorithm is capable of exploiting these quantities to reduce the objective value by at least a fixed amount in each iteration. The total number of iterations spent away from the vicinity of the local minimizers can be bounded by comparing this amount to the initial objective value. Our proofs follow exactly this line and make the various quantities precise.   

Note that for any orthogonal $\mb A_0$, $f\paren{\mb q; \mb A_0 \mb X_0} = f\paren{\mb A_0^* \mb q; \mb X_0}$. In words, this is the established fact that the function landscape of $f(\mb q; \mb A_0 \mb X_0)$ is a rotated version of that of $f(\mb q; \mb X_0)$. Thus, any local minimizer $\mb q_\star$ of $f(\mb q; \mb X_0)$ is rotated to $\mb A_0 \mb q_\star$, a local minimizer of $f(\mb q; \mb A_0 \mb X_0)$. Also if our algorithm generates iteration sequence $\mb q_0, \mb q_1, \mb q_2, \dots$ for $f(\mb q; \mb X_0)$ upon initialization $\mb q_0$, it will generate the iteration sequence $\mb A_0  \mb q_0, \mb A_0 \mb q_1, \mb A_0 \mb q_2, \dots$ for $f\paren{\mb q; \mb A_0 \mb X_0}$. So w.l.o.g. it is adequate that we prove the convergence results for the case $\mb A_0 = \mb I$. So in this section (Section~\ref{sec:alg_orth}), we write $f(\mb q)$ to mean $f(\mb q; \mb X_0)$. 

We partition the sphere into three regions, for which we label as $\rI$, $\rII$, $\rIII$, corresponding to the strongly convex, nonzero gradient, and negative curvature regions, respectively (see Theorem~\ref{thm:geometry_orth}). That is, $\rI$ consists of a union of $2n$ spherical caps
of radius $\mu/(4\sqrt{2})$, each centered around a signed standard basis vector $\pm \mb e_i$. $\rII$ consist of the set difference of a union of $2n$ spherical caps of radius $1/(20\sqrt{5})$, centered around the standard basis vectors $\pm \mb e_i$, and $\rI$. Finally, $\rIII$ covers the rest of the sphere. We say a trust-region step takes an $\rI$ step if the current iterate is in $\rI$; similarly for $\rII$ and $\rIII$ steps. Since we use the geometric structures derived in Theorem~\ref{thm:geometry_orth} and Corollary II.2 in~\cite{sun2015complete_a}, the conditions 
\begin{align} \label{eq:trm_proof_assumed_cond}
\theta \in (0,1/2), \quad \mu < c\min\Brac{\theta n^{-1}, n^{-5/4}}, \quad p \ge \frac{C}{\mu^2 \theta^2} n^3 \log \frac{n}{\mu \theta}
\end{align}
are always in force. 

At step $r$ of the algorithm, suppose $\mb \delta^{(r)}$ is the minimizer of the trust-region subproblem~\eqref{eqn:subproblem-1}. We call the step ``{\em constrained}'' if $\norm{\mb \delta^{(r)}}{} = \Delta$ (the minimizer lies on the boundary and hence the constraint is active), and call it ``{\em unconstrained}'' if $\|\mb \delta^{(r)}\| < \Delta$ (the minimizer lies in the relative interior and hence the constraint is not in force). Thus, in the unconstrained case the optimality condition is~\eqref{eqn:ts-optimal-solution-1}. 

The next lemma provides some estimates about $\nabla f$ and $\nabla^2 f$ that are useful in various contexts. 
\begin{lemma} \label{lem:mag_lip_fq}
We have the following estimates about $\nabla f$ and $\nabla^2 f$: 
\begin{align*}
\sup_{\mb q \in \bb S^{n-1}}\norm{\nabla f\paren{\mb q}}{} & \doteq M_{\nabla} \le \sqrt{n} \norm{\mb X_0}{\infty} , \\
\sup_{\mb q \in \bb S^{n-1}}\norm{\nabla^2 f\paren{\mb q}}{} & \doteq M_{\nabla^2} \le \frac{n}{\mu}\norm{\mb X_0}{\infty}^2, \\
\sup_{\mb q, \mb q' \in \bb S^{n-1}, \mb q \neq \mb q'} \frac{\norm{\nabla f\paren{\mb q} - \nabla f\paren{\mb q'}}{}}{\norm{\mb q - \mb q'}{}} & \doteq L_{\nabla} \le \frac{n}{\mu} \norm{\mb X_0}{\infty}^2, \\
\sup_{\mb q, \mb q' \in \bb S^{n-1}, \mb q \neq \mb q'} \frac{\norm{\nabla^2 f\paren{\mb q} - \nabla^2 f\paren{\mb q'}}{}}{\norm{\mb q - \mb q'}{}} & \doteq L_{\nabla^2} \le \frac{2}{\mu^2} n^{3/2} \norm{\mb X_0}{\infty}^3. 
\end{align*}
\end{lemma}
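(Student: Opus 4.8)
The plan is to read the four estimates straight off the closed forms~\eqref{eq:fq_grad} and~\eqref{eq:fq_hess}, reducing each matrix- or vector-level bound to an elementary one-dimensional fact about the hyperbolic tangent. The three scalar facts I will use are: $\abs{\tanh t}\le 1$ for all $t$; the map $t\mapsto\tanh t$ is $1$-Lipschitz (equivalently $0\le 1-\tanh^2 t\le 1$); and the map $t\mapsto 1-\tanh^2 t$ is $2$-Lipschitz, since $\tfrac{d}{dt}\paren{1-\tanh^2 t}=-2\tanh(t)\paren{1-\tanh^2 t}$ has magnitude at most $2$. The only other ingredient is that here $\wh{\mb Y}=\mb X_0$, so $\wh{\mb y}_k=(\mb x_0)_k$ and hence $\norm{\wh{\mb y}_k}{}\le\sqrt n\,\norm{\mb X_0}{\infty}$ for every column $k$.

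First, for $M_\nabla$, apply the triangle inequality to~\eqref{eq:fq_grad} and bound each $\abs{\tanh(\mb q^*\wh{\mb y}_k/\mu)}$ by $1$, which gives $\norm{\nabla f(\mb q)}{}\le\tfrac1p\sum_k\norm{\wh{\mb y}_k}{}\le\sqrt n\,\norm{\mb X_0}{\infty}$, uniformly in $\mb q$. For $M_{\nabla^2}$, observe from~\eqref{eq:fq_hess} that $\nabla^2 f(\mb q)$ is a nonnegative combination of the rank-one positive semidefinite matrices $\wh{\mb y}_k\wh{\mb y}_k^*$ with coefficients in $[0,1/\mu]$, so the triangle inequality for the operator norm yields $\norm{\nabla^2 f(\mb q)}{}\le\tfrac{1}{p\mu}\sum_k\norm{\wh{\mb y}_k}{}^2\le\tfrac n\mu\norm{\mb X_0}{\infty}^2$; since the column norms do not depend on $\mb q$, this bound in fact holds for every $\mb q\in\R^n$.

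Next, for the Lipschitz constants I subtract two copies of the defining formula and use the Lipschitz property of the relevant scalar function together with Cauchy--Schwarz inside each summand. For $L_\nabla$, from~\eqref{eq:fq_grad} and the fact that $\tanh$ is $1$-Lipschitz,
\begin{align*}
\norm{\nabla f(\mb q)-\nabla f(\mb q')}{}\le\frac1p\sum_k\abs{\tanh\paren{\tfrac{\mb q^*\wh{\mb y}_k}{\mu}}-\tanh\paren{\tfrac{\mb q'^*\wh{\mb y}_k}{\mu}}}\norm{\wh{\mb y}_k}{}\le\frac{1}{p\mu}\sum_k\abs{\innerprod{\mb q-\mb q'}{\wh{\mb y}_k}}\norm{\wh{\mb y}_k}{}\le\frac n\mu\norm{\mb X_0}{\infty}^2\norm{\mb q-\mb q'}{},
\end{align*}
where the last step used $\abs{\innerprod{\mb q-\mb q'}{\wh{\mb y}_k}}\le\norm{\mb q-\mb q'}{}\norm{\wh{\mb y}_k}{}$ and $\norm{\wh{\mb y}_k}{}^2\le n\norm{\mb X_0}{\infty}^2$. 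For $L_{\nabla^2}$, from~\eqref{eq:fq_hess} the same manipulation with the $2$-Lipschitz property of $1-\tanh^2$, followed by the operator-norm triangle inequality, gives
\begin{align*}
\norm{\nabla^2 f(\mb q)-\nabla^2 f(\mb q')}{}\le\frac{1}{p\mu}\sum_k\frac2\mu\abs{\innerprod{\mb q-\mb q'}{\wh{\mb y}_k}}\norm{\wh{\mb y}_k}{}^2\le\frac{2}{p\mu^2}\norm{\mb q-\mb q'}{}\sum_k\norm{\wh{\mb y}_k}{}^3\le\frac{2n^{3/2}}{\mu^2}\norm{\mb X_0}{\infty}^3\norm{\mb q-\mb q'}{}.
\end{align*}

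No step here is genuinely hard; the lemma is a bookkeeping exercise. The only places that need care are the clean reduction of each vector/matrix bound to the appropriate scalar estimate for $\tanh$, $1-\tanh^2$, or the derivative of $1-\tanh^2$, and keeping track of the conversion between the Euclidean column norm $\norm{\wh{\mb y}_k}{}$ and $\norm{\mb X_0}{\infty}$ so that the powers of $\sqrt n$ come out exactly as stated --- one factor per column for $M_\nabla$ and $L_\nabla$, two for $M_{\nabla^2}$, and three for $L_{\nabla^2}$.
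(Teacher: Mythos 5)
Your proof is correct and follows essentially the same route as the paper's: bound $\abs{\tanh}$ and $1-\tanh^2$ by one for the magnitude estimates, and compose the $1$-Lipschitzness of $\tanh$ (resp.\ the $2$-Lipschitzness of $1-\tanh^2$, which the paper records as $\abs{\ddot h_\mu(z)-\ddot h_\mu(z')}\le 2\abs{z-z'}/\mu^2$ in Lemma~\ref{lem:derivatives_basic_surrogate}) with the $\norm{(\mb x_0)_k}{}/\mu$-Lipschitz linear map $\mb q\mapsto \mb q^*(\mb x_0)_k/\mu$. The only difference is cosmetic: you spell out the $L_{\nabla^2}$ computation that the paper dismisses with ``similar argument yields the final bound.''
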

\begin{proof}
See Page~\pageref{proof:lem_mag_lip_fq} under Section~\ref{sec:proof_algorithm}. 
\end{proof}
Our next lemma says if the trust-region step size $\Delta$ is small enough, one Riemannian trust-region step reduces the objective value by a certain amount when there is any descent direction.  
\begin{lemma} \label{lem:alg_approx_bd2}
Suppose that the trust region size $\Delta \le 1$, and there exists a tangent vector $\mb \delta \in T_{\mb q} \bb S^{n-1}$ with $\norm{\mb \delta}{} \le \Delta$, such that
\begin{equation*}
f( \exp_{\mb q}(\mb \delta) ) \;\le\; f(\mb q) - s
\end{equation*}
for some positive scalar $s\in \reals $. Then the trust region subproblem produces a point $\mb \delta_\star$ with 
\begin{equation*}
f(\exp_{\mb q}(\mb \delta_\star)) \;\le\; f(\mb q) - s + \frac{1}{3}\eta_f \Delta^3, 
\end{equation*}
where $\eta_f \doteq M_{\nabla} + 2M_{\nabla^2} + L_{\nabla} + L_{\nabla^2}$ and $M_{\nabla}$, $M_{\nabla^2}$, $L_{\nabla}$, $L_{\nabla^2}$ are the quantities defined in Lemma~\ref{lem:mag_lip_fq}. 
\end{lemma}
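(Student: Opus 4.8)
The plan is to compare both $f(\exp_{\mb q}(\mb\delta_\star))$ and $f(\exp_{\mb q}(\mb\delta))$ against the common quadratic model $\wh{f}(\cdot;\mb q)$ of~\eqref{eqn:f-appx}, and to exploit that $\mb\delta_\star$ is the global minimizer of that model over the feasible set $\set{\mb\eta\in T_{\mb q}\bb S^{n-1}:\norm{\mb\eta}{}\le\Delta}$, within which the hypothesized $\mb\delta$ is itself feasible; hence $\wh{f}(\mb\delta_\star;\mb q)\le\wh{f}(\mb\delta;\mb q)$. The whole argument then reduces to a single cubic-accuracy estimate:
\begin{align}
\abs{\,f(\exp_{\mb q}(\mb\eta)) - \wh{f}(\mb\eta;\mb q)\,} \;\le\; \tfrac{1}{6}\,\eta_f\,\norm{\mb\eta}{}^3, \qquad \forall\ \mb\eta\in T_{\mb q}\bb S^{n-1},\ \norm{\mb\eta}{}\le\Delta\le 1. \label{eq:prop_cubic}
\end{align}
Granting~\eqref{eq:prop_cubic} and applying it once at $\mb\eta=\mb\delta_\star$ and once at $\mb\eta=\mb\delta$, one chains $f(\exp_{\mb q}(\mb\delta_\star))\le\wh{f}(\mb\delta_\star;\mb q)+\tfrac16\eta_f\Delta^3\le\wh{f}(\mb\delta;\mb q)+\tfrac16\eta_f\Delta^3\le f(\exp_{\mb q}(\mb\delta))+\tfrac13\eta_f\Delta^3\le f(\mb q)-s+\tfrac13\eta_f\Delta^3$, which is the claim.

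To prove~\eqref{eq:prop_cubic} I would pass to one real variable. Fix $\mb\eta\in T_{\mb q}\bb S^{n-1}$ (the case $\mb\eta=\mb 0$ is trivial), write $r=\norm{\mb\eta}{}$ and $\mb v=\mb\eta/r$, and set $\gamma(t)\doteq\exp_{\mb q}(t\mb v)=\mb q\cos t+\mb v\sin t$, $g(t)\doteq f(\gamma(t))$, so that $f(\exp_{\mb q}(\mb\eta))=g(r)$ and, since $\Delta\le 1<\pi$, the arc $\{\gamma(t):t\in[0,r]\}$ is a short, smoothly embedded piece of a great circle. Using $\dot{\gamma}(0)=\mb v$, $\ddot{\gamma}(0)=-\mb q$ and $\norm{\mb v}{}=1$, one checks that $g'(0)\,r=\innerprod{\nabla f(\mb q)}{\mb\eta}$ and $\tfrac12 g''(0)\,r^2=\tfrac12\mb\eta^*\bigl(\nabla^2 f(\mb q)-\innerprod{\nabla f(\mb q)}{\mb q}\mb I\bigr)\mb\eta$, i.e.\ the degree-two Taylor polynomial of $g$ at $0$, evaluated at $t=r$, is exactly $\wh{f}(\mb\eta;\mb q)$ of~\eqref{eqn:f-appx}. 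Taylor's theorem with Lagrange remainder then yields $f(\exp_{\mb q}(\mb\eta))-\wh{f}(\mb\eta;\mb q)=\tfrac16 g'''(\xi)\,r^3$ for some $\xi\in(0,r)$, so~\eqref{eq:prop_cubic} follows once I establish $\abs{g'''(t)}\le\eta_f$ for all $t$.

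Differentiating three times and using $\ddot{\gamma}=-\gamma$ (hence also the third derivative of $\gamma$ equals $-\dot{\gamma}$), together with $\norm{\gamma(t)}{}=\norm{\dot{\gamma}(t)}{}=1$ (the latter since $\mb v\perp\mb q$), one obtains
\begin{align}
g'''(t) \;=\; -3\,\gamma(t)^*\nabla^2 f(\gamma(t))\,\dot{\gamma}(t) \;+\; \dot{\gamma}(t)^*\!\brac{\tfrac{d}{dt}\nabla^2 f(\gamma(t))}\!\dot{\gamma}(t) \;-\; \innerprod{\nabla f(\gamma(t))}{\dot{\gamma}(t)}. \label{eq:prop_g3}
\end{align}
The last term has magnitude $\le M_\nabla$; the middle term has magnitude $\le L_{\nabla^2}$, because $\tfrac{d}{dt}\nabla^2 f(\gamma(t))$ has operator norm $\le L_{\nabla^2}\norm{\dot{\gamma}(t)}{}=L_{\nabla^2}$ (the constants of Lemma~\ref{lem:mag_lip_fq} apply verbatim along $\gamma$ since $\gamma(t)\in\bb S^{n-1}$ for every $t$); and for the first term the key observation is $\nabla^2 f(\gamma(t))\,\dot{\gamma}(t)=\tfrac{d}{dt}\nabla f(\gamma(t))$, whose norm is $\le L_\nabla\norm{\dot{\gamma}(t)}{}=L_\nabla$. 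Bounding two of the three copies of $\gamma^*\nabla^2 f(\gamma)\dot{\gamma}$ by $M_{\nabla^2}$ (using $\norm{\gamma}{}=\norm{\dot{\gamma}}{}=1$) and the remaining copy by $L_\nabla$, we get $\abs{g'''(t)}\le 2M_{\nabla^2}+L_\nabla+L_{\nabla^2}+M_\nabla=\eta_f$, as needed.

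I do not anticipate a genuine obstacle here. The only mildly delicate point is the third-derivative bookkeeping in~\eqref{eq:prop_g3} — in particular, splitting the coefficient $3$ of $\gamma^*\nabla^2 f(\gamma)\dot{\gamma}$ as $2M_{\nabla^2}+L_\nabla$ (rather than the cruder $3M_{\nabla^2}$) via the identity $\nabla^2 f(\gamma)\dot{\gamma}=\tfrac{d}{dt}\nabla f(\gamma)$ is exactly what makes the final constant come out as the stated $\eta_f=M_\nabla+2M_{\nabla^2}+L_\nabla+L_{\nabla^2}$. Everything else — matching the quadratic model with the Taylor polynomial, checking that $\mb\delta$ is feasible for the subproblem, and the final chaining — is routine, and the magnitude and Lipschitz estimates of Lemma~\ref{lem:mag_lip_fq} carry over to the curve $\gamma$ at no cost because $\gamma$ stays on the sphere.
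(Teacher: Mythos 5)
Your proposal is correct, and its overall architecture is exactly the paper's: both reduce the lemma to the single cubic-accuracy estimate $\abs{f(\exp_{\mb q}(\mb \eta)) - \wh{f}(\mb \eta;\mb q)} \le \tfrac16 \eta_f \norm{\mb \eta}{}^3$, apply it once at $\mb \delta_\star$ and once at $\mb \delta$, and chain through the optimality of $\mb \delta_\star$ for the model over the trust region (in which $\mb \delta$ is feasible). The one place you diverge is in how that estimate is obtained. The paper keeps the remainder at second order: it uses the integral form of Taylor's theorem for $\zeta(t) = f(\exp_{\mb q}(t\mb \delta_0))$ and bounds the modulus of continuity $\abs{\ddot{\zeta}(t)-\ddot{\zeta}(0)} \le \eta_f t$ via a five-term triangle inequality, each term controlled by one of $M_\nabla, M_{\nabla^2}, L_\nabla, L_{\nabla^2}$ and the chordal estimates $\norm{\gamma(t)-\mb q}{}, \norm{\dot\gamma(t)-\dot\gamma(0)}{} \le t$. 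You instead differentiate a third time and bound $\abs{g'''}$ pointwise, using the identity $\nabla^2 f(\gamma)\dot\gamma = \tfrac{d}{dt}\nabla f(\gamma)$ to split the coefficient $3$ of $\gamma^*\nabla^2 f(\gamma)\dot\gamma$ as $2M_{\nabla^2} + L_\nabla$; I checked your expression for $g'''$ and the bookkeeping, and it does land on exactly the stated $\eta_f$. The two computations encode the same information; the paper's version is marginally more robust in that it needs only Lipschitz continuity of $\nabla^2 f$ rather than existence of a third derivative, which is immaterial here since $h_\mu$ is smooth. No gap.
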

\begin{proof}
See Page~\pageref{proof:lem_alg_approx_bd2} under Section~\ref{sec:proof_algorithm}. 
\end{proof}

To show decrease in objective value for $\rII$ and $\rIII$, now it is enough to exhibit a descent direction for each point in these regions. The next two lemmas help us almost accomplish the goal. For convenience again we choose to state the results for the ``canonical'' section that is in the vicinity of $\mb e_n$ and the projection map $\mb q\paren{\mb w} = [\mb w; (1-\norm{\mb w}{}^2)^{1/2}]$, with the idea that similar statements hold for other symmetric sections. 

\begin{lemma} \label{lem:alg_gradient_func}
Suppose that the trust region size $\Delta \le 1$, 
$\mb w^* \nabla g(\mb w)/\norm{\mb w}{} \ge \beta_g$ 
 for some scalar $\beta_g$, and that $\mb w^* \nabla g(\mb w)/\norm{\mb w}{}$ is $L_g$-Lipschitz on an open ball $\mc B\left(\mb w, \frac{3\Delta}{2\pi\sqrt{n}}\right)$ centered at $\mb w$. Then there exists a tangent vector $\mb \delta \in T_{\mb q} \bb S^{n-1}$ with $\norm{\mb \delta}{} \le \Delta$, such that
\begin{equation*}
f(\exp_{\mb q}(\mb \delta)) \;\le\; f(\mb q) -  \min \set{ \frac{\beta_g^2}{2 L_g}, \frac{3\beta_g \Delta}{4\pi\sqrt{n}} }. 
\end{equation*}
\end{lemma}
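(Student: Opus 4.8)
The plan is to produce an explicit descent direction: the unit tangent vector at $\mb q$ pointing along the great circle toward the nearest signed basis vector (take it to be $\mb e_n$, consistent with the canonical section used in the statement), which in the reparametrized $\mb w$-coordinates simply shrinks $\mb w$ radially toward $\mb 0$. Since $f\circ\exp_{\mb q}$ restricted to that great circle equals $g$ restricted to the radial ray $\set{s\hat{\mb w}: s\in(0,\norm{\mb w}{}]}$ with $\hat{\mb w}=\mb w/\norm{\mb w}{}$, the lemma reduces to a one-dimensional estimate on $g$ along this ray, governed entirely by the hypothesized lower bound $\beta_g$ on the radial derivative $\mb w^*\nabla g(\mb w)/\norm{\mb w}{}$ and its Lipschitz constant $L_g$. (This is consistent with the identity $\innerprod{\grad f(\mb q)}{\mb q-\mb e_n/q_n}=\innerprod{\mb w}{\nabla g(\mb w)}$ recalled earlier, which is the $\tau=0$ instance of what we integrate.) We may assume $\beta_g>0$, else $\mb\delta=\mb 0$ works.

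First I would parametrize $\mb w(\tau)=(\norm{\mb w}{}-\tau)\hat{\mb w}$ for $\tau\in[0,t]$, so $\norm{\mb w(\tau)}{}=\norm{\mb w}{}-\tau$ and $\norm{\mb w(\tau)-\mb w}{}=\tau$; since the direction $\hat{\mb w}$ is fixed, the fundamental theorem of calculus gives $g(\mb w(t))-g(\mb w)=-\int_0^t \mb w(\tau)^*\nabla g(\mb w(\tau))/\norm{\mb w(\tau)}{}\,d\tau$. For $\tau\le \tfrac{3\Delta}{2\pi\sqrt n}$ the point $\mb w(\tau)$ lies in the ball $\mc B(\mb w,\tfrac{3\Delta}{2\pi\sqrt n})$ on which the hypotheses are posed, so the integrand is $\ge \beta_g-L_g\tau$, and integrating yields $g(\mb w(t))-g(\mb w)\le -\beta_g t+\tfrac12 L_g t^2$ for every $t\le \tfrac{3\Delta}{2\pi\sqrt n}$. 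Minimizing this quadratic over $t$ in that interval gives exactly the two cases in the statement: the unconstrained minimizer $t=\beta_g/L_g$ (value $-\beta_g^2/(2L_g)$) when it fits, and otherwise $t=\tfrac{3\Delta}{2\pi\sqrt n}$, where using $L_g t<\beta_g$ the value is at most $-\tfrac12\beta_g t=-\tfrac{3\beta_g\Delta}{4\pi\sqrt n}$. So $g(\mb w(t))-g(\mb w)\le -\min\set{\tfrac{\beta_g^2}{2L_g},\tfrac{3\beta_g\Delta}{4\pi\sqrt n}}$.

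The step that needs the most care — the only genuinely geometric one — is converting the radial displacement $t$ in $\mb w$-space into a bound on the length $\norm{\mb\delta}{}$ of the tangent vector on $\bb S^{n-1}$ that realizes this move via $\exp_{\mb q}$, and checking that the radius $\tfrac{3\Delta}{2\pi\sqrt n}$ was chosen precisely so that $\norm{\mb\delta}{}\le\Delta$. I would verify that $\tau\mapsto\mb q(\mb w(\tau))=\big((\norm{\mb w}{}-\tau)\hat{\mb w},\sqrt{1-(\norm{\mb w}{}-\tau)^2}\big)$ is the geodesic through $\mb q$ and $\mb e_n$: the substitution $\norm{\mb w}{}-\tau=\sin\psi$ identifies it with $\psi\mapsto\cos\psi\,\mb e_n+\sin\psi\,(\hat{\mb w},0)$. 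Hence the corresponding $\mb\delta$ has $\norm{\mb\delta}{}$ equal to the arc length $\int_{\norm{\mb w}{}-t}^{\norm{\mb w}{}}dr/\sqrt{1-r^2}$, so $\norm{\mb\delta}{}\le t/\sqrt{1-\norm{\mb w}{}^2}\le 2\sqrt n\,t$ on $\Gamma$ (where $\norm{\mb w}{}\le\sqrt{(4n-1)/(4n)}$, which covers $\rII$ and $\rIII$); with $t\le\tfrac{3\Delta}{2\pi\sqrt n}$ this gives $\norm{\mb\delta}{}\le \tfrac{3\Delta}{\pi}<\Delta$. A minor side point to dispatch is that the ray $\mb w(\tau)$ must stay nonzero and inside the domain of the Lipschitz hypothesis; this is automatic, since the hypothesis that $\mb w^*\nabla g/\norm{\mb w}{}$ is defined on $\mc B(\mb w,\tfrac{3\Delta}{2\pi\sqrt n})$ already forces $\tfrac{3\Delta}{2\pi\sqrt n}\le\norm{\mb w}{}$, hence $t\le\norm{\mb w}{}$ (and in the regimes where the lemma is applied $\norm{\mb w}{}$ is bounded below on $\rII,\rIII$, so there is slack). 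Combining, $f(\exp_{\mb q}(\mb\delta))=g(\mb w(t))\le f(\mb q)-\min\set{\tfrac{\beta_g^2}{2L_g},\tfrac{3\beta_g\Delta}{4\pi\sqrt n}}$, which is the claim; the remaining pieces (the FTC identity, the quadratic optimization) are routine.
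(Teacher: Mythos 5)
Your proposal is correct and follows essentially the same route as the paper: move $\mb w$ radially toward $\mb 0$, bound the decrease by $-\beta_g t + \tfrac{L_g}{2}t^2$ via the Lipschitz hypothesis, optimize over $t \in [0, \tfrac{3\Delta}{2\pi\sqrt{n}}]$, and check the pullback has $\norm{\mb \delta}{} \le \Delta$. The only (immaterial) difference is in that last step, where you bound the geodesic arc length directly by $t/\sqrt{1-\norm{\mb w}{}^2} \le 2\sqrt{n}\,t$, whereas the paper bounds the chord $\norm{\mb q(\mb w')-\mb q(\mb w)}{} \le 2\sqrt{n}\norm{\mb w'-\mb w}{}$ and converts chord to arc using $\sin x \ge \tfrac{3}{\pi}x$; both yield $\norm{\mb \delta}{}\le \Delta$.
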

\begin{proof}
See Page~\pageref{proof:lem_alg_gradient_func} under Section~\ref{sec:proof_algorithm}. 
\end{proof}

\begin{lemma}  \label{lem:alg_neg_cuv_func}
Suppose that the trust-region size $\Delta \le 1$, 
$
\mb w^* \nabla^2 g(\mb w) \mb w/\norm{\mb w}{}^2 \le - \betaconcave,
$
for some $\betaconcave$, and that $\mb w^* \nabla^2 g(\mb w) \mb w/\norm{\mb w}{}^2$ is $\Lconcave$ Lipschitz on the open ball $\mc B\left(\mb w, \frac{3\Delta}{2\pi \sqrt{n}} \right)$ centered at $\mb w$. Then there exists a tangent vector $\mb \delta \in T_{\mb q} \bb S^{n-1}$ with $\norm{\mb \delta}{} \le \Delta$, such that
\begin{equation*}
f( \exp_{\mb q}(\mb \delta) ) \;\le\; f( \mb q) - \min \set{ \frac{2 \betaconcave^3}{3 \Lconcave^2}, \frac{3\Delta^2 \betaconcave}{8\pi^2 n} }. 
\end{equation*}
\end{lemma}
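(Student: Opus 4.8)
The statement to prove is Lemma~\ref{lem:alg_neg_cuv_func}. Following the pattern of Lemma~\ref{lem:alg_gradient_func}, I would not reason about the trust-region subproblem at all; I would just \emph{exhibit} one admissible tangent vector $\mb\delta\in T_{\mb q}\bb S^{n-1}$ with $\norm{\mb\delta}{}\le\Delta$ along which $f$ drops by the claimed amount. The hypothesis says precisely that $g$ has strong, slowly-varying negative curvature in the radial direction $\mb v\doteq\mb w/\norm{\mb w}{}$ at $\mb w$, so the natural move is a radial perturbation in the $\mb w$-chart. The first step is to reduce to one dimension: consider the univariate restriction $\psi(\rho)\doteq g(\rho\,\mb v)$ on the segment $\{\rho\,\mb v:\ |\rho-\norm{\mb w}{}|\le \tfrac{3\Delta}{2\pi\sqrt n}\}$. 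Along this segment $\mb v$ is still the radial direction, so $\psi''(\rho)=\mb v^*\nabla^2 g(\rho\mb v)\mb v=\frac{(\rho\mb v)^*\nabla^2 g(\rho\mb v)(\rho\mb v)}{\norm{\rho\mb v}{}^2}$, i.e. exactly the curvature quantity in the hypothesis evaluated at $\rho\mb v$. Hence $\psi''(\norm{\mb w}{})\le-\betaconcave$ and, by the Lipschitz assumption, $\psi''(\rho)\le-\betaconcave+\Lconcave|\rho-\norm{\mb w}{}|$ on the whole segment; crucially, no Lipschitz constants other than $\Lconcave$ enter, which matches the form of the conclusion.

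Next I would apply Taylor's theorem with integral remainder to $\psi$, obtaining for $\ell\doteq|\rho-\norm{\mb w}{}|$
\[
\psi(\rho)\ \le\ \psi(\norm{\mb w}{})+\psi'(\norm{\mb w}{})(\rho-\norm{\mb w}{})-\tfrac{\betaconcave}{2}\ell^2+\tfrac{\Lconcave}{6}\ell^3 .
\]
The first-order term is not controlled by the hypotheses, so I neutralize it either by the symmetric-averaging trick (bound $\min\{\psi(\norm{\mb w}{}+\ell),\psi(\norm{\mb w}{}-\ell)\}$, whose first-order contributions cancel upon averaging) or, equivalently, by choosing the sign of $\rho-\norm{\mb w}{}$ opposite to that of $\psi'(\norm{\mb w}{})$; either way one is left with $\psi(\rho)\le\psi(\norm{\mb w}{})-\tfrac{\betaconcave}{2}\ell^2+\tfrac{\Lconcave}{6}\ell^3$. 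Minimizing $-\tfrac{\betaconcave}{2}\ell^2+\tfrac{\Lconcave}{6}\ell^3$ over $\ell\in(0,\tfrac{3\Delta}{2\pi\sqrt n}]$: the unconstrained minimizer is $\ell^\star=2\betaconcave/\Lconcave$ with value $-\tfrac{2\betaconcave^3}{3\Lconcave^2}$; if $\ell^\star$ exceeds the admissible range then the minimum is at $\ell=\tfrac{3\Delta}{2\pi\sqrt n}$, and using $\Lconcave\le 2\betaconcave/\ell$ in that regime the value is at most $-\tfrac{\betaconcave}{6}\ell^2=-\tfrac{3\Delta^2\betaconcave}{8\pi^2 n}$. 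In both cases $\psi(\rho)-\psi(\norm{\mb w}{})\le-\min\{\tfrac{2\betaconcave^3}{3\Lconcave^2},\tfrac{3\Delta^2\betaconcave}{8\pi^2 n}\}$, which is exactly the target decrease.

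It remains to lift this $\mb w$-space move back to the sphere. Let $\mb w'\doteq\rho\,\mb v$ be the perturbed point and $\mb q'\doteq\mb q(\mb w')$, so that $f(\mb q')=g(\mb w')=\psi(\rho)$ and $f(\mb q)=g(\mb w)=\psi(\norm{\mb w}{})$. I would take $\mb\delta\doteq\exp_{\mb q}^{-1}(\mb q')\in T_{\mb q}\bb S^{n-1}$, so that $\exp_{\mb q}(\mb\delta)=\mb q'$ and $f(\exp_{\mb q}(\mb\delta))\le f(\mb q)-\min\{\cdots\}$ as required; the only remaining point is $\norm{\mb\delta}{}\le\Delta$. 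Here $\norm{\mb\delta}{}$ is the geodesic distance $d_{\bb S^{n-1}}(\mb q,\mb q')$, which is bounded by $\tfrac{\pi}{2}\norm{\mb q-\mb q'}{}$ (geodesic distance $\le\tfrac{\pi}{2}$ chord length on the unit sphere), and $\norm{\mb q-\mb q'}{}=\norm{\mb q(\mb w)-\mb q(\mb w')}{}\le 2\sqrt n\,\norm{\mb w-\mb w'}{}=2\sqrt n\,\ell$ because the Jacobian of the chart $\mb w\mapsto\mb q(\mb w)$ has operator norm $(1-\norm{\mb w}{}^2)^{-1/2}$, which is at most $2\sqrt n$ on $\Gamma$ and at most $\sqrt n$ on the equatorial section in which $\mb q$ has its largest coordinate. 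Since $\ell\le\tfrac{3\Delta}{2\pi\sqrt n}$, these estimates give $\norm{\mb\delta}{}\le\Delta$ — this is exactly how the ball radius $\tfrac{3\Delta}{2\pi\sqrt n}$ in the hypothesis is calibrated, and similar statements for the other equatorial sections finish the argument.

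The main obstacle is the lifting step near the boundary of the chart $\Gamma$, where $\mb q(\cdot)$ is badly distorted (the base coordinate $q_n$ is small), so that if we are forced to perturb \emph{outward} the crude $2\sqrt n$ bound is not tight enough and one must also check $\mb w'$ still lies in the chart. I would handle this by always working in the equatorial section for which $\mb q$ attains its largest coordinate (so that coordinate is $\ge 1/\sqrt n$ and the chart is $\sqrt n$-Lipschitz), and by perturbing inward; the latter is legitimate because, in the negative-curvature region $\rIII$, the radial derivative $\mb w^*\nabla g(\mb w)/\norm{\mb w}{}$ of $g$ is nonnegative (a consequence of the landscape analysis in~\cite{sun2015complete_a}), which simultaneously kills the first-order term and keeps $\mb w'$ inside $\Gamma$. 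The remaining pieces — the precise sphere-versus-chart distance comparisons, verifying the perturbed segment stays inside the ball on which the Lipschitz hypothesis is posed, and the constant bookkeeping — are routine.
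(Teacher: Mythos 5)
Your proposal is correct and follows essentially the same route as the paper's proof: a radial perturbation $\mb w \mapsto \mb w - t\sigma\,\mb w/\norm{\mb w}{}$ with the sign chosen to make the first-order term nonpositive, an integral-form Taylor expansion whose second-order remainder is controlled by the Lipschitz hypothesis on the radial curvature, minimization of $-\tfrac{\betaconcave}{2}t^2+\tfrac{\Lconcave}{6}t^3$ over $t\in[0,\tfrac{3\Delta}{2\pi\sqrt n}]$, and the same chart-Lipschitz plus chord-to-geodesic lifting used in Lemma~\ref{lem:alg_gradient_func}. The extra caution you raise about the chart boundary is reasonable but not needed for the argument as the paper carries it out.
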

\begin{proof}
See Page~\pageref{proof:lem_alg_neg_cuv_func} under Section~\ref{sec:proof_algorithm}. 
\end{proof}
One can take $\beta_g = \betaconcave = c_\star \theta$ as shown in Theorem~\ref{thm:geometry_orth}, and take the Lipschitz results in Proposition~\ref{prop:lip-gradient} and Proposition~\ref{prop:lip-hessian-negative} (note that $\norm{\mb X_0}{\infty} \le 4 \log^{1/2} (np)$ w.h.p. by Lemma~\ref{lem:X-infinty-tail-bound}), repeat the argument for other $2n-1$ symmetric regions, and conclude that w.h.p. the objective value decreases by at least a constant amount. The next proposition summarizes the results. 

\begin{proposition}\label{lem:TRM-lemma-ii}
Assume~\eqref{eq:trm_proof_assumed_cond}.	In regions $\rII$ and $\rIII$, each trust-region step reduces the objective value by at least 
\begin{align} \label{eq:alg_thm_r23_dec}
\dII = \frac{1}{2}  \min\paren{\frac{c_\star^2 c_a \theta^2 \mu}{ n^2 \log\paren{np}}, \frac{3\Delta c_\star \theta}{4\pi\sqrt{n}}},  \quad \text{and} \quad \dIII = \frac{1}{2} \min\paren{\frac{c_\star^3c_b\theta^3 \mu^4}{n^6 \log^3\paren{np}}, \frac{3\Delta^2 c_\star \theta}{8\pi^2 n}}
\end{align}
respectively, provided that
\begin{align} \label{eqn:TRM-lemma-ii-iii-0}
	\Delta < \frac{c_c c_\star \theta \mu^2}{n^{5/2} \log^{3/2}\paren{np}},
\end{align}
where $c_a$ to $c_c$ are positive constants, and $c_\star$ is as defined in Theorem~\ref{thm:geometry_orth}. 
\end{proposition}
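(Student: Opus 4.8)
The plan is to instantiate Lemma~\ref{lem:alg_gradient_func} and Lemma~\ref{lem:alg_neg_cuv_func} with the concrete constants supplied by Theorem~\ref{thm:geometry_orth}, and then chain this with Lemma~\ref{lem:alg_approx_bd2} to pass from ``there exists a descent direction'' to ``the trust-region subproblem actually decreases the objective by a constant.'' First I would treat region $\rII$. By Theorem~\ref{thm:geometry_orth}, equation~\eqref{eqn:grad-uni-orth}, on the canonical section we have $\mb w^*\nabla g(\mb w)/\norm{\mb w}{} \ge c_\star \theta$ for $\mu/(4\sqrt 2) \le \norm{\mb w}{} \le 1/(20\sqrt5)$, so we may take $\beta_g = c_\star\theta$. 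The Lipschitz constant $L_g$ of $\mb w \mapsto \mb w^*\nabla g(\mb w)/\norm{\mb w}{}$ on the small ball $\mc B(\mb w, 3\Delta/(2\pi\sqrt n))$ is furnished by Proposition~\ref{prop:lip-gradient}, together with the high-probability bound $\norm{\mb X_0}{\infty}\le 4\log^{1/2}(np)$ from Lemma~\ref{lem:X-infinty-tail-bound}; this yields $L_g \le C' n\log(np)/\mu$ for some constant $C'$. Plugging into Lemma~\ref{lem:alg_gradient_func} gives a descent direction $\mb\delta$ with $\norm{\mb\delta}{}\le\Delta$ and
\begin{align*}
f(\exp_{\mb q}(\mb\delta)) \le f(\mb q) - \min\set{\frac{c_\star^2\theta^2}{2L_g}, \frac{3 c_\star\theta\Delta}{4\pi\sqrt n}} \le f(\mb q) - \min\set{\frac{c_\star^2 c_a \theta^2\mu}{n^2\log(np)}, \frac{3 c_\star\theta\Delta}{4\pi\sqrt n}},
\end{align*}
where the last step absorbs $C'$ into the constant $c_a$. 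The region $\rIII$ is handled identically: by~\eqref{eqn:curvature-uni-orth} take $\betaconcave = c_\star\theta$, get the Lipschitz constant $\Lconcave$ of $\mb w\mapsto \mb w^*\nabla^2 g(\mb w)\mb w/\norm{\mb w}{}^2$ from Proposition~\ref{prop:lip-hessian-negative} (again with $\norm{\mb X_0}{\infty}\le 4\log^{1/2}(np)$), obtaining $\Lconcave \le C'' n^{3/2}\log^{3/2}(np)/\mu^2$, and substitute into Lemma~\ref{lem:alg_neg_cuv_func} to get a descent direction with objective decrease at least $\min\set{2\betaconcave^3/(3\Lconcave^2), 3\Delta^2\betaconcave/(8\pi^2 n)}$, which after bounding $\Lconcave$ becomes $\min\set{c_\star^3 c_b\theta^3\mu^4/(n^6\log^3(np)), 3\Delta^2 c_\star\theta/(8\pi^2 n)}$.

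Next I would invoke Lemma~\ref{lem:alg_approx_bd2}: since in each case we have exhibited a feasible $\mb\delta$ with $\norm{\mb\delta}{}\le\Delta\le 1$ achieving a decrease of $s$ equal to one of the two quantities above, the exact solution $\mb\delta_\star$ of the trust-region subproblem satisfies $f(\exp_{\mb q}(\mb\delta_\star)) \le f(\mb q) - s + \tfrac13\eta_f\Delta^3$, with $\eta_f = M_{\nabla}+2M_{\nabla^2}+L_{\nabla}+L_{\nabla^2}$. Using Lemma~\ref{lem:mag_lip_fq} and $\norm{\mb X_0}{\infty}\le 4\log^{1/2}(np)$ gives $\eta_f \le C''' n^{3/2}\log^{3/2}(np)/\mu^2$ (the $L_{\nabla^2}$ term dominates). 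The remaining task is purely arithmetic: choose the threshold on $\Delta$ in~\eqref{eqn:TRM-lemma-ii-iii-0} small enough that $\tfrac13\eta_f\Delta^3 \le \tfrac12 s$ in \emph{both} regions simultaneously, so that the net decrease is at least $\tfrac12 s = \dII$ (resp.\ $\dIII$). Comparing $\eta_f\Delta^3$ against the smaller branch of $s$ in each region — the $\rII$ constraint requires $\eta_f\Delta^3 \lesssim c_\star^2\theta^2\mu/(n^2\log(np))$ and the $\rIII$ constraint $\eta_f\Delta^3\lesssim c_\star^3\theta^3\mu^4/(n^6\log^3(np))$, while the linear/quadratic-in-$\Delta$ branches are automatically controlled — one sees the binding requirement is $\Delta^3 \lesssim c_\star^3\theta^3\mu^6/(n^{15/2}\log^{9/2}(np))$, i.e.\ $\Delta \lesssim c_\star\theta\mu^2/(n^{5/2}\log^{3/2}(np))$, which is exactly~\eqref{eqn:TRM-lemma-ii-iii-0} with an appropriate constant $c_c$.

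Finally, I would note that the argument so far applies only to the canonical section near $\mb e_n$; by the $2n$-fold symmetry of the problem under signed coordinate permutations (as used throughout the companion paper), repeating verbatim for each of the remaining $2n-1$ sections and taking a union bound — which only costs a constant factor absorbed into $c_b p^{-6}$ — establishes the decrease uniformly over all of $\rII$ and $\rIII$ on $\bb S^{n-1}$. The main obstacle I anticipate is not any single estimate but the bookkeeping: one must carefully track how the high-probability bound on $\norm{\mb X_0}{\infty}$ propagates through the Lipschitz constants in Propositions~\ref{prop:lip-gradient} and~\ref{prop:lip-hessian-negative} and through $\eta_f$, and verify that the \emph{same} choice of $\Delta$-threshold simultaneously suppresses the $\tfrac13\eta_f\Delta^3$ error term below half of the (smaller, $\rIII$) guaranteed decrease — the $\rIII$ decrease scales like $\mu^4/(n^6\log^3(np))$, which is the most stringent, and this is what forces the particular powers of $n$, $\mu$, $\theta$ appearing in~\eqref{eqn:TRM-lemma-ii-iii-0}.
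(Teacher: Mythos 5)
Your proposal is correct and follows essentially the same route as the paper's proof: take $\beta_g = \betaconcave = c_\star\theta$ from Theorem~\ref{thm:geometry_orth}, get the Lipschitz constants from Propositions~\ref{prop:lip-gradient} and~\ref{prop:lip-hessian-negative} via Lemma~\ref{lem:X-infinty-tail-bound}, feed these into Lemmas~\ref{lem:alg_gradient_func} and~\ref{lem:alg_neg_cuv_func}, and then use the bound on $\Delta$ to make the $\tfrac13\eta_f\Delta^3$ error from Lemma~\ref{lem:alg_approx_bd2} at most half the guaranteed decrease. The only blemishes are two misstated intermediate powers of $n$ — the correct orders are $L_g \lesssim n^2\log(np)/\mu$ and $\Lconcave \lesssim n^{3}\log^{3/2}(np)/\mu^2$, not $n\log(np)/\mu$ and $n^{3/2}\log^{3/2}(np)/\mu^2$ — but your displayed final bounds are exactly the ones these correct constants produce, so nothing downstream is affected.
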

\begin{proof}
We only consider the symmetric section in the vicinity of $\mb e_n$ and the claims carry on to others by symmetry. If the current iterate $\mb q^{(r)}$ is in the region $\rII$, by Theorem \ref{thm:geometry_orth}, w.h.p., we have $\mb w^* g\paren{\mb w}/\norm{\mb w}{} \ge c_\star \theta$ for the constant $c_\star$. By Proposition~\ref{prop:lip-gradient} and Lemma~\ref{lem:X-infinty-tail-bound}, w.h.p., $\mb w^* g\paren{\mb w}/\norm{\mb w}{}$ is $C_2n^2\log\paren{np}/\mu$-Lipschitz. Therefore, By Lemma~\ref{lem:alg_approx_bd2} and Lemma~\ref{lem:alg_gradient_func}, a trust-region step decreases the objective value by at least
    \begin{align*}
    \dII \doteq \min\paren{\frac{c_\star^2 \theta^2 \mu}{2C_2 n^2 \log\paren{np}}, \frac{3 c_\star \theta\Delta}{4\pi\sqrt{n}}} - \frac{c_0n^{3/2}\log^{3/2}\paren{np}}{3\mu^2} \Delta^3. 
\end{align*}
	Similarly, if $\mb q^{(r)}$ is in the region $\rIII$, by Proposition \ref{prop:lip-hessian-negative}, Theorem \ref{thm:geometry_orth} and Lemma~\ref{lem:X-infinty-tail-bound}, w.h.p., $\mb w^* \nabla^2 g\paren{\mb w} \mb w/\norm{\mb w}{}^2$ is $C_3n^3\log^{3/2}\paren{np}/\mu^2$-Lipschitz and upper bounded by $-c_\star \theta$. By Lemma~\ref{lem:alg_approx_bd2} and Lemma~\ref{lem:alg_neg_cuv_func}, a trust-region step decreases the objective value by at least
	\begin{align*}
\dIII \doteq \min\paren{\frac{2c_\star^3\theta^3 \mu^4}{3C_3^2n^6 \log^3\paren{np}}, \frac{3\Delta^2 c_\star \theta}{8\pi^2 n}} - \frac{c_0n^{3/2}\log^{3/2}\paren{np}}{3\mu^2} \Delta^3. 
\end{align*}
It can be easily verified that when $\Delta$ obeys~\eqref{eq:alg_thm_r23_dec}, \eqref{eqn:TRM-lemma-ii-iii-0} holds. 
\end{proof}

The analysis for $\rI$ is slightly trickier. In this region, near each local minimizer, the objective function is strongly convex. So we still expect each trust-region step decreases the objective value. On the other hand, it is very unlikely that we can provide a universal lower bound for the amount of decrease - as the iteration sequence approaches a local minimizer, the movement is expected to be diminishing. Nevertheless, close to the minimizer the trust-region algorithm takes ``unconstrained'' steps. For constrained $\rI$ steps, we will again show reduction in objective value by at least a fixed amount; for unconstrained step, we will show the distance between the iterate and the nearest local minimizer drops down rapidly. 

The next lemma concerns the function value reduction for constrained $\rI$ steps. 
\begin{lemma}  \label{lem:alg_strcvx_func}
Suppose the trust-region size $\Delta \le 1$, and that at a given iterate $r$, $\Hess f\paren{\mb q^{(r)}} \succeq m_H \mc P_{T_{\mb q^{(r)}}\bb S^{n-1} }$, and $\norm{\Hess f\paren{\mb q^{(r)}}}{} \le M_H$. Further assume the optimal solution $\mb \delta_\star\in T_{\mb q^{(r)}}\bb S^{n-1}$ to the trust-region subproblem~\eqref{eqn:subproblem-1} satisfies $\norm{\mb \delta_\star}{} = \Delta$, i.e., the norm constraint is active. Then there exists a tangent vector $\mb \delta \in T_{\mb q^{(r)}} \bb S^{n-1}$ with $\norm{\mb \delta}{} \le \Delta$, such that
\begin{equation*}
f( \exp_{\mb q^{(r)}}(\mb \delta) ) \;\le\; f\paren{\mb q^{(r)}} - \frac{m_H^2 \Delta^2}{M_H} + \frac{1}{6} \eta_f\Delta^3,
\end{equation*}
where $\eta_f$ is defined the same as Lemma~\ref{lem:alg_approx_bd2}.
\end{lemma}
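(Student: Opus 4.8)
The plan is to produce one explicit feasible tangent vector whose quadratic-model value is low enough, and then convert that model bound into a bound on $f$ through the cubic Taylor remainder on the sphere. The natural candidate is a Cauchy point — a scaled step along $-\grad f(\mb q^{(r)})$, which already lies in $T_{\mb q^{(r)}}\bb S^{n-1}$; equivalently one may take $\mb\delta=\mb\delta_\star$ itself, since the subproblem minimizer does at least as well in the model as any feasible point, which is also why this lemma yields a genuine decrease for the step the algorithm actually takes.

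First I would convert the active-constraint hypothesis into a lower bound on the Riemannian gradient. On $T_{\mb q^{(r)}}\bb S^{n-1}$ the operator $\Hess f(\mb q^{(r)})$ is positive definite with least eigenvalue $\ge m_H$, so the unconstrained minimizer of $\widehat f(\cdot\,;\mb q^{(r)})$ over the tangent space is the Newton point $\mb\delta_N=-[\Hess f(\mb q^{(r)})]^{\dagger}\grad f(\mb q^{(r)})$ (pseudoinverse on the tangent space). Were $\mb\delta_N$ feasible, $\norm{\mb\delta_N}{}<\Delta$, it would equal $\mb\delta_\star$, contradicting $\norm{\mb\delta_\star}{}=\Delta$; hence $\norm{\mb\delta_N}{}\ge\Delta$, and since $\grad f(\mb q^{(r)})=-\Hess f(\mb q^{(r)})\mb\delta_N$ on the tangent space, $\norm{\grad f(\mb q^{(r)})}{}\ge m_H\norm{\mb\delta_N}{}\ge m_H\Delta$.

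Next I would lower-bound the model decrease. With $\mb\delta(t)=-t\,\grad f(\mb q^{(r)})/\norm{\grad f(\mb q^{(r)})}{}$ and $c\doteq\grad f(\mb q^{(r)})^{*}\Hess f(\mb q^{(r)})\grad f(\mb q^{(r)})/\norm{\grad f(\mb q^{(r)})}{}^{2}\in[m_H,M_H]$,
\[
\widehat f(\mb\delta(t);\mb q^{(r)})-f(\mb q^{(r)})=-t\,\norm{\grad f(\mb q^{(r)})}{}+\tfrac{1}{2} t^{2}c ,
\]
so minimizing over $t\in[0,\Delta]$ — splitting into the interior case $\norm{\grad f(\mb q^{(r)})}{}\le c\Delta$ (decrease $\norm{\grad f(\mb q^{(r)})}{}^{2}/(2c)$) and the boundary case $t=\Delta$ (decrease $\ge\tfrac{1}{2}\Delta\norm{\grad f(\mb q^{(r)})}{}$), and in each invoking $\norm{\grad f(\mb q^{(r)})}{}\ge m_H\Delta$ and $c\le M_H$ — produces a model decrease of order $m_H^{2}\Delta^{2}/M_H$; in the boundary regime the split can be refined by carrying the nonnegative multiplier $\lambda$ from the subproblem optimality condition $(\Hess f(\mb q^{(r)})+\lambda\mc P)\mb\delta_\star=-\grad f(\mb q^{(r)})$, which is where the precise constant of the statement is fixed. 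Finally, the same second-order Taylor estimate that drives Lemma~\ref{lem:alg_approx_bd2} — $|f(\exp_{\mb q}(\mb\delta))-\widehat f(\mb\delta;\mb q)|\le\tfrac{1}{6}\eta_f\norm{\mb\delta}{}^{3}$, with $\eta_f=M_{\nabla}+2M_{\nabla^{2}}+L_{\nabla}+L_{\nabla^{2}}$ from Lemma~\ref{lem:mag_lip_fq} and uniform over $\bb S^{n-1}$ — gives $f(\exp_{\mb q^{(r)}}(\mb\delta))\le\widehat f(\mb\delta;\mb q^{(r)})+\tfrac{1}{6}\eta_f\Delta^{3}$ since $\norm{\mb\delta}{}\le\Delta$, and chaining the two bounds yields the claim.

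The main obstacle is the model-decrease constant in the strongly convex regime: since $-\grad f(\mb q^{(r)})$ need not align with the least-curvature direction of $\Hess f(\mb q^{(r)})$, a bare Cauchy estimate only delivers $m_H^{2}\Delta^{2}/(2M_H)$, so one must combine the steepest-descent bound with the active-constraint information (the gradient lower bound, and in the boundary case the multiplier $\lambda$) and balance the cases to land on the stated constant. The remaining ingredients — reading $\grad f$, $\Hess f$, operator norms, eigenvalues and pseudoinverses in the tangent-space sense, and quoting the uniform cubic remainder — are bookkeeping already supplied by Lemmas~\ref{lem:mag_lip_fq} and~\ref{lem:alg_approx_bd2}.
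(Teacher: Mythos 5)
Your proposal is correct and follows essentially the same route as the paper: a Cauchy step along $-\grad f(\mb q^{(r)})$, the gradient lower bound $\norm{\grad f(\mb q^{(r)})}{} \ge m_H \Delta$ deduced from the fact that the (feasible-or-not) Newton point must have norm at least $\Delta$ when the constraint is active, the resulting model decrease of $\min\{m_H\Delta^2, m_H^2\Delta^2/M_H\}/2$, and the one-sided cubic Taylor comparison $f(\exp_{\mb q^{(r)}}(\mb \delta)) \le \widehat f(\mb \delta; \mb q^{(r)}) + \tfrac{1}{6}\eta_f\Delta^3$. The ``missing factor of $2$'' you flag is not a gap in your argument but an inconsistency in the paper itself: its proof likewise only establishes a model decrease of $m_H^2\Delta^2/(2M_H)$, the jump to $m_H^2\Delta^2/M_H$ in the final display is a typo, and the downstream use in Proposition~\ref{lem:TRM-lemma-iii} is consistent with the $1/2$ factor, so no appeal to the multiplier $\lambda$ is needed.
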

\begin{proof}
See Page~\pageref{proof:lem_alg_strcvx_func} under Section~\ref{sec:proof_algorithm}. 
\end{proof}
The next lemma provides an estimate of $m_H$. Again we will only state the result for the ``canonical'' section with the ``canonical'' $\mb q(\mb w)$ mapping.  
\begin{lemma} \label{lem:alg_strcvx_lb}
There exists a positive constant $C$, such that for all $\theta \in \paren{0, 1/2}$ and $\mu < \theta/10$, whenever $p \ge Cn^3 \log \frac{n}{\theta \mu}/(\mu \theta^2)$, it holds with probability at least $1 -  c p^{-7}$ that for all $\mb q$ with $\norm{\mb w\paren{\mb q}}{} \le \mu/(4\sqrt{2})$, 
\begin{align*}
\Hess f\paren{\mb q} \succeq c_\star \frac{\theta}{\mu} \mc P_{T_{\mb q} \bb S^{n-1}}.  
\end{align*}
Here $c_\star$ is as in Theorem~\ref{thm:geometry_orth} and Theorem~\ref{thm:geometry_comp}, and $c > 0$ is another constant. 
\end{lemma}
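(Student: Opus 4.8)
The plan is to reduce, as elsewhere in this section, to $\mb A_0 = \mb I$, and then transfer the Euclidean strong-convexity estimate $\nabla^2 g(\mb w;\mb X_0)\succeq\tfrac{c_\star\theta}{\mu}\mb I$ over $\{\mb w:\norm{\mb w}{}\le\mu/(4\sqrt2)\}$ — available from~\eqref{eqn:hess-zero-uni-orth} of Theorem~\ref{thm:geometry_orth}, or from the corresponding estimate of the companion paper under the milder hypotheses assumed here — into a lower bound on the Riemannian Hessian $\Hess f(\mb q)$ on the matching spherical cap. The bridge is the graph chart $\mb q(\mb w) = (\mb w, q_n)$, $q_n = \sqrt{1-\norm{\mb w}{}^2}$, whose differential $D\mb q(\mb w)$ — the $n\times(n-1)$ matrix stacking $\mb I_{n-1}$ over $-\mb w^*/q_n$ — maps $\R^{n-1}$ onto $T_{\mb q(\mb w)}\bb S^{n-1}$: every $\mb\delta\in T_{\mb q(\mb w)}\bb S^{n-1}$ equals $D\mb q(\mb w)\mb u$ for a unique $\mb u$, and $\norm{\mb u}{}^2\le\norm{\mb\delta}{}^2 = \norm{\mb u}{}^2 + (\mb w^*\mb u)^2/q_n^2\le(1+\norm{\mb w}{}^2/q_n^2)\norm{\mb u}{}^2$.

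First I would record the exact identity obtained by differentiating $g = f\circ\mb q$ twice and comparing with~\eqref{eq:fq_rie_hess}. Since $q_1,\dots,q_{n-1}$ are linear in $\mb w$ while $\nabla^2 q_n(\mb w) = -\tfrac1{q_n}\mb I - \tfrac{\mb w\mb w^*}{q_n^3}$, and using $D\mb q(\mb w)^*\nabla f(\mb q) = \nabla g(\mb w)$, a short computation collapses all the first-order contributions to give
\begin{align*}
\mb\delta^*\Hess f(\mb q)\mb\delta \;=\; \mb u^*\nabla^2 g(\mb w)\mb u \;-\; \innerprod{\nabla g(\mb w)}{\mb w}\norm{\mb\delta}{}^2, \qquad \mb\delta = D\mb q(\mb w)\mb u,
\end{align*}
which is the second-order analogue of the identity $\innerprod{\grad f(\mb q)}{\mb q - \mb e_n/q_n} = \innerprod{\mb w}{\nabla g(\mb w)}$ quoted in Section~\ref{sec:geo_reproduce}.

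Next I would bound the scalar correction. On the cap $\norm{\mb w}{}\le\mu/(4\sqrt2)$ we have $\norm{\mb u}{}^2\ge(1-\mu^2/16)\norm{\mb\delta}{}^2$, so~\eqref{eqn:hess-zero-uni-orth} yields $\mb u^*\nabla^2 g(\mb w)\mb u\ge\tfrac{c_\star\theta}{\mu}(1-\mu^2/16)\norm{\mb\delta}{}^2$; and $|\innerprod{\nabla g(\mb w)}{\mb w}|\le\norm{\nabla g(\mb w)}{}\norm{\mb w}{}\le\sqrt{1+\mu^2/16}\,\norm{\nabla f(\mb q)}{}\cdot\mu/(4\sqrt2)$. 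Crucially, rather than the worst-case $\norm{\nabla f(\mb q)}{}\le\sqrt n\norm{\mb X_0}{\infty}$ of Lemma~\ref{lem:mag_lip_fq}, I would invoke the cap-uniform control of $\nabla f$ from the companion paper (the estimate underlying~\eqref{eqn:grad-uni-orth}), giving $\norm{\nabla f(\mb q)}{}\le C'\theta$ w.h.p. for all $\mb q$ with $\norm{\mb w(\mb q)}{}\le\mu/(4\sqrt2)$; hence $|\innerprod{\nabla g(\mb w)}{\mb w}|\le C''\theta\mu$, which under $\mu<\theta/10$ is of strictly smaller order than $\theta/\mu$. Combining,
\begin{align*}
\mb\delta^*\Hess f(\mb q)\mb\delta \;\ge\; \Big(\tfrac{c_\star\theta}{\mu}(1-\mu^2/16) - C''\theta\mu\Big)\norm{\mb\delta}{}^2 \;\ge\; c_\star\tfrac{\theta}{\mu}\norm{\mb\delta}{}^2,
\end{align*}
the last step absorbing the two lower-order losses into the slack carried by the constant in~\eqref{eqn:hess-zero-uni-orth}. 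Since $\mb\delta\in T_{\mb q}\bb S^{n-1}$ was arbitrary, this is $\Hess f(\mb q)\succeq c_\star\tfrac\theta\mu\mc P_{T_{\mb q}\bb S^{n-1}}$.

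The chart algebra is routine; the real content is the two concentration inputs borrowed above — the $g$-space Hessian bound and the cap-uniform gradient bound — each needing a net-plus-Lipschitz argument over the continuum cap together with a matrix/scalar Bernstein estimate for each fixed $\mb q$ (after truncating the sub-Gaussian columns at level $\norm{\mb X_0}{\infty}\lesssim\sqrt{\log np}$ via Lemma~\ref{lem:X-infinty-tail-bound}), and it is there that $\mu<\theta/10$ and $p\gtrsim n^3\log(n/\theta\mu)/(\mu\theta^2)$ get consumed; I expect this to be the main obstacle. A self-contained alternative avoids the chart and concentrates $\Hess f(\mb q) = \mc P_{T}(\nabla^2 f(\mb q) - \innerprod{\nabla f(\mb q)}{\mb q}\mb I)\mc P_{T}$ directly from~\eqref{eq:fq_grad}--\eqref{eq:fq_hess}: restricted to $T_{\mb q}\bb S^{n-1}$ its expectation is dominated by the $\Theta(\theta(1-\theta)/\mu)$ mass from the $\approx(1-\theta)p$ columns with $[X_0]_{nk}=0$ — for which $\mb q^*(\mb x_0)_k$ is the inner product of $\mb w$ with the first $n-1$ coordinates of $(\mb x_0)_k$, hence tiny, so $1-\tanh^2(\cdot/\mu)\approx1$ — whereas the columns with $[X_0]_{nk}\neq0$ and the $-\innerprod{\nabla f(\mb q)}{\mb q}\mb I$ term together contribute only $O(\theta)$ and are swamped once $\mu<\theta/10$.
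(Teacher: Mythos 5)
Your main route is genuinely different from the paper's. The paper proves the lemma from scratch in $\mb q$-space: it computes $\expect{\mb U^*\Hess f(\mb q)\mb U}$ directly from \eqref{eq:fq_grad}--\eqref{eq:fq_hess} (lower-bounding the $\nabla^2 f$ term by the $(1-\theta)p$ columns with vanishing last coordinate and using the tangency constraint to show such tangent vectors have almost all their mass in the first $n-1$ coordinates, then bounding $\expect{\innerprod{\nabla f(\mb q)}{\mb q}}$ by an absolute constant that $\mu<\theta/10$ renders negligible against $\theta/\mu$), applies matrix/scalar Bernstein at each fixed $\mb q$, and uniformizes over the cap with an $\eps$-net together with the parallel-translation Lipschitz estimate for $\Hess f$ (Lemmas~\ref{lem:alg_tsp_op} and~\ref{lem:rie_hess_lip}). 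That is precisely the ``self-contained alternative'' you sketch in your last sentences. Your primary route instead transfers the already-uniform $\mb w$-space bound \eqref{eqn:hess-zero-uni-orth} through the chart, via the identity $\mb\delta^*\Hess f(\mb q)\mb\delta = \mb u^*\nabla^2 g(\mb w)\mb u - \innerprod{\nabla g(\mb w)}{\mb w}\norm{\mb\delta}{}^2$ for $\mb\delta = D\mb q(\mb w)\mb u$. I checked this identity and it is correct (the first-order terms from $\nabla^2 q_n$ combine with $-\innerprod{\nabla f}{\mb q}\mb I$ exactly as you claim), and it buys something real: uniformity over the cap is inherited from the companion paper's statement, so you never need the parallel-transport machinery.

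The gap is in the correction term. You invoke a cap-uniform bound $\norm{\nabla f(\mb q)}{}\le C'\theta$ and attribute it to ``the estimate underlying~\eqref{eqn:grad-uni-orth}''; but \eqref{eqn:grad-uni-orth} is a \emph{lower} bound on the directional derivative $\mb w^*\nabla g(\mb w)/\norm{\mb w}{}$ over the annulus $\mu/(4\sqrt2)\le\norm{\mb w}{}\le 1/(20\sqrt5)$ --- neither the region nor the quantity you need. No upper bound on $\norm{\nabla f}{}$ or on $\abs{\innerprod{\nabla g(\mb w)}{\mb w}}$ over the cap is available off the shelf, and the deterministic bound $\norm{\nabla g(\mb w)}{}\norm{\mb w}{}\lesssim\mu\sqrt{n}\norm{\mb X_0}{\infty}$ does \emph{not} beat $c_\star\theta/\mu$ under the lemma's stated hypotheses ($\mu<\theta/10$ alone, with no $\mu\lesssim n^{-5/4}$ assumed). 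So this term forces exactly the kind of fixed-$\mb q$ Bernstein bound plus net-and-Lipschitz uniformization that the chart trick was supposed to let you skip --- which is why the paper bites the bullet and concentrates the whole Riemannian Hessian at once, folding $\innerprod{\nabla f(\mb q)}{\mb q}$ into the same net. A secondary mismatch: Theorem~\ref{thm:geometry_orth} as stated requires $\mu<c_a\min\{\theta n^{-1},n^{-5/4}\}$ and $p\gtrsim n^3/(\mu^2\theta^2)$, whereas the lemma assumes only $\mu<\theta/10$ and $p\gtrsim n^3/(\mu\theta^2)$; you flag this and point to the companion paper's region-specific proposition, which is the right fix, but it must actually be cited in that form rather than via Theorem~\ref{thm:geometry_orth}.
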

\begin{proof}
See Page~\pageref{proof:lem_alg_strcvx_lb} under Section~\ref{sec:proof_algorithm}. 
\end{proof}
We know that $\norm{\mb X_0}{\infty} \le 4\log^{1/2} (np)$ w.h.p., and hence by the definition of Riemannian Hessian and Lemma~\ref{lem:mag_lip_fq}, 
\begin{align*}
M_H \doteq \norm{\Hess f(\mb q)}{} & \le \norm{\nabla^2 f(\mb q)}{} + \norm{\nabla f(\mb q)}{} \le M_{\nabla^2} + M_{\nabla} \le \frac{2n}{\mu} \norm{\mb X_0}{\infty} ^2 \le \frac{16n}{\mu} \log (np). 
\end{align*}
Combining this estimate and Lemma~\ref{lem:alg_strcvx_lb} and Lemma~\ref{lem:alg_approx_bd2}, we obtain a concrete lower bound for the reduction of objective value for each constrained $\rI$ step. 
\begin{proposition}\label{lem:TRM-lemma-iii}
Assume~\eqref{eq:trm_proof_assumed_cond}. Each constrained $\rI$ trust-region step (i.e., $\norm{\mb \delta}{} = \Delta$) reduces the objective value by at least 
\begin{align}\label{eqn:TRM-decrease-d-i}
	\dI = \frac{c c_\star^2\theta^2 }{\mu n\log (np) }\Delta^2,
\end{align}
provided 
\begin{align}\label{eqn:TRM-lemma-ii-0}
 \Delta \leq \frac{c'c_\star^2 \theta^2 \mu  }{n^{5/2} \log^{5/2} (np) }. 
\end{align}
Here $c_\star$ is as in Theorem~\ref{thm:geometry_orth} and Theorem~\ref{thm:geometry_comp}, and $c, c'$ are positive constants.
\end{proposition}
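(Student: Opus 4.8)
The plan is to combine the strong-convexity lower bound on the Riemannian Hessian throughout region $\rI$ with the generic constrained-step decrease estimate, and then to transfer from ``there exists a good feasible tangent vector'' to a guarantee on the actual trust-region subproblem minimizer via Lemma~\ref{lem:alg_approx_bd2}. First I would invoke Lemma~\ref{lem:alg_strcvx_lb} so that, on its high-probability event, every iterate $\mb q^{(r)} \in \rI$ satisfies $\Hess f(\mb q^{(r)}) \succeq m_H\, \mc P_{T_{\mb q^{(r)}} \bb S^{n-1}}$ with $m_H = c_\star \theta/\mu$, and I would carry along the upper bound $M_H \doteq \norm{\Hess f(\mb q^{(r)})}{} \le 16 n \log(np)/\mu$ established just above from Lemma~\ref{lem:mag_lip_fq} together with $\norm{\mb X_0}{\infty} \le 4 \log^{1/2}(np)$ (Lemma~\ref{lem:X-infinty-tail-bound}). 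On the same event I would also record $\eta_f \le C_0 n^{3/2}\log^{3/2}(np)/\mu^2$, since among the four terms of $\eta_f = M_{\nabla} + 2 M_{\nabla^2} + L_{\nabla} + L_{\nabla^2}$ the term $L_{\nabla^2} \le \tfrac{2}{\mu^2} n^{3/2}\norm{\mb X_0}{\infty}^3$ dominates when $\mu < 1$ and $n$ is large.

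Next, because the step under consideration is constrained ($\norm{\mb \delta_\star}{} = \Delta$) and $\Delta \le 1$ under~\eqref{eqn:TRM-lemma-ii-0}, Lemma~\ref{lem:alg_strcvx_func} guarantees a feasible tangent vector $\mb \delta$ with $f(\exp_{\mb q^{(r)}}(\mb \delta)) \le f(\mb q^{(r)}) - m_H^2 \Delta^2/M_H + \tfrac16 \eta_f \Delta^3$, i.e.\ an objective decrease of at least $s \doteq m_H^2 \Delta^2/M_H - \tfrac16 \eta_f \Delta^3$. Feeding this $s$ into Lemma~\ref{lem:alg_approx_bd2} then shows the actual subproblem minimizer $\mb \delta_\star$ decreases the objective by at least $s - \tfrac13 \eta_f \Delta^3 = m_H^2 \Delta^2/M_H - \tfrac12 \eta_f \Delta^3$. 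To absorb the cubic remainder I would require $\tfrac12 \eta_f \Delta^3 \le \tfrac12\, m_H^2 \Delta^2 / M_H$, i.e.\ $\Delta \le m_H^2/(\eta_f M_H)$; substituting $m_H = c_\star\theta/\mu$, $M_H \le 16 n \log(np)/\mu$, and $\eta_f \le C_0 n^{3/2}\log^{3/2}(np)/\mu^2$ gives $m_H^2/(\eta_f M_H) \ge c_\star^2 \theta^2 \mu / (16 C_0\, n^{5/2}\log^{5/2}(np))$, which is exactly~\eqref{eqn:TRM-lemma-ii-0} with $c' = 1/(16 C_0)$. Under that choice the net decrease is at least $\tfrac12\, m_H^2 \Delta^2/M_H \ge c_\star^2\theta^2 \Delta^2/(32 \mu n \log(np))$, yielding~\eqref{eqn:TRM-decrease-d-i} with $c = 1/32$.

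There is no deep obstacle here; the content is largely bookkeeping. The point needing the most care is the constant chase in the final step: ensuring that the cubic errors contributed by \emph{both} Lemma~\ref{lem:alg_strcvx_func} and Lemma~\ref{lem:alg_approx_bd2} are each dominated by a fixed fraction of the quadratic gain $m_H^2\Delta^2/M_H$, while the resulting restriction on $\Delta$ stays consistent with the clean form~\eqref{eqn:TRM-lemma-ii-0} and with $\Delta \le 1$. A secondary check is that the hypotheses of Lemma~\ref{lem:alg_strcvx_lb} (namely $\mu < \theta/10$ and its sample-complexity requirement) are implied by the standing assumption~\eqref{eq:trm_proof_assumed_cond}, so that the high-probability events of Lemma~\ref{lem:alg_strcvx_lb} and Lemma~\ref{lem:X-infinty-tail-bound} can be intersected without degrading the overall $1 - c_b p^{-6}$ failure probability. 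Finally, I would note that the argument is written for the canonical section near $\mb e_n$ with the map $\mb q(\mb w) = [\mb w; (1-\norm{\mb w}{}^2)^{1/2}]$, and extends verbatim to the other $2n-1$ symmetric spherical caps constituting $\rI$ by the symmetry of $f(\mb q; \mb X_0)$.
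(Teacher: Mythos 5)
Your proposal is correct and follows essentially the same route as the paper: combine the Hessian bounds $m_H = c_\star\theta/\mu$ (Lemma~\ref{lem:alg_strcvx_lb}) and $M_H \le 16n\log(np)/\mu$ with Lemma~\ref{lem:alg_strcvx_func} and Lemma~\ref{lem:alg_approx_bd2} to obtain a decrease of order $m_H^2\Delta^2/M_H$ minus a cubic remainder of order $\eta_f\Delta^3$, then choose $\Delta$ per~\eqref{eqn:TRM-lemma-ii-0} so the remainder is absorbed into half the quadratic gain. Your constant bookkeeping and the symmetry/hypothesis checks match the paper's argument.
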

\begin{proof}
We only consider the symmetric section in the vicinity of $\mb e_n$ and the claims carry on to others by symmetry. We have that w.h.p. 
\begin{align*}
\norm{\Hess f(\mb q)}{} \le \frac{16n}{\mu} \log(np), \quad \text{and} \quad  \Hess f(\mb q) \succeq c_\star\frac{\theta}{\mu} \mc P_{T_{\mb q} \bb S^{n-1}}. 
\end{align*} 
Combining these estimates with Lemma~\ref{lem:alg_approx_bd2} and Lemma~\ref{lem:alg_strcvx_func}, one trust-region step will find next iterate $\mb q^{(r+1)}$ that decreases the objective value by at least 
\begin{align*}
\dI \doteq \frac{c_\star^2 \theta^2/\mu^2 }{2n\log\paren{np}/\mu} \Delta^2 - \frac{c_0n^{3/2} \log^{3/2}\paren{np}}{\mu^2} \Delta^3.
\end{align*}
Finally, by the condition on $\Delta$ in \eqref{eqn:TRM-lemma-ii-0} and the assumed conditions~\eqref{eq:trm_proof_assumed_cond}, we obtain 
\begin{align*}
	\dI \geq \frac{c_\star^2\theta^2 }{2\mu n\log (np) }\Delta^2 - \frac{c_0n^{3/2} \log^{3/2}\paren{np}}{\mu^2} \Delta^3 \geq \frac{c_\star^2\theta^2 }{4\mu n\log (np) }\Delta^2, 
\end{align*}
as desired.
\end{proof}

By the proof strategy for $\rI$ we sketched before Lemma~\ref{lem:alg_strcvx_func}, we expect the iteration sequence ultimately always takes unconstrained steps when it moves very close to a local minimizer. We will show that the following is true: when $\Delta$ is small enough, once the iteration sequence starts to take unconstrained $\rI$ step, it will take consecutive unconstrained $\rI$ steps afterwards. It takes two steps to show this: (1) upon an unconstrained $\rI$ step, the next iterate will stay in $\rI$. It is obvious we can make $\Delta \in O(1)$ to ensure the next iterate stays in $\rI \cup \rII$. To strengthen the result, we use the gradient information. From Theorem~\ref{thm:geometry_orth}, we expect the magnitudes of the gradients in $\rII$ to be lower bounded; on the other hand, in $\rI$ where points are near local minimizers, continuity argument implies that the magnitudes of gradients should be upper bounded. We will show that when $\Delta$ is small enough, there is a gap between these two bounds, implying the next iterate stays in $\rI$; (2) when $\Delta$ is small enough, the step is in fact unconstrained. Again we will only state the result for the ``canonical'' section with the ``canonical'' $\mb q(\mb w)$ mapping. The next lemma exhibits an absolute lower bound for magnitudes of gradients in $\rII$. 
\begin{lemma} \label{lem:alg_gradient_lb}
For all $\mb q$ satisfying $\mu/(4\sqrt{2}) \le \norm{\mb w\paren{\mb q}}{} \le 1/(20\sqrt{5})$, it holds that 
\begin{align*}
\norm{\grad f\paren{\mb q}}{} \ge \frac{9}{10} \frac{\mb w^* \nabla g\paren{\mb w}}{\norm{\mb w}{}}. 
\end{align*}
\end{lemma}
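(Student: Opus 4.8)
The plan is to reduce the inequality to the algebraic identity
\begin{align*}
\innerprod{\grad f(\mb q)}{\mb q - \mb e_n / q_n} \;=\; \innerprod{\mb w}{\nabla g(\mb w)},
\end{align*}
which I would prove first by a direct chain-rule computation (this is the identity quoted informally in Section~\ref{sec:geo_reproduce}). Writing $q_n = \sqrt{1 - \norm{\mb w}{}^2} > 0$ and differentiating $g(\mb w) = f(\mb q(\mb w))$ through $\mb q(\mb w) = (\mb w, q_n)$ gives $\nabla g(\mb w) = \ol{\nabla} f(\mb q) - (\nabla_n f(\mb q)/q_n)\, \mb w$, where $\ol{\nabla} f(\mb q)$ denotes the first $n-1$ coordinates of $\nabla f(\mb q)$ and $\nabla_n f(\mb q)$ its last coordinate. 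Using $\mb w^* \ol{\nabla} f(\mb q) = \mb q^* \nabla f(\mb q) - q_n \nabla_n f(\mb q)$ together with $q_n^2 + \norm{\mb w}{}^2 = 1$ yields $\innerprod{\mb w}{\nabla g(\mb w)} = \mb q^* \nabla f(\mb q) - \nabla_n f(\mb q)/q_n$. On the other side, $\grad f(\mb q) = (\mb I - \mb q\mb q^*)\nabla f(\mb q)$ is orthogonal to $\mb q$, and $\mb e_n^* \grad f(\mb q) = \nabla_n f(\mb q) - q_n\, \mb q^*\nabla f(\mb q)$, so $\innerprod{\grad f(\mb q)}{\mb q - \mb e_n/q_n} = -\tfrac{1}{q_n}\mb e_n^*\grad f(\mb q) = \mb q^*\nabla f(\mb q) - \nabla_n f(\mb q)/q_n$, which matches.

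Given the identity, I would apply Cauchy--Schwarz and then compute the relevant norm exactly. Since $\mb q - \mb e_n/q_n = (\mb w,\, q_n - 1/q_n)$ and $q_n - 1/q_n = -\norm{\mb w}{}^2/q_n$, we get
\begin{align*}
\norm{\mb q - \mb e_n/q_n}{}^2 = \norm{\mb w}{}^2 + \frac{\norm{\mb w}{}^4}{q_n^2} = \frac{\norm{\mb w}{}^2 (q_n^2 + \norm{\mb w}{}^2)}{q_n^2} = \frac{\norm{\mb w}{}^2}{1 - \norm{\mb w}{}^2},
\end{align*}
so $\norm{\mb q - \mb e_n/q_n}{} = \norm{\mb w}{}/\sqrt{1 - \norm{\mb w}{}^2}$. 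Combining,
\begin{align*}
\frac{\mb w^* \nabla g(\mb w)}{\norm{\mb w}{}} = \frac{\innerprod{\grad f(\mb q)}{\mb q - \mb e_n/q_n}}{\norm{\mb w}{}} \le \norm{\grad f(\mb q)}{} \cdot \frac{\norm{\mb q - \mb e_n/q_n}{}}{\norm{\mb w}{}} = \frac{\norm{\grad f(\mb q)}{}}{\sqrt{1 - \norm{\mb w}{}^2}}.
\end{align*}

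Finally I would invoke the region constraint: for $\norm{\mb w}{} \le 1/(20\sqrt{5})$ we have $\norm{\mb w}{}^2 \le 1/2000$, hence $\sqrt{1 - \norm{\mb w}{}^2} > 9/10$ with ample margin, which rearranges the last display into $\norm{\grad f(\mb q)}{} \ge \tfrac{9}{10}\, \mb w^* \nabla g(\mb w)/\norm{\mb w}{}$. I do not expect any genuine obstacle here: the only care needed is in getting the chain-rule identity and the norm computation right, and one notes that the lower bound $\mu/(4\sqrt 2) \le \norm{\mb w}{}$ in the hypothesis plays no role — only the upper bound on $\norm{\mb w}{}$ is used.
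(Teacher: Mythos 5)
Your proposal is correct and follows essentially the same route as the paper's proof: both hinge on the identity $\innerprod{\grad f(\mb q)}{\mb q - \mb e_n/q_n} = \innerprod{\mb w}{\nabla g(\mb w)}$, followed by Cauchy--Schwarz and the exact computation $\norm{\mb q - \mb e_n/q_n}{}^2/\norm{\mb w}{}^2 = 1/(1-\norm{\mb w}{}^2) \le 2000/1999$. The only cosmetic difference is that you derive the identity abstractly via the chain rule while the paper expands the explicit $\tanh$-sum form of $\nabla f$; your observation that only the upper bound on $\norm{\mb w}{}$ matters is also accurate (the lower bound merely ensures $\norm{\mb w}{} > 0$).
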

\begin{proof}
See Page~\pageref{proof:lem_alg_gradient_lb} under Section~\ref{sec:proof_algorithm}. 
\end{proof}
Assuming~\eqref{eq:trm_proof_assumed_cond}, Theorem~\ref{thm:geometry_orth} gives that w.h.p. $\mb w^* \nabla g(\mb w)/\norm{\mb w}{} \ge c_\star \theta$. Thus, w.h.p, $\norm{\grad f(\mb q)}{} \ge 9c_\star\theta/10$ for all $\mb q \in \rII$. The next lemma compares the magnitudes of gradients before and after taking one unconstrained $\rI$ step. This is crucial to providing upper bound for magnitude of gradient for the next iterate, and also to establishing the ultimate (quadratic) sequence convergence.
\begin{lemma}\label{lem:TR-step} 
Suppose the trust-region size $\Delta \le 1$, and at a given iterate $r$, $\Hess f \paren{\mb q^{(r)}} \succeq m_H \mc P_{ T_{\mb q^{(r)}}\bb S^{n-1}}$, and that the unique minimizer $\mb \delta_\star\in T_{\mb q^{(r)}}\bb S^{n-1} $ to the trust region subproblem \eqref{eqn:subproblem-1} satisfies $\norm{\mb \delta_\star}{} < \Delta$ (i.e., the constraint is inactive). Then, for $\mb q^{(r+1)} = \exp_{\mb q^{(r)}}\paren{\mb \delta_\star} $, we have 
\begin{equation*}
\|\grad f (\mb q^{(r+1)})\| \;\le\; \frac{L_H}{2m_H^2} \|\grad f(\mb q^{(r)}) \|^2, 
\end{equation*}
where $L_H \doteq 5n^{3/2}/({2\mu^2}) \norm{\mb X_0}{\infty}^3 + 9n/\mu \norm{\mb X_0}{\infty}^2 + 9\sqrt{n} \norm{\mb X_0}{\infty}$.
\end{lemma}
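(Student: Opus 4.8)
The plan is to recognize the unconstrained trust-region step as a Riemannian Newton step and run the classical argument for quadratic local convergence, tracking every constant through Lemma~\ref{lem:mag_lip_fq} and the closed-form parallel transport operator~\eqref{eq:alg_tsp_op}. Write $\mb q = \mb q^{(r)}$. Since the constraint is inactive, the first-order condition~\eqref{eqn:ts-optimal-solution-1} holds, i.e. $\grad f(\mb q) + \Hess f(\mb q)\,\mb\delta_\star = \mb 0$. Restricting to $T_{\mb q}\bb S^{n-1}$ and using $\Hess f(\mb q)\succeq m_H\mc P_{T_{\mb q}\bb S^{n-1}}$ (so that $\Hess f(\mb q)$ maps $T_{\mb q}\bb S^{n-1}$ into itself with smallest singular value at least $m_H$), we get $\norm{\grad f(\mb q)}{} = \norm{\Hess f(\mb q)\,\mb\delta_\star}{}\ge m_H\norm{\mb\delta_\star}{}$, hence $\norm{\mb\delta_\star}{}\le \norm{\grad f(\mb q)}{}/m_H$. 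It therefore suffices to prove $\norm{\grad f(\mb q^{(r+1)})}{}\le \tfrac{L_H}{2}\,\norm{\mb\delta_\star}{}^2$.

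To this end, let $\gamma(t) = \exp_{\mb q}(t\mb\delta_\star)$ for $t\in[0,1]$ and set $g(t) \doteq \mc P_\gamma^{0\leftarrow t}\grad f(\gamma(t))$, a curve in $T_{\mb q}\bb S^{n-1}$. Parallel transport along a geodesic is an isometry, so $\norm{g(1)}{} = \norm{\grad f(\mb q^{(r+1)})}{}$; moreover $g(0) = \grad f(\mb q)$, and since the covariant derivative of the gradient field is the Riemannian Hessian and $\dot\gamma(t) = \mc P_\gamma^{t\leftarrow 0}\mb\delta_\star$ along the geodesic,
\begin{align*}
g'(t) = \mc P_\gamma^{0\leftarrow t}\,\frac{D}{dt}\grad f(\gamma(t)) = \mc P_\gamma^{0\leftarrow t}\,\Hess f(\gamma(t))\,\mc P_\gamma^{t\leftarrow 0}\,\mb\delta_\star \;\doteq\; H(t)\,\mb\delta_\star .
\end{align*}
Since $H(0) = \Hess f(\mb q)$ and the optimality condition reads $g(0) = -H(0)\mb\delta_\star$, integrating gives $g(1) = \big(\int_0^1 (H(t) - H(0))\,dt\big)\mb\delta_\star$, whence $\norm{\grad f(\mb q^{(r+1)})}{}\le \big(\int_0^1\norm{H(t) - H(0)}{}\,dt\big)\norm{\mb\delta_\star}{}$.

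The crux is a Lipschitz estimate $\norm{H(t) - H(0)}{}\le L_H\, t\norm{\mb\delta_\star}{}$ on $[0,1]$. Because $t\norm{\mb\delta_\star}{}\le\Delta\le 1$, the explicit formula~\eqref{eq:alg_tsp_op} gives $\mc P_\gamma^{t\leftarrow 0} = \mb I + E_t$ and $\mc P_\gamma^{0\leftarrow t} = (\mc P_\gamma^{t\leftarrow 0})^* = \mb I + E_t'$ with $\norm{E_t}{},\norm{E_t'}{}\le t\norm{\mb\delta_\star}{}$, both matrices orthogonal. Splitting
\begin{align*}
H(t) - H(0) &= \mc P_\gamma^{0\leftarrow t}\big(\Hess f(\gamma(t)) - \Hess f(\mb q)\big)\mc P_\gamma^{t\leftarrow 0} \\
&\quad + \big(\mc P_\gamma^{0\leftarrow t}\,\Hess f(\mb q)\,\mc P_\gamma^{t\leftarrow 0} - \Hess f(\mb q)\big),
\end{align*}
the first term has norm $\le \norm{\Hess f(\gamma(t)) - \Hess f(\mb q)}{}$ (conjugation by orthogonal matrices preserves the operator norm); applying the product rule to~\eqref{eq:fq_rie_hess}, with $\mb q\mapsto\mb I - \mb q\mb q^*$ being $2$-Lipschitz of unit norm on $\bb S^{n-1}$, $\abs{\innerprod{\nabla f(\mb q)}{\mb q} - \innerprod{\nabla f(\mb q')}{\mb q'}}\le (L_\nabla + M_\nabla)\norm{\mb q - \mb q'}{}$, and the magnitude/Lipschitz bounds $M_\nabla, M_{\nabla^2}, L_\nabla, L_{\nabla^2}$ of Lemma~\ref{lem:mag_lip_fq}, one finds that $\mb q\mapsto\Hess f(\mb q)$ (as an $n\times n$ matrix) is Lipschitz with constant of order $L_{\nabla^2} + L_\nabla + M_{\nabla^2} + M_\nabla$; multiplying by $\norm{\gamma(t) - \mb q}{}\le t\norm{\mb\delta_\star}{}$ controls this term. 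The second term equals $E_t'\Hess f(\mb q) + \Hess f(\mb q)E_t + E_t'\Hess f(\mb q)E_t$, of norm $\le 3\norm{\Hess f(\mb q)}{}\,t\norm{\mb\delta_\star}{}\le 3(M_{\nabla^2} + M_\nabla)\,t\norm{\mb\delta_\star}{}$. Adding the two bounds and inserting the explicit values from Lemma~\ref{lem:mag_lip_fq} (and simplifying the numerical coefficients, which leaves some slack against the claimed constants) yields $\norm{H(t) - H(0)}{}\le L_H\,t\norm{\mb\delta_\star}{}$ with the stated $L_H$. Then $\int_0^1\norm{H(t) - H(0)}{}\,dt\le \tfrac{L_H}{2}\norm{\mb\delta_\star}{}$, and combining with $\norm{\mb\delta_\star}{}\le\norm{\grad f(\mb q)}{}/m_H$ from the first paragraph gives $\norm{\grad f(\mb q^{(r+1)})}{}\le \tfrac{L_H}{2m_H^2}\norm{\grad f(\mb q)}{}^2$.

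The main obstacle is the Lipschitz estimate of the third paragraph: one must propagate the bounds of Lemma~\ref{lem:mag_lip_fq} through the two tangent-space projections hidden in the Riemannian Hessian~\eqref{eq:fq_rie_hess} \emph{and} through the conjugation by the non-constant parallel transport maps, while keeping the accumulated numerical constants within the claimed $L_H$. The remaining ingredients --- the exact-subproblem optimality condition, the isometry of parallel transport, the identity $\tfrac{d}{dt}\mc P_\gamma^{0\leftarrow t}\grad f(\gamma(t)) = \mc P_\gamma^{0\leftarrow t}\tfrac{D}{dt}\grad f(\gamma(t))$, and the small-angle bound on $\mc P_\gamma^{t\leftarrow 0} - \mb I$ from~\eqref{eq:alg_tsp_op} --- are routine.
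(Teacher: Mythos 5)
Your proof is correct and follows essentially the same route as the paper's: the inactive-constraint optimality condition gives $\norm{\mb\delta_\star}{}\le\norm{\grad f(\mb q^{(r)})}{}/m_H$, the manifold Taylor identity along the geodesic reduces everything to $\int_0^1\norm{\mc P_\gamma^{0\leftarrow t}\Hess f(\gamma(t))\mc P_\gamma^{t\leftarrow 0}-\Hess f(\mb q^{(r)})}{}\,dt$, and the Lipschitz bound you re-derive inline is exactly the paper's Lemma~\ref{lem:rie_hess_lip}. One small inaccuracy: the $n\times n$ matrices in~\eqref{eq:alg_tsp_op} are \emph{not} orthogonal and $\mc P_\gamma^{0\leftarrow t}\ne(\mc P_\gamma^{t\leftarrow 0})^*$ (they are isometries only between the tangent spaces; the paper's Lemma~\ref{lem:alg_tsp_op} gives the correct perturbation bounds $\tfrac{5}{4}t\norm{\mb\delta}{}$ and $\tfrac{3}{2}t\norm{\mb\delta}{}$), but since these are of the same order as your claimed $\norm{E_t}{}\le t\norm{\mb\delta_\star}{}$ and the stated $L_H$ has slack, the conclusion is unaffected.
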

\begin{proof}
See Page~\pageref{proof:lem_TR-step} under Section~\ref{sec:proof_algorithm}. 
\end{proof}
We can now bound the Riemannian gradient of the next iterate as 
\begin{align*}
\|\grad f (\mb q^{(r+1)})\| 
& \le \frac{L_H}{2m_H^2} \|\grad f(\mb q^{(r)}) \|^2 \\
& \le \frac{L_H}{2m_H^2} \|[\mb U^* \Hess f(\mb q^{(r)}) \mb U] [\mb U^* \Hess f(\mb q^{(r)}) \mb U]^{-1} \grad f(\mb q^{(r)}) \|^2 \\
& \le  \frac{L_H}{2m_H^2}  \norm{\Hess f(\mb q^{(r)})}{}^2 \Delta^2 = \frac{L_H M_H^2}{2m_H^2} \Delta^2. 
\end{align*}
Obviously, one can make the upper bound small by tuning down $\Delta$. Combining the above lower bound for $\norm{\grad f(\mb q)}{}$ for $\mb q \in \rII$, one can conclude that when $\Delta$ is small, the next iterate $\mb q^{(r+1)}$ stays in $\rI$. Another application of the optimality condition~\eqref{eqn:ts-optimal-solution-1} gives conditions on $\Delta$ that guarantees the next trust-region step is also unconstrained. Detailed argument can be found in proof of the following proposition. 

\begin{proposition}\label{lem:TRM-lemma-iv}
Assume~\eqref{eq:trm_proof_assumed_cond}. W.h.p, once the trust-region algorithm takes an unconstrained $\rI$ step (i.e., $\norm{\mb \delta}{}<\Delta$), it always takes unconstrained $\rI$ steps, provided that
\begin{align}
 \Delta \le  \frac{c c_\star^3 \theta^3 \mu}{n^{7/2} \log^{7/2}\paren{np}}, 
\end{align} 	
Here $c_\star$ is as in Theorem~\ref{thm:geometry_orth} and Theorem~\ref{thm:geometry_comp}, and $c > 0$ is another constant. 
\end{proposition}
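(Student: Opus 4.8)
The plan is a forward induction on the iterate index. Let $r_0$ be an iterate at which the algorithm takes an unconstrained $\rI$ step (the hypothesis of the proposition). I will show that whenever at some $r \ge r_0$ the step is an unconstrained $\rI$ step, then (i) $\mb q^{(r+1)} \in \rI$, and (ii) the step at $r+1$ is again an unconstrained $\rI$ step; chaining these gives the claim. As in the preceding propositions I argue on the canonical section near $\mb e_n$ with the canonical mapping $\mb q(\mb w)$ and transfer to the other $2n-1$ caps by symmetry, keeping \eqref{eq:trm_proof_assumed_cond} in force throughout.

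For (i) I start from the inequality already recorded just below Lemma~\ref{lem:TR-step}: for an unconstrained $\rI$ step, the optimality condition \eqref{eqn:ts-optimal-solution-1} together with Lemma~\ref{lem:TR-step} gives $\norm{\grad f(\mb q^{(r+1)})}{} \le \tfrac{L_H M_H^2}{2 m_H^2}\Delta^2$. Substituting $m_H = c_\star\theta/\mu$ (Lemma~\ref{lem:alg_strcvx_lb}), $M_H \le 16 n\log(np)/\mu$ (the estimate right after that lemma), and $L_H \le C n^{3/2}\log^{3/2}(np)/\mu^2$ — obtained by feeding $\norm{\mb X_0}{\infty} \le 4\log^{1/2}(np)$ (Lemma~\ref{lem:X-infinty-tail-bound}) into the definition of $L_H$ and noting the $\mu^{-2}$ term dominates under $\mu < c_a\min\Brac{\theta n^{-1}, n^{-5/4}}$ — produces a bound of the form $\norm{\grad f(\mb q^{(r+1)})}{} \le C' n^{7/2}\log^{7/2}(np)\,\Delta^2/(c_\star^2\theta^2\mu^2)$. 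Since $\norm{\mb \delta_\star^{(r)}}{} < \Delta$ is tiny relative to the cap radii $\mu/(4\sqrt2)$ and $1/(20\sqrt5)$, the point $\mb q^{(r+1)}$ stays inside the $\rI \cup \rII$ region attached to the same signed basis vector as $\mb q^{(r)}$ (only a mild $\Delta = O(1)$ bound is needed here). Finally, Lemma~\ref{lem:alg_gradient_lb} combined with Theorem~\ref{thm:geometry_orth} gives $\norm{\grad f(\mb q)}{} \ge \tfrac{9}{10}c_\star\theta$ throughout $\rII$; the stated bound on $\Delta$ makes $C' n^{7/2}\log^{7/2}(np)\,\Delta^2/(c_\star^2\theta^2\mu^2) < \tfrac{9}{10}c_\star\theta$, so $\mb q^{(r+1)}$ cannot lie in $\rII$, whence $\mb q^{(r+1)} \in \rI$.

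For (ii), since $\mb q^{(r+1)} \in \rI$, Lemma~\ref{lem:alg_strcvx_lb} gives $\Hess f(\mb q^{(r+1)}) \succeq (c_\star\theta/\mu)\,\mc P_{T_{\mb q^{(r+1)}}\bb S^{n-1}}$, so the model $\wh{f}(\cdot;\mb q^{(r+1)})$ is strongly convex on the tangent space and its unconstrained minimizer $\mb \delta_u = -\mb U(\mb U^*\Hess f(\mb q^{(r+1)})\mb U)^{-1}\mb U^*\grad f(\mb q^{(r+1)})$ obeys $\norm{\mb \delta_u}{} \le (\mu/(c_\star\theta))\,\norm{\grad f(\mb q^{(r+1)})}{}$. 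Plugging the gradient bound from (i) in gives $\norm{\mb \delta_u}{} \le C' n^{7/2}\log^{7/2}(np)\,\Delta^2/(c_\star^3\theta^3\mu)$, which is $< \Delta$ exactly when $\Delta \le c\, c_\star^3\theta^3\mu/(n^{7/2}\log^{7/2}(np))$ for a suitable constant $c$ — precisely the hypothesis. When the unconstrained minimizer of a strictly convex quadratic model lies strictly inside the trust region, it is also the constrained minimizer, so the step at $r+1$ is unconstrained. This closes the induction.

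The main obstacle is the mutual dependence of (i) and (ii): concluding $\mb q^{(r+1)}\in\rI$ needs the \emph{previous} step to have been unconstrained (to invoke the gradient-contraction estimate of Lemma~\ref{lem:TR-step}), whereas concluding that the next step is unconstrained needs $\mb q^{(r+1)}\in\rI$ (to invoke the strong-convexity bound of Lemma~\ref{lem:alg_strcvx_lb}). Making one choice of $\Delta$ of the claimed order do double duty — simultaneously pushing the post-step gradient below the $\rII$ floor $\tfrac9{10}c_\star\theta$ and keeping the unconstrained model minimizer within radius $\Delta$ — is the delicate point; comparing the two thresholds shows requirement (ii) is the binding one and reproduces the stated bound. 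Everything else is constant bookkeeping through Lemma~\ref{lem:mag_lip_fq}, Lemma~\ref{lem:alg_strcvx_lb} and Lemma~\ref{lem:X-infinty-tail-bound}, plus the routine verification that consecutive $\rI$ iterates remain within a single spherical cap.
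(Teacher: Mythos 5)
Your proposal is correct and follows essentially the same route as the paper's proof: the same gradient-contraction bound $\|\grad f(\mb q^{(r+1)})\| \le \tfrac{L_H M_H^2}{2m_H^2}\Delta^2$ from Lemma~\ref{lem:TR-step}, the same comparison against the $\rII$ gradient floor $\tfrac{9}{10}c_\star\theta$ from Lemma~\ref{lem:alg_gradient_lb}, and the same Newton-step bound $\tfrac{L_H M_H^2}{2m_H^3}\Delta^2 < \Delta$ to keep the next step unconstrained, with identical parameter estimates yielding the stated $\Delta$ threshold. Your observation that the unconstrainedness condition is the binding one matches the paper's final simplification of the three-way minimum.
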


\begin{proof}
We only consider the symmetric section in the vicinity of $\mb e_n$ and the claims carry on to others by symmetry. Suppose that step $k$ is an unconstrained $\rI$ step. Then 
\begin{align*}
\|\mb w(\mb q^{(r+1)}) - \mb w(\mb q^{(r)})\| 
& \le \|\mb q^{(r+1)} - \mb q^{(r)}\|  = \|\exp_{\mb q^{(r)}(\mb \delta)} - \mb q^{(r)} \| \\ 
& = \sqrt{2-2\cos \norm{\mb \delta}{}} = 2\sin(\norm{\mb \delta}{}/2) \le \norm{\mb \delta}{} < \Delta. 
\end{align*}
Thus, if $\Delta \le \tfrac{1}{20\sqrt{5}} - \tfrac{\mu}{4\sqrt{2}}$, $\mb q^{(r+1)}$ will be in $\rI \cup \rII$. Next, we show that if $\Delta$ is sufficiently small, $\mb q^{(r+1)}$ will be indeed in $\rI$. By Lemma \ref{lem:TR-step}, 
\begin{align}
\norm{\grad f \paren{\mb q^{(r+1)} } }{} 
& \le \frac{L_H}{2m_H^2} \norm{ \grad f \paren{\mb q^{(r)}} }{}^2 \nonumber \\
& \le \frac{L_H M_H^2}{2m_H^2} \norm{ \brac{\mb U^* \Hess f \paren{\mb q^{(r)}} \mb U}^{-1} \mb U^*\grad f \paren{\mb q^{(r)}} }{}^2 
 \le \frac{L_H M_H^2}{2m_H^2} \Delta^2,  \label{eqn:TRM-lemma-iii-3}
 \end{align}
where we have used the fact that 
\begin{align*}
\norm{\mb \delta^{(r)}}{} = \norm{ \brac{\mb U^* \Hess f \paren{\mb q^{(r)}} \mb U}^{-1} \mb U^*\grad f \paren{\mb q^{(r)}} }{} < \Delta,  
\end{align*}
as the step is unconstrained. On the other hand, by Theorem~\ref{thm:geometry_orth} and Lemma~\ref{lem:alg_gradient_lb}, w.h.p. 
\begin{align}
 \norm{\grad f\paren{\mb q}}{} \ge \beta_{\grad} \doteq \frac{9}{10} c_\star \theta,  \quad \forall \; \mb q \in \rII. \label{eqn:TRM-lemma-iii-4}
\end{align}
Hence, provided 
\begin{align}\label{eqn:Delta-bound-1}
	\Delta < \frac{m_H}{M_H}\sqrt{ \frac{2\betagrad}{L_H}},
\end{align}
we have $\mb q^{(r+1)} \in \rI$. 

We next show that when $\Delta$ is small enough, the next step is also unconstrained. Straight forward calculations give 
\begin{align*}
\norm{\mb U \brac{\mb U^* \Hess f \paren{\mb q^{(r+1)}} \mb U}^{-1} \mb U^*\grad f \paren{\mb q^{(r+1)}} }{} \leq \frac{L_H M_H^2}{2m_H^3} \Delta^2. 
\end{align*}
Hence, provided that 
 \begin{equation} \label{eqn:Delta-bound-2}
 \Delta < \frac{2m_H^3}{L_H M_H^2},
 \end{equation}
we will have
\begin{align*}
\norm{\mb U \brac{\mb U^* \Hess f \paren{\mb q^{(r+1)}} \mb U}^{-1} \mb U^*\grad f \paren{\mb q^{(r+1)}} }{} < \Delta; 
\end{align*}
in words, the minimizer to the trust-region subproblem for the next step lies in the relative interior of the trust region - the constraint is inactive. By Lemma~\ref{lem:TR-step} and Lemma \ref{lem:X-infinty-tail-bound}, we have 
\begin{align}
L_H \;=\; C_1 n^{3/2} \log^{3/2}\paren{np}/\mu^2, \label{eqn:lip-value-bound}
\end{align}
w.h.p.. Combining this and our previous estimates of $m_H$, $M_H$, we conclude whenever
\begin{align*}
\Delta 
\leq  \min \set{\frac{1}{20\sqrt{5}} - \frac{\mu}{4\sqrt{2}}, \frac{c_1c_\star^{3/2} \theta^{3/2} }{n^{7/4} \log^{7/4}\paren{np}}, \frac{c_2\mu c_\star^3 \theta^3 }{n^{7/2} \log^{7/2}\paren{np}}}, 
\end{align*}
w.h.p., our next trust-region step is also an unconstrained $\rI$ step. Simplifying the above bound completes the proof. 
\end{proof}

Finally, we want to show that ultimate unconstrained $\rI$ iterates actually converges to one nearby local minimizer rapidly. Lemma~\ref{lem:TR-step} has established the gradient is diminishing. The next lemma shows the magnitude of gradient serves as a good proxy for distance to the local minimizer. 
\begin{lemma}\label{lem:TR-grad-opt} 
Let $\mb q_\star \in \bb S^{n-1}$ such that $\grad f (\mb q_\star) = \mb 0$, and $\mb \delta \in T_{\mb q_\star}\bb S^{n-1}$. Consider a geodesic $\gamma(t) = \exp_{\mb q_\star}(t \mb \delta)$, and suppose that on $[0, \tau]$, $\Hess f (\gamma(t)) \succeq m_H \mc P_{T_{\gamma(t)} \bb S^{n-1}}$. Then 
\begin{equation*}
\norm{ \grad f (\gamma(\tau)) }{} \;\ge\; m_H \tau \norm{\mb \delta}{}. 
\end{equation*}
\end{lemma}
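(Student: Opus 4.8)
The plan is to track the Riemannian gradient of $f$ along the geodesic $\gamma$, use the curvature lower bound to control its growth, and convert that growth into a lower bound on $\norm{\grad f(\gamma(\tau))}{}$ via Cauchy--Schwarz. We may assume $\mb \delta \neq \mb 0$, since otherwise $\gamma \equiv \mb q_\star$ and both sides vanish. Because $\gamma(t) = \exp_{\mb q_\star}(t\mb \delta)$ is a geodesic, its velocity $\dot\gamma(t)$ is the parallel translate of $\mb \delta$ along $\gamma$; in particular $\dot\gamma(t) \in T_{\gamma(t)}\bb S^{n-1}$, the covariant derivative $\tfrac{D}{dt}\dot\gamma(t)$ is identically $\mb 0$, and $\norm{\dot\gamma(t)}{} = \norm{\mb \delta}{}$ for every $t$. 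I would introduce the scalar function $\psi(t) \doteq \innerprod{\grad f(\gamma(t))}{\dot\gamma(t)}$; since $\grad f(\mb q_\star) = \mb 0$ we have $\psi(0) = 0$, so it suffices to lower bound $\psi(\tau)$.

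Next I would differentiate $\psi$. By metric compatibility of the Levi--Civita connection and the product rule, $\psi'(t) = \innerprod{\tfrac{D}{dt}\grad f(\gamma(t))}{\dot\gamma(t)} + \innerprod{\grad f(\gamma(t))}{\tfrac{D}{dt}\dot\gamma(t)}$; the second term is zero because $\gamma$ is a geodesic, and the first equals $\innerprod{\Hess f(\gamma(t))[\dot\gamma(t)]}{\dot\gamma(t)}$ by the defining identity $\Hess f(\mb q)[\mb v] = \tfrac{D}{dt}\big|_{t=0}\grad f(\exp_{\mb q}(t\mb v))$. On $\bb S^{n-1}$ this can also be checked directly: differentiating $\grad f(\mb q) = \nabla f(\mb q) - \innerprod{\nabla f(\mb q)}{\mb q}\mb q$ along $\gamma$ and applying $\mc P_{T_{\gamma(t)}\bb S^{n-1}} = \mb I - \gamma(t)\gamma(t)^*$ (which represents the covariant derivative for a submanifold of $\R^n$) annihilates every term pointing in the $\gamma(t)$ direction and leaves exactly $\mc P(\nabla^2 f - \innerprod{\nabla f}{\gamma}\mb I)\dot\gamma = \Hess f(\gamma)[\dot\gamma]$, using $\mc P\dot\gamma = \dot\gamma$ and matching~\eqref{eq:fq_rie_hess}. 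The hypothesis $\Hess f(\gamma(t)) \succeq m_H \mc P_{T_{\gamma(t)}\bb S^{n-1}}$, read as a quadratic-form bound on the tangent space, then gives $\psi'(t) \ge m_H\norm{\dot\gamma(t)}{}^2 = m_H\norm{\mb \delta}{}^2$ for all $t \in [0,\tau]$.

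Finally I would integrate: $\psi(\tau) = \psi(0) + \int_0^\tau \psi'(t)\,dt \ge m_H\tau\norm{\mb \delta}{}^2$, and combine this with the Cauchy--Schwarz bound $\psi(\tau) \le \norm{\grad f(\gamma(\tau))}{}\,\norm{\dot\gamma(\tau)}{} = \norm{\grad f(\gamma(\tau))}{}\,\norm{\mb \delta}{}$; dividing by $\norm{\mb \delta}{} > 0$ finishes the proof. I do not expect a real obstacle here: the only points needing care are the identification of $\tfrac{D}{dt}\grad f(\gamma(t))$ with $\Hess f(\gamma(t))[\dot\gamma(t)]$ along the geodesic (so that the connection term drops), and reading the matrix inequality $\Hess f(\mb q) \succeq m_H\mc P_{T_{\mb q}\bb S^{n-1}}$ as $\mb v^*\Hess f(\mb q)\mb v \ge m_H\norm{\mb v}{}^2$ for $\mb v \in T_{\mb q}\bb S^{n-1}$, which is legitimate because both $n\times n$ matrices annihilate the normal direction $\mb q$. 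Everything else is the fundamental theorem of calculus together with Cauchy--Schwarz.
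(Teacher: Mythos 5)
Your proof is correct and is essentially the paper's argument in scalarized form: the paper applies the manifold Taylor's theorem to write $\mc P_{\gamma}^{0\leftarrow\tau}\grad f(\gamma(\tau))$ as $\int_0^\tau \mc P_{\gamma}^{0\leftarrow t}\Hess f(\gamma(t))[\dot\gamma(t)]\,dt$ and then pairs with $\mb\delta$, which (by isometry of parallel transport) is exactly your $\psi(\tau)=\int_0^\tau\psi'(t)\,dt$ followed by the same Hessian lower bound and Cauchy--Schwarz. No gap.
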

\begin{proof}
See Page~\pageref{proof:lem_TR-grad-opt} under Section~\ref{sec:proof_algorithm}. 
\end{proof}
To see this relates the magnitude of gradient to the distance away from the nearby local minimizer, w.l.o.g., one can assume $\tau = 1$ and consider the point $\mb q = \exp_{\mb q_\star}(\mb \delta)$. Then 
\begin{align*}
\norm{\mb q_\star - \mb q}{} = \norm{\exp_{\mb q_\star}(\mb \delta) - \mb q}{} = \sqrt{2 - 2\cos \norm{\mb \delta}{}} = 2\sin (\norm{\mb \delta}{}/2) \le \norm{\mb \delta}{} \le \norm{\grad f(\mb q)}{}/m_H, 
\end{align*}
where at the last inequality above we have used Lemma~\ref{lem:TR-grad-opt}. Hence, combining this observation with Lemma~\ref{lem:TR-step}, we can derive the asymptotic sequence convergence rate as follows.

\begin{proposition}\label{lem:TRM-lemma-v}
Assume~\eqref{eq:trm_proof_assumed_cond} and the conditions in Lemma \ref{lem:TRM-lemma-iv}. Let $\mb q^{(r_0)} \in \rI$ and the $r_0$-th step the first unconstrained $\rI$ step and $\mb q_\star$ be the unique local minimizer of $f$ over one connected component of $\rI$ that contains $\mb q^{(r_0)}$. Then w.h.p., for any positive integer $r'\geq 1$, 
\begin{align}
	\norm{\mb q^{(r_0 + r')} - \mb q_\star }{} \; \leq\;  \frac{cc_\star \theta \mu}{n^{3/2} \log^{3/2} \paren{np}} 2^{- 2^{r'}},
\end{align}
provided that 
\begin{align}
	\Delta \leq \frac{c'c_\star^2 \theta^2 \mu}{n^{5/2} \log^{5/2} (np) }. 
\end{align}
Here $c_\star$ is as in Theorem~\ref{thm:geometry_orth} and Theorem~\ref{thm:geometry_comp}, and $c$, $c'$ are both positive constants. 
\end{proposition}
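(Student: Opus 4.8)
The plan is to obtain the asserted doubly-exponential rate by stitching together the three ingredients already in place: Proposition~\ref{lem:TRM-lemma-iv}, which guarantees that from step $r_0$ onward every step is an unconstrained $\rI$ step and the iterates never leave the component of $\rI$ containing $\mb q^{(r_0)}$; Lemma~\ref{lem:TR-step}, which gives a quadratic contraction of the Riemannian gradient norm across an unconstrained $\rI$ step; and Lemma~\ref{lem:TR-grad-opt}, which bounds the distance to $\mb q_\star$ by the gradient norm divided by the Hessian floor $m_H$. The only ``geometric'' input is that the relevant component of $\rI$ is a spherical cap of angular radius $\arcsin(\mu/(4\sqrt 2)) \ll \pi/2$, hence geodesically convex, and that by Theorem~\ref{thm:geometry_orth} the unique local minimizer $\mb q_\star$ of $f$ on that cap satisfies $\grad f(\mb q_\star) = \mb 0$ and $\norm{\mb w(\mb q_\star)}{}\le\mu/16$, so it indeed lies in $\rI$.

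First I would fix the constants. On $\rI$, w.h.p.\ (using $\norm{\mb X_0}{\infty}\le 4\log^{1/2}(np)$, Lemma~\ref{lem:X-infinty-tail-bound}) we have $\Hess f(\mb q)\succeq m_H\,\mc P_{T_{\mb q}\bb S^{n-1}}$ with $m_H \doteq c_\star\theta/\mu$ (Lemma~\ref{lem:alg_strcvx_lb}), $M_H \doteq \norm{\Hess f(\mb q)}{}\le 16 n\log(np)/\mu$, and $L_H = C_1 n^{3/2}\log^{3/2}(np)/\mu^2$ as in~\eqref{eqn:lip-value-bound}. Set $\rho \doteq L_H/(2m_H^2)$. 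Since the $r_0$-th step is an unconstrained $\rI$ step, the chain of inequalities derived immediately after Lemma~\ref{lem:TR-step} (using $\norm{\mb\delta_\star^{(r_0)}}{}<\Delta$ and $\norm{\grad f(\mb q^{(r_0)})}{}\le M_H\norm{\mb\delta_\star^{(r_0)}}{}$) gives $\norm{\grad f(\mb q^{(r_0+1)})}{}\le L_H M_H^2\Delta^2/(2m_H^2)$, so that $\rho\,\norm{\grad f(\mb q^{(r_0+1)})}{}\le \bigl(L_H M_H\Delta/(2m_H^2)\bigr)^2 \le 1/4$ precisely when $\Delta\le m_H^2/(L_H M_H)$; substituting $m_H,M_H,L_H$ shows this is implied by the stated bound $\Delta\le c'c_\star^2\theta^2\mu/(n^{5/2}\log^{5/2}(np))$.

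Next I would iterate. By Proposition~\ref{lem:TRM-lemma-iv} every step $r\ge r_0$ is an unconstrained $\rI$ step, so Lemma~\ref{lem:TR-step} applies at every $r\ge r_0+1$ and yields $\rho\,\norm{\grad f(\mb q^{(r+1)})}{}\le \bigl(\rho\,\norm{\grad f(\mb q^{(r)})}{}\bigr)^2$. Starting from $\rho\,\norm{\grad f(\mb q^{(r_0+1)})}{}\le 1/4$ and unrolling this recursion $r'-1$ times gives, for every integer $r'\ge 1$, $\rho\,\norm{\grad f(\mb q^{(r_0+r')})}{}\le (1/4)^{2^{r'-1}} = 2^{-2^{r'}}$. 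Finally, to convert this into a distance bound, take the minimizing geodesic $\gamma$ from $\mb q_\star$ to $\mb q^{(r_0+r')}$; both endpoints lie in the same cap of $\rI$ (the iterates cannot escape it, by the proof of Proposition~\ref{lem:TRM-lemma-iv}) and that cap is geodesically convex, so $\Hess f\succeq m_H\mc P$ holds along $\gamma$. Applying Lemma~\ref{lem:TR-grad-opt} with $\tau = 1$ (legitimate since $\grad f(\mb q_\star)=\mb 0$), and using $\norm{\mb q^{(r_0+r')}-\mb q_\star}{} = 2\sin(\norm{\mb\delta}{}/2)\le\norm{\mb\delta}{}$, I get $\norm{\mb q^{(r_0+r')}-\mb q_\star}{}\le \norm{\grad f(\mb q^{(r_0+r')})}{}/m_H\le \tfrac{1}{m_H\rho}\,2^{-2^{r'}}$; substituting $m_H$ and $L_H$ shows $1/(m_H\rho) = 2c_\star\theta\mu/(C_1 n^{3/2}\log^{3/2}(np))$, which is of the claimed form.

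Essentially all of the work is mechanical given the supporting lemmas; the one place needing genuine care is verifying that the uniform Hessian floor $m_H$ is simultaneously available at every iterate (for the quadratic-contraction step) and along the geodesic joining the current iterate to $\mb q_\star$ (for the gradient-to-distance step). This forces one to check that the whole tail of the iterate sequence, together with $\mb q_\star$, is confined to a single geodesically convex cap of $\rI$ — which is exactly what Proposition~\ref{lem:TRM-lemma-iv} and the smallness of $\mu$ provide — and to keep the $\mb w$/$\mb q$ bookkeeping and numerical constants straight so that the $\Delta$-threshold and the leading constant come out in the stated form.
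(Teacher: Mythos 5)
Your proposal is correct and follows essentially the same route as the paper's proof: Lemma~\ref{lem:TR-step} for the quadratic contraction of the gradient norm along consecutive unconstrained $\rI$ steps, the optimality condition with $\norm{\mb \delta^{(r_0)}}{} < \Delta$ to seed the recursion below $1/2$ under $\Delta < m_H^2/(L_H M_H)$, and Lemma~\ref{lem:TR-grad-opt} to convert the gradient bound into the distance bound $2m_H/L_H \cdot 2^{-2^{r'}}$, with identical substitutions of $m_H$, $M_H$, $L_H$. The only differences are cosmetic bookkeeping (you start the unrolling at $r_0+1$ rather than $r_0$) and your slightly more explicit discussion of the geodesic convexity of the $\rI$ component, which the paper leaves implicit.
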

\begin{proof}
By the geometric characterization in Theorem~\ref{thm:geometry_orth} and corollary II.2 in~\cite{sun2015complete_a}, $f$ has $2n$ separated local minimizers, each located in $\rI$ and within distance $\sqrt{2}\mu/16$ of one of the $2n$ signed basis vectors $\{\pm \mb e_i\}_{i \in [n]}$. Moreover, it is obvious when $\mu \le 1$, $\rI$ consists of $2n$ disjoint connected components. We only consider the symmetric component in the vicinity of $\mb e_n$ and the claims carry on to others by symmetry. 

Suppose that $r_0$ is the index of the first unconstrained iterate in region $\rI$, i.e., $\mb q^{(r_0)} \in \rI$. By Lemma \ref{lem:TR-step}, for any integer $r'\geq 1$, we have
	 \begin{align}\label{eqn:TRM-lemma-v-1}
	 	\norm{\grad f \paren{\mb q^{(r_0 + r')} } }{} \;\leq\;  \frac{2 m_H^2}{L_H}\left( \frac{L_H}{2 m_H^2}  \norm{\grad f \paren{\mb q^{(r_0)}} }{} \right)^{2^{r'}}.
	 \end{align}
	 where $L_H$ is as defined in Lemma~\ref{lem:TR-step}, $m_H$ as the strong convexity parameter for $\rI$ defined above.   
	 
	 Now suppose $\mb q_\star$ is the unique local minimizer of $f$, lies in the same $\rI$ component that $q^{(r_0)}$ is located. Let $\gamma_{r'}(t) = \exp_{\mb q_\star}\paren{t\mb \delta}$ to be the unique geodesic that connects $\mb q_\star$ and $\mb q^{(r_0+r')}$ with $\gamma_{r'}(0) = \mb q_\star$ and $\gamma_{r'}(1) = \mb q^{(r_0 + r')}$. We have 
	 \begin{align*}
	 \norm{\mb q^{(r_0 + r')} - \mb q_\star}{} 
	 & \le \norm{\exp_{\mb q_\star}(\mb \delta) - \mb q_\star}{} = \sqrt{2- 2\cos \norm{\mb \delta}{}} = 2 \sin(\norm{\mb \delta}{}/2) \\
	 & \le \norm{\mb \delta}{} \le \frac{ 1 }{ m_H }\norm{ \grad f \paren{\mb q^{(r_0 + r')}} }{}
 \leq  \frac{2m_H}{L_H} \left( \frac{L_H}{2m_H^2} \norm{\grad f \paren{\mb q^{(r_0)}} }{} \right)^{2^{r'}}, 
	 \end{align*}
where at the second line we have repeatedly applied Lemma \ref{lem:TR-grad-opt}. 

By the optimality condition~\eqref{eqn:ts-optimal-solution-1} and the fact that $\norm{\mb \delta^{(r_0)}}{} < \Delta$, we have
\begin{align*}
\frac{L_H}{2 m_H^2} \norm{\grad f \paren{\mb q^{(r_0)} } }{} 
& \le \frac{L_H}{2 m_H^2} M_H \norm{ \brac{\mb U^* \Hess f \paren{\mb q^{(r_0)}} \mb U}^{-1} \mb U^* \grad f \paren{\mb q^{(r_0)} } }{} 
\le \frac{L_H M_H}{2m_H^2} \Delta.
\end{align*}
Thus, provided
\begin{align}\label{eqn:trust-region-size-v}
	\Delta < \frac{m_H^2}{L_H M_H},
\end{align} 
we can combine the above results and obtain 
\begin{align*}
	\norm{\mb q^{(r_0 + r')} - \mb q_\star }{} \;\le\; \frac{2m_H}{L_H} 2^{- 2^{r'}}.
\end{align*}
Based on the previous estimates for $m_H$, $M_H$ and $L_H$, we obtain that w.h.p., 
\begin{align*}
	\norm{\mb q^{(r_0 + r')} - \mb q_\star }{} \; \leq\;  \frac{c_1 c_\star \theta \mu}{n^{3/2} \log^{3/2} \paren{np}} 2^{- 2^{r'}}.
\end{align*}
Moreover, by \eqref{eqn:trust-region-size-v}, w.h.p., it is sufficient to have the trust region size
\begin{align*}
	\Delta \leq \frac{c_2 c_\star^2 \theta^2 \mu }{n^{5/2} \log^{5/2} (np) }.
\end{align*}
Thus, we complete the proof.
\end{proof}

Now we are ready to piece together the above technical proposition to prove Theorem~\ref{thm:trm_orth}. 

\js{
\begin{proof}{(\textbf{of Theorem~\ref{thm:trm_orth}})}
Assuming~\eqref{eq:trm_proof_assumed_cond} and in addition that 
	\begin{align*}
	 \Delta <  \frac{c_1 c_\star^3 \theta^3 \mu^2}{n^{7/2}\log^{7/2}\paren{np}}, 
	\end{align*}
it can be verified that the conditions of all the above propositions are satisfied. 

By the preceding four propositions, a step will either be $\rIII$, $\rII$, or constrained $\rI$ step that decreases the objective value by at least a certain fixed amount (we call this \emph{Type A}), or be an unconstrained $\rI$ step (\emph{Type B}), such that all future steps are unconstrained $\rI$ and the sequence converges to a local minimizer quadratically. Hence, regardless the initialization, the whole iteration sequence consists of consecutive Type A steps, followed by consecutive Type B steps. Depending on the initialization, either the Type A phase or the Type B phase can be absent. In any case, from $\mb q^{(0)}$ it takes at most (note $f(\mb q) \ge 0$ always holds)
\begin{align}
\frac{f\paren{\mb q^{(0)}}}{\min\Brac{\dI, \dII, \dIII }}
\end{align}
steps for the iterate sequence to start take consecutive unconstrained $\rI$ steps, or to already terminate. In case the iterate sequence continues to take consecutive unconstrained $\rI$ steps, Proposition~ \ref{lem:TRM-lemma-v} implies that it takes at most
\begin{align}
\log\log \paren{\frac{c_2c_\star \theta \mu}{\eps n^{3/2} \log^{3/2}\paren{np}}}
\end{align}
steps to obtain an $\eps$-near solution to the $\mb q_\star$ that is contained in the connected subset of $\rI$ that the sequence entered. 

Thus, the number of iterations to obtain an $\eps$-near solution to $\mb q_\star$ can be grossly bounded by
\begin{align*}
\#\text{Iter} &\;\leq\;  \frac{f\paren{\mb q^{(0)}}}{\min\Brac{\dI, \dII, \dIII }} \;+\;\log\log \paren{\frac{c_2c_\star \theta \mu}{\eps n^{3/2} \log^{3/2}\paren{np}}} \nonumber\\
 &\;\leq\;  \brac{\min\set{\frac{c_3c_{\star}^3 \theta^3 \mu^4}{n^6 \log^3\paren{np}}, \frac{c_4c_\star^2 \theta^2 }{n}\Delta^2 }}^{-1}f\paren{\mb q^{(0)}} \;+\; \log\log \paren{\frac{c_2c_\star \theta \mu}{\eps n^{3/2} \log^{3/2}\paren{np}}}. 
\end{align*}
Finally, the claimed failure probability comes from a simple union bound with careful bookkeeping. 
\end{proof}
}

\subsection{Extending to Convergence for Complete Dictionaries} \label{sec:alg_comp}
Recall that in this case we consider the preconditioned input
\begin{align} \label{eq:precon_def}
\overline{\mb Y} \doteq \sqrt{p \theta} (\mb Y \mb Y^*)^{-1/2} \mb Y. 
\end{align}
Note that for any complete $\mb A_0$ with condition number $\kappa\paren{\mb A_0}$, from Lemma~\ref{lem:pert_key_mag} we know when $p$ is large enough, w.h.p. one can write the preconditioned $\ol{\mb Y}$ as 
\begin{align*}
\ol{\mb Y} = \mb U \mb V^* \mb X_0 + \mb \Xi \mb X_0
\end{align*}
for a certain $\mb \Xi$ with small magnitude, and $\mb U \mb \Sigma \mb V^* = \mathtt{SVD}\paren{\mb A_0}$. Particularly, when $p$ is chosen by Theorem~\ref{thm:geometry_comp}, the perturbation is bounded as 
\begin{align} \label{eq:pert_upper_bound}
\|\widetilde{\mb \Xi}\| \le c c_\star \theta \paren{\max\set{\frac{n^{3/2}}{\mu^2}, \frac{n^2}{\mu}} \log^{3/2}\paren{np}}^{-1}
\end{align}
for a certain constant $c$ which can be made arbitrarily small by making the constant $C$ in $p$ large. Since $\mb U \mb V^*$ is orthogonal, 
\begin{align*}
f\paren{\mb q; \mb U \mb V^* \mb X_0 + \mb \Xi \mb X_0} = f\paren{\mb V \mb U^* \mb q; \mb X_0 + \mb V \mb U^* \mb \Xi \mb X_0}. 
\end{align*}
In words, the function landscape of $f(\mb q; \mb U \mb V^* \mb X_0 + \mb \Xi \mb X_0)$ is a rotated version of that of $f(\mb q; \mb X_0 + \mb V \mb U^* \mb \Xi \mb X_0)$. Thus, any local minimizer $\mb q_\star$ of $f(\mb q; \mb X_0 + \mb V \mb U^* \mb \Xi \mb X_0)$ is rotated to $\mb U \mb V^* \mb q_\star$, one minimizer of $f(\mb q; \mb U \mb V^* \mb X_0 + \mb \Xi \mb X_0)$. Also if our algorithm generates iteration sequence $\mb q_0, \mb q_1, \mb q_2, \dots$ for $f(\mb q; \mb X_0 + \mb V \mb U^* \mb \Xi \mb X_0)$ upon initialization $\mb q_0$, it will generate the iteration sequence $\mb U \mb V^* \mb q_0$, $\mb U \mb V^* \mb q_1$, $\mb U \mb V^* \mb q_2, \dots$ for $f\paren{\mb q;  \mb U \mb V^* \mb X_0 + \mb \Xi \mb X_0}$. So w.l.o.g. it is adequate that we prove the convergence results for the case $f(\mb q; \mb X_0 + \mb V \mb U^* \mb \Xi \mb X_0)$, corresponding to $\bm A_0 = \mb I$ with perturbation $\wt{\mb \Xi} \doteq \mb V \mb U^* \mb \Xi$. So in this section (Section~\ref{sec:alg_comp}), we write $f(\mb q; \wt{\mb X_0})$ to mean $f(\mb q; \mb X_0 + \wt{\mb \Xi} \mb X_0)$. 

Theorem~\ref{thm:geometry_comp} has shown that when 
\begin{align} \label{eq:trm_proof_assumed_cond_comp}
\theta \in \paren{0, \frac{1}{2}}, \; \mu \le c\min\set{\frac{\theta}{n}, \frac{1}{n^{5/4}}}, \; p \ge \frac{C}{c_\star^2 \theta^2} \max\set{\frac{n^4}{\mu^4}, \frac{n^5}{\mu^2}} \kappa^8 \paren{\mb A_0} \log^4\paren{\frac{\kappa\paren{\mb A_0}n}{\mu \theta}}, 
\end{align}
the geometric structure of the landscape is qualitatively unchanged from the orthogonal case, and the parameter $c_\star$ constant can be replaced with $c_\star/2$. Particularly, for this choice of $p$, Lemma~\ref{lem:pert_key_mag} implies 
\begin{align} \label{eq:trm_conv_comp_pert_bound}
\|\wt{\mb \Xi}\| = \| \mb V \mb U^*\mb \Xi\|  \le c c_\star \theta \paren{\max\set{\frac{n^{3/2}}{\mu^2}, \frac{n^2}{\mu}} \log^{3/2}\paren{np}}^{-1} 
\end{align}
for a constant $c$ that can be made arbitrarily small by setting the constant $C$ in $p$ sufficiently large. The whole proof is quite similar to that of orthogonal case in the last section. We will only sketch the major changes below. To distinguish with the corresponding quantities in the last section, we use $\wt{\cdot}$ to denote the corresponding perturbed quantities here. 
\begin{itemize}
\item Lemma~\ref{lem:mag_lip_fq}: Note that  
\begin{align*}
\|\mb X_0 + \wt{\mb \Xi} \mb X_0\|_{\infty} \le \norm{\mb X_0}{\infty} + \|\wt{\mb \Xi}\mb X_0\|_{\infty} \le \|\mb X_0\|_\infty + \sqrt{n} \|\wt{\mb \Xi}\| \|\mb X_0\|_\infty \le 3\|\mb X_0\|_\infty/2, 
\end{align*}
where by~\eqref{eq:trm_conv_comp_pert_bound} we have used $\|\wt{\mb \Xi}\| \le 1/(2\sqrt{n})$ to simplify the above result. So we obtain  
\begin{align*}
\wt{M}_{\nabla} \le \frac{3}{2} M_{\nabla}, \; \wt{M}_{\nabla^2} \le \frac{9}{4} M_{\nabla^2}, \; \wt{L}_{\nabla} \le \frac{9}{4} L_{\nabla}, \; \wt{L}_{\nabla^2} \le \frac{27}{8} L_{\nabla^2}. 
\end{align*}

\item Lemma~\ref{lem:alg_approx_bd2}: Now we have
\begin{align*}
\wt{\eta}_f \doteq \wt{M}_{\nabla} + 2\wt{M}_{\nabla^2} + \wt{L}_{\nabla} + \wt{L}_{\nabla^2} \le 4 \eta_f. 
\end{align*}

\item Lemma~\ref{lem:alg_gradient_func} and Lemma~\ref{lem:alg_neg_cuv_func} are generic and nothing changes. 

\item Proposition~\ref{lem:TRM-lemma-ii}: We have now $\mb w^* \mb g(\mb w; \wt{\mb X_0})/\norm{\mb w}{} \ge c_\star \theta/2$ by Theorem~\ref{thm:geometry_comp}, w.h.p. $\mb w^* \nabla g(\mb w; \wt{\mb X_0})/\norm{\mb w}{}$ is $C_1n^2 \log(np)/\mu$-Lipschitz by Proposition~\ref{prop:lip-gradient}, and $\norm{\mb X_0 + \wt{\mb \Xi} \mb X_0}{\infty} \le 3\norm{\mb X_0}{\infty}/2$ as shown above. Similarly, $\mb w^* \mb g(\mb w; \wt{\mb X_0})/\norm{\mb w}{} \le -c_\star \theta/2$ by Theorem~\ref{thm:geometry_comp}, and $\mb w^* \nabla^2 g(\mb w; \wt{\mb X_0}) \mb w/\norm{\mb w}{}^2$ is $C_2 n^3 \log^{3/2}(np) /\mu^2$-Lipschitz. Moreover, $\wt{\eta}_f \le 4\eta_f$ as shown above. Since there are only multiplicative constant changes to the various quantities, we conclude 
\begin{align}
\wt{\dII} = c_1 \dII, \quad \wt{\dIII} = c_1 \dIII
\end{align}
provided  
\begin{align}
\Delta < \frac{c_2 c_\star \theta \mu^2}{n^{5/2} \log^{3/2}\paren{np}}. 
\end{align}

\item Lemma~\ref{lem:alg_strcvx_func}: $\eta_f$ is changed to $\wt{\eta}_f$ with $\wt{\eta}_f \le 4\eta_f$ as shown above. 

\item Lemma~\ref{lem:alg_strcvx_lb}: By~\eqref{eq:fq_hess}, we have 
\begin{multline*}
\norm{\nabla^2 f(\mb q; \mb X_0) - \nabla^2 f(\mb q; \wt{\mb X_0}) }{} \le \frac{1}{p} \sum_{k=1}^p \Brac{L_{\ddot{h}} \|\wt{\mb \Xi}\| \norm{(\mb x_0)_k}{}^2 + \frac{1}{\mu}\norm{(\mb x_0)_k (\mb x_0)_k^* - \wt{(\mb x_0)}_k \wt{(\mb x_0)}_k^*}{}} \\
 \le \|\wt{\mb \Xi}\| \paren{L_{\ddot{h}} + 2/\mu + \|\wt{\mb \Xi}\|/\mu } \sum_{k=1}^p \norm{(\mb x_0)_k}{}^2 \le \|\wt{\mb \Xi}\| \paren{L_{\ddot{h}} + 3/\mu }n\norm{\mb X_0}{\infty}^2, 
\end{multline*}
where $L_{\ddot{h}}$ is the Lipschitz constant for the function $\ddot{h}_{\mu}\paren{\cdot}$ and we have used the fact that $\|\wt{\mb \Xi}\| \le 1$. Similarly, by~\ref{eq:fq_grad}, 
\begin{align*}
\norm{\nabla f(\mb q; \mb X_0) - \nabla f(\mb q; \wt{\mb X_0}) }{}
\le \frac{1}{p}\sum_{k=1}^p \Brac{L_{\dot{h}_\mu} \|\wt{\mb \Xi}\| \norm{(\mb x_0)_k}{} + \|\wt{\mb \Xi}\| \norm{(\mb x_0)_k}{}  } \le \paren{L_{\dot{h}_\mu} +1} \|\wt{\mb \Xi}\| \sqrt{n} \norm{\mb X_0}{\infty}, 
\end{align*}
where $L_{\dot{h}}$ is the Lipschitz constant for the function $\dot{h}_{\mu}\paren{\cdot}$. Since $L_{\ddot{h}} \le 2/\mu^2$ and $L_{\dot{h}} \le 1/\mu$, and $\norm{\mb X_0}{\infty} \le 4\sqrt{\log(np)}$ w.h.p. (Lemma~\ref{lem:X-infinty-tail-bound}). By~\eqref{eq:trm_conv_comp_pert_bound}, w.h.p. we have 
\begin{align*}
\norm{\nabla f(\mb q; \mb X_0) - \nabla f(\mb q; \wt{\mb X_0}) }{} \le \frac{1}{2} c_\star \theta, 
\quad\text{and}\quad  \norm{\nabla^2 f(\mb q; \mb X_0) - \nabla^2 f(\mb q; \wt{\mb X_0}) }{}  \le \frac{1}{2} c_\star \theta, 
\end{align*}
provided the constant $C$ in~\eqref{eq:trm_proof_assumed_cond_comp} for $p$ is large enough. Thus, by~\eqref{eq:fq_rie_hess} and the above estimates we have 
\begin{align*}
\norm{\Hess f(\mb q; \mb X_0) - \Hess f(\mb q; \wt{\mb X}_0)}{} & \le \norm{\nabla f(\mb q; \mb X_0) - \nabla f(\mb q; \wt{\mb X_0})}{} + \norm{\nabla^2 f(\mb q; \mb X_0) - \nabla^2 f(\mb q; \wt{\mb X_0})}{} \\
& \le c_\star \theta \le \frac{1}{2} c_\star \frac{\theta}{\mu}, 
\end{align*}
provided $\mu \le 1/2$. So we conclude 
\begin{align}
\Hess f(\mb q; \wt{\mb X}_0) \succeq \frac{1}{2}c_\star \frac{\theta}{\mu} \mc P_{T_{\mb q}} \bb S^{n-1} \Longrightarrow \wt{m_H} \ge \frac{1}{2}c_\star \frac{\theta}{\mu}. 
\end{align}

\item Proposition~\ref{lem:TRM-lemma-iii}: From the estimate of $M_H$ above Proposition~\ref{lem:TRM-lemma-iii} and the last point, we have 
\begin{align*}
\norm{\Hess f(\mb q; \wt{\mb X}_0)}{} \le \frac{36}{\mu} \log(np), \quad \text{and} \quad  \Hess f(\mb q; \wt{\mb X}_0) \succeq \frac{1}{2}c_\star \frac{\theta}{\mu} \mc P_{T_{\mb q}} \bb S^{n-1}. 
\end{align*}
Also since $\wt{\eta}_f \le 4\eta_f$ in Lemma~\ref{lem:alg_approx_bd2} and Lemma~\ref{lem:alg_strcvx_func}, there are only multiplicative constant change to the various quantities. We conclude that 
\begin{align}
\wt{\dI} = c_3 \dI
\end{align}
provided that 
\begin{align}
\Delta \le \frac{c_4 c_\star^2 \theta^2 \mu  }{n^{5/2} \log^{5/2} (np) }. 
\end{align}
\item Lemma~\ref{lem:alg_gradient_lb} is generic and nothing changes. 
\item Lemma~\ref{lem:TR-step}: $\wt{L}_H \le 27L_H/8$. 
\item Proposition~\ref{lem:TRM-lemma-iv}: All the quantities involved in determining $\Delta$, $m_H$, $M_H$, and $L_H$, $\beta_{\grad}$ are modified by at most constant multiplicative factors and changed to their respective tilde version, so we conclude that the TRM algorithm always takes unconstrained $\rI$ step after taking one, provided that
\begin{align}
\Delta \le \frac{c_5 c_\star^3 \theta^3 \mu}{n^{7/2} \log^{7/2}\paren{np}}. 
\end{align}
\item Lemma~\ref{lem:TR-grad-opt}:is generic and nothing changes. 

\item Proposition~\ref{lem:TRM-lemma-v}: Again $m_H$, $M_H$, $L_H$ are changed to $\wt{m_H}$, $\wt{M_H}$, and $\wt{L_H}$, respectively, differing by at most constant multiplicative factors. So we conclude for any integer $k' \ge 1$, 
\begin{align}
	\norm{\mb q^{(k_0 + k')} - \mb q_\star }{} \; \leq\;  \frac{c_6 c_\star \theta \mu}{n^{3/2} \log^{3/2} \paren{np}} 2^{- 2^{k'}},
\end{align}
provided 
\begin{align}
	\Delta \leq \frac{c_7 c_\star^2 \theta^2 \mu}{n^{5/2} \log^{5/2} (np) }. 
\end{align}
\end{itemize}
The final proof to Theorem~\ref{thm:geometry_comp} is almost identical to that of Theorem~\ref{thm:geometry_orth}, except that $\dI$, $\dII$, and $\dIII$ are changed to $\wt{\dI}$, $\wt{\dII}$, and $\wt{\dIII}$ as defined above, respectively. The final iteration complexity to each an $\eps$-near solution is hence 
\begin{align*}
\#\text{Iter} 
 & \le \brac{\min\set{\frac{c_{8} c_{\star}^3 \theta^3 \mu^4}{n^6 \log^3\paren{np}}, \frac{c_{9} c_\star^2 \theta^2 }{n}\Delta^2 }}^{-1}\paren{f\paren{\mb q^{(0)}} - f\paren{\mb q_\star}} \;+\; \log\log \paren{\frac{c_{10} c_\star \theta \mu}{\eps n^{3/2} \log^{3/2}\paren{np}}}. 
\end{align*}
Hence overall the qualitative behavior of the algorithm is not changed, as compared to that for the orthogonal case.

\section{Complete Algorithm Pipeline and Main Results} \label{sec:main_result} 

For orthogonal dictionaries, from Theorem \ref{thm:geometry_orth} (and Corollary II.2 in~\cite{sun2015complete_a}), we know that all the minimizers $\wh{\mb q}_\star$ are $O(\mu)$ away from their respective nearest ``target'' $\mb q_\star$, with $\mb q_\star^* \wh{\mb Y} = \alpha \mb e_i^* \mb X_0$ for a certain $\alpha \ne 0$ and $i \in [n]$; in Theorem~\ref{thm:trm_orth}, we have shown that w.h.p.\ the Riemannian TRM algorithm produces a solution $\widehat{\mb q}\in \bb S^{n-1}$ that is $\eps$ away to one of the minimizers, say $\wh{\mb q}_\star$. Thus, the $\wh{\mb q}$ returned by the TRM algorithm is $O(\eps + \mu)$ away from $\mb q_\star$. For exact recovery, we use a simple linear programming rounding procedure, which guarantees to produce the target $\mb q_\star$. We then use deflation to sequentially recover other rows of $\mb X_0$. Overall, w.h.p.\ both the dictionary $\mb A_0$ and sparse coefficient $\mb X_0$ are exactly recovered up to sign permutation, when $\theta \in \Omega(1)$, for orthogonal dictionaries. We summarize relevant technical lemmas and main results in Section~\ref{sec:main_orth}. The same procedure can be used to recover complete dictionaries, though the analysis is slightly more complicated; we present the results in Section~\ref{sec:main_comp}. Our overall algorithmic pipeline for recovering orthogonal dictionaries is sketched as follows. 
\begin{leftbar} 
\begin{enumerate}
\item \textbf{Estimating one row of $\mb X_0$ by the Riemannian TRM algorithm.} By Theorem \ref{thm:geometry_orth} (resp. Theorem~\ref{thm:geometry_comp}) and Theorem \ref{thm:trm_orth} (resp. Theorem~\ref{thm:trm_comp}), starting from any $\mb q \in \bb S^{n-1}$, when the relevant parameters are set appropriately (say as $\mu_\star$ and $\Delta_\star$), w.h.p., our Riemannian TRM algorithm finds a local minimizer $\widehat{\mb q}$, with $\mb q_\star$ the nearest target that exactly recovers a row of $\mb X_0$ and $\norm{\wh{\mb q} - \mb q_\star}{} \in O(\mu)$ (by setting the target accuracy of the TRM as, say,  $\eps = \mu$).

\item \textbf{Recovering one row of $\mb X_0$ by rounding.} To obtain the target solution $\mb q_\star$ and hence recover (up to scale) one row of $\mb X_0$, we solve the following linear program:
\begin{align}\label{eqn:LP_rounding}
	\mini_{\mb q} \norm{\mb q^*\wh{\mb Y}}{1},\quad \st \quad \innerprod{\mb r}{\mb q} = 1, 
\end{align}
with $\mb r = \widehat{\mb q}$. We show in Lemma~\ref{lem:alg_rounding_orth} (resp. Lemma~\ref{lem:alg_rounding_comp}) that when $\innerprod{\wh{\mb q}}{\mb q_\star}$ is sufficiently large, implied by $\mu$ being sufficiently small, w.h.p. the minimizer of \eqref{eqn:LP_rounding} is exactly $\mb q_\star$, and hence one row of $\mb X_0$ is recovered by $\mb q_\star^* \wh{\mb Y}$.

\item \textbf{Recovering all rows of $\mb X_0$ by deflation.} Once $\ell$ rows of $\mb X_0$ ($1 \le \ell \le n-2$) have been recovered, say, by unit vectors $\mb q_\star^1, \dots, \mb q_\star^\ell$, one takes an orthonormal basis $\mb U$ for $[\mathrm{span}\paren{\mb q_\star^1, \dots, \mb q_\star^\ell}]^\perp$, and minimizes the new function $h(\mb z) \doteq f(\mb U \mb z; \wh{\mb Y})$ on the sphere $\bb S^{n-\ell-1}$ with the Riemannian TRM algorithm (though conservative, one can again set parameters as $\mu_\star$, $\Delta_\star$, as in Step $1$) to produce a $\wh{\mb z}$. Another row of $\mb X_0$ is then recovered via the LP rounding~\eqref{eqn:LP_rounding} with input $\mb r = \mb U \wh{\mb z}$ (to produce $\mb q_\star^{\ell+1}$). Finally, by repeating the procedure until depletion, one can recover all the rows of $\mb X_0$.

\item \textbf{Reconstructing the dictionary $\mb A_0$.} By solving the linear system $\mb Y = \mb A\mb X_0$, one can obtain the dictionary $\mb A_0 = \mb Y \mb X_0^* \paren{\mb X_0 \mb X_0^*}^{-1}$.  
\end{enumerate}
\end{leftbar}

\subsection{Recovering Orthogonal Dictionaries} \label{sec:main_orth}
\begin{theorem}[Main theorem - recovering orthogonal dictionaries]\label{thm:main_orth}
Assume the dictionary $\mb A_0$ is orthogonal and we take $\wh{\mb Y} = \mb Y$. Suppose $\theta \in \paren{0,1/3}$, $\mu_\star <  c_a\min\Brac{\theta n^{-1}, n^{-5/4}}$, and $p \ge Cn^3 \log \frac{n}{\mu_\star \theta} /\paren{\mu_\star^2\theta^2}$. The above algorithmic pipeline with parameter setting
\begin{align}
\Delta_\star = \frac{c_b c_\star^3 \theta^3 \mu_\star^2}{n^{7/2}\log^{7/2}\paren{np}}, 
\end{align}
recovers the dictionary $\mb A_0$ and $\mb X_0$ in polynomial time, with failure probability bounded by $c_c p^{-6}$. Here $c_\star$ is as defined in Theorem~\ref{thm:geometry_orth}, and $c_a$ through $c_c$, and $C$ are all positive constants. 
\end{theorem}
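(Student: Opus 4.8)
The plan is to stitch together the four-stage pipeline so that each stage succeeds with high probability, then union bound. The overall structure: \textbf{(1)} Apply Theorem~\ref{thm:geometry_orth} to certify the benign landscape of $f(\cdot;\mb Y)=f(\cdot;\mb X_0)$ on $\bb S^{n-1}$, under the stated hypotheses on $\theta$, $\mu_\star$, and $p$. \textbf{(2)} Apply Theorem~\ref{thm:trm_orth} with $\eps=\mu_\star$ and the prescribed $\Delta_\star$ (which matches the admissible step-size bound in Theorem~\ref{thm:trm_orth}, since $\Delta_\star = c_b c_\star^3\theta^3\mu_\star^2 / (n^{7/2}\log^{7/2}(np))$) to conclude that from an arbitrary $\mb q^{(0)}$ the Riemannian TRM returns $\wh{\mb q}$ with $\norm{\wh{\mb q}-\wh{\mb q}_\star}{}\le \mu_\star$ for some local minimizer $\wh{\mb q}_\star$, in a polynomial number of iterations. \textbf{(3)} By Corollary II.2 in~\cite{sun2015complete_a}, $\wh{\mb q}_\star$ is within $O(\mu_\star)$ of a target $\mb q_\star$ with $\mb q_\star^*\mb Y = \alpha\, \mb e_i^*\mb X_0$; hence $\norm{\wh{\mb q}-\mb q_\star}{}\in O(\mu_\star)$, so $\innerprod{\wh{\mb q}}{\mb q_\star}$ is close to $1$ when $\mu_\star$ is small. \textbf{(4)} Invoke the LP-rounding guarantee (Lemma~\ref{lem:alg_rounding_orth}) to show the program~\eqref{eqn:LP_rounding} with $\mb r=\wh{\mb q}$ returns exactly $\mb q_\star$, recovering one row of $\mb X_0$ exactly. \textbf{(5)} Handle the deflation: after $\ell$ rows are exactly recovered via unit vectors $\mb q_\star^1,\dots,\mb q_\star^\ell$, restricting to the orthogonal complement and running TRM on $h(\mb z)=f(\mb U\mb z;\mb Y)$ reduces to a lower-dimensional instance of the \emph{same} problem (since $\mb A_0$ orthogonal and the recovered rows are mutually orthogonal, the residual is again an orthogonal dictionary instance of dimension $n-\ell$ with the same $\mb X_0$-type statistics on the surviving rows), so Steps 1--4 apply again. \textbf{(6)} Once all rows of $\mb X_0$ are recovered (up to signs and scales), solve $\mb Y=\mb A\mb X_0$ for $\mb A_0 = \mb Y\mb X_0^*(\mb X_0\mb X_0^*)^{-1}$ to finish; here $\mb X_0\mb X_0^*$ is invertible w.h.p.

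For the probability and complexity accounting: each invocation of the TRM (there are $n$ of them, one per deflation round) succeeds with probability $\ge 1 - c_b p^{-6}$ by Theorem~\ref{thm:trm_orth}, each LP-rounding succeeds with comparable probability by Lemma~\ref{lem:alg_rounding_orth}, and the single matrix inversion event (Gram matrix of $\mb X_0$ well-conditioned) holds w.h.p.\ by a standard concentration bound on Bernoulli--Gaussian matrices. A union bound over $O(n)$ such events, with $p$ polynomially large (so $n\cdot p^{-6}\le p^{-5}$, say, absorbed into a constant), yields the claimed failure probability $c_c p^{-6}$ after adjusting constants. Polynomial running time follows because each TRM round runs for a number of iterations polynomial in $n,p,1/\mu_\star$ by~\eqref{eq:cplx_orth} (with $\eps=\mu_\star$ the $\log\log(1/\eps)$ term is benign), each trust-region subproblem is solved in polynomial time via the SDP formulation~\eqref{eqn:SDP_relaxation}, each LP is polynomial-size, and there are only $n$ rounds.

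One technical point requiring care is that at the $\ell$-th deflation round we need $\theta$ small enough that the reduced problem still satisfies the sparsity regime of Theorem~\ref{thm:geometry_orth}/\ref{thm:trm_orth}; the hypothesis $\theta\in(0,1/3)$ (slightly stricter than the $(0,1/2)$ of the geometric theorems) provides this slack, and one should verify the reduced $p$ and $\mu_\star$ still meet the sampling and smoothing thresholds --- they do, since the ambient dimension only shrinks, making the required $p$ smaller, and $\mu_\star$ was chosen against the full $n$, hence a fortiori valid for $n-\ell$. A second subtlety is that the rounding in Step 4 must produce \emph{exactly} $\mb q_\star$ (not merely something close), because exactness is what guarantees the deflation subspace is spanned by true rows and the recursion does not accumulate error; this is precisely what Lemma~\ref{lem:alg_rounding_orth} delivers provided $\innerprod{\wh{\mb q}}{\mb q_\star}$ exceeds the lemma's threshold, which is why we set the TRM target accuracy to $\eps=\mu_\star$ and need $\mu_\star$ sufficiently small (the $c_a$ constant).

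The main obstacle I expect is \emph{making the deflation recursion rigorous}: one must show that after orthogonally projecting out the recovered rows, the function $h(\mb z)=f(\mb U\mb z;\mb Y)$ is genuinely an instance of~\eqref{eq:main_obj} for an $(n-\ell)$-dimensional orthogonal dictionary with a fresh Bernoulli--Gaussian coefficient matrix (obtained by deleting the recovered rows of $\mb X_0$), so that Theorems~\ref{thm:geometry_orth} and~\ref{thm:trm_orth} and Lemma~\ref{lem:alg_rounding_orth} apply verbatim. This requires that exact recovery in earlier rounds means $\mb U$ is spanned by true dictionary columns (true rows of $\mb X_0$ in the orthogonal case), which follows from Step 4's exactness, together with a check that the conditioning on earlier-round randomness does not spoil the independence needed for the probabilistic statements on the reduced instance --- here one either re-runs the analysis conditionally, or more cleanly observes that the events in question are functions of $\mb Y$ (equivalently $\mb X_0$) and uses the same high-probability event throughout, paying only the union bound. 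The remaining steps are essentially bookkeeping: assembling the constants, confirming $\Delta_\star$ satisfies every step-size constraint appearing in Propositions~\ref{lem:TRM-lemma-ii}--\ref{lem:TRM-lemma-v}, and noting the final linear solve is exact once $\mb X_0$ is known.
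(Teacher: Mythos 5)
Your proposal is correct and follows essentially the same route as the paper: TRM convergence (Theorem~\ref{thm:trm_orth}) plus Corollary II.2 of the companion paper to land within $O(\mu_\star)$ of a target, exact LP rounding via Lemma~\ref{lem:alg_rounding_orth}, inductive deflation using that $\mb U^*\mb V=\mb 0$ reduces $h(\mb z)$ to a fresh $(n-\ell)$-dimensional orthogonal instance to which the same theorems apply verbatim, and a final union bound with the observation that the fixed $\Delta_\star$ and $\mu_\star$ remain admissible as the dimension shrinks. One small correction: the hypothesis $\theta\in(0,1/3)$ is there for the rounding guarantee (via Lemma~\ref{lem:rounding-0}), not to supply sparsity slack for the deflation recursion --- the reduced instance carries the same per-entry Bernoulli rate $\theta$, so no such slack is needed.
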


Towards a proof of the above theorem, it remains to be shown the correctness of the rounding and deflation procedures.

\paragraph{Proof of LP rounding.} The following lemma shows w.h.p.\ the rounding will return the desired $\mb q_\star$, provided the estimated $\wh{\mb q}$ is already near to it. 
\begin{lemma} [LP rounding - orthogonal dictionary]\label{lem:alg_rounding_orth}
For any $\theta \in \paren{0,1/3}$, whenever $p \ge Cn^2\log(n/\theta)/\theta$, with probability at least 
$
	1-cp^{-6},  
$
the rounding procedure~\eqref{eqn:LP_rounding} returns $\mb q_\star$ for any input vector $\mb r$ that satisfies
\begin{align*}
	\innerprod{\mb r}{\mb q_\star} \ge 249/250. 
\end{align*}
Here $C, c$ are both positive constants. 
\end{lemma}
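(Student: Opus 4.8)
The plan is to leverage convexity of the objective: since $\mb q \mapsto \norm{\mb q^*\wh{\mb Y}}{1}$ is convex and the feasible set $\set{\mb q : \innerprod{\mb r}{\mb q} = 1}$ is affine, it suffices to identify the candidate optimum and show that its one-sided directional derivative is strictly positive along every feasible direction; this simultaneously gives optimality and uniqueness, bypassing any dual-certificate construction.

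First I would reduce to the canonical instance, as in Section~\ref{sec:alg_orth}. Since $\mb A_0$ is orthogonal, the substitution $\mb q \mapsto \mb A_0^*\mb q$ turns the program \eqref{eqn:LP_rounding} with $\wh{\mb Y} = \mb Y$ into the one with data $\mb X_0$ and input $\mb A_0^*\mb r$, and leaves $\innerprod{\mb r}{\mb q_\star}$ invariant; since every target obeys $\mb q_\star^*\mb X_0 = \alpha\, \mb e_i^*\mb X_0$ with $\mb q_\star \in \Sp^{n-1}$, we may take $\mb q_\star = \mb e_n$ up to a coordinate permutation and sign, handling the $2n$ symmetric cases by a union bound at the end. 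The unique feasible positive multiple of $\mb e_n$ is $\mb q_0 \doteq \mb e_n/r_n$ with $r_n \doteq \innerprod{\mb r}{\mb e_n} \in [249/250,1]$, and $\mb q_0^*\mb X_0 = r_n^{-1}(\mb x_0^n)^*$ recovers the $n$-th row up to scale. Writing a general feasible point as $\mb q = \mb q_0 + \mb v$ with $\innerprod{\mb r}{\mb v} = 0$, $\mb v \ne \mb 0$ (which forces the restriction $\mb v_\perp \in \R^{n-1}$ of $\mb v$ to the first $n-1$ coordinates to be nonzero), and letting $a_k \doteq (x_0)_{nk}$, $S_1 \doteq \set{k : a_k \ne 0}$, $S_0 \doteq [p]\setminus S_1$, $\mb z_k \in \R^{n-1}$ the first $n-1$ entries of the $k$-th column, and $v_n$ the last entry of $\mb v$, a direct computation of the directional derivative of $\mb q\mapsto\norm{\mb X_0^*\mb q}{1}$ at $\mb q_0$ gives
\[
D_{\mb v}\norm{\mb X_0^*\mb q_0}{1} \;=\; \underbrace{\sum_{k\in S_0} \abs{\innerprod{\mb z_k}{\mb v_\perp}}}_{(\mathrm I)} \;+\; \underbrace{\sum_{k\in S_1} \sign(a_k)\innerprod{\mb z_k}{\mb v_\perp}}_{(\mathrm{II})} \;+\; \underbrace{v_n \sum_{k\in S_1} \abs{a_k}}_{(\mathrm{III})}.
\]
By homogeneity we normalize $\norm{\mb v_\perp}{} = 1$; then $\abs{v_n} \le \norm{\mb r_\perp}{}/r_n \le \sqrt{1-(249/250)^2}\big/(249/250) =: \rho < 1/11$, and this is the sole place the hypothesis $\innerprod{\mb r}{\mb q_\star} \ge 249/250$ enters.

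Next I would control the three terms on a single high-probability event depending on $\mb X_0$ only, hence valid for every admissible $\mb r$ at once. For $(\mathrm I)$, the key point is that, with $\Omega$ the Bernoulli-$\theta$ support of a column, $\expect{\abs{\innerprod{\mb z_k}{\mb v_\perp}}} = \sqrt{2/\pi}\,\expect{\norm{(\mb v_\perp)_\Omega}{}} \ge \sqrt{2/\pi}\,\expect{\norm{(\mb v_\perp)_\Omega}{}^2} = \sqrt{2/\pi}\,\theta$ (using $\norm{(\mb v_\perp)_\Omega}{}\le 1$), so the sum of the $\abs{S_0}\ge p/2$ independent, $O(1)$-subgaussian, $\Theta(\theta)$-variance summands has expectation $\gtrsim \theta p$; a Bernstein bound plus a union bound over an $\eps$-net of $\Sp^{n-2}$ — with $\eps$ polynomially small so the net's Lipschitz slack (governed by $\max_k\norm{\mb z_k}{}\lesssim\sqrt{\theta n+\log p}$) is negligible — yields $(\mathrm I)\ge c_1\theta p$ uniformly over $\norm{\mb v_\perp}{}=1$, for an absolute constant $c_1>0$, once $p$ exceeds the stated threshold. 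Term $(\mathrm{II})$ is \emph{linear} in $\mb v_\perp$, so $\sup_{\norm{\mb v_\perp}{}=1}\abs{(\mathrm{II})} = \norm{\sum_{k\in S_1}\sign(a_k)\mb z_k}{}$; conditioning on rows $1,\dots,n-1$ and on $S_1$ (the signs $\sign(a_k)$ depend only on row $n$ and, given $S_1$, are i.i.d. Rademacher), this is the norm of $\mb X_0^{S_1,(-n)}\mb\sigma$, whose squared expectation is $\norm{\mb X_0^{S_1,(-n)}}{F}^2 \approx n\theta^2 p$, so a vector-Bernstein estimate gives $\abs{(\mathrm{II})}\lesssim\theta\sqrt{np}\,\mathrm{polylog}(np)=o(\theta p)$. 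Finally $\abs{(\mathrm{III})}\le\rho\norm{\mb x_0^n}{1}\le\rho\,C_1\theta p$ w.h.p. by concentration of the $\ell^1$-norm of a Bernoulli-Gaussian vector. Combining, $D_{\mb v}\norm{\mb X_0^*\mb q_0}{1}\ge (c_1-o(1)-\rho C_1)\theta p$, and with $c_1\gtrsim 1/4$, $C_1\lesssim 9/10$, $\rho<1/11$ the bracket is a strictly positive absolute constant; since this holds for every feasible $\mb v\ne\mb 0$ and every admissible $\mb r$, $\mb q_0$ is the unique minimizer. A union bound over the $2n$ symmetric targets and the $O(1)$ auxiliary events gives total failure probability $\le c p^{-6}$.

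The main obstacle is the uniform lower bound $(\mathrm I)\ge c_1\theta p$ over all of $\Sp^{n-2}$: the binding direction is coordinate-sparse, for which $\expect{\abs{\innerprod{\mb z_k}{\mb v_\perp}}}$ degrades from $\Theta(\sqrt\theta)$ to only $\Theta(\theta)$, so matching this small mean against the net cardinality $(C/\eps)^{n-1}$ forces one to combine a sufficiently fine net, Bernstein-type tails that genuinely exploit the $\Theta(\theta)$ per-term variance, and a sharp uniform control of the column norms $\norm{\mb z_k}{}$; balancing these three ingredients is what pins down the sample-complexity requirement on $p$. Terms $(\mathrm{II})$ and $(\mathrm{III})$, by contrast, are routine concentration.
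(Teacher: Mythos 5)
Your proposal is correct, but it takes a genuinely different route from the paper's. The paper argues by a chain of explicit relaxations: it relaxes the equality constraint to an inequality, applies the triangle inequality columnwise to lower-bound the objective by $\norm{\mb x_n}{1}\abs{q_n} - \norm{\overline{\mb q}^*\overline{\mb X}_{\mc I}}{1} + \norm{\overline{\mb q}^*\overline{\mb X}_{\mc I^c}}{1}$ with $\mc I = \supp(\mb x_n)$, invokes the auxiliary uniform bound of Lemma~\ref{lem:rounding-0} (namely $\norm{\mb v^*\overline{\mb X}_{\mc I^c}}{1} - \norm{\mb v^*\overline{\mb X}_{\mc I}}{1} \ge \tfrac{p}{6}\sqrt{2/\pi}\,\theta\norm{\mb v}{}$ for all $\mb v$, w.h.p.), and thereby collapses the program to a one-dimensional LP in $\abs{q_n}$ whose optimum sits at a vertex; comparing the two vertex values produces the threshold $249/250$. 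You instead certify optimality directly at the candidate $\mb q_0 = \mb e_n/r_n$ through the one-sided directional derivative of the convex objective along every feasible direction, split into the off-support gain $(\mathrm I)$, the linear on-support term $(\mathrm{II})$, and the last-coordinate term $(\mathrm{III})$. The probabilistic content is essentially the same: your $(\mathrm I)$ and $(\mathrm{II})$ together are exactly what Lemma~\ref{lem:rounding-0} controls (since $\abs{(\mathrm{II})} \le \norm{\mb v_\perp^*\overline{\mb X}_{\mc I}}{1}$, that lemma could be quoted verbatim in place of your separate net and vector-Bernstein arguments, and it demands the same $p \gtrsim n\log(n/\theta)/\theta^2$ that your net computation pins down), while your bound on $(\mathrm{III})$ is the paper's concentration of $\norm{\mb x_n}{1}$. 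What your route buys is a cleaner logical structure — no need to check that each relaxation is tight at the candidate — and, by exploiting linearity of $(\mathrm{II})$, a sharper $O(\theta\sqrt{np}\cdot\mathrm{polylog}(np))$ control of the on-support interference where the paper settles for $O(\theta^2 p)$; the paper's route makes the exact constant $249/250$ more transparent via the explicit vertex comparison. Both arguments share the same minor imprecision that the LP minimizer is the positive multiple $\mb e_n/r_n$ of the target rather than the unit vector itself; your handling of that point is in fact the more careful of the two.
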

\begin{proof}
See Page~\pageref{proof:lem_alg_rounding_orth} under Section~\ref{sec:proof_main}. 
\end{proof}
Since $\innerprod{\wh{\mb q}}{\mb q_\star} = 1-\|\wh{\mb q} - \mb q_\star\|^2/2$, and $\norm{\wh{\mb q} - \mb q_\star}{} \in O(\mu)$, it is sufficient when $\mu$ is smaller than some small constant. 

\paragraph{Proof sketch of deflation.} We show the deflation works by induction. To understand the deflation procedure, it is important to keep in mind that the ``target'' solutions $\Brac{\mb q_\star^i}_{i=1}^n$ are orthogonal to each other. W.l.o.g., suppose we have found the first $\ell$ unit vectors $\mb q_\star^1, \dots, \mb q_\star^\ell$ which recover the first $\ell$ rows of $\mb X_0$. Correspondingly, we partition the target dictionary $\mb A_0$  and $\mb X_0$ as
\begin{align}\label{eqn:matrices-partition}
	\mb A_0 = [\mb V, \mb V^\perp],\quad \mb X_0 = \brac{\begin{smallmatrix}
\mb X_0^{[\ell]} \\
\mb X_0^{[n-\ell]}
\end{smallmatrix} },
\end{align}
where $\mb V \in \R^{n \times \ell}$, and $\mb X_0^{[\ell]}\in \bb R^{\ell\times n} $ denotes the submatrix with the first $\ell$ rows of $\mb X_0$. Let us define a function: $f_{n -\ell}^{\downarrow}: \R^{n-\ell} \mapsto \R$ by
\begin{align}\label{eqn:func-(n-l)}
f_{n -\ell}^{\downarrow}(\mb z; \mb W) \doteq \frac{1}{p}\sum_{k=1}^p h_{\mu}(\mb z^* \mb w_k), 
\end{align}
for any matrix $\mb W \in \R^{(n -\ell) \times p}$. Then by \eqref{eq:main_obj}, our objective function is equivalent to  
\begin{align*}
	h(\mb z) = f(\mb U \mb z; \mb A_0 \mb X_0) = f_{n-\ell}^{\downarrow}(\mb z;\mb U^* \mb A_0\mb X_0) = f_{n-\ell}^{\downarrow}(\mb z; \mb U^*\mb V\mb X_0^{[\ell]} + \mb U^*\mb V^\perp \mb X_0^{[n-\ell]}).
\end{align*}
Since the columns of the orthogonal matrix $\mb U\in \bb R^{n\times (n-\ell)}$ forms the orthogonal complement of $\text{span}\paren{\mb q_\star^1,\cdots,\mb q_\star^\ell}$, it is obvious that $\mb U^*\mb V=\mb 0$. Therefore, we obtain
\begin{align*}
h(\mb z) = f_{n-\ell}^{\downarrow}(\mb z; \mb U^*\mb V^\perp \mb X_0^{[n-\ell]}).
\end{align*}
Since $\mb U^* \mb V^\perp$ is orthogonal and $\mb X_0^{[n-\ell]} \sim_{i.i.d.} \mathrm{BG}(\theta)$, this is another instance of orthogonal dictionary learning problem with reduced dimension. If we keep the parameter settings $\mu_\star$ and $\Delta_\star$ as Theorem \ref{thm:main_orth}, the conditions of Theorem~\ref{thm:geometry_orth} and Theorem~\ref{thm:trm_orth} for all cases with reduced dimensions are still valid. So w.h.p., the TRM algorithm returns a $\wh{\mb z}$ such that $\norm{\wh{\mb z} - \mb z_\star}{} \in O(\mu_\star)$ where $\mb z_\star$ is a ``target'' solution that recovers a row of $\mb X_0$: 
\begin{align*}
\mb z_\star^*\mb U^* \mb V^\perp \mb X_0^{[n-\ell]} = \mb z_\star^*\mb U^* \mb A_0\mb X_0 = \alpha \mb e_i^* \mb X_0,\quad \text{for some }i \not \in [\ell].
\end{align*}
So pulling everything back in the original space, the effective target is $\mb q_\star^{\ell+1} \doteq \mb U \mb z_\star$, and $\mb U \wh{\mb z}$ is our estimation obtained from the TRM algorithm. Moreover, 
\begin{align*}
\norm{\mb U \wh{\mb z} - \mb U \mb z_\star}{} = \norm{\wh{\mb z} - \mb z_\star}{} \in O(\mu_\star). 
\end{align*}
Thus, by Lemma~\ref{lem:alg_rounding_orth}, one successfully recovers $\mb U \mb z_\star$ from $\mb U \wh{\mb z}$ w.h.p. when $\mu_\star$ is smaller than a constant. The overall failure probability can be obtained via a simple union bound and simplification of the exponential tails with inverse polynomials in $p$. 

\subsection{Recovering Complete Dictionaries} \label{sec:main_comp}
By working with the preconditioned data samples $\wh{\mb Y} = \overline{\mb Y} \doteq \sqrt{\theta p}\paren{\mb Y\mb Y^*}^{-1/2} \mb Y$,\footnote{In practice, the parameter $\theta$ might not be know beforehand. However, because it only scales the problem, it does not affect the overall qualitative aspect of results.} we can use the same procedure as described above to recover complete dictionaries.

\begin{theorem}[Main theorem - recovering complete dictionaries]\label{thm:main_comp}
Assume the dictionary $\mb A_0$ is complete with a condition number $\kappa\paren{\mb A_0}$ and we take $\wh{\mb Y} = \ol{\mb Y}$. Suppose $\theta \in \paren{0,1/3}$, $\mu_\star <  c_a\min\Brac{\theta n^{-1},n^{-5/4}}$, and $p \ge \frac{C}{c_\star^2 \theta^2} \max\set{\frac{n^4}{\mu^4}, \frac{n^5}{\mu^2}} \kappa^8\paren{\mb A_0} \log^4\paren{\frac{\kappa\paren{\mb A_0} n}{\mu \theta}}$. The algorithmic pipeline with parameter setting
\begin{align}
\Delta_\star =  \frac{c_d c_\star^3 \theta^3 \mu_\star^2}{n^{7/2}\log^{7/2}\paren{np}}
\end{align}
recovers the dictionary $\mb A_0$ and $\mb X_0$ in polynomial time, with failure probability bounded by $c_b p^{-6}$. Here $c_\star$ is as defined in Theorem~\ref{thm:geometry_orth}, and $c_a, c_b$ are both positive constants. 
\end{theorem}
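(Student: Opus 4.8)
The plan is to mirror the four-stage argument used for the orthogonal case (Theorem~\ref{thm:main_orth}), using the preconditioning Lemma~\ref{lem:pert_key_mag} to reduce the complete-dictionary instance to an ``orthogonal-plus-small-perturbation'' instance on which Theorem~\ref{thm:geometry_comp}, Theorem~\ref{thm:trm_comp}, and the complete-case rounding Lemma~\ref{lem:alg_rounding_comp} all apply. First I would invoke Lemma~\ref{lem:pert_key_mag}: under the stated lower bound on $p$, w.h.p.\ the preconditioned data decomposes as $\ol{\mb Y} = \mb U\mb V^*\mb X_0 + \mb\Xi\mb X_0$ with $\mb U\mb\Sigma\mb V^* = \mathtt{SVD}\paren{\mb A_0}$ and $\norm{\mb\Xi}{}$ bounded as in~\eqref{eq:pert_upper_bound}; after the orthogonal change of variables of Section~\ref{sec:alg_comp} this is the landscape of $f(\mb q;\mb X_0+\wt{\mb\Xi}\mb X_0)$, with $\norm{\wt{\mb\Xi}}{}$ as small as desired by enlarging the constant $C$ in the requirement on $p$.

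\emph{Stage 1 and Stage 2.} With $\mu=\mu_\star$ and $\Delta=\Delta_\star$ as prescribed, the hypotheses of Theorem~\ref{thm:geometry_comp} and Theorem~\ref{thm:trm_comp} are met, so from an arbitrary $\mb q^{(0)}\in\bb S^{n-1}$ the Riemannian TRM algorithm returns, in the polynomially many iterations given by~\eqref{eq:cplx_comp}, a point $\wh{\mb q}$ within distance $\eps$ of some local minimizer $\wh{\mb q}_\star$; taking the target accuracy $\eps=\mu_\star$ and combining with the geometric fact that every minimizer lies within $O(\mu_\star)$ of a ``target'' $\mb q_\star$ obeying $\mb q_\star^*\ol{\mb Y}=\alpha\,\mb e_i^*\mb X_0$ for some $\alpha\neq 0$ and $i\in[n]$, one gets $\norm{\wh{\mb q}-\mb q_\star}{}\in O(\mu_\star)$, hence $\innerprod{\wh{\mb q}}{\mb q_\star}=1-\tfrac12\norm{\wh{\mb q}-\mb q_\star}{}^2\ge 249/250$ once $\mu_\star$ is below an absolute constant. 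Feeding $\mb r=\wh{\mb q}$ to the linear program~\eqref{eqn:LP_rounding} and applying Lemma~\ref{lem:alg_rounding_comp} (which needs only $\theta<1/3$ and the stated bound on $p$) then returns $\mb q_\star$ \emph{exactly}, so one (rescaled) row $\mb q_\star^*\ol{\mb Y}$ of $\mb X_0$ is recovered.

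\emph{Stage 3 and reconstruction.} The point enabling deflation is that the targets are mutually orthonormal: from $\mb q_\star^*\mb U\mb V^*\mb X_0\approx\alpha\,\mb e_i^*\mb X_0$ we read off $\mb q_\star=\pm\,\mb U\mb V^*\mb e_i$, so the distinct targets recovered across rounds are columns of the orthogonal matrix $\mb U\mb V^*$. Thus after recovering $\ell$ rows (with recovered index set $J$, $\bar J=[n]\setminus J$), an orthonormal basis $\mb U_\perp$ for $[\mathrm{span}(\mb q_\star^1,\dots,\mb q_\star^\ell)]^\perp$ may be taken of the form $\mb U_\perp=\mb U\mb V^*\mb I_{\bar J}$ (up to an inner rotation), whence $h(\mb z)=f(\mb U_\perp\mb z;\ol{\mb Y})=f_{n-\ell}^{\downarrow}\paren{\mb z;(\mb I+\mb\Xi')\mb X_0^{[n-\ell]}+\mb N}$, where $\mb X_0^{[n-\ell]}$ collects the surviving rows (still $\sim_{i.i.d.}\mathrm{BG}(\theta)$), $\mb\Xi'$ is a small multiplicative perturbation, and $\mb N$ is a low-rank additive term built from the already-deleted rows of $\mb X_0$, of operator norm controlled by $\norm{\mb\Xi}{}$. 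Checking that the conditions of Theorem~\ref{thm:geometry_comp} and Theorem~\ref{thm:trm_comp} in every reduced dimension $n-\ell\le n$ follow from the original ones with the same $\mu_\star,\Delta_\star$ (all relevant bounds are monotone in the dimension in the favorable direction), one reruns Stage~1 and Stage~2 on $h$, maps back through $\mb q_\star^{\ell+1}=\mb U_\perp\mb z_\star$ using $\norm{\mb U_\perp\wh{\mb z}-\mb U_\perp\mb z_\star}{}=\norm{\wh{\mb z}-\mb z_\star}{}\in O(\mu_\star)$, and inducts over $\ell=0,\dots,n-1$ to recover all of $\mb X_0$ up to signed permutation; finally $\mb A_0=\mb Y\mb X_0^*(\mb X_0\mb X_0^*)^{-1}$ reconstructs the dictionary.

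The hard part will be making Stage~3 fully rigorous: after projection the perturbation is no longer a clean multiplicative $\wt{\mb\Xi}\,\mb X_0^{[n-\ell]}$ (since $\mb U_\perp^*\mb\Xi\mb X_0$ couples \emph{all} rows of $\mb X_0$), so I would need to (i) re-derive the Lipschitz/perturbation estimates of Section~\ref{sec:alg_comp} for a generic small additive perturbation of $\ol{\mb Y}$, bounding the cross-term $\mb N$ via $\norm{\mb\Xi}{}$ and the column norms of $\mb X_0$ and absorbing it into the same budget~\eqref{eq:trm_conv_comp_pert_bound}, and (ii) confirm the surviving block $\mb X_0^{[n-\ell]}$ keeps the i.i.d.\ $\mathrm{BG}(\theta)$ law the geometry theorems require and that the $O(n)$ deflation rounds stay jointly controllable. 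The rest is routine bookkeeping: each of the $O(n)$ TRM runs needs polynomially many iterations by~\eqref{eq:cplx_comp}, each iteration solving an $O(n)\times O(n)$ trust-region SDP and each rounding an LP, all in polynomial time; and the failure probability follows from a union bound over the $O(n)$ geometry, convergence, and rounding events, each failing with probability at most an inverse polynomial in $p$, crudely simplified to $c_b p^{-6}$.
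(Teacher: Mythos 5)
Your Stages 1 and 2 match the paper's argument (precondition via Lemma~\ref{lem:pert_key_mag}, rotate to the perturbed-identity instance, run Theorems~\ref{thm:geometry_comp} and~\ref{thm:trm_comp}, then round via Lemma~\ref{lem:alg_rounding_comp}). The genuine gap is Stage~3, which you flag as ``the hard part'' but do not carry out, and your premise for it is wrong: in the complete case the targets are \emph{not} mutually orthonormal and are \emph{not} $\pm\mb U\mb V^*\mb e_i$. Since $\mb q_\star^*\ol{\mb Y}=\alpha\,\mb e_i^*\mb X_0$ with $\ol{\mb Y}=(\mb Q+\mb\Xi)\mb X_0$, $\mb Q=\mb U\mb V^*$, the targets are (scaled) columns of $(\mb Q^*+\mb\Xi^*)^{-1}$, which are only approximately orthogonal. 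Consequently you cannot take $\mb U_\perp=\mb U\mb V^*\mb I_{\bar J}$; the deflation basis is the orthocomplement of the actual recovered $\mb q_\star^1,\dots,\mb q_\star^\ell$, and the reduced objective $h(\mb z)=f_{n-\ell}^{\downarrow}(\mb z;\mb U_\perp^*(\mb Q+\mb\Xi)\mb X_0)$ couples all rows of $\mb X_0$ exactly as you note. The paper resolves this with Lemma~\ref{lem:deflation-bound}: writing $[\mb q_\star^1,\dots,\mb q_\star^\ell]=(\mb Q^*+\mb\Xi^*)^{-1}[\mb e_1,\dots,\mb e_\ell]$ and using principal-angle and matrix-inverse perturbation bounds, it shows $\mb U_\perp^*(\mb Q+\mb\Xi)\mb X_0=\mb U_\perp^*\mb V\,\mb X_0^{[n-\ell]}+\mb\Delta$ with $\mb V$ orthogonal and $\norm{\mb\Delta}{\ell^1\to\ell^2}\le 16\sqrt{n}\norm{\mb\Xi}{}\norm{\mb X_0}{\infty}$, which is exactly the form of perturbation the geometry and convergence theorems tolerate. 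This lemma is the new content of the complete-case proof and cannot be waved through.

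A second missing step: even granting the reduced instance is well-posed, the TRM returns $\wh{\mb z}$ near the \emph{perturbed} minimizer $\wh{\mb z}_\star$ (the one satisfying $\wh{\mb z}_\star^*\mb U_\perp^*\mb V\mb X_0^{[n-\ell]}=\alpha\mb e_i^*\mb X_0$), not near the clean target $\mb z_\star$. To verify the rounding condition $\innerprod{\mb U_\perp\wh{\mb z}}{\mb q_\star^{\ell+1}}\ge 249/250$ you must bound $\norm{\wh{\mb z}_\star-\mb z_\star}{}$; the paper does this by solving $(\wh{\mb z}_\star-\mb z_\star)^*\mb U_\perp^*\mb V\mb X_0^{[n-\ell]}=\mb z_\star^*\mb\Delta$ with the pseudo-inverse and invoking $\sigma_{\min}(\mb X_0^{[n-\ell]}(\mb X_0^{[n-\ell]})^*)\ge\theta p/2$ from Lemma~\ref{lem:bg_identity_diff}, yielding $\norm{\wh{\mb z}_\star-\mb z_\star}{}\le 28\norm{\mb\Xi}{}$, which is then absorbed by making the constant $C$ in $p$ large. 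Without these two estimates the induction over $\ell$ does not close.
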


Similar to the orthogonal case, we need to show the correctness of the rounding and deflation procedures so that the theorem above holds.

\paragraph{Proof of LP rounding}
The result of the LP rounding is only slightly different from that of the orthogonal case in Lemma \ref{lem:alg_rounding_orth}, so is the proof.
\begin{lemma} [LP rounding - complete dictionary]\label{lem:alg_rounding_comp}
For any $\theta \in \paren{0,1/3}$, whenever 
\begin{align*}
p \ge \frac{C}{c_\star^2 \theta} \max\set{\frac{n^4}{\mu^4}, \frac{n^5}{\mu^2}} \kappa^8\paren{\mb A_0} \log^4\paren{\frac{\kappa\paren{\mb A_0} n}{\mu \theta}},
\end{align*} 
with probability at least 
$
	1- cp^{-6},  
$
the rounding procedure~\eqref{eqn:LP_rounding} returns $\mb q_\star$ for any input vector $\mb r$ that satisfies
\begin{align*}
	\innerprod{\mb r}{\mb q_\star} \ge 249/250.  
\end{align*}
Here $C, c$ are both positive constants. 
\end{lemma}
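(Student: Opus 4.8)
The plan is to follow the proof of the orthogonal case, Lemma~\ref{lem:alg_rounding_orth}, and absorb the preconditioning error. As in Section~\ref{sec:alg_comp}, I would first use Lemma~\ref{lem:pert_key_mag} to write $\wh{\mb Y} = \ol{\mb Y} = (\mb U\mb V^* + \mb \Xi)\mb X_0$ and rotate by $\mb V\mb U^*$; since a rotation changes neither the $\ell^1$ objective nor the affine constraint in~\eqref{eqn:LP_rounding}, this reduces the claim to the rounding problem for $\wt{\mb X_0} \doteq \mb X_0 + \wt{\mb \Xi}\mb X_0$ with $\wt{\mb \Xi} \doteq \mb V\mb U^*\mb \Xi$, whose operator norm, by~\eqref{eq:trm_conv_comp_pert_bound}, can be made smaller than any prescribed absolute constant by enlarging the constant $C$ in the sample bound. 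By symmetry I may also take the target to be $\mb q_\star = \mb e_n$, so that (absent the perturbation) $\mb q_\star^*\mb X_0$ is the $n$-th row of $\mb X_0$ and $S \doteq \set{k : [\mb X_0]_{nk} \ne 0}$ is its support. Since $\innerprod{\mb r}{\mb q_\star} > 0$, the point $\mb q_0 \doteq \mb q_\star/\innerprod{\mb r}{\mb q_\star}$ is feasible, and it suffices to show it is the \emph{unique} minimizer of the convex program~\eqref{eqn:LP_rounding} (then the rounding returns $\mb q_0$, a positive multiple of $\mb q_\star$, which recovers the corresponding row of $\mb X_0$ up to scale). By convexity it is enough to verify that the one-sided directional derivative of $\mb q \mapsto \norm{\mb q^*\wh{\mb Y}}{1}$ at $\mb q_0$ is strictly positive along every feasible direction $\mb w \ne \mb 0$, i.e.\ every $\mb w$ with $\innerprod{\mb r}{\mb w} = 0$.

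Write $\mb w = w_n \mb e_n + \mb w_\perp$ with $\mb w_\perp \perp \mb e_n$. The hypothesis $\innerprod{\mb r}{\mb q_\star} \ge 249/250$ with $\norm{\mb r}{} = 1$ forces $\abs{w_n} \le \rho\norm{\mb w_\perp}{}$ where $\rho = \sqrt{1 - (249/250)^2}/(249/250) < 1/10$, and in particular $\mb w_\perp \ne \mb 0$. For the unperturbed model ($\wt{\mb \Xi} = \mb 0$, $\wh{\mb y}_k = (\mb x_0)_k$), the $n$-th coordinate of $(\mb x_0)_k$ vanishes exactly when $k \notin S$, so the directional derivative is
\[
w_n\norm{\mb X_0^n}{1} \;+\; \innerprod{\mb w_\perp}{\sum_{k\in S}\sign([\mb X_0]_{nk})(\mb x_0)_{k,-n}} \;+\; \sum_{k\notin S}\abs{\innerprod{\mb w_\perp}{(\mb x_0)_{k,-n}}},
\]
where $(\mb x_0)_{k,-n}$ is the $k$-th column with its $n$-th entry deleted. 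I would then establish, each with probability $1 - cp^{-6}$: (i) a uniform-over-$\bb S^{n-2}$ lower bound $\sum_{k\notin S}\abs{\innerprod{\mb w_\perp}{(\mb x_0)_{k,-n}}} \ge \tfrac12\sqrt{2/\pi}\,(1-\theta)\,p\theta\,\norm{\mb w_\perp}{}$, using $\expect{\abs{\innerprod{\mb w_\perp}{(\mb x_0)_{k,-n}}}} \ge \sqrt{2/\pi}\,\theta\norm{\mb w_\perp}{}$ (which follows from $\sqrt{x}\ge x$ on $[0,1]$ applied to the Bernoulli mask), a Binomial concentration of $\abs{S}$, a covering net of $\bb S^{n-2}$ with sub-Gaussian concentration of each summand, and $\norm{\mb X_0}{\infty} \lesssim \sqrt{\log(np)}$ from Lemma~\ref{lem:X-infinty-tail-bound} to bound the increments; (ii) $\norm{\sum_{k\in S}\sign([\mb X_0]_{nk})(\mb x_0)_{k,-n}}{} \lesssim \theta\sqrt{np\log(np)}$, since the summands are independent, symmetric and mean-zero; and (iii) $\norm{\mb X_0^n}{1} \le \tfrac32\sqrt{2/\pi}\,p\theta$. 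Combining (i)--(iii) with $\abs{w_n} \le \rho\norm{\mb w_\perp}{}$, the directional derivative is at least
\[
\sqrt{2/\pi}\,p\theta\,\norm{\mb w_\perp}{}\brac{\tfrac12(1-\theta) - \tfrac32\rho - C\sqrt{\tfrac{n\log(np)}{p}}},
\]
and the bracket exceeds an absolute positive constant whenever $\theta < 1/3$ (so that $\tfrac12(1-\theta) > \tfrac13$ beats $\tfrac32\rho < \tfrac{3}{20}$) and $p$ is above the stated bound. This is exactly where the constant $249/250$ and the polynomial sample size are consumed.

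To transfer this to $\wt{\mb X_0}$: replacing $(\mb x_0)_k$ by $\wh{\mb y}_k = (\mb x_0)_k + \wt{\mb \Xi}(\mb x_0)_k$ shifts each of the three sums by $O(\norm{\wt{\mb \Xi}}{}\,p\theta\,\norm{\mb w}{})$, using $\norm{\wt{\mb \Xi}(\mb x_0)_k}{} \le \norm{\wt{\mb \Xi}}{}\norm{(\mb x_0)_k}{}$ and $\sum_k\norm{(\mb x_0)_k}{} \lesssim p\sqrt{n}\,\norm{\mb X_0}{\infty}$; it may also reassign finitely many columns between the two index sets of the subdifferential (the $n$-th coordinate of $\wh{\mb y}_k$ need no longer vanish off $S$), but on every such mismatched column the contribution lies in $\pm\abs{\innerprod{\mb w}{\wh{\mb y}_k}}$, so the total correction is again $O(\norm{\wt{\mb \Xi}}{}\,p\theta\,\norm{\mb w}{})$. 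Once $\norm{\wt{\mb \Xi}}{}$ is below a small absolute constant --- guaranteed, via Lemma~\ref{lem:pert_key_mag}/\eqref{eq:trm_conv_comp_pert_bound}, by the chosen $p$ --- these corrections are absorbed into the bracket above and the directional derivative stays strictly positive, proving uniqueness; the failure probability is a union bound over the covering net and the $O(1)$ concentration events. I expect the main obstacle to be step (i): \emph{uniform} control over all directions $\mb w_\perp$ of a nonsmooth empirical process (which forces the polynomial-in-$n$ sample size), together with checking that, after all the concentration bounds are applied, the remaining deterministic constants --- in particular the margin created by $\innerprod{\mb r}{\mb q_\star} \ge 249/250$ --- stay strictly positive uniformly over $\theta \in (0, 1/3)$. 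The bookkeeping of the preconditioning perturbation, by contrast, is routine given Lemma~\ref{lem:pert_key_mag}.
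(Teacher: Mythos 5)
Your reduction of the complete case to a perturbed orthogonal instance, and your plan for the clean orthogonal instance itself (directional derivative at the kink point, a uniform lower bound on $\sum_{k}\abs{\innerprod{\mb w_\perp}{(\mb x_0)_{k,-n}}}$ over the sphere, concentration of $\abs{S}$ and of $\norm{\mb x_n}{1}$), are sound and parallel what the paper does for Lemma~\ref{lem:alg_rounding_orth} via Lemma~\ref{lem:rounding-0}. The gap is in the step you call ``routine bookkeeping.'' Your candidate minimizer for the \emph{perturbed} program is a multiple of $\mb e_n$, but for the data $\wt{\mb X_0} = (\mb I + \wt{\mb \Xi})\mb X_0$ the vector whose image $\mb q^*\wt{\mb X_0}$ is an exact (sparse) row of $\mb X_0$ is $\bigl((\mb I+\wt{\mb \Xi})^*\bigr)^{-1}\mb e_n$, not $\mb e_n$. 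At $\mb q_0 \propto \mb e_n$ the entries $\mb q_0^*\wh{\mb y}_k = c\,\mb e_n^*\wt{\mb \Xi}(\mb x_0)_k$ for $k\notin S$ are tiny but generically \emph{nonzero}, so essentially all $(1-\theta)p$ off-support columns lose their kink: each contributes $\sigma_k\innerprod{\mb w}{\wh{\mb y}_k}$ with an arbitrary sign $\sigma_k$ rather than $\abs{\innerprod{\mb w}{\wh{\mb y}_k}}$. The loss per column is of order $\abs{\innerprod{\mb w_\perp}{(\mb x_0)_{k,-n}}}$ --- it is \emph{not} proportional to $\norm{\wt{\mb \Xi}}{}$ --- so the total correction can be as large as the entire gain term (i), not $O(\norm{\wt{\mb \Xi}}{}\,p\theta\norm{\mb w}{})$ as you claim. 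Consequently the directional derivative at $\mb q_0$ need not be positive, and even if it were, the LP would return a multiple of $\mb e_n$, whose image $\mb e_n^*(\mb I+\wt{\mb \Xi})\mb X_0$ is a dense mixture of rows of $\mb X_0$, so exact recovery would fail.

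The fix --- and the paper's actual argument --- is to push the perturbation entirely into the constraint rather than the objective: substitute $\wt{\mb q} = (\mb U\mb V^* + \mb \Xi)^*\mb q$, so the program becomes $\min_{\wt{\mb q}}\norm{\wt{\mb q}^*\mb X_0}{1}$ subject to $\innerprod{(\mb U\mb V^*+\mb \Xi)^{-1}\mb r}{\wt{\mb q}} = 1$, i.e.\ \emph{exactly} the clean orthogonal-case program with a modified input vector. The bilinear identity $\innerprod{(\mb U\mb V^*+\mb \Xi)^{-1}\mb r}{(\mb U\mb V^*+\mb \Xi)^*\mb q_\star} = \innerprod{\mb r}{\mb q_\star}$ shows the hypothesis $\innerprod{\mb r}{\mb q_\star}\ge 249/250$ transfers verbatim, so Lemma~\ref{lem:alg_rounding_orth} applies with no new probabilistic estimates and no perturbation analysis of the subdifferential at all. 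If you prefer to keep your framework, you must re-center your directional-derivative computation at $\bigl((\mb I+\wt{\mb \Xi})^*\bigr)^{-1}\mb e_n$ (where the kink structure is exactly that of the unperturbed problem), which amounts to the same change of variables.
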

\begin{proof}
See Page~\pageref{proof:lem_alg_rounding_comp} under Section~\ref{sec:proof_main}. 
\end{proof}

\paragraph{Proof sketch of deflation.} We use a similar induction argument to show the deflation works. Compared to the orthogonal case, the tricky part here is that the target vectors $\Brac{\mb q_\star^i}_{i=1}^n$ are not necessarily orthogonal to each other, but they are almost so. W.l.o.g., let us again assume that $\mb q_\star^1, \dots, \mb q_\star^\ell$ recover the first $\ell$ rows of $\mb X_0$, and similarly partition the matrix $\mb X_0$ as in \eqref{eqn:matrices-partition}.

By Lemma~\ref{lem:pert_key_mag} and~\eqref{eq:pert_upper_bound}, we can write $\ol{\mb Y} = (\mb Q + \mb \Xi) \mb X_0$ for some orthogonal matrix $\mb Q$ and small perturbation $\mb \Xi$ with $\norm{\mb \Xi}{} \le \delta < 1/10$ for some large $p$ as usual. Similar to the orthogonal case, we have
\begin{align*}
	h(\mb z) = f(\mb U \mb z; (\mb Q + \mb \Xi) \mb X_0) = f_{n -\ell}^{\downarrow}(\mb z; \mb U^* (\mb Q + \mb \Xi) \mb X_0),
\end{align*}
where $f_{n -\ell}^{\downarrow}$ is defined the same as in \eqref{eqn:func-(n-l)}. Next, we show that the matrix $\mb U^* (\mb Q + \mb \Xi) \mb X_0$ can be decomposed as $\mb U^*\mb V\mb X_0^{[n-\ell]} + \mb \Delta$, where $\mb V\in \bb R^{(n-\ell)\times n }$ is orthogonal and $\mb \Delta$ is a small perturbation matrix. More specifically, we show that

\begin{lemma}\label{lem:deflation-bound}
Suppose the matrices $\mb U\in \bb R^{n\times (n-\ell)}$, $\mb Q \in \bb R^{n\times n}$ are orthogonal as defined above, $\mb \Xi$ is a perturbation matrix with $\norm{\mb \Xi}{}\leq 1/20$, then
	\begin{align}
		\mb U^*\paren{\mb Q+\mb \Xi}\mb X_0 = \mb U^* \mb V\mb X_0^{[n-\ell]} + \mb \Delta,
	\end{align}
	where $\mb V\in \bb R^{n\times (n-\ell)} $ is an orthogonal matrix spanning the same subspace as that of $\mb U$, and the norms of $\mb \Delta$ is bounded by
	\begin{align}
		\norm{\mb \Delta}{\ell^1\rightarrow \ell^2 } \leq 16\sqrt{n} \norm{\mb \Xi}{} \norm{\mb X_0}{\infty}, \quad \norm{\mb \Delta}{} \leq 16 \norm{\mb \Xi}{} \norm{\mb X_0}{},
	\end{align}
	where $\norm{\mb W}{\ell^1\rightarrow \ell^2} = \sup_{\norm{\mb z}{1}=1} \norm{\mb W\mb z}{} = \max_k \norm {\mb w_k}{} $ denotes the max column $\ell^2$-norm of a matrix $\mb W$. 
\end{lemma}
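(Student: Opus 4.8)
The plan is to construct $\mb V$ explicitly from $\mb Q$ by an orthogonal projection followed by re-orthonormalization, and then to bound the residual $\mb \Delta = \mb U^*(\mb Q + \mb \Xi)\mb X_0 - \mb U^* \mb V \mb X_0^{[n-\ell]}$ by splitting it into a "perturbation" part (coming from $\mb \Xi$) and a "projection mismatch" part (coming from the difference between $\mb Q$ restricted to the relevant rows and $\mb V$). First I would recall that $\mb q_\star^1,\dots,\mb q_\star^\ell$ were obtained (via the earlier parts of the pipeline) as near-copies of $\ell$ of the columns of $\mb Q$; more precisely, up to relabeling, the first $\ell$ columns of $\mb Q$, call them $\mb Q^{[\ell]}$, are close to $\mathrm{span}(\mb q_\star^1,\dots,\mb q_\star^\ell)$, so that $\mb U^* \mb Q^{[\ell]}$ is small (of order $\norm{\mb \Xi}{}$), while $\mb U^* \mb Q^{[n-\ell]}$ — with $\mb Q^{[n-\ell]}$ the remaining $n-\ell$ columns — is a nearly orthogonal $(n-\ell)\times(n-\ell)$ matrix. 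I would then set $\mb V$ to be the orthogonal factor in the polar decomposition of $\mb Q^{[n-\ell]}$ projected onto $\mathrm{span}(\mb U)$, i.e. $\mb V = (\mc P_{\mb U} \mb Q^{[n-\ell]})(\,(\mc P_{\mb U}\mb Q^{[n-\ell]})^*(\mc P_{\mb U}\mb Q^{[n-\ell]})\,)^{-1/2}$, which is orthogonal and spans $\mathrm{span}(\mb U)$ by construction, and which is $O(\norm{\mb \Xi}{})$-close to $\mc P_{\mb U}\mb Q^{[n-\ell]} = \mb U\mb U^* \mb Q^{[n-\ell]}$.

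Next I would carry out the decomposition
\begin{align*}
\mb U^*(\mb Q + \mb \Xi)\mb X_0 &= \mb U^* \mb Q^{[\ell]} \mb X_0^{[\ell]} + \mb U^* \mb Q^{[n-\ell]} \mb X_0^{[n-\ell]} + \mb U^* \mb \Xi \mb X_0,
\end{align*}
and identify $\mb U^* \mb V \mb X_0^{[n-\ell]}$ as the intended main term (note $\mb U^* \mb V = (\mc P_{\mb U}\mb Q^{[n-\ell]})^*(\ldots)^{-1/2}$ acts on $\mathrm{span}(\mb U)$ like $\mb V$ does). Then
\begin{align*}
\mb \Delta &= \underbrace{\mb U^* \mb Q^{[\ell]} \mb X_0^{[\ell]}}_{\text{(i)}} + \underbrace{\paren{\mb U^* \mb Q^{[n-\ell]} - \mb U^* \mb V}\mb X_0^{[n-\ell]}}_{\text{(ii)}} + \underbrace{\mb U^* \mb \Xi \mb X_0}_{\text{(iii)}}.
\end{align*}
For (iii), $\norm{\mb U^*\mb \Xi\mb X_0}{} \le \norm{\mb \Xi}{}\norm{\mb X_0}{}$ and $\norm{\mb U^*\mb \Xi\mb X_0}{\ell^1\to\ell^2}\le \norm{\mb \Xi}{}\norm{\mb X_0}{\infty}$ directly (using $\norm{\mb U}{}=1$ and that each column of $\mb X_0$ has $\ell^2$ norm at most $\sqrt n\norm{\mb X_0}{\infty}$, or rather the $\ell^1\to\ell^2$ norm estimate $\norm{\mb A\mb B}{\ell^1\to\ell^2}\le\norm{\mb A}{}\norm{\mb B}{\ell^1\to\ell^2}$). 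For (i) and (ii), the key quantitative input is a bound of the form $\norm{\mb U^* \mb Q^{[\ell]}}{} \lesssim \norm{\mb \Xi}{}$ and $\norm{\mb U^* \mb Q^{[n-\ell]} - \mb U^* \mb V}{}\lesssim \norm{\mb \Xi}{}$, which follow from a $\sin\Theta$-type perturbation argument: since $\mb q_\star^1,\dots,\mb q_\star^\ell$ were exactly recovered by LP rounding (Lemma~\ref{lem:alg_rounding_comp}) to lie in $\mathrm{row}(\ol{\mb Y})$ and the geometry forces them within $O(\mu)$ — absorbed into $\norm{\mb \Xi}{}$ up to constants — of signed columns of $\mb Q$, the subspace $\mathrm{span}(\mb q_\star^1,\dots,\mb q_\star^\ell)$ is $O(\norm{\mb \Xi}{})$-close to $\mathrm{span}(\mb Q^{[\ell]})$, hence $\mb U$ (its orthocomplement) is $O(\norm{\mb \Xi}{})$-close to $\mathrm{span}(\mb Q^{[n-\ell]})$. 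Tracking the constants through the polar-decomposition re-orthonormalization (using $\norm{\mb \Xi}{}\le 1/20$ to keep $(\ldots)^{-1/2}$ well-conditioned, e.g. within a factor $2$ of identity), and combining the three pieces, yields the stated constant $16$ in both $\norm{\mb \Delta}{\ell^1\to\ell^2}\le 16\sqrt n\,\norm{\mb \Xi}{}\norm{\mb X_0}{\infty}$ and $\norm{\mb \Delta}{}\le 16\norm{\mb \Xi}{}\norm{\mb X_0}{}$, where the extra $\sqrt n$ in the column-norm bound comes solely from converting the operator-norm estimates on the left factors into $\ell^1\to\ell^2$ estimates via $\norm{\mb X_0^{[\,\cdot\,]}}{\ell^1\to\ell^2}\le\norm{\mb X_0}{\infty}$ and the column-count of $\mb X_0$.

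The main obstacle I anticipate is the bookkeeping in step (i)–(ii): making precise the claim that $\mathrm{span}(\mb q_\star^1,\dots,\mb q_\star^\ell)$ is $O(\norm{\mb \Xi}{})$-close to the span of the "right" $\ell$ columns of $\mb Q$, and then propagating that subspace closeness through the polar/Gram-Schmidt re-orthonormalization without losing control of the absolute constants, so that the final bound is genuinely $16$ and not merely $O(1)$. This requires a careful $\sin\Theta$ estimate plus a Neumann-series expansion of $(\mb G)^{-1/2}$ where $\mb G = (\mc P_{\mb U}\mb Q^{[n-\ell]})^*\mc P_{\mb U}\mb Q^{[n-\ell]} = \mb I - (\text{something of norm } O(\norm{\mb \Xi}{}^2))$; everything else is a routine triangle-inequality assembly. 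One should also double-check that the hypothesis $\norm{\mb \Xi}{}\le 1/20$ is exactly what is needed to make all the geometric series and conditioning bounds go through with the claimed constant.
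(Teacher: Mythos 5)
Your decomposition is a genuinely different route from the paper's. The paper's proof begins by inserting $(\mb Q^* + \mb \Xi^*)^{-1}(\mb Q + \mb \Xi)^*$ to rewrite $\mb U^*(\mb Q+\mb \Xi)\mb X_0 = \mb U^*\bigl[\mb q_\star^1,\dots,\mb q_\star^\ell \,\vert\, \wh{\mb V}\bigr](\mb I + \mb \Delta_1)\mb X_0$ with $\mb \Delta_1 = \mb Q^*\mb \Xi + \mb \Xi^*\mb Q + \mb \Xi^*\mb \Xi$, so that the first $\ell$ columns are annihilated \emph{exactly} by $\mb U^*$ (since $\mb U^* \mb q_\star^i = 0$); the only remaining work is to split $\wh{\mb V} = (\mb Q^* + \mb \Xi^*)^{-1}[\mb e_{\ell+1},\dots,\mb e_n]$ into an orthonormal $\mb V$ plus a small $\mb \Delta_3$, which the paper does with a $\sin\Theta$ estimate and the matrix-inverse perturbation bound. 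You instead split $\mb Q$ by columns up front and must therefore bound the extra term $\mb U^*\mb Q^{[\ell]}\mb X_0^{[\ell]}$, which the paper's trick avoids entirely. Your approach can be made to work, and your polar-decomposition construction of $\mb V$ is a clean alternative to the paper's $\wh{\mb V} = \mb V + \mb \Delta_3$ splitting, but it carries strictly more bookkeeping.

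The one genuine soft spot is your justification of $\norm{\mb U^*\mb Q^{[\ell]}}{} \lesssim \norm{\mb \Xi}{}$. You argue that the recovered vectors are within ``$O(\mu)$ --- absorbed into $\norm{\mb \Xi}{}$ up to constants'' of signed columns of $\mb Q$. That cannot be right: $\mu$ is the smoothing parameter and is independent of $\norm{\mb \Xi}{}$, so an $O(\mu)$ error cannot be absorbed into a bound that must be stated purely in terms of $\norm{\mb \Xi}{}$. The correct source of the estimate is not an approximate-recovery argument at all but the exact algebraic identity $[\mb q_\star^1,\dots,\mb q_\star^\ell] = (\mb Q^* + \mb \Xi^*)^{-1}[\mb e_1,\dots,\mb e_\ell]$ (up to column normalization), which holds because the LP rounding returns the \emph{exact} targets satisfying $(\mb q_\star^i)^*(\mb Q + \mb \Xi)\mb X_0 = \alpha\,\mb e_i^*\mb X_0$. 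From this identity, $\mathrm{span}(\mb q_\star^1,\dots,\mb q_\star^\ell)$ is within $O(\norm{\mb \Xi}{})$ of $\mathrm{span}(\mb Q^{[\ell]})$ by the inverse-perturbation bound, and your $\sin\Theta$ step then goes through. Without invoking that identity your quantitative claim has no foundation. Two smaller issues: you state $\norm{\mb U^*\mb \Xi\mb X_0}{\ell^1\to\ell^2} \le \norm{\mb \Xi}{}\norm{\mb X_0}{\infty}$, dropping the $\sqrt{n}$ from $\max_k\norm{(\mb x_0)_k}{} \le \sqrt{n}\norm{\mb X_0}{\infty}$ (you recover it in the final statement, but the intermediate inequality as written is false); and the constant $16$ is asserted rather than derived --- the paper tracks it explicitly through the bounds $\norm{\mb \Delta_2}{} \le 3\delta/(1-\delta)$ and $\norm{\mb \Delta_3}{} \le \delta/(1-\delta) + 8\sqrt{2}\delta$ under $\delta \le 1/20$, and your write-up would need the analogous accounting.
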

\begin{proof}
See Page~\pageref{proof:lem_deflation-bound} under Section~\ref{sec:proof_main}. 
\end{proof}

Since $\mb U\mb V$ is orthogonal and $\mb X_0^{[n-\ell]}\sim_{i.i.d.} \text{BG}(\theta)$, we come into another instance of perturbed dictionary learning problem with a reduced dimension
\begin{align*}
	h(\mb z) = f_{n-\ell}^{\downarrow}\paren{\mb z; \mb U^*\mb V \mb X_0^{[n-\ell]} + \mb \Delta}.
\end{align*}
Since our perturbation analysis in proving Theorem~\ref{thm:geometry_comp} and Theorem~\ref{thm:trm_comp} solely relies on the fact that $\norm{\mb \Delta}{\ell^1 \rightarrow \ell^2} \leq C \norm{\mb \Xi}{}\sqrt{n}\norm{\mb X_0}{\infty}$, it is enough to make $p$ large enough so that the theorems are still applicable for the reduced version $f_{n-\ell}^{\downarrow}(\mb z; \mb U^*\mb V \mb X_0^{[n-\ell]} + \mb \Delta)$. Thus, by invoking Theorem~\ref{thm:geometry_comp} and Theorem~\ref{thm:trm_comp}, the TRM algorithm provably returns one $\wh{\mb z}$ such that $\wh{\mb z}$ is near to a perturbed optimal $\wh{\mb z}_\star$ with
\begin{align}\label{eqn:comp-solution}
	\wh{\mb z}_\star^* \mb U^*\mb V\mb X_0^{[n-\ell]} = \mb z_\star^* \mb U^*\mb V\mb X_0^{[n-\ell]} +\mb z_\star^*\mb \Delta =\alpha \mb e_i^* \mb X_0,\quad \text{for some } i\not \in [\ell],
\end{align}
where $\mb z_\star$ with $\norm{\mb z_\star}{}=1$ is the exact solution. More specifically, Corollary II.4 in~\cite{sun2015complete_a} implies that 
\begin{align*}
	\norm{\wh{\mb z} - \wh{\mb z}_\star }{} \leq \sqrt{2}\mu_\star /7.
\end{align*}
Next, we show that $\wh{\mb z}$ is also very near to the exact solution $\mb z_\star$. Indeed, the identity \eqref{eqn:comp-solution} suggests
\begin{align}
	&\paren{\wh{\mb z}_\star -\mb z_\star }^*\mb U^*\mb V\mb X_0^{[n-\ell]} = \mb z_\star^* \mb \Delta \nonumber\\
	\Longrightarrow\;& \wh{\mb z}_\star - \mb z_\star = \brac{(\mb X_0^{[n-\ell]})^* \mb V^*\mb U }^\dagger \mb \Delta^*\mb z_\star = \mb U^*\mb V \brac{(\mb X_0^{[n-\ell]})^* }^\dagger \mb \Delta^*\mb z_\star \label{eqn:comp-z-distance}
\end{align}
where $\mb W^\dagger = (\mb W^*\mb W)^{-1}\mb W^*$ denotes the pseudo inverse of a matrix $\mb W$ with full column rank. Hence, by \eqref{eqn:comp-z-distance} we can bound the distance between $\wh{\mb z}_\star$ and $\mb z_\star$ by
\begin{align*}
	\norm{\wh{\mb z}_\star - \mb z_\star }{}  \leq \norm{\brac{(\mb X_0^{[n-\ell]})^* }^\dagger }{} \norm{\mb \Delta}{} \leq \sigma_{\min}^{-1}(\mb X_0^{[n-\ell]} ) \norm{\mb \Delta}{}
\end{align*}
By Lemma~\ref{lem:bg_identity_diff}, when $p \ge \Omega(n^2 \log n)$, w.h.p., 
\begin{align*}
	 \theta p/2\leq \sigma_{\min}(\mb X_0^{[n-\ell]} (\mb X_0^{[n-\ell]})^* ) \leq \norm{\mb X_0^{[n-\ell]}(\mb X_0^{[n-\ell]})^* }{} \leq \norm{\mb X_0\mb X_0^*}{}\leq 3\theta p/2.
\end{align*}
Hence, combined with Lemma \ref{lem:deflation-bound}, we obtain
\begin{align*}
	\sigma_{\min}^{-1}(\mb X_0^{[n-\ell]}) \leq \sqrt{\frac{2}{\theta p}},\quad  \norm{\mb \Delta }{} \leq  28\sqrt{ \theta p} \norm{\mb \Xi}{}/\sqrt{2},
\end{align*}
which implies that $\norm{\wh{\mb z}_\star -\mb z_\star }{}\leq 28 \norm{\mb \Xi}{}$. Thus, combining the results above, we obtain
\begin{align*}
	\norm{\wh{\mb z} - \mb z_\star}{} \le \norm{\wh{\mb z} - \wh{\mb z}_\star}{} + \norm{\wh{\mb z}_\star - \mb z_\star}{} \le \sqrt{2}\mu_\star/7 + 28\norm{\mb \Xi}{}.
\end{align*}
Lemma~\ref{lem:pert_key_mag}, and in particular~\eqref{eq:pert_upper_bound}, for our choice of $p$ as in Theorem~\ref{thm:geometry_comp}, $\norm{\mb \Xi}{} \le c\mu_\star^2 n^{-3/2}$, where $c$ can be made smaller by making the constant in $p$ larger. For $\mu_\star$ sufficiently small, we conclude that 
\begin{align*}
\norm{\mb U \wh{\mb z} - \mb U\mb z_\star}{} = \norm{\wh{\mb z} - \mb z_\star}{} \le 2\mu_\star/7. 
\end{align*} 
In words, the TRM algorithm returns a $\wh{\mb z}$ such that $\mb U \wh{\mb z}$ is very near to one of the unit vectors $\Brac{\mb q_\star^i}_{i=1}^n$, such that $(\mb q_\star^i)^* \ol{\mb Y} = \alpha \mb e_i^*\mb X_0$ for some $\alpha \ne 0$. For $\mu_\star$ smaller than a fixed constant, one will have 
\begin{align*}
\innerprod{\mb U \wh{\mb z}}{\mb q_\star^i} \ge 249/250, 
\end{align*}
and hence by Lemma~\ref{lem:alg_rounding_comp}, the LP rounding exactly returns the optimal solution $\mb q_\star^i$ upon the input $\mb U \wh{\mb z}$. 

The proof sketch above explains why the recursive TRM plus rounding works. The overall failure probability can be obtained via a simple union bound and simplifications of the exponential tails with inverse polynomials in $p$.

\section{Simulations} \label{sec:exp}

\js{
\subsection{Practical TRM Implementation}

Fixing a small step size and solving the trust-region subproblem exactly eases the analysis, but also renders the TRM algorithm impractical. In practice, the trust-region subproblem is never exactly solved, and the trust-region step size is adjusted to the local geometry, say by backtracking. It is possible to modify our algorithmic analysis to account for inexact subproblem solvers and adaptive step size; for sake of brevity, we do not pursue it here. Recent theoretical results on the practical version include~\cite{cartis2012complexity,boumal2016global}. 

Here we describe a practical implementation based on the \emph{Manopt} toolbox~\cite{boumal2014manopt}\footnote{Available online: \url{http://www.manopt.org}. }. Manopt is a user-friendly Matlab toolbox that implements several sophisticated solvers for tackling optimization problems over Riemannian manifolds. The most developed solver is based on the TRM. This solver uses the truncated conjugate gradient (tCG; see, e.g., Section 7.5.4 of~\cite{conn2000trust}) method to (approximately) solve the trust-region subproblem (vs.\ the exact solver in our analysis). It also dynamically adjusts the step size using backtracking. However, the original implementation (Manopt 2.0) is not adequate for our purposes. Their tCG solver uses the gradient as the initial search direction, which does not ensure that the TRM solver can escape from saddle points~\cite{absil2007trust,absil2009}. We modify the tCG solver, such that when the current gradient is small and there is a negative curvature direction (i.e., the current point is near a saddle point or a local maximizer of $f(\mb q)$), the tCG solver explicitly uses the negative curvature direction\footnote{...adjusted in sign to ensure positive correlation with the gradient -- if it does not vanish.} as the initial search direction. This modification ensures the TRM solver always escape from saddle points/local maximizers with negative directional curvature. Hence, the modified TRM algorithm based on Manopt is expected to have the same qualitative behavior as the idealized version we analyzed above, with better scalability. We will perform our numerical simulations using the modified TRM algorithm whenever necessary. Algorithm 3 together with Lemmas 9 and 10 and the surrounding discussion in the very recent work~\cite{boumal2016global} provides a detailed description of this practical version. 
}

\subsection{Simulated Data}

To corroborate our theory, we experiment with dictionary recovery on simulated data.\footnote{The code is available online: \url{https://github.com/sunju/dl_focm}} For simplicity, we focus on recovering orthogonal dictionaries and we declare success once a single row of the coefficient matrix is recovered.

Since the problem is invariant to rotations, w.l.o.g.\ we set the dictionary as $\mb A_0 = \mb I \in \R^{n \times n}$. For any fixed sparsity $k$, each column of the coefficient matrix $\mb X_0 \in \R^{n \times p}$ has exactly $k$ nonzero entries, chosen uniformly random from $\binom{[n]}{k}$. These nonzero entries are i.i.d. standard normals. This is slightly different from the Bernoulli-Gaussian model we assumed for analysis. For $n$ reasonably large, these two models have similar behaviors. For our sparsity surrogate, we fix the smoothing parameter as $\mu = 10^{-2}$. Because the target points are the signed basis vector $\pm \mb e_i$'s (to recover rows of $\mb X_0$), for a solution $\wh{\mb q}$ returned by the TRM algorithm, we define the reconstruction error (RE) to be
\begin{align}
\mathtt{RE} =  \min_{i \in [n]} \paren{\norm{\wh{\mb q} - \mb e_i}{},\norm{\wh{\mb q} + \mb e_i}{} }. 
\end{align}
One trial is determined to be a success once $\mathtt{RE} \le \mu$, with the idea that this indicates $\wh{\mb q}$ is already very near the target and the target can likely be recovered via the LP rounding we described (which we do not implement here). 

\js{
We consider two settings: (1) fix $p = 5n^2 \log n$ and vary the dimension $n$ and sparsity $k$; (2) fix the sparsity level as $\lceil 0.2 \cdot n \rceil$ and vary the dimension $n$ and number of samples $p$. For each pair of $(k,n)$ for (1), and each pair of $(p, n)$ for (2), we repeat the simulations independently for $T=5$ times. 
\begin{figure}
\centering
\includegraphics[width=0.4\textwidth]{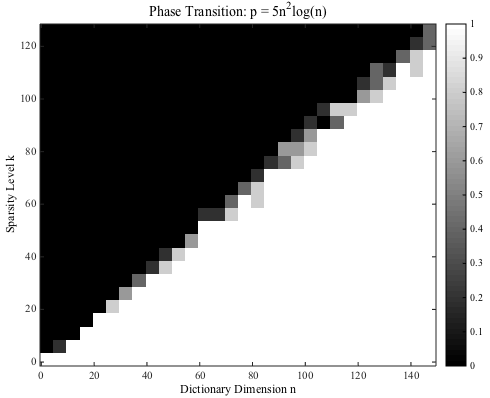}
\includegraphics[width=0.4\textwidth]{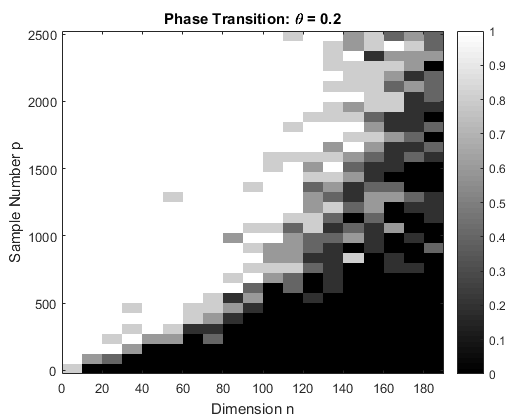}
\caption{Phase transition for recovering a single sparse vector. \textbf{Left:} We fix $p = 5n^2 \log n$ and vary the dimension $n$ and sparsity level $k$; \textbf{Right:} We fix the sparsity level as $\lceil 0.2 \cdot n\rceil$ and vary the dimension $n$ and number of samples $p$. For each configuration, the experiment is independently repeated for five times. White indicates success, and black indicates failure. } 
\label{fig:exp_simulated}
\end{figure}
Fig.~\ref{fig:exp_simulated} shows the phase transition for the two settings. It seems that our TRM algorithm can work well into the linear region whenever $p \in O(n^2 \log n)$ (Fig.~\ref{fig:exp_simulated}-Left), but $p$ should have order greater than $\Omega(n)$ (Fig.~\ref{fig:exp_simulated}-Right). The sample complexity from our theory is significantly suboptimal compared to this. 
}

\js{
\subsection{Image Data Again}

Our algorithmic framework has been derived based on the BG model on the coefficients. Real data may not admit sparse representations w.r.t. complete dictionaries, or even so, the coefficients may not obey the BG model. In this experiment, we explore how our algorithm performs in learning complete dictionaries for image patches, emulating our motivational experiment in the companion paper~\cite{sun2015complete_a} (Section I.B). Thanks to research on image compression, we know patches of natural images tend to admit sparse representation, even w.r.t. simple orthogonal bases, such as Fourier basis or wavelets.

\begin{figure}
\centering
\begin{minipage}{.3\textwidth}
    \centering
    \includegraphics[width = 0.96\textwidth]{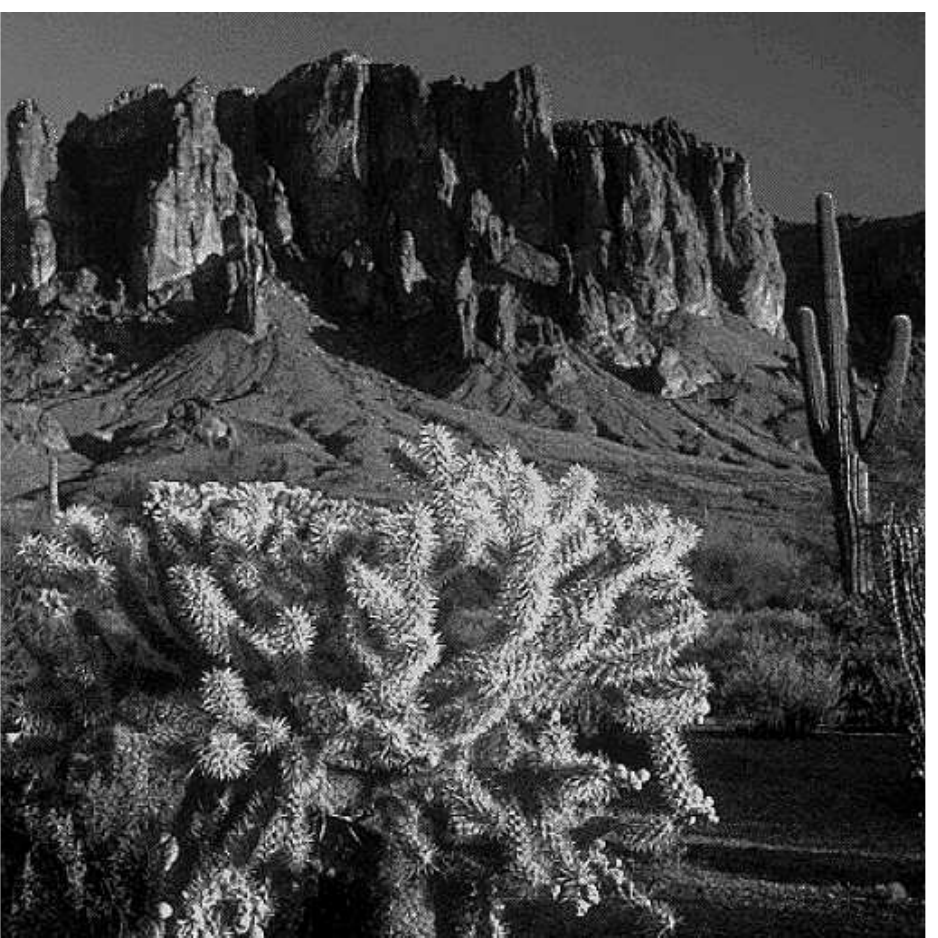}\\
    \vspace{0.05in}
    \includegraphics[width = 0.95\textwidth]{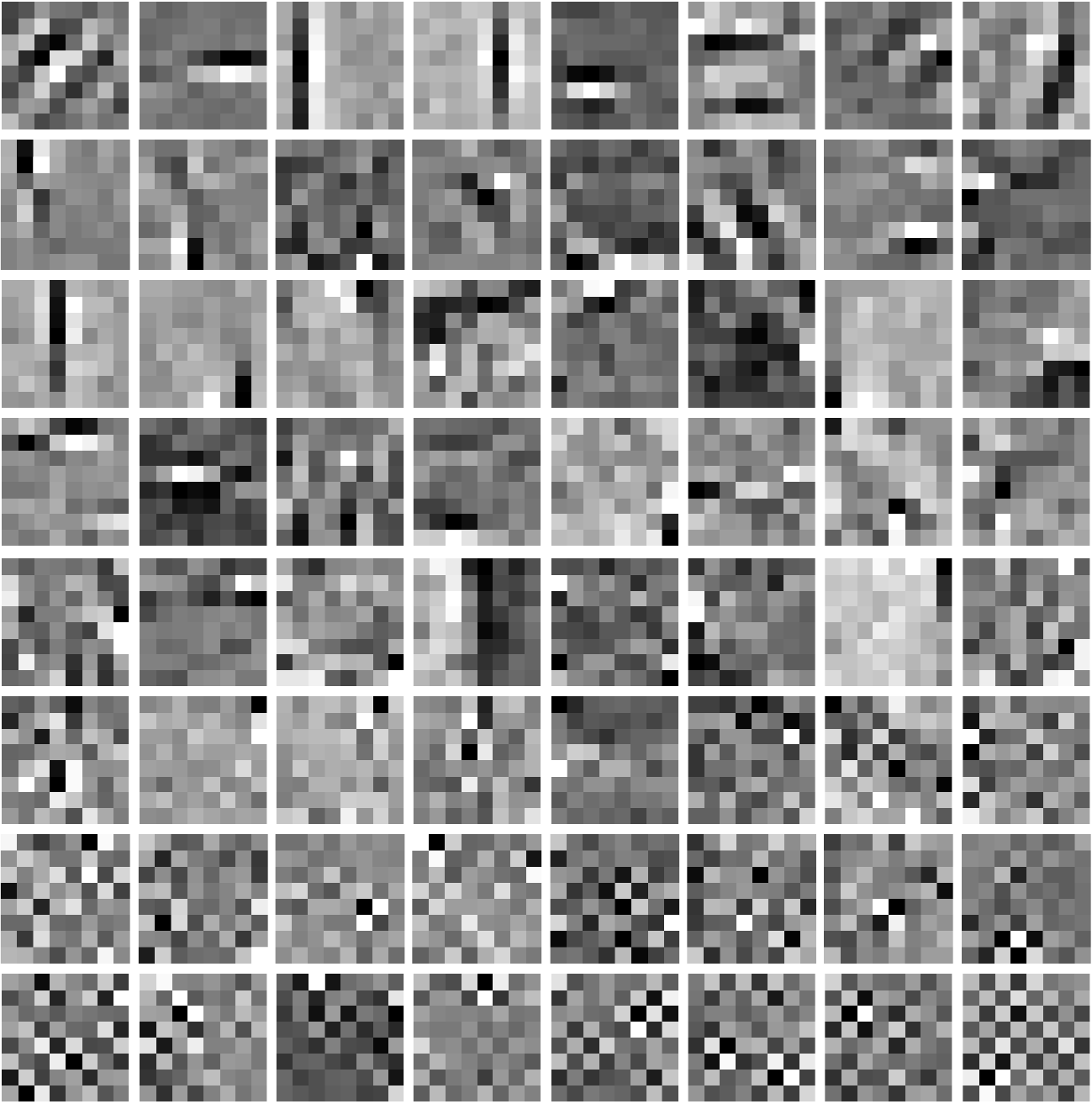}\\
    \vspace{0.05in}
    \includegraphics[width = \textwidth]{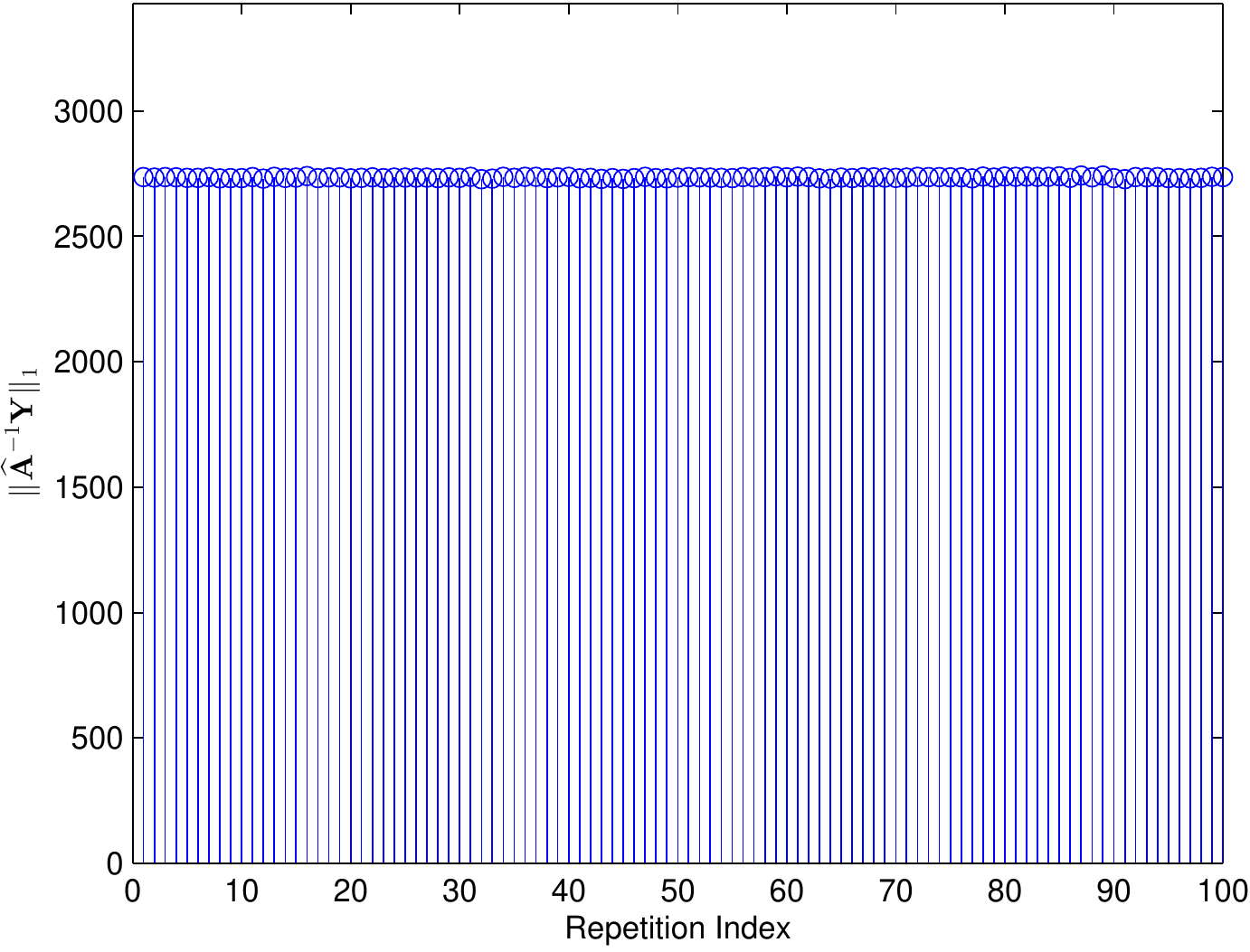}
\end{minipage}%
\begin{minipage}{.3\textwidth}
    \centering
    \includegraphics[width = 0.96\textwidth]{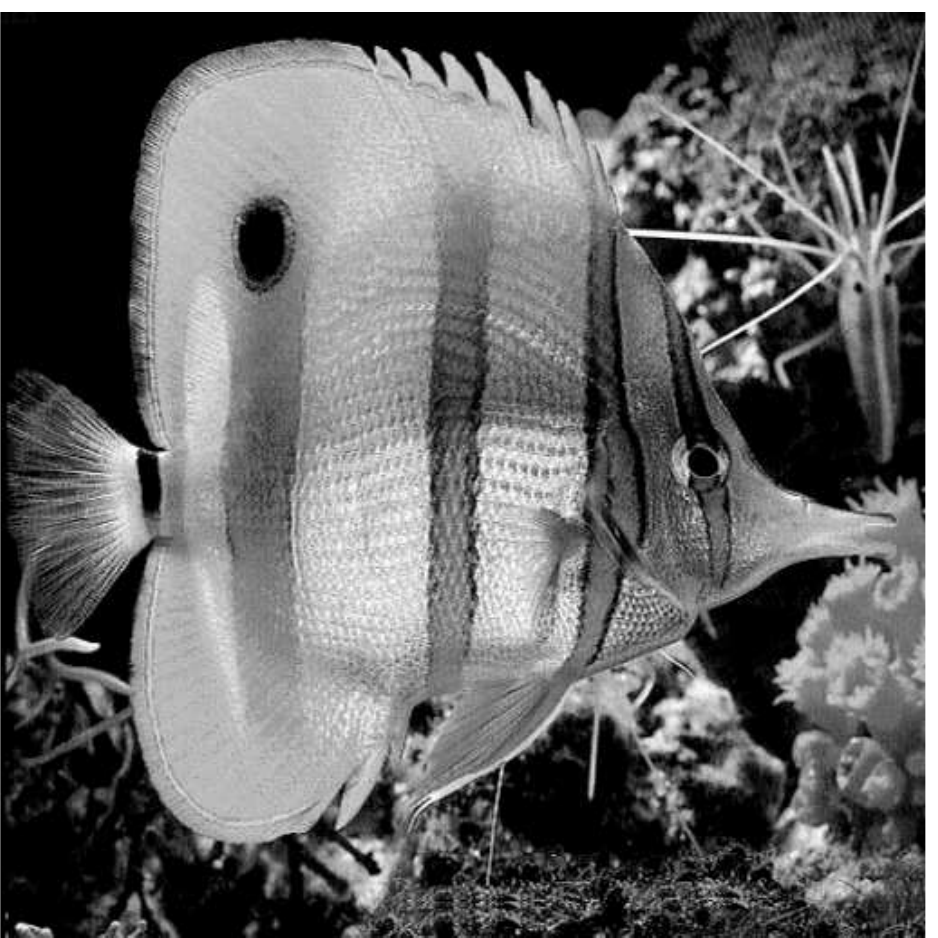}\\
    \vspace{0.05in}
    \includegraphics[width = 0.95\textwidth]{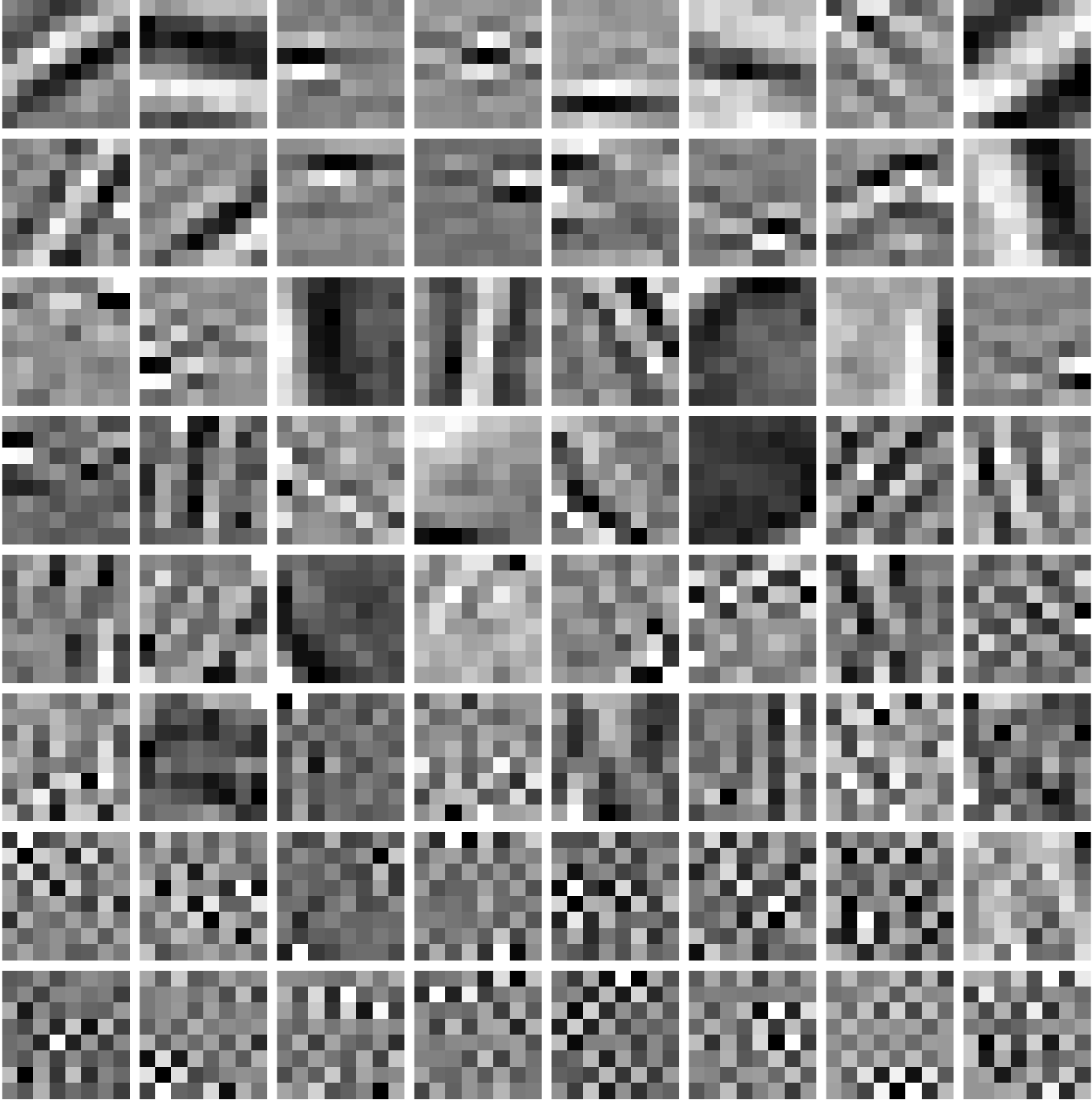}\\
    \vspace{0.05in}
    \includegraphics[width = \textwidth]{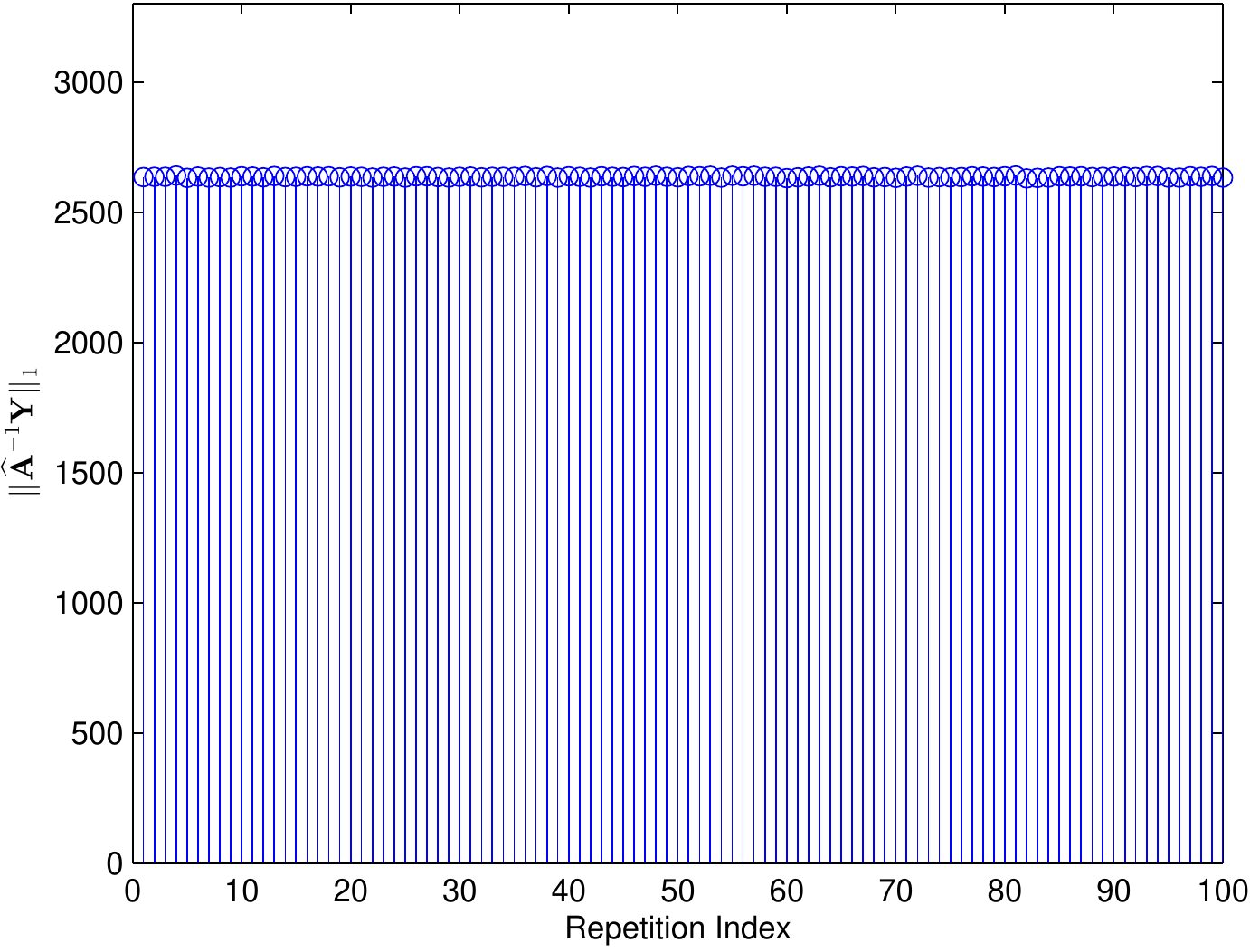}
\end{minipage}%
\begin{minipage}{.3\textwidth}
    \centering
    \includegraphics[width = 0.96\textwidth]{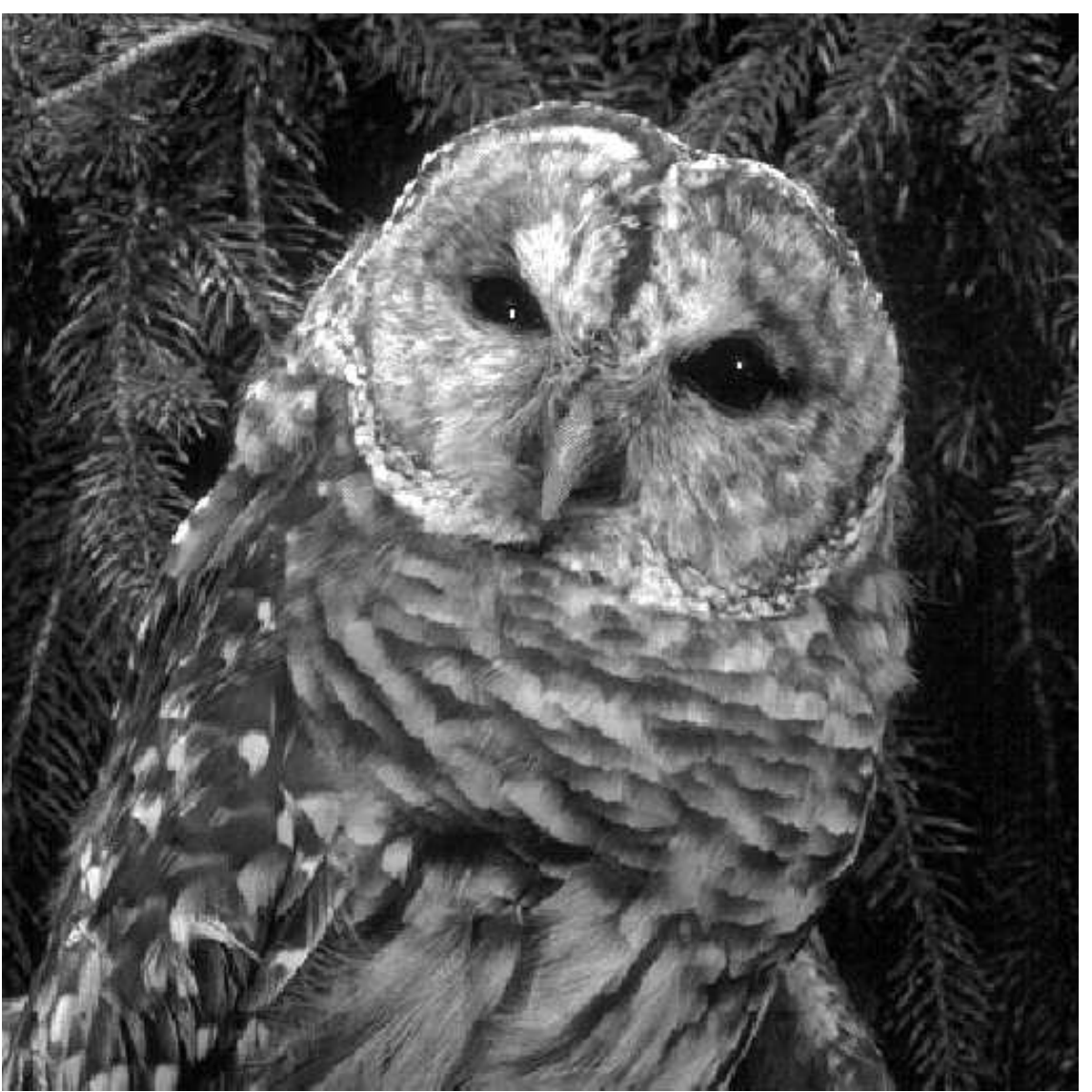}\\
    \vspace{0.05in}
    \includegraphics[width = 0.95\textwidth]{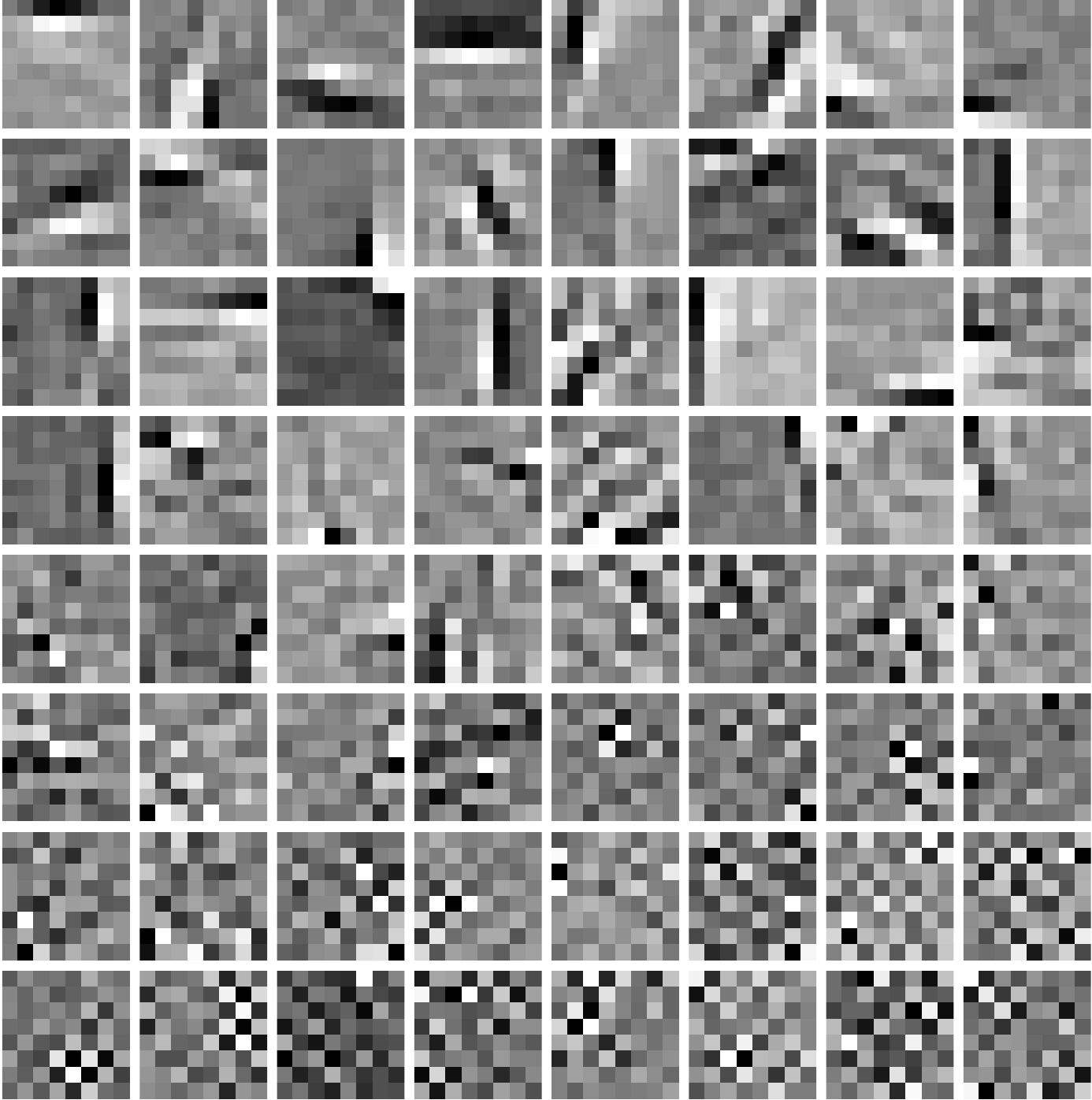} \\
    \vspace{0.05in}
    \includegraphics[width = \textwidth]{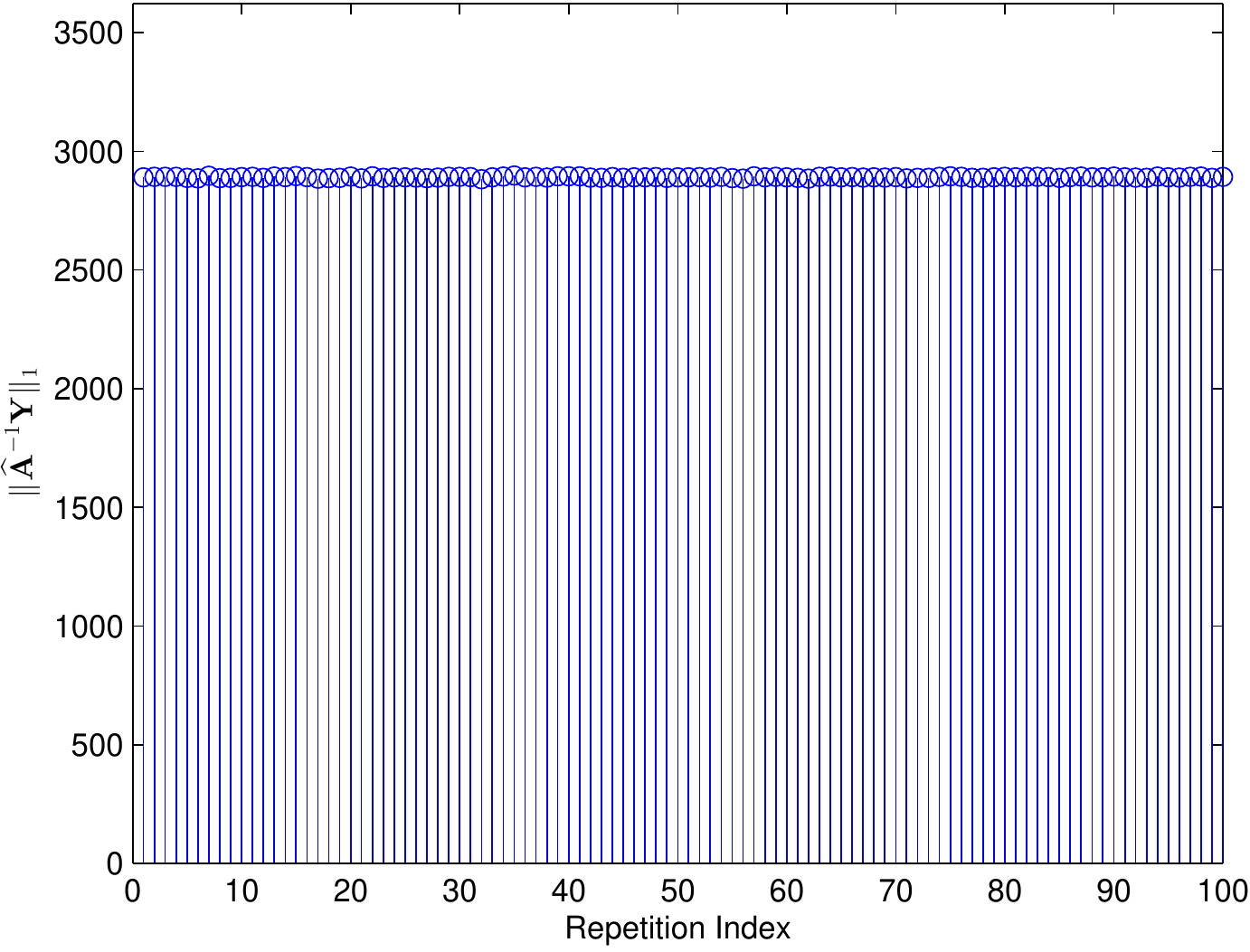}
\end{minipage}
\caption{Results of learning complete dictionaries from image patches, using the algorithmic pipeline in Section~\ref{sec:main_result}. \textbf{Top:} Images we used for the experiment. These are the three images we used in our motivational experiment (Section I.B) in the companion paper~\cite{sun2015complete_a}. The way we formed the data matrix $\mb Y$ is exactly the same as in that experiment. \textbf{Middle:} The $64$ dictionary elements we learned. \textbf{Bottom:} Let $\wh{\mb A}$ be the final dictionary matrix at convergence. This row shows the value $\|\wh{\mb A}^{-1} \mb Y\|_1$ across one hundred independent runs. The values are almost the same, with a relative difference less than $10^{-3}$. }
\label{fig:exp_img_trm}
\end{figure}

We take the three images that we used in the motivational experiment. For each image, we divide it into $8 \times 8$ non-overlapping patches, vectorize the patches, and then stack the vectorized patches into a data matrix $\mb Y$. $\mb Y$ is preconditioned as 
\begin{align*}
\ol{\mb Y} = \paren{\mb Y \mb Y^\top}^{-1/2} \mb Y, 
\end{align*}
and the resulting $\ol{\mb Y}$ is fed to the dictionary learning pipeline described in Section~\ref{sec:main_result}. The smoothing parameter $\mu$ is fixed to $10^{-2}$. Fig.~\ref{fig:exp_img_trm} contains the learned dictionaries: the dictionaries generally contain localized, directional features that resemble subset of wavelets and generalizations. These are very reasonable representing elements for natural images. Thus, the BG coefficient model may be a sensible, simple model for natural images. 

Another piece of strong evidence in support of the above claim is as follows. For each image, we repeat the learning pipeline for one hundred times, with independent initializations across the runs. Let $\wh{\mb A}$ be the final learned dictionary for each run, we plot the value of $\|\wh{\mb A}^{-1} \mb Y\|_1$ across the one hundred independent runs. Strikingly, the values are virtually the same, with a relative difference of $10^{-3}$! This is predicted by our theory, under the BG model. If the model is unreasonable for natural images, the preconditioning, benign function landscape, LP rounding, and the deflation process that hinge on this model would have completely fallen down.  

For this image experiment, $n = 64$ and $p = 4096$. A single run of the learning pipeline, including solving $64$ instances of the optimization over the sphere (with varying dimensions) and solving $64$ instances of the LP rounding (using CVX), lasts about $20$ minutes on a mid-range modern laptop. So with careful implementation we discussed above, the learning pipeline is actually not far from practical. 
}

\section{Discussion} \label{sec:discuss}
For recovery of complete dictionaries, the LP program approach in~\cite{spielman2012exact} that works with $\theta \le O(1/\sqrt{n})$ only demands $p \ge \Omega(n^2 \log^2 n)$. \js{The sample complexity has recently been improved to $p \ge \Omega(n \log n)$~\cite{adamczak2016note}, matching the information-theoretic lower bound $\Omega(n \log n)$ (i.e., when $\theta \sim 1/n$; see also~\cite{luh2015dictionary}), albeit still in the $\theta \in O(1/\sqrt{n})$ regime. } The sample complexity of our method working in the $\theta \sim \Theta(1)$ regime as stated in Theorem~\ref{thm:main_comp} is obviously much higher. As already discussed in~\cite{sun2015complete_a}, working with other $\ell^1$ proxies, directly in the $\mb q$ space would likely save the sample complexity both for geometric characterization and algorithm analysis. Another possibility is to analyze the complete case directly, instead of treating it as transformed version of perturbed orthogonal case. 

Our experiments seem to suggest the necessary complexity level lies between $\Omega(n)$ and $O(n^2 \log n)$ even for the orthogonal case. While it is interesting to determine the true complexity requirement for the TRM, there could be other efficient algorithms that demand less. For example, simulations in~\cite{qu2014finding} seem to suggest $O(n\log n)$ samples suffice to guarantee efficient recovery. The simulations run an alternating direction algorithm fed with problem-specific initializations, nevertheless. 

\js{
Our analysis is based on exact trust-region subproblem solver and fixed step size. The convergence result for the practical version from~\cite{boumal2016global}, based on approximate solver and adaptive step size, is general, but pessimistic. It seems not difficult to adapt their analysis according to our objective geometry, and obtain a tight, practical convergence result.  
}

Our motivating experiment on real images in introduction of our companion paper~\cite{sun2015complete_a} remains mysterious. If we were to believe that real image data are ``nice'' and our objective there does not have spurious local minima either, it is surprising ADM would escape all other critical points -- this is not predicted by classic or modern theories. One reasonable place to start is to look at how gradient descent algorithms with generic initializations can escape from ridable saddle points (at least with high probability). The recent work~\cite{ge2015escaping} has showed that randomly perturbing each iterate can help gradient algorithm to achieve this with high probability.

\section{Proof of Convergence for the Trust-Region Algorithm}\label{sec:proof_algorithm}

\subsection{Proof of Lemma~\ref{lem:mag_lip_fq}} \label{proof:lem_mag_lip_fq}
\begin{proof} 
Using the fact $\tanh\paren{\cdot}$ and $1-\tanh^2\paren{\cdot}$ are bounded by one in magnitude, by \eqref{eq:fq_grad} and \eqref{eq:fq_hess} we have 
\begin{align*}
\norm{\nabla f\paren{\mb q}}{} & \le \frac{1}{p}\sum_{k=1}^p \norm{(\mb x_0)_k}{} \le \sqrt{n} \norm{\mb X_0}{\infty}, \\
\norm{\nabla^2 f\paren{\mb q}}{} & \le \frac{1}{p}\sum_{k=1}^p \frac{1}{\mu} \norm{(\mb x_0)_k}{}^2 \le \frac{n}{\mu} \norm{\mb X_0}{\infty}^2,
\end{align*}
for any $\mb q \in \bb S^{n-1}$. Moreover, 
\begin{align*}
\sup_{\mb q, \mb q' \in \bb S^{n-1}, \mb q \neq \mb q'} \frac{\norm{\nabla f\paren{\mb q} - \nabla f\paren{\mb q'}}{}}{\norm{\mb q - \mb q'}{}} 
& \le \frac{1}{p} \sum_{k=1}^p \norm{(\mb x_0)_k}{} \sup_{\mb q, \mb q' \in \bb S^{n-1}, \mb q \neq \mb q'}\frac{\abs{\tanh\paren{\frac{\mb q^* (\mb x_0)_k}{\mu}} - \tanh\paren{\frac{\mb q'^* (\mb x_0)_k}{\mu}}}}{\norm{\mb q - \mb q'}{}} \\
& \le \frac{1}{p} \sum_{k=1}^p \norm{(\mb x_0)_k}{} \frac{\norm{(\mb x_0)_k}{}}{\mu} \le \frac{n}{\mu} \norm{\mb X_0}{\infty}^2, 
\end{align*} 
where at the last line we have used the fact the mapping $\mb q \mapsto \mb q^* (\mb x_0)_k/\mu$ is $\norm{(\mb x_0)_k}{}/\mu$ Lipschitz, and $x \mapsto \tanh\paren{x}$ is $1$-Lipschitz, and the composition rule of Lipschitz functions (i.e., Lemma V.5 of~\cite{sun2015complete_a}). Similar argument yields the final bound. 
\end{proof}

\subsection{Proof of Lemma~\ref{lem:alg_approx_bd2}} \label{proof:lem_alg_approx_bd2}
\begin{proof} 
Suppose we can establish 
\begin{equation*}
\left| f\paren{ \exp_{\mb q}(\mb \delta) } - \widehat{f}\paren{\mb q, \mb \delta} \right| \;\le\; \frac{1}{6}\eta_f \norm{\mb \delta}{}^3. 
\end{equation*}
Applying this twice we obtain 
\begin{align*}
f( \exp_{\mb q}(\mb \delta_\star) ) 
& \le \widehat{f}(\mb q,\mb \delta_\star) + \frac{1}{6}\eta_f \Delta^3 
\le \widehat{f}(\mb q, \mb \delta) + \frac{1}{6}\eta_f \Delta^3
\le f(\exp_{\mb q}(\mb \delta)) + \frac{1}{3}\eta_f\Delta^3 
\le f(\mb q) - s + \frac{1}{3}\eta_f \Delta^3,
\end{align*}
as claimed. Next we establish the first result. Let $\mb \delta_0 = \frac{\mb \delta}{\norm{\mb \delta}{}}$, and $t = \norm{\mb \delta}{}$. Consider the composite function
\begin{align*}
	\zeta(t) \doteq f( \exp_{\mb q}(t \mb \delta_0)) = f( \mb q \cos(t) + \mb \delta_0 \sin(t) ),
\end{align*} 
and also 
\begin{align*}
\dot{\zeta}(t) &= \innerprod{ \nabla f\left( \mb q \cos(t) + \mb \delta_0 \sin(t) \right) }{ - \mb q \sin(t) + \mb \delta_0 \cos(t) } \\
\ddot{\zeta}(t) &= \innerprod{ \nabla^2 f\left( \mb q \cos(t) + \mb \delta_0 \sin(t) \right) ( - \mb q \sin(t) + \mb \delta_0 \cos(t) ) }{ - \mb q \sin(t) + \mb \delta_0 \cos(t) } \nonumber \\
& + \quad \innerprod{ \nabla f\left( \mb q \cos(t) + \mb \delta_0 \sin(t) \right) }{ - \mb q \cos(t) - \mb \delta_0 \sin(t) }. 
\end{align*}
In particular, this gives that 
\begin{align*}
	\zeta(0) &= f( \mb q) \\
\dot{\zeta}(0) &= \innerprod{ \mb \delta_0 }{\nabla f(\mb q) } \\
\ddot{\zeta}(0) &= \mb \delta_0^* \left( \nabla^2 f(\mb q) - \innerprod{ \nabla f(\mb q) }{\mb q} \mb I \right) \mb \delta_0.
\end{align*}

We next develop a bound on $\magnitude{ \ddot{\zeta}(t) - \ddot{\zeta}(0) }$. Using the triangle inequality, we can casually bound this difference as 
\begin{align*}
&\magnitude{ \ddot{\zeta}(t) - \ddot{\zeta}(0) } \\
\le\; & \magnitude{  \innerprod{ \nabla^2 f\left( \mb q \cos(t) + \mb \delta_0 \sin(t) \right) ( - \mb q \sin(t) + \mb \delta_0 \cos(t) ) }{ - \mb q \sin(t) + \mb \delta_0 \cos(t)   }  - \mb \delta_0^* \nabla^2 f(\mb q) \mb \delta_0} \nonumber \\ 
& \qquad + \qquad \magnitude{ \innerprod{ \nabla f\left( \mb q \cos(t) + \mb \delta_0 \sin(t) \right) }{ - \mb q \cos(t) - \mb \delta_0 \sin(t) } + \innerprod{ \nabla f(\mb q) }{\mb q} } \nonumber \\
\le\; & \magnitude{ \innerprod{ \left[ \nabla^2 f( \mb q \cos(t) + \mb \delta_0 \sin(t) ) - \nabla^2 f(\mb q) \right] \left( - \mb q \sin(t) + \mb \delta_0 \cos(t) \right) }{ - \mb q \sin(t) + \mb \delta_0 \cos(t) } }  \nonumber \\
& \qquad + \qquad \magnitude{ \innerprod{ \nabla^2 f(\mb q) \left( - \mb q \sin(t) + \mb \delta_0 \cos(t) - \mb \delta_0 \right) }{ - \mb q \sin(t) + \mb \delta_0 \cos(t) } } \nonumber \\
& \qquad + \qquad \magnitude{ \innerprod{ \nabla^2 f(\mb q) \mb \delta_0 }{ - \mb q \sin(t) + \mb \delta_0 \cos(t) - \mb \delta_0 } } \nonumber \\ 
& \qquad + \qquad \magnitude{ \innerprod{ \nabla f( \mb q \cos(t) + \mb \delta_0 \sin(t) ) }{ - \mb q \cos(t) - \mb \delta_0 \sin(t) } + \innerprod{ \nabla f(\mb q \cos(t) + \mb \delta_0 \sin(t) ) }{ \mb q } } \nonumber \\
& \qquad + \qquad \magnitude{ \innerprod{ \nabla f(\mb q \cos(t) + \mb \delta_0 \sin(t)) }{\mb q } - \innerprod{ \nabla f(\mb q) }{\mb q} } \\
\le\; & L_{\nabla^2} \norm{ \mb q \cos(t) + \mb \delta_0 \sin(t) - \mb q }{} \nonumber  \\
& \qquad + M_{\nabla^2} \norm{ - \mb q \sin(t) + \mb \delta_0 \cos(t) - \mb \delta_0 }{} \nonumber \\
& \qquad + M_{\nabla^2} \norm{ - \mb q \sin(t) + \mb \delta_0 \cos(t) - \mb \delta_0 }{} \nonumber \\
& \qquad + M_\nabla \norm{ - \mb q \cos(t) - \mb \delta_0 \sin(t) + \mb q }{} \nonumber \\
& \qquad + L_\nabla \norm{ \mb q \cos(t) + \mb \delta_0 \sin(t) - \mb q }{} \\
=\; & \left( L_{\nabla^2} + 2 M_{\nabla^2} + M_{\nabla} + L_{\nabla} \right) \sqrt{ (1 - \cos(t))^2 + \sin^2(t) } \\
=\; & \eta_f \sqrt{ 2 - 2\cos t} \le \eta_f \sqrt{ 4 \sin^2\paren{t/2}} \le \eta_f t, 
\end{align*}
where in the final line we have used the fact $1-\cos x = 2\sin^2\paren{x/2}$ and that $\sin x \le x$ for $x\in \brac{0, 1}$, and $M_{\nabla}$, $M_{\nabla^2}$, $L_{\nabla}$ and $L_{\nabla^2}$ are the quantities defined in Lemma~\ref{lem:mag_lip_fq}. By the integral form of Taylor's theorem in Lemma \ref{lem:Taylor-integral-form} and the result above, we have 
\begin{align*}
\abs{f\paren{\exp_{\mb q}(\mb \delta)} - \widehat{f}\paren{\mb q,\mb \delta} } &=
\magnitude{ \zeta(t) - \left( \zeta(0) + t \dot{\zeta}(0) + \tfrac{t^2}{2} \ddot{\zeta}(0) \right) } \\
& = \magnitude{ t^2\int_0^1 \paren{1-s} \ddot{\zeta}\paren{st}\; ds - \tfrac{t^2}{2} \ddot{\zeta}(0) }  \\
& =  t^2\magnitude{ \int_0^1 \paren{1-s} \brac{\ddot{\zeta}\paren{st} - \ddot{\zeta}\paren{0} } \; ds }   \\
& \le t^2 \int_0^1 \paren{1-s}st \eta_f\; ds = \frac{\eta_f t^3}{6},
\end{align*}
with $t = \norm{\mb \delta}{}$ we obtain the desired result. 
\end{proof}

\subsection{Proof of Lemma~\ref{lem:alg_gradient_func}}  \label{proof:lem_alg_gradient_func}
\begin{proof}
By the integral form of Taylor's theorem in Lemma \ref{lem:Taylor-integral-form}, for any $t \in \brac{0, \frac{3\Delta}{2\pi\sqrt{n}}}$, we have 
\begin{align*}
& g\left( \mb w - t \frac{\mb w}{\norm{\mb w}{}} \right) \\
=\; & g(\mb w) - t\int_0^1 \innerprod{\nabla g\paren{\mb w -st\frac{\mb w}{\norm{\mb w}{}}}}{\frac{\mb w}{\norm{\mb w}{}}}\; ds \\
=\; & g\paren{\mb w} - t\frac{\mb w^* \nabla g\paren{\mb w}}{\norm{\mb w}{}} + t\int_0^1 \innerprod{\nabla g\paren{\mb w} - \nabla g\paren{\mb w -st\frac{\mb w}{\norm{\mb w}{}}}}{\frac{\mb w}{\norm{\mb w}{}}}\; ds \\
=\; & g\paren{\mb w} - t\frac{\mb w^* \nabla g\paren{\mb w}}{\norm{\mb w}{}} + t\int_0^1 \paren{\innerprod{\nabla g\paren{\mb w}}{\frac{\mb w}{\norm{\mb w}{}}} - \innerprod{\nabla g\paren{\mb w - st \frac{\mb w}{\norm{\mb w}{}}}}{\frac{\mb w - st\mb w/\norm{\mb w}{}}{\norm{\mb w - st \mb w/\norm{\mb w}{}}{}}}}\; ds \\
\le\; & g\paren{\mb w} - t\frac{\mb w^* \nabla g\paren{\mb w}}{\norm{\mb w}{}} + \frac{L_g}{2} t^2 \le g\paren{\mb w} - t\beta_g + \frac{L_g}{2}t^2. 
\end{align*}
Minimizing this function over $t \in \left[0, \frac{3\Delta }{ 2\pi \sqrt{n}} \right]$, we obtain that there exists a $\mb w' \in \mc B\left(\mb w,\frac{3\Delta }{ 2\pi \sqrt{n} } \right)$ such that 
\begin{align*}
g(\mb w') \;\le\; g(\mb w) - \min \set{ \frac{\beta_g^2}{2 L_g}, \frac{3\beta_g \Delta}{4\pi\sqrt{n}} }.
\end{align*}
Given such a $\mb w'\in \mc B\left(\mb w,\frac{3\Delta }{ 2\pi \sqrt{n} } \right)$, there must exist some $\mb \delta\in T_{\mb q}\bb S^{n-1}$ such that $\mb q(\mb w') = \exp_{\mb q}(\mb \delta)$. 
It remains to show that $\norm{\mb \delta}{}\leq \Delta$. It is easy to verify that $\norm{\mb q(\mb w') - \mb q\paren{\mb w}}{} \le 2 \sqrt{n} \norm{\mb w' - \mb w}{} \le 3\Delta/\pi$. Hence,  
\begin{align*}
\norm{\exp_{\mb q}\paren{\mb \delta} - \mb q}{}^2 = \norm{\mb  q\paren{1-\cos\norm{\mb \delta}{}} + \frac{\mb \delta}{\norm{\mb \delta}{}} \sin\norm{\mb \delta}{}}{}^2 = 2 - 2\cos \norm{\mb \delta}{} = 4\sin^2 \frac{\norm{\mb \delta}{}}{2} \le \frac{9\Delta^2}{\pi^2}, 
\end{align*}
which means that $\sin\paren{\norm{\mb \delta}{}/2} \le 3\Delta/\paren{2\pi}$. Because $\sin x \ge \tfrac{3}{\pi} x$ over $x \in \brac{0, \pi/6}$, it implies that $\norm{\mb \delta}{} \le \Delta$. Since $g(\mb w) = f(\mb q(\mb w))$, by summarizing all the results, we conclude that there exists a $\mb \delta$ with $\norm{\mb \delta}{}\leq \Delta$, such that
\begin{align*} 
f(\exp_{\mb q}(\mb \delta)) \le f(\mb q ) - \min \set{ \frac{\beta_g^2}{2 L_g}, \frac{3\beta_g \Delta}{4\pi\sqrt{n}} },
\end{align*}
as claimed. 
\end{proof}

\subsection{Proof of Lemma~\ref{lem:alg_neg_cuv_func}} \label{proof:lem_alg_neg_cuv_func}
\begin{proof} 
Let $\sigma = \mathrm{sign}\paren{ \mb w^* \nabla g(\mb w) }$. For any $t \in \brac{0, \frac{\Delta}{2\sqrt{n}}}$, by integral form of Taylor's theorem in Lemma \ref{lem:Taylor-integral-form}, we have  
\begin{align*}
& g\left( \mb w - t \sigma \frac{\mb w}{\norm{\mb w}{}} \right) \\
=\; & g(\mb w) - t \sigma \frac{\mb w^* \nabla g(\mb w) }{\norm{\mb w}{}} + t^2 \int_0^1 \paren{1-s} \frac{\mb w^* \nabla^2 g\paren{\mb w - st\sigma \frac{\mb w}{\norm{\mb w}{}}}\mb w}{\norm{\mb w}{}^2}\;ds \\
\le\; & g(\mb w) + \frac{t^2}{2} \frac{\mb w^* \nabla^2 g(\mb w) \mb w }{\norm{\mb w}{}^2} + t^2 \int_0^1 \brac{\paren{1-s} \frac{\mb w^* \nabla^2 g\paren{\mb w - st\sigma \frac{\mb w}{\norm{\mb w}{}}}\mb w}{\norm{\mb w}{}^2} -  \paren{1-s} \frac{\mb w^* \nabla^2 g(\mb w) \mb w }{\norm{\mb w}{}^2}} \; ds \\
=\; & g(\mb w) + \frac{t^2}{2} \frac{\mb w^* \nabla^2 g(\mb w) \mb w }{\norm{\mb w}{}^2} \\
& +  t^2\int_0^1 \paren{1-s}\brac{\frac{\paren{\mb w - st\sigma \frac{\mb w}{\norm{\mb w}{}}}^* \nabla^2 g\paren{\mb w - st\sigma \frac{\mb w}{\norm{\mb w}{}}}\paren{\mb w - st\sigma \frac{\mb w}{\norm{\mb w}{}}}}{\norm{\mb w - st\sigma \frac{\mb w}{\norm{\mb w}{}}}{}^2} -\frac{\mb w^* \nabla^2 g(\mb w) \mb w }{\norm{\mb w}{}^2} }\; ds\\
\le\; & g(\mb w) - \frac{t^2}{2} \betaconcave + t^2 \int_0^1\paren{1-s}s\Lconcave t\; ds \;\leq\; g(\mb w) - \frac{t^2}{2} \betaconcave + \frac{ t^3}{6} \Lconcave.
\end{align*}
Minimizing this function over $t \in \left[ 0, \frac{3\Delta}{2\pi \sqrt{n}} \right]$, we obtain 
\begin{align*}
t_\star = \min \set{ \frac{2\betaconcave}{\Lconcave}, \frac{3\Delta}{2\pi \sqrt{n}} },
\end{align*}
and there exists a $\mb w' = \mb w - t_\star \sigma \frac{\mb w}{\norm{\mb w}{}}$ such that
\begin{align*}
g\left( \mb w - t_\star \sigma \frac{\mb w}{\norm{\mb w}{}} \right) \;\le\; g(\mb w) - \min \set{ \frac{2 \betaconcave^3}{3 \Lconcave^2}, \frac{3\Delta^2 \betaconcave}{8\pi^2 n} }. 
\end{align*}
By arguments identical to those used in Lemma \ref{lem:alg_gradient_func}, there exists a tangent vector $\mb \delta\in T_{\mb q}\bb S^{n-1}$ such that $\mb q(\mb w') = \exp_{\mb q}(\mb \delta)$ and $\norm{\mb \delta}{}\leq \Delta$. This completes the proof. 
\end{proof}

\subsection{Proof of Lemma~\ref{lem:alg_strcvx_func}} \label{proof:lem_alg_strcvx_func}
\begin{proof} 
For any $t \in \brac{0, \frac{\Delta}{\norm{\grad f\paren{\mb q^{(r)}}}{}}}$, it holds that $\norm{t\; \grad f\paren{\mb q^{(r)}}}{} \le \Delta$, and the quadratic approximation 
\begin{align*}
\widehat{f}\paren{\mb q^{(r)}, -t\; \grad f\paren{\mb q^{(r)}}}
& \le f\paren{\mb q^{(r)}} - t\norm{\grad f\paren{\mb q^{(r)}}}{}^2 + \frac{M_H}{2}t^2 \norm{\grad f\paren{\mb q^{(r)}}}{}^2 \\
& = f\paren{\mb q^{(r)}} - t\paren{1-\frac{1}{2}M_H t}\norm{\grad f\paren{\mb q^{(r)}}}{}^2. 
\end{align*}
Taking $t_0 = \min\set{\frac{\Delta}{\norm{\grad f\paren{\mb q^{(r)}}}{}}, \frac{1}{M_H}}$, we obtain 
\begin{align} \label{eq:alg_strcvx_key1}
\widehat{f}\paren{\mb q^{(r)}, -t_0\; \grad f\paren{\mb q^{(r)}}} \le f\paren{\mb q^{(r)}} - \frac{1}{2}\min\set{\frac{\Delta}{\norm{\grad f\paren{\mb q^{(r)}}}{}}, \frac{1}{M_H}} \norm{\grad f\paren{\mb q^{(r)}}}{}^2. 
\end{align}
Now let $\mb U$ be an arbitrary orthonormal basis for $T_{\mb q^{(r)}} \bb S^{n-1}$. Since the norm constraint is active, by the optimality condition in \eqref{eqn:ts-optimal-solution-1}, we have 
\begin{align*}
\Delta & \le \norm{\brac{\mb U^* \Hess f\paren{\mb q^{(r)}} \mb U}^{-1} \mb U^* \grad f\paren{\mb q^{(r)}}}{} \\
& \le \norm{\brac{\mb U^* \Hess f\paren{\mb q^{(r)}} \mb U}^{-1}}{} \norm{\mb U^* \grad f\paren{\mb q^{(r)}}}{} \le \frac{\norm{\grad f \paren{\mb q^{(r)}}}{}}{m_H},  
\end{align*} 
which means that $\norm{\grad f \paren{\mb q^{(r)}}}{} \ge m_H\Delta$. Substituting this into~\eqref{eq:alg_strcvx_key1}, we obtain 
\begin{align*}
\widehat{f}\paren{\mb q^{(r)}, -t_0\; \grad f\paren{\mb q^{(r)}}} \le f\paren{\mb q^{(r)}} - \frac{1}{2} \min \Brac{m_H \Delta^2, \frac{m_H^2}{M_H} \Delta^2  }\leq f\paren{\mb q^{(r)}}- \frac{m_H^2 \Delta^2 }{2M_H}.
\end{align*}
By the key comparison result established in proof of Lemma~\ref{lem:alg_approx_bd2}, we have 
\begin{align*}
f\paren{\exp_{\mb q^{(r)}}\paren{-t_0\grad f\paren{\mb q^{(r)}}}} & \le \widehat{f}\paren{\mb q^{(r)}, -t_0\; \grad f\paren{\mb q^{(r)}}} + \frac{1}{6} \eta_f\Delta^3 \\
& \le  f\paren{\mb q^{(r)}} - \frac{m_H^2 \Delta^2}{M_H} + \frac{1}{6} \eta_f\Delta^3. 
\end{align*}
This completes the proof. 
\end{proof}

\subsection{Proof of Lemma~\ref{lem:alg_strcvx_lb}}\label{proof:lem_alg_strcvx_lb}
It takes certain delicate work to prove Lemma~\ref{lem:alg_strcvx_lb}. Basically to use discretization argument, the degree of continuity of the Hessian is needed. The tricky part is that for continuity, we need to compare the Hessian operators at different points, while these Hessian operators are only defined on the respective tangent planes. This is the place where parallel translation comes into play. The next two lemmas compute spectral bounds for the forward and inverse parallel translation operators. 
\begin{lemma} \label{lem:alg_tsp_op}
For $\tau \in [0,1]$ and $\norm{\mb \delta}{} \le 1/ 2$, we have
\begin{eqnarray}
\norm{ \mc P_{\gamma}^{\tau \leftarrow 0} - \mb I }{} &\le& \frac{5}{4}\tau \norm{\mb \delta}{}, \\
\norm{ \mc P_{\gamma}^{0 \leftarrow \tau} - \mb I }{} &\le& \frac{3}{2}\tau \norm{\mb \delta}{}.
\end{eqnarray}
\end{lemma}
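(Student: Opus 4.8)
The plan is to work directly from the explicit matrix formula~\eqref{eq:alg_tsp_op} for the parallel translation operator, treating $\norm{\mb\delta}{}$ and $\tau$ as small quantities and collecting error terms. Writing $\mb u = \mb\delta/\norm{\mb\delta}{}$ (a unit vector) and $\theta = \tau\norm{\mb\delta}{}$, the second line of~\eqref{eq:alg_tsp_op} reads
\begin{align*}
\mc P_{\gamma}^{\tau \leftarrow 0} - \mb I = \paren{\cos\theta - 1}\mb u\mb u^* - \sin\theta\; \mb q \mb u^*.
\end{align*}
Since $\mb u\mb u^*$ and $\mb q\mb u^*$ each have operator norm at most $1$ (both are rank-one, and $\norm{\mb q}{} = \norm{\mb u}{} = 1$), the triangle inequality gives $\norm{\mc P_{\gamma}^{\tau\leftarrow 0} - \mb I}{} \le \abs{\cos\theta - 1} + \abs{\sin\theta}$. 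Now I would use the elementary bounds $\abs{\cos\theta - 1} = 2\sin^2(\theta/2) \le \theta^2/2$ and $\abs{\sin\theta} \le \theta$, valid for $\theta \ge 0$, so that the bound becomes $\theta + \theta^2/2 = \theta(1 + \theta/2)$. Under the hypothesis $\norm{\mb\delta}{} \le 1/2$ and $\tau \le 1$ we have $\theta \le 1/2$, hence $1 + \theta/2 \le 5/4$, yielding $\norm{\mc P_{\gamma}^{\tau\leftarrow 0} - \mb I}{} \le \tfrac54 \theta = \tfrac54 \tau\norm{\mb\delta}{}$, as claimed.

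For the inverse operator, the cleanest route is to note that $\mc P_{\gamma}^{0\leftarrow\tau}$, restricted to the relevant tangent space, is the inverse of $\mc P_{\gamma}^{\tau\leftarrow 0}$, and in fact one checks from~\eqref{eq:alg_tsp_op} that replacing $\tau$ by $-\tau$ gives the inverse matrix (parallel transport along the reversed geodesic), i.e. $\mc P_{\gamma}^{0\leftarrow\tau} = \mc P_{\gamma'}^{\tau\leftarrow 0}$ with the sign of the $\sin$ term flipped; concretely $\mc P_{\gamma}^{0\leftarrow\tau} - \mb I = \paren{\cos\theta-1}\mb u\mb u^* + \sin\theta\;\mb q\mb u^*$, whose norm obeys the identical bound $\le \theta + \theta^2/2 \le \tfrac54\tau\norm{\mb\delta}{}$. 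Alternatively, if one prefers not to invoke this symmetry, one can use $\mc P^{0\leftarrow\tau} - \mb I = -\mc P^{0\leftarrow\tau}(\mc P^{\tau\leftarrow 0} - \mb I)$ together with $\norm{\mc P^{0\leftarrow\tau}}{} \le (1 - \norm{\mc P^{\tau\leftarrow 0}-\mb I}{})^{-1} \le (1-\tfrac54\theta)^{-1}$; with $\theta \le 1/2$ this factor is at most $(1-\tfrac58)^{-1} = 8/3$, which is too lossy, so a small refinement is needed — either tightening the constant on the forward bound for small $\theta$, or using the explicit-formula computation above, which directly gives $\tfrac54\tau\norm{\mb\delta}{}$, comfortably inside the stated $\tfrac32\tau\norm{\mb\delta}{}$.

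The only mild obstacle is bookkeeping: making sure the restriction-to-tangent-space subtlety does not bite. Since the lemma's bounds are stated for the $n\times n$ matrices in~\eqref{eq:alg_tsp_op} (as the paper explicitly identifies $\mc P_\gamma^{\tau\leftarrow 0}$ and $\mc P_\gamma^{0\leftarrow\tau}$ with those matrices), the operator-norm estimates can be carried out purely at the level of matrices, bypassing any geometric argument; this is exactly why the explicit-formula approach is preferable to an abstract inverse-perturbation bound. I expect no genuine difficulty beyond verifying the inverse formula, which is the content of Chapter 8.1 of~\cite{absil2009} referenced above and can be confirmed in one line by multiplying the two matrices and using $\mb q^*\mb u = 0$.
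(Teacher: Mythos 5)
Your bound on $\norm{\mc P_{\gamma}^{\tau\leftarrow 0}-\mb I}{}$ is exactly the paper's argument (triangle inequality on the two rank-one terms, then $1-\cos\theta=2\sin^2(\theta/2)\le\theta^2/2\le\theta/4$ for $\theta=\tau\norm{\mb\delta}{}\le 1/2$), and it is fine. The gap is in the second inequality. Your claimed formula $\mc P_{\gamma}^{0\leftarrow\tau}-\mb I=(\cos\theta-1)\mb u\mb u^*+\sin\theta\,\mb q\mb u^*$ is false: the matrix $\mc P_{\gamma}^{\tau\leftarrow 0}$ is \emph{not} orthogonal (it fixes $\mb q$ and sends $\mb u\mapsto\cos\theta\,\mb u-\sin\theta\,\mb q$, so on $\mathrm{span}(\mb q,\mb u)$ it is the non-orthogonal matrix $\bigl[\begin{smallmatrix}1&-\sin\theta\\0&\cos\theta\end{smallmatrix}\bigr]$ with determinant $\cos\theta$), so its inverse is not obtained by flipping the sign of the sine term. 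The one-line multiplication you defer to actually refutes the claim: with $A=(\cos\theta-1)\mb u\mb u^*-\sin\theta\,\mb q\mb u^*$ and $B=(\cos\theta-1)\mb u\mb u^*+\sin\theta\,\mb q\mb u^*$ one gets $(\mb I+A)(\mb I+B)=\mb I-\sin^2\theta\,\mb u\mb u^*-\sin\theta(\cos\theta-1)\mb q\mb u^*\neq\mb I$. Your fallback route is also insufficient, as you yourself note: the Neumann bound gives $\norm{\mc P^{0\leftarrow\tau}-\mb I}{}\le(\theta+\theta^2/2)/(1-\theta-\theta^2/2)$, which at $\theta=1/2$ equals $5/3$, far above the required $3\theta/2=3/4$; it only closes for $\theta$ well below $1/2$. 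So neither of your two routes establishes the $\tfrac32\tau\norm{\mb\delta}{}$ bound over the stated range.

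The paper closes this by exploiting that the perturbation is rank one and computing the inverse exactly via Sherman--Morrison: writing $\mc P_{\gamma}^{\tau\leftarrow 0}=\mb I+\mb a\mb u^*$ with $\mb a=(\cos\theta-1)\mb u-\sin\theta\,\mb q$ and $\mb u^*\mb a=\cos\theta-1$ (here $\mb q^*\mb u=0$ is used), one gets
\begin{align*}
\mc P_{\gamma}^{0\leftarrow\tau}-\mb I=-\frac{\mb a\mb u^*}{1+\mb u^*\mb a}=-\frac{\mb a\mb u^*}{\cos\theta},
\qquad
\norm{\mc P_{\gamma}^{0\leftarrow\tau}-\mb I}{}\le\frac{5\theta/4}{\cos(1/2)}\le\frac{3}{2}\theta .
\end{align*}
You should replace your symmetry claim with this computation (or an equivalent explicit $2\times2$ inversion on $\mathrm{span}(\mb q,\mb u)$).
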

\begin{proof}
By~\eqref{eq:alg_tsp_op}, we have 
\begin{align*}
\norm{ \mc P_{\gamma}^{\tau \leftarrow 0} - \mb I }{}
& = \norm{\left( \cos( \tau \norm{\mb \delta}{} ) - 1 \right) \frac{\mb \delta \mb \delta^*}{\norm{\mb \delta }{}^2} - \sin\left( \tau \norm{\mb \delta }{} \right) \frac{\mb q \mb \delta^*}{\norm{\mb \delta}{}}}{} \\
& \le 1-\cos\paren{\tau \norm{\mb \delta}{}} + \sin\paren{\tau \norm{\mb \delta}{}} \\
& \le 2\sin^2\paren{\frac{\tau\norm{\mb \delta}{}}{2}} + \sin\paren{\tau \norm{\mb \delta}{}} \le \frac{1}{4}\tau \norm{\mb \delta}{} +  \tau \norm{\mb \delta}{} \le \frac{5}{4} \tau \norm{\mb \delta}{}, 
\end{align*}
where we have used the fact $\sin\paren{t} \le t$ and $1-\cos x = 2\sin^2\paren{x/2}$. Moreover, $\mc P_{\gamma}^{0 \leftarrow \tau}$ is in the form of $\paren{\mb I + \mb u \mb v^*}^{-1}$ for some vectors $\mb u$ and $\mb v$. By the Sherman-Morrison matrix inverse formula, i.e., $\paren{\mb I + \mb u \mb v^*}^{-1} = \mb I - \mb u \mb v^*/\paren{1+ \mb v^* \mb u}$ (justified as $\norm{\left( \cos( \tau \norm{\mb \delta}{} ) - 1 \right) \frac{\mb \delta \mb \delta^*}{\norm{\mb \delta }{}^2} - \mb q \sin\left( \tau \norm{\mb \delta }{} \right) \frac{\mb \delta^*}{\norm{\mb \delta}{}}}{} \le 5\tau\norm{\mb \delta}{}/4 \le 5/8 < 1$ as shown above), we have 
\begin{align*}
& \norm{ \mc P_{\gamma}^{0 \leftarrow \tau} - \mb I }{} \\
=\; & \norm{\left( \cos( \tau \norm{\mb \delta}{} ) - 1 \right) \frac{\mb \delta \mb \delta^*}{\norm{\mb \delta }{}^2} - \mb q \sin\left( \tau \norm{\mb \delta }{} \right) \frac{\mb \delta^*}{\norm{\mb \delta}{}}}{} \frac{1}{1+\paren{\cos\paren{\tau \norm{\mb \delta}{}}-1}} \quad (\text{as}\; \mb q^* \mb \delta = 0)\\
\le\; & \frac{5}{4}\tau \norm{\mb \delta}{} \frac{1}{\cos\paren{\tau \norm{\mb \delta}{}} } \le \frac{5}{4}\tau \norm{\mb \delta}{} \frac{1}{\cos\paren{1/2}} \le \frac{3}{2} \tau \norm{\mb \delta}{},  
\end{align*}
completing the proof. 
\end{proof}

The next lemma establishes the ``local-Lipschitz" property of the Riemannian Hessian. 
\begin{lemma} \label{lem:rie_hess_lip}
Let $\gamma(t) = \exp_{\mb q}\paren{t\mb \delta}$ denotes a geodesic curve on $\bb S^{n-1}$. Whenever $\norm{\mb \delta}{} \le 1/2$ and $\tau \in [0,1]$,
\begin{eqnarray}
\norm{ \mc P_{\gamma}^{0 \leftarrow \tau} \Hess f (\gamma(\tau)) \mc P_{\gamma}^{\tau \leftarrow 0} - \Hess f (\mb q) }{} \le L_H\cdot\tau\norm{\mb \delta}{}, 
\end{eqnarray}
where $L_H = 5n^{3/2} \norm{\mb X_0}{\infty}^3/(2\mu^2) + \frac{9}{\mu}n \norm{\mb X_0}{\infty}^2 + 9\sqrt{n} \norm{\mb X_0}{\infty}$. 
\end{lemma}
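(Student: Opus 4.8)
The idea is to split the quantity into a \emph{parallel-transport error} and a \emph{Lipschitz error},
\begin{align*}
\mc P_{\gamma}^{0 \leftarrow \tau} \Hess f (\gamma(\tau)) \mc P_{\gamma}^{\tau \leftarrow 0} - \Hess f (\mb q) &= \paren{\mc P_{\gamma}^{0 \leftarrow \tau} \Hess f (\gamma(\tau)) \mc P_{\gamma}^{\tau \leftarrow 0} - \Hess f (\gamma(\tau))} \\ &\quad + \paren{\Hess f (\gamma(\tau)) - \Hess f (\mb q)},
\end{align*}
and bound each piece in operator norm using the triangle inequality, the transport estimates of Lemma~\ref{lem:alg_tsp_op}, and the magnitude and Lipschitz estimates of Lemma~\ref{lem:mag_lip_fq}. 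Two facts are used throughout: first, $\norm{\gamma(\tau) - \mb q}{} = \sqrt{2 - 2\cos(\tau\norm{\mb \delta}{})} = 2\sin(\tau\norm{\mb \delta}{}/2) \le \tau\norm{\mb \delta}{}$; second, by~\eqref{eq:fq_rie_hess} and Lemma~\ref{lem:mag_lip_fq}, $\norm{\Hess f(\mb p)}{} \le \norm{\nabla^2 f(\mb p)}{} + \abs{\innerprod{\nabla f(\mb p)}{\mb p}} \le M_{\nabla^2} + M_{\nabla} =: M_H$ at every $\mb p \in \bb S^{n-1}$.

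\textbf{Transport error.} Write it as $(\mc P_{\gamma}^{0 \leftarrow \tau} - \mb I)\Hess f(\gamma(\tau))\mc P_{\gamma}^{\tau \leftarrow 0} + \Hess f(\gamma(\tau))(\mc P_{\gamma}^{\tau \leftarrow 0} - \mb I)$. By Lemma~\ref{lem:alg_tsp_op}, $\norm{\mc P_{\gamma}^{\tau \leftarrow 0} - \mb I}{} \le \tfrac54\tau\norm{\mb \delta}{}$ and $\norm{\mc P_{\gamma}^{0 \leftarrow \tau} - \mb I}{} \le \tfrac32\tau\norm{\mb \delta}{}$, hence also $\norm{\mc P_{\gamma}^{\tau \leftarrow 0}}{} \le 1 + \tfrac54\tau\norm{\mb \delta}{} \le 2$ for $\norm{\mb \delta}{}\le 1/2$. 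Combining these with $\norm{\Hess f(\gamma(\tau))}{}\le M_H$ gives a bound of order $M_H\,\tau\norm{\mb \delta}{}$; since $M_H \le \tfrac{n}{\mu}\norm{\mb X_0}{\infty}^2 + \sqrt n\,\norm{\mb X_0}{\infty}$, this piece only feeds the two lower-order terms of $L_H$.

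\textbf{Lipschitz error.} Substitute $\Hess f(\mb p) = (\mb I - \mb p\mb p^*)\,H(\mb p)\,(\mb I - \mb p\mb p^*)$ with $H(\mb p) \doteq \nabla^2 f(\mb p) - \innerprod{\nabla f(\mb p)}{\mb p}\mb I$, and telescope $\Hess f(\gamma(\tau)) - \Hess f(\mb q)$ into three terms by replacing the outer projectors and the core $H$ one factor at a time. The two terms carrying $(\mb I - \gamma(\tau)\gamma(\tau)^*) - (\mb I - \mb q\mb q^*) = \mb q\mb q^* - \gamma(\tau)\gamma(\tau)^*$ are controlled by $\norm{\mb q\mb q^* - \gamma(\tau)\gamma(\tau)^*}{} \le 2\norm{\gamma(\tau)-\mb q}{}$ (indeed the map $\mb p\mapsto\mb p\mb p^*$ is $1$-Lipschitz on $\bb S^{n-1}$) together with $\norm{\mb I - \mb p\mb p^*}{}=1$ and $\norm{H(\mb p)}{} \le M_{\nabla^2}+M_{\nabla}$; the remaining middle term is bounded by $\norm{H(\gamma(\tau)) - H(\mb q)}{} \le \paren{L_{\nabla^2} + L_{\nabla} + M_{\nabla}}\norm{\gamma(\tau)-\mb q}{}$, which follows from Lemma~\ref{lem:mag_lip_fq} after writing $\innerprod{\nabla f(\gamma(\tau))}{\gamma(\tau)} - \innerprod{\nabla f(\mb q)}{\mb q} = \innerprod{\nabla f(\gamma(\tau)) - \nabla f(\mb q)}{\gamma(\tau)} + \innerprod{\nabla f(\mb q)}{\gamma(\tau)-\mb q}$. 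Adding up and inserting the numeric bounds $L_{\nabla^2} \le 2n^{3/2}\norm{\mb X_0}{\infty}^3/\mu^2$, $L_{\nabla} \le n\norm{\mb X_0}{\infty}^2/\mu$, $M_{\nabla^2} \le n\norm{\mb X_0}{\infty}^2/\mu$, $M_{\nabla} \le \sqrt n\,\norm{\mb X_0}{\infty}$ gives the stated $L_H$, with the leading $n^{3/2}\norm{\mb X_0}{\infty}^3/\mu^2$ term coming from $L_{\nabla^2}$ and the $n\norm{\mb X_0}{\infty}^2/\mu$ and $\sqrt n\norm{\mb X_0}{\infty}$ terms collecting the remaining contributions (including the transport error).

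\textbf{Main obstacle.} No single estimate is hard; the delicate point is conceptual bookkeeping: $\Hess f(\gamma(\tau))$ and $\Hess f(\mb q)$ live on \emph{different} tangent spaces, so subtracting them is only meaningful after conjugating by $\mc P_{\gamma}^{\tau\leftarrow 0}$, and every intermediate ``error'' matrix must be treated as a genuine $n\times n$ operator and estimated in operator norm. Pinning down the explicit constants in $L_H$ — as opposed to merely getting the right order in $n$, $\mu$, $\norm{\mb X_0}{\infty}$ — requires being deliberate about which crude bounds are admissible (e.g. $\norm{\mc P_{\gamma}^{\tau\leftarrow0}}{}\le 2$, $\norm{\mb I - \mb p\mb p^*}{}=1$, and the $1$-Lipschitz behaviour of $\mb p\mapsto\mb p\mb p^*$), but none of this is deep.
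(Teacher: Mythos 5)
Your proposal is correct and follows essentially the same route as the paper: split into a parallel-transport error (bounded via Lemma~\ref{lem:alg_tsp_op} and the magnitude bounds) plus a Lipschitz error for $\Hess f(\gamma(\tau)) - \Hess f(\mb q)$ (bounded by telescoping the projectors and the core $H(\cdot)$ and invoking Lemma~\ref{lem:mag_lip_fq}). The only differences are cosmetic — you telescope the projectors one at a time rather than jointly and use the slightly tighter chord bound $\norm{\gamma(\tau)-\mb q}{}\le\tau\norm{\mb\delta}{}$ — and your resulting constants sit comfortably inside the stated $L_H$.
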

\begin{proof}
First of all, by \eqref{eq:fq_rie_hess} and using the fact that the operator norm of a projection operator is unitary bounded, we have
\begin{align*}
& \norm{\Hess f (\gamma(\tau)) - \Hess f (\mb q)}{} \\
\le\; & \norm{\mc P_{T_{\gamma\paren{\tau}}\bb S^{n-1}} \brac{\nabla^2 f\paren{\gamma\paren{\tau}} - \nabla^2 f\paren{\mb q} -\paren{ \innerprod{\nabla f\paren{\gamma\paren{\tau}}}{\gamma\paren{\tau}} - \innerprod{\nabla f\paren{\mb q}}{\mb q} }\mb I} \mc P_{T_{\gamma\paren{\tau}}\bb S^{n-1} } }{} \\
& \qquad + \left\|\mc P_{T_{\gamma\paren{\tau}}\bb S^{n-1}}  \paren{\nabla^2 f\paren{\mb q} - \innerprod{\nabla f\paren{\mb q}}{\mb q} \mb I} \mc P_{T_{\gamma\paren{\tau}}\bb S^{n-1}}\right. \\
& \qquad \left.  - \mc P_{T_{\mb q}\bb S^{n-1} } \paren{\nabla^2 f\paren{\mb q} - \innerprod{\nabla f\paren{\mb q}}{\mb q} \mb I}  \mc P_{T_{\mb q}\bb S^{n-1}}\right\| \\
\le\; & \norm{\nabla^2 f\paren{\gamma\paren{\tau}} - \nabla^2 f\paren{\mb q}}{} + \abs{\innerprod{\nabla f\paren{\gamma\paren{\tau}} - \nabla f\paren{\mb q}}{\gamma\paren{\tau}}} + \abs{\innerprod{\nabla f\paren{\mb q}}{\gamma\paren{\tau} - \mb q}} \\
& \qquad + \norm{\mc P_{T_{\gamma\paren{\tau}}\bb S^{n-1} } - \mc P_{T_{\mb q}\bb S^{n-1} }}{} \norm{\mc P_{T_{\gamma\paren{\tau}}\bb S^{n-1}} + \mc P_{T_{\mb q}\bb S^{n-1} }}{}\norm{\nabla^2 f\paren{\mb q} - \innerprod{\nabla f\paren{\mb q}}{\mb q} \mb I}{}. 
\end{align*}
By the estimates in Lemma~\ref{lem:mag_lip_fq}, we obtain
\begin{align}
& \norm{\Hess f (\gamma(\tau)) - \Hess f (\mb q)}{} \nonumber  \\
\le\; &  \frac{2}{\mu^2}n^{3/2} \norm{\mb X_0}{\infty}^3 \norm{\gamma\paren{\tau} - \mb q}{} + \frac{n}{\mu} \norm{\mb X_0}{\infty}^2  \norm{\gamma\paren{\tau} - \mb q}{} + \sqrt{n} \norm{\mb X_0}{\infty} \norm{\gamma\paren{\tau} - \mb q}{} \nonumber \\
& \qquad + 2\norm{\gamma\paren{\tau} \gamma^* \paren{\tau} - \mb q \mb q^*}{} \paren{\frac{n}{\mu}\norm{\mb X_0}{\infty}^2 + \sqrt{n} \norm{\mb X_0}{\infty}} \nonumber \\
\le & \paren{\frac{5}{2\mu^2} n^{3/2} \norm{\mb X_0}{\infty}^3 + \frac{25n}{4\mu} \norm{\mb X_0}{\infty}^2 + \frac{25}{4} \sqrt{n} \norm{\mb X_0}{\infty}} \tau \norm{\mb \delta}{}, \label{eq:alg_lip_key1}
\end{align}
where at the last line we have used the following estimates: 
\begin{align*}
\norm{\gamma\paren{\tau} - \mb q}{} & = \norm{\mb q \paren{\cos\paren{\tau \norm{\mb \delta}{}} -1} + \frac{\mb \delta}{\norm{\mb \delta}{}} \sin\paren{\tau \norm{\mb \delta}{}}}{} \le \frac{5}{4} \tau \norm{\mb \delta}{}, \; (\text{Proof of Lemma~\ref{lem:alg_tsp_op}}) \\
\norm{\gamma\paren{\tau} \gamma^* \paren{\tau} - \mb q \mb q^*}{} & \le \norm{\paren{\frac{\mb \delta \mb \delta^*}{\norm{\mb \delta}{}^2} - \mb q \mb q^*}\sin^2\paren{\tau \norm{\mb \delta}{}}}{} + 2\sin\paren{\tau \norm{\mb \delta}{}}\cos\paren{\tau \norm{\mb \delta}{}} \\
& \le \sin^2\paren{ \tau \norm{\mb \delta}{}} + \sin\paren{2\tau \norm{\mb \delta}{}} \le \frac{5}{2}\tau \norm{\mb \delta}{}. 
\end{align*}
Therefore, by Lemma \ref{lem:alg_tsp_op}, we obtain 
\begin{align*}
& \norm{ \mc P_{\gamma}^{0 \leftarrow \tau} \Hess f (\gamma(\tau)) \mc P_{\gamma}^{\tau \leftarrow 0} - \Hess f (\mb q) }{}  \\
\le\; & \norm{ \mc P_{\gamma}^{0 \leftarrow \tau} \Hess f (\gamma(\tau)) \mc P_{\gamma}^{\tau \leftarrow 0} -  \Hess f (\gamma(\tau)) \mc P_{\gamma}^{\tau \leftarrow 0}  }{} + \norm{ \Hess f (\gamma(\tau)) \mc P_{\gamma}^{\tau \leftarrow 0}  -  \Hess f (\gamma(\tau)) }{}  \\
& \qquad + \norm{\Hess f (\gamma(\tau)) - \Hess f (\mb q)}{} \nonumber \\
\le\; & \norm{\mc P_{\gamma}^{0 \leftarrow \tau} - \mb I}{} \norm{\Hess f (\gamma(\tau))}{} + \norm{\mc P_{\gamma}^{\tau \leftarrow 0} - \mb I}{} \norm{\Hess f (\gamma(t))}{} + \norm{\Hess f (\gamma(t)) - \Hess f (\mb q)}{} \\
\le\; & \frac{11}{4} \tau \norm{\mb \delta}{} \norm{\nabla^2 f\paren{\gamma\paren{\tau}} - \innerprod{\nabla f\paren{\gamma\paren{\tau}}}{\gamma\paren{t}} \mb I}{} + \norm{\Hess f (\gamma(\tau)) - \Hess f (\mb q)}{}. 
\end{align*}
By Lemma~\ref{lem:mag_lip_fq} and substituting the estimate in~\eqref{eq:alg_lip_key1}, we obtain the claimed result. 
\end{proof}

\begin{proof}{\textbf{(of Lemma~\ref{lem:alg_strcvx_lb})}}
For any given $\mb q$ with $\norm{\mb w(\mb q)}{} \le \mu/(4\sqrt{2})$, assume $\mb U$ is an orthonormal basis for its tangent space $T_{\mb q}\bb S^{n-1}$. Again we first work with the ``canonical'' section in the vicinity of $\mb e_n$ with the ``canonical" reparametrization $\mb q(\mb w) = [\mb w; \sqrt{1-\|\mb w\|^2}]$. 

\begin{enumerate}
\item \textbf{Expectation of the operator.}
By definition of the Riemannian Hessian in~\eqref{eq:fq_rie_hess}, expressions of $\nabla^2 f$ and $\nabla f$ in~\eqref{eq:fq_grad} and~\eqref{eq:fq_hess}, and exchange of differential and expectation operators, we obtain 
\begin{align*}
\mb U^* \Hess \expect{f(\mb q)} \mb U 
& = \expect{\mb U^* \Hess f(\mb q) \mb U} \\
& = \expect{\mb U^* \nabla^2 f(\mb q) \mb U - \innerprod{\mb q}{\nabla f(\mb q)} \mb I_{n-1}} \\
& = \mb U^* \expect{\frac{1}{\mu} \Brac{1-\tanh^2\paren{\frac{\mb q^* \mb x}{\mu}}} \mb x \mb x^*} \mb U - \expect{\tanh\paren{\frac{\mb q^* \mb x}{\mu}} \mb q^* \mb x} \mb I_{n-1}. 
\end{align*} 
Let $\ol{\mb x} \in \R^{n-1}$ be the first $n-1$ coordinates of $\mb x$ and $\mb w \in \R^{n-1}$ the similar subvector of $\mb q$ (as used in~\cite{sun2015complete_a}). We have 
\begin{align*}
\mb U^* \expect{\frac{1}{\mu} \Brac{1-\tanh^2\paren{\frac{\mb q^* \mb x}{\mu}}} \mb x \mb x^*} \mb U \succeq \frac{1-\theta}{\mu} \mb U^* \expect{\Brac{1-\tanh^2\paren{\frac{\mb w^* \ol{\mb x}}{\mu}}}
\begin{bmatrix}
\ol{\mb x}\; \ol{\mb x}^* & \mb 0 \\
\mb 0^*   &  0
\end{bmatrix}} \mb U.
\end{align*}
Now consider any vector $\mb z \in T_{\mb q} \bb S^{n-1}$ such that $\mb z = \mb U \mb v$ for some $\mb v \in \R^{n-1}$ and $\|\mb z\| = 1$. Then 
\begin{align*}
\mb z^* \expect{\Brac{1-\tanh^2\paren{\frac{\mb w^* \ol{\mb x}}{\mu}}}
\begin{bmatrix}
\ol{\mb x} \;\ol{\mb x}^* & \mb 0 \\
\mb 0^*   &  0
\end{bmatrix}} \mb z \ge \frac{\theta}{\sqrt{2\pi}} \paren{2 - \frac{3}{4}\sqrt{2}} \|\ol{\mb z}\|^2 
\end{align*}
by proof of Proposition II.7 in~\cite{sun2015complete_a}, where $\ol{\mb z} \in \R^{n-1}$ as above is the first $n-1$ coordinates of $\mb z$. Now we know that $\innerprod{\mb q}{\mb z} = 0$, or 
\begin{align*}
\mb w^* \ol{\mb z} + q_n z_n = 0 \Longrightarrow \frac{\|\ol{\mb z}\|}{\abs{z_n}} = \frac{q_n}{\|\mb w\|} = \frac{\sqrt{1-\|\mb w\|^2}}{\|\mb w\|} \ge 50, 
\end{align*}
where we have used $\|\mb w\| \le \mu/(4\sqrt{2})$ and $\mu \le 1/10$ to obtain the last lower bound. Combining the above with the fact that $\|\mb z\| = 1$, we obtain 
\begin{align}
\mb U^* \expect{\frac{1}{\mu} \Brac{1-\tanh^2\paren{\frac{\mb q^* \mb x}{\mu}}} \mb x \mb x^*} \mb U 
& \succeq \frac{99}{100} \frac{1-\theta}{\mu} \frac{\theta}{\sqrt{2\pi}} (2 - 3\sqrt{2}/4) \mb I_{n-1} \nonumber \\
& \succeq \frac{99}{200\sqrt{2\pi}} (2 - 3\sqrt{2}/4) \frac{\theta}{\mu} \mb I_{n-1}, 
\end{align}
where we have simplified the expression using $\theta \le 1/2$. To bound the second term, 
\begin{align*}
& \expect{\tanh\paren{\frac{\mb q^* \mb x}{\mu}} \mb q^* \mb x} \\
=\; & \bb E_{\mc I} \brac{\bb E_{Z \sim \mc N\paren{0, \norm{\mb q_{\mc I}}{}^2}}\brac{\tanh(Z/\mu) Z}} \\
=\; & \frac{1}{\mu}\bb E_{\mc I} \brac{\|\mb q_{\mc I}\|^2\bb E_{Z \sim \mc N\paren{0, \norm{\mb q_{\mc I}}{}^2}}\brac{1-\tanh^2(Z/\mu)}}   \quad \text{(by Lemma B.1 in~\cite{sun2015complete_a})}  \\
\le\; & \frac{1}{\mu}\bb E_{\mc I} \brac{\bb E_{Z \sim \mc N\paren{0, \norm{\mb q_{\mc I}}{}^2}}\brac{1-\tanh^2(Z/\mu)}}. 
\end{align*}
Now we have the following estimate: 
\begin{align*}
& \bb E_{Z \sim \mc N\paren{0, \norm{\mb w_{\mc J}}{}^2 + q_n^2}}\brac{1-\tanh^2(Z/\mu)} \\
=\; & 2 \bb E_{Z \sim \mc N\paren{0, \norm{\mb w_{\mc J}}{}^2 + q_n^2}}\brac{\paren{1-\tanh^2(Z/\mu)} \indicator{Z > 0}} \\
\le\; & 8 \bb E_{Z \sim \mc N\paren{0, \norm{\mb w_{\mc J}}{}^2 +  q_n^2}}\brac{\exp(-2Z/\mu) \indicator{Z > 0}} \\
=\; & 8 \exp\paren{\frac{2\norm{\mb w_{\mc J}}{}^2 + 2 q_n^2}{\mu^2}} \Phi^c\paren{\frac{2\sqrt{\norm{\mb w_{\mc J}}{}^2 +  q_n^2}}{\mu}}  \quad \text{(by Lemma B.1 in~\cite{sun2015complete_a})} \\
\le\; & \frac{4}{\sqrt{2\pi}} \frac{\mu}{\sqrt{\norm{\mb w_{\mc J}}{}^2 +  q_n^2}}, 
\end{align*}
where at the last inequality we have applied Gaussian tail upper bound of Type II in Lemma~\ref{lem:gaussian_tail_est}.  Since $\norm{\mb w_{\mc J}}{}^2 +  q_n^2 \ge q_n^2 = 1-\norm{\mb w}{}^2 \ge 1 - \mu^2/32 \ge 31/32$ for $\norm{\mb w}{} \le \mu/(4\sqrt{2})$ and $\mu \le 1$, we obtain 
\begin{align} \label{eq:strcvx_qsp_key2}
\bb E_{Z \sim \mc N\paren{0, \norm{\mb w_{\mc J}}{}^2 + q_n^2}}\brac{1-\tanh^2(Z/\mu)} \le \frac{4}{\sqrt{2\pi}} \frac{\mu}{\sqrt{31/32}} \le \frac{4}{\sqrt{2\pi}} \mu. 
\end{align}
Collecting the above estimates, we obtain 
\begin{align} \label{eq:strcvx_qsp_key3}
\mb U^* \Hess \expect{f(\mb q)} \mb U \succeq \frac{99}{200\sqrt{2\pi}} (2 - 3\sqrt{2}/4) \frac{\theta}{\mu} \mb I_{n-1} - \frac{1}{\mu}\frac{4}{\sqrt{2\pi}} \mu \mb I_{n-1} \succeq \frac{1}{4\sqrt{2\pi}} \frac{\theta}{\mu} \mb I_{n-1}, 
\end{align}
where we have used the fact $\mu \le \theta/10$ to obtain the final lower bound. 

\item \textbf{Concentration.}
Next we perform concentration analysis. For any $\mb q$, we can write 
\begin{align*}
\mb U^* \nabla^2 f(\mb q) \mb U = \frac{1}{p} \sum_{k=1}^p \mb W_k, \quad \text{with}\; \mb W_k \doteq \frac{1}{\mu} \brac{1-\tanh^2\paren{\frac{\mb q^* (\mb x_0)_k}{\mu}}} \mb U^* (\mb x_0)_k (\mb x_0)_k^* \mb U. 
\end{align*}
For any integer $m \ge 2$, we have 
\begin{align*}
\mb  0 \preceq \expect{\mb W_k^m} 
& \preceq \frac{1}{\mu^m} \expect{\paren{\mb U^* (\mb x_0)_k (\mb x_0)_k^* \mb U}^m} \\
& \preceq \frac{1}{\mu^m} \expect{\norm{(\mb x_0)_k (\mb x_0)_k^*}{}^m}\mb I = \frac{1}{\mu^m} \expect{\norm{(\mb x_0)_k}{}^{2m}} \mb I \preceq \frac{1}{\mu^m} \bb E_{Z \sim \chi^2\paren{n}} \brac{Z^m} \mb I,  
\end{align*}
where we have used Lemma~\ref{lem:U-moments-bound} to obtain the last inequality. By Lemma~\ref{lem:chi_sq_moment}, we obtain 
\begin{align*}
\mb 0 \preceq \expect{\mb W_k^m} \preceq \frac{1}{\mu^m} \frac{m!}{2} \paren{2n}^m \mb I \preceq \frac{m!}{2} \paren{\frac{2n}{\mu}}^m \mb I. 
\end{align*}
Taking $R_{\mb W} = 2n/\mu$, and $\sigma^2_{\mb W} = 4n^2/\mu^2 \ge \expect{\mb W_k^2}$, by Lemma~\ref{lem:mc_bernstein_matrix}, we obtain 
\begin{align} \label{eq:strcvx_qsp_key4}
\prob{\norm{\frac{1}{p} \sum_{k=1}^p \mb W_k - \frac{1}{p} \sum_{k=1}^p \expect{\mb W_k}}{} > t/2} \le 2n\exp\paren{-\frac{p\mu^2t^2}{32n^2 + 8nt}}
\end{align}
for any $t > 0$. Similarly, we write
\begin{align*}
\innerprod{\nabla f(\mb q)}{\mb q} = \frac{1}{p} \sum_{k=1}^p Z_k, \quad \text{with}\; Z_k \doteq \tanh\paren{\frac{\mb q^* (\mb x_0)_k}{\mu}} \mb q^* (\mb x_0)_k. 
\end{align*}
For any integer $m \ge 2$, we have 
\begin{align*}
\expect{\abs{Z_k}^m} \le \expect{\abs{\mb q^* (\mb x_0)_k}^m} \le \bb E_{Z \sim \mc N\paren{0, 1}}\brac{\abs{Z}^m} \le \frac{m!}{2}, 
\end{align*}
where at the first inequality we used the fact $\abs{\tanh(\cdot)} \le 1$, at the second we invoked Lemma~\ref{lem:U-moments-bound}, and at the third we invoked Lemma~\ref{lem:guassian_moment}. Taking $R_Z = \sigma^2_{Z} = 1$, by Lemma~\ref{lem:mc_bernstein_scalar}, we obtain 
\begin{align} \label{eq:strcvx_qsp_key5}
\prob{\abs{\frac{1}{p}\sum_{k=1}^p Z_k - \frac{1}{p}\sum_{k=1}^p \expect{Z_k}} > t/2} \le 2\exp\paren{-pt^2/16}
\end{align}
for any $t > 0$. Gathering~\eqref{eq:strcvx_qsp_key4} and~\eqref{eq:strcvx_qsp_key5}, we obtain that for any $t > 0$, 
\begin{align} \label{eq:strcvx_qsp_key6}
& \prob{\norm{\mb U^* \Hess \expect{f(\mb q)} \mb U - \mb U^* \Hess f(\mb q) \mb U}{} > t} \nonumber \\
\le\; &  \prob{\norm{\mb U^* \nabla^2 f(\mb q) \mb U - \nabla^2 \expect{f(\mb q)}}{} > t/2} + \prob{\abs{\innerprod{\nabla f(\mb q)}{\mb q} - \innerprod{\nabla \expect{f(\mb q)}}{\mb q} }> t/2} \nonumber \\
\le\; & 2n\exp\paren{-\frac{p\mu^2 t^2}{32n^2 + 8nt}} + 2\exp\paren{-\frac{pt^2}{16}} \le 4n \exp\paren{-\frac{p\mu^2 t^2}{32n^2 + 8nt}}. 
\end{align}

\item \textbf{Uniformizing the bound.}
Now we are ready to pull above results together for a discretization argument. For any $\eps \in (0, \mu/(4\sqrt{2}))$, there is an $\eps$-net $N_{\eps}$ of size at most $(3\mu/(4\sqrt{2} \eps))^n$ that covers the region $\set{\mb q: \norm{\mb w(\mb q)}{} \le \mu/(4\sqrt{2})}$. By Lemma~\ref{lem:rie_hess_lip}, the function $\Hess f(\mb q)$ is locally Lipschitz within each normal ball of radius 
\begin{align*}
\norm{\mb q - \exp_{\mb q} (1/2)}{} = \sqrt{2-2\cos(1/2)} \ge 1/\sqrt{5}
\end{align*}
with Lipschitz constant $L_H$ (as defined in Lemma~\ref{lem:rie_hess_lip}). Note that $\eps < \mu/(4\sqrt{2}) < 1/(4\sqrt{2}) < 1/\sqrt{5}$ for $\mu < 1$, so any choice of $\eps \in (0, \mu/(4\sqrt{2}))$ makes the Lipschitz constant $L_H$ valid within each $\eps$-ball centered around one element of the $\eps$-net. Let 
\begin{align*}
\event_{\infty} \doteq \set{1 \le \norm{\mb X_0}{\infty} \le 4\sqrt{\log(np)}}. 
\end{align*}
From Lemma~\ref{lem:X-infinty-tail-bound},  $\prob{\event_{\infty}^c} \le \theta \paren{np}^{-7} + \exp\paren{-0.3\theta np}$. By Lemma~\ref{lem:rie_hess_lip}, with at least the same probability, 
\begin{align*}
L_H \le C_1 \frac{n^{3/2}}{\mu^2} \log^{3/2} (np). 
\end{align*}
Set $\eps = \frac{\theta}{12\sqrt{2\pi}\mu L_H} <\mu/(4\sqrt{2})$, so 
\begin{align*}
\# N_\eps \le \exp\paren{n\log\frac{C_2 n^{3/2} \log^{3/2}(np)}{\theta}}. 
\end{align*}
Let $\event_H$ denote the event that 
\begin{align*}
\event_H \doteq \set{\max_{\mb q \in N_{\eps}} \norm{\mb U^* \Hess \expect{f(\mb q)} \mb U - \mb U^* \Hess f(\mb q) \mb U}{} \le \frac{\theta}{12\sqrt{2\pi}\mu} }. 
\end{align*}
On $\event_{\infty} \cap \event_H$, 
\begin{align*}
\sup_{\mb q: \norm{\mb w(\mb q)}{} \le \mu/(4\sqrt{2})} \norm{\mb U^* \Hess \expect{f(\mb q)} \mb U - \mb U^* \Hess f(\mb q) \mb U}{} \le \frac{\theta}{6\sqrt{2\pi}\mu}. 
\end{align*}
So on $\event_{\infty} \cap \event_H$, we have 
\begin{align}
\mb U^* \Hess f(\mb q) \mb U \succeq c_\sharp \frac{\theta}{\mu}
\end{align}
for any $c_\sharp \le 1/(12\sqrt{2\pi})$. We take $c_\sharp = c_\star$ for simplicity.  Setting $t = \frac{\theta}{12\sqrt{2\pi}\mu}$ in~\eqref{eq:strcvx_qsp_key6}, we obtain that for any fixed $\mb q$ in this region, 
\begin{align*}
\prob{\norm{\mb U^* \Hess \expect{f(\mb q)} \mb U - \mb U^* \Hess f(\mb q) \mb U}{} > t} \le 4n \exp\paren{-\frac{p\theta^2}{c_3n^2 + c_4n\theta/\mu }}. 
\end{align*}
Taking a union bound, we obtain that 
\begin{align*}
\prob{\event_H^c} \le 4n\exp\paren{-\frac{p\theta^2}{c_3n^2 + c_4n\theta/\mu } + C_5 n\log n + C_6 n\log\log p}. 
\end{align*}
It is enough to make $p \ge C_7 n^3\log (n/(\mu \theta))/(\mu \theta^2)$ to make the failure probability small, completing the proof. 
\end{enumerate}
\end{proof}

\subsection{Proof of Lemma~\ref{lem:alg_gradient_lb}} \label{proof:lem_alg_gradient_lb}
\begin{proof} 
For a given $\mb q$, consider the vector $\mb r \doteq \mb q - \mb e_n/q_n$. It is easy to verify that $\innerprod{\mb q}{\mb r} = 0$, and hence $\mb r \in T_{\mb q} \bb S^{n-1}$. Now, by \eqref{eq:fq_grad} and \eqref{eq:fq_rie_grad}, we have 
\begin{align*}
\innerprod{\grad f\paren{\mb q}}{\mb r} 
& = \innerprod{\paren{\mb I - \mb q \mb q^*} \nabla f\paren{\mb q}}{\mb q - \mb e_n/q_n} \\
& = \innerprod{\paren{\mb I - \mb q \mb q^*} \nabla f\paren{\mb q}}{- \mb e_n/q_n} \\
& = \frac{1}{p}\sum_{k=1}^p \innerprod{\paren{\mb I - \mb q \mb q^*} \tanh\paren{\frac{\mb q^* (\mb x_0)_k}{\mu}} (\mb x_0)_k}{-\mb e_n/q_n}   \\
& = \frac{1}{p} \sum_{k=1}^p \tanh\paren{\frac{\mb q^* (\mb x_0)_k}{\mu}} \paren{-\frac{x_k\paren{n}}{q_n} + \mb q^* (\mb x_0)_k} \\
& = \frac{1}{p} \sum_{k=1}^p \tanh\paren{\frac{\mb q^* (\mb x_0)_k}{\mu}} \paren{\mb w^*\paren{\mb q} \overline{\mb x}_k - \frac{x_k\paren{n}}{q_n}\norm{\mb w\paren{\mb q}}{}^2} \\
& = \mb w^*\paren{\mb q} \nabla g\paren{\mb w}, 
\end{align*}
where an explicit expression for $g(\mb w)$ can be found at the start of  Section IV in~\cite{sun2015complete_a}. Thus, 
\begin{align*}
\frac{\mb w^* \nabla g\paren{\mb w}}{\norm{\mb w}{}} = \frac{\innerprod{\grad f \paren{\mb q}}{\mb r}}{\norm{\mb w}{}} \le \norm{\grad f\paren{\mb q}}{} \frac{\norm{\mb r}{}}{\norm{\mb w}{}}, 
\end{align*}
where 
\begin{align*}
\frac{\norm{\mb r}{}^2}{\norm{\mb w}{}^2} = \frac{\norm{\mb w}{}^2 + \paren{q_n - \frac{1}{q_n}}^2}{\norm{\mb w}{}^2} = \frac{\norm{\mb w}{}^2 + \norm{\mb w}{}^4/q_n^2}{\norm{\mb w}{}^2} = \frac{1}{q_n^2} = \frac{1}{1-\norm{\mb w}{}^2} \le \frac{1}{1-\tfrac{1}{2000}} = \frac{2000}{1999}, 
\end{align*}
where we have invoked our assumption that $\norm{\mb w}{} \le \tfrac{1}{20\sqrt{5}}$. Therefore we obtain 
\begin{align*}
\norm{\grad f\paren{\mb q}}{} \ge \frac{\norm{\mb w}{}}{\norm{\mb r}{}} \frac{\mb w^* \nabla g\paren{\mb w}}{\norm{\mb w}{}} \ge \sqrt{\frac{1999}{2000}} \frac{\mb w^* \nabla g\paren{\mb w}}{\norm{\mb w}{}} \ge \frac{9}{10} \frac{\mb w^* \nabla g\paren{\mb w}}{\norm{\mb w}{}}, 
\end{align*}
completing the proof. 
\end{proof}

\subsection{Proof of Lemma~\ref{lem:TR-step}} \label{proof:lem_TR-step}
Proof of Lemma~\ref{lem:TR-step} combines the local Lipschitz property of $\Hess f(\mb q)$ in Lemma~\ref{lem:rie_hess_lip}, and the Taylor's theorem (manifold version, Lemma 7.4.7 of~\cite{absil2009}).  

\begin{proof}{\textbf{(of Lemma~\ref{lem:TR-step})}} 
Let $\gamma\paren{t}$ be the unique geodesic that satisfies $\gamma\paren{0} = \mb q^{(r)}$, $\gamma\paren{1} = \mb q^{(r+1)}$, and its directional derivative $\dot{\gamma}\paren{0} = \mb \delta_\star$. Since the parallel translation defined by the Riemannian connection is an isometry, then $\norm{\grad f (\mb q^{(r+1)}) }{} = \norm{\mc P_{\gamma}^{0 \leftarrow 1}\grad f(\mb q^{(r+1)}) }{}$. Moreover, since $\norm{\mb \delta_\star}{} \leq \Delta$, the unconstrained optimality condition in \eqref{eqn:ts-optimal-solution-1} implies that $\grad f(\mb q^{(r)}) + \Hess f(\mb q^{(r)}) \mb \delta_\star = \mb 0_{\mb q^{(r)}}$. Thus, by using Taylor's theorem in \cite{absil2009}, we have 
\begin{align*}
\norm{\grad f (\mb q^{(r+1)})}{} 
& = \norm{\mc P_{\gamma}^{0 \leftarrow 1}\grad f \paren{\mb q^{(r+1)}} - \grad f \paren{\mb q^{(r)}} - \Hess f\paren{\mb q^{(r)}} \mb \delta_\star}{} \\
& = \norm{\int_0^1 \brac{\mc P_{\gamma}^{0 \leftarrow t} \Hess f\paren{\gamma\paren{t}}\brac{\dot{\gamma}\paren{t}} -  \Hess f\paren{\mb q^{(r)}} \mb \delta_\star} \; dt}{} \; (\text{Taylor's theorem}) \\
& = \norm{\int_0^1 \paren{\mc P_{\gamma}^{0 \leftarrow t} \Hess f\paren{\gamma\paren{t}}\mc P_{\gamma}^{t \leftarrow 0} \mb \delta_\star -  \Hess f\paren{\mb q^{(r)}} \mb \delta_\star} \; dt}{}\\
& \le \norm{\mb \delta_\star}{} \int_0^1 \norm{\mc P_{\gamma}^{0 \leftarrow t} \Hess f\paren{\gamma\paren{t}}\mc P_{\gamma}^{t \leftarrow 0} -  \Hess f\paren{\mb q^{(r)}}}{} \; dt. 
\end{align*}
From the Lipschitz bound in Lemma~\ref{lem:rie_hess_lip} and the optimality condition in \eqref{eqn:ts-optimal-solution-1}, we obtain 
\begin{align*}
\norm{\grad f \paren{\mb q^{(r+1)}}}{}  
& \le \frac{1}{2}\norm{\mb \delta_\star}{}^2 L_H 
= \frac{L_H}{2m_H^2} \norm{\grad f \paren{\mb q^{(r)}}}{}^2.
\end{align*}
This completes the proof. 
\end{proof}

\subsection{Proof of Lemma~\ref{lem:TR-grad-opt}} \label{proof:lem_TR-grad-opt}
\begin{proof}
By invoking Taylor's theorem in \cite{absil2009}, we have
\begin{align*}
\mc P_{\gamma}^{0 \leftarrow \tau} \grad f \paren{\gamma\paren{\tau}} = \int_{0}^{\tau} \mc P_{\gamma}^{0 \leftarrow t} \Hess f\paren{\gamma\paren{t}} [\dot{\gamma}\paren{t}]\; dt. 
\end{align*}
Hence, we have 
\begin{align*}
\innerprod{\mc P_{\gamma}^{0 \leftarrow \tau} \grad f \paren{\gamma\paren{\tau}}}{\mb \delta} 
& = \int_{0}^{\tau} \innerprod{ \mc P_{\gamma}^{0 \leftarrow t} \Hess f\paren{\gamma\paren{t}} [\dot{\gamma}\paren{t}]}{\mb \delta}\; dt \\
& = \int_{0}^{\tau} \innerprod{ \mc P_{\gamma}^{0 \leftarrow t} \Hess f\paren{\gamma\paren{t}} [\dot{\gamma}\paren{t}]}{\mc P_{\gamma}^{0 \leftarrow t }\dot{\gamma}\paren{t}}\; dt \\
& = \int_{0}^{\tau} \innerprod{\Hess f\paren{\gamma\paren{t}} [\dot{\gamma}\paren{t}]}{\dot{\gamma}\paren{t}}\; dt\\
& \ge m_H \int_{0}^\tau \norm{\dot{\gamma}\paren{t}}{}^2 \; dt \ge m_H\tau \norm{\mb \delta}{}^2, 
\end{align*}
where we have used the fact that the parallel transport $\mc P_{\gamma}^{0 \leftarrow t}$ defined by the Riemannian connection is an isometry. On the other hand, we have 
\begin{align*}
\innerprod{\mc P_{\gamma}^{0 \leftarrow \tau} \grad f \paren{\gamma\paren{\tau}}}{\mb \delta} & \le \norm{\mc P_{\gamma}^{0 \leftarrow \tau} \grad f \paren{\gamma\paren{\tau}}}{} \norm{\mb \delta}{} = \norm{\grad f \paren{\gamma\paren{\tau}}}{} \norm{\mb \delta}{},
\end{align*}
where again used the isometry property of the operator $\mc P_{\gamma}^{0 \leftarrow \tau}$. Combining the two bounds above, we obtain 
\begin{align*}
\norm{\grad f \paren{\gamma\paren{\tau}}}{} \norm{\mb \delta}{} \ge m_H \tau \norm{\mb \delta}{}^2,  
\end{align*}
which implies the claimed result. 
\end{proof}

\section{Proofs of Technical Results for Section~\ref{sec:main_result}} \label{sec:proof_main}

We need one technical lemma to prove Lemma~\ref{lem:alg_rounding_orth} and the relevant lemma for complete dictionaries. 
\begin{lemma} \label{lem:rounding-0}
For all integer $n_1 \in \N$, $\theta \in \paren{0, 1/3}$, and $n_2 \in \N$ with $n_2 \ge Cn_1\log\paren{n_1/\theta}/\theta^2$, any random matrix $\mb M \in \R^{n_1 \times n_2} \sim_{i.i.d.} \mathrm{BG}(\theta)$ obeys the following: for any fixed index set $\mc I \subset [n_2]$ with $\abs{\mc I} \leq \frac{9}{8} \theta n_2$, it holds that 
\begin{align*}
\norm{\mb v^* \mb M_{\mc I^c}}{1} - \norm{\mb v^* \mb M_{\mc I}}{1} \ge \frac{n_2}{6}\sqrt{\frac{2}{\pi}} \theta \norm{\mb v}{}\quad \text{for all}\; \mb v \in \R^{n_1}, 
\end{align*}
with probability at least 
$
1-cp^{-6}. 
$
Here $C, c$ are both positive constants. 
\end{lemma}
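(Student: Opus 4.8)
The plan is the standard recipe: pointwise concentration on the sphere, then a net/union bound. By positive homogeneity of the claimed inequality in $\mb v$ (both sides are absolutely homogeneous of degree one in $\mb v$, and the case $\mb v = \mb 0$ is trivial), it suffices to establish it for all $\mb v \in \bb S^{n_1 - 1}$. Fix such a $\mb v$, write $\mb m_k$ for the $k$-th column of $\mb M$ and $S_k \doteq \supp(\mb m_k)$. Conditioned on $S_k$ we have $\mb v^* \mb m_k \sim \mc N(0, \norm{\mb v_{S_k}}{}^2)$, hence $\expect{\abs{\mb v^* \mb m_k} \mid S_k} = \sqrt{2/\pi}\,\norm{\mb v_{S_k}}{}$. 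Since $\norm{\mb v_{S_k}}{} \le \norm{\mb v}{} = 1$ forces $\norm{\mb v_{S_k}}{}^2 \le \norm{\mb v_{S_k}}{}$, taking expectations gives $\expect{\abs{\mb v^* \mb m_k}} = \sqrt{2/\pi}\,\expect{\norm{\mb v_{S_k}}{}} \ge \sqrt{2/\pi}\,\expect{\norm{\mb v_{S_k}}{}^2} = \sqrt{2/\pi}\,\theta$ (and trivially $\expect{\abs{\mb v^* \mb m_k}} \le \sqrt{2/\pi}$); moreover $\expect{\norm{\mb v_{S_k}}{}}$ does not depend on $k$. Introducing the deterministic signs $\sigma_k = +1$ for $k \in \mc I^c$ and $\sigma_k = -1$ for $k \in \mc I$ (legitimate because $\mc I$ is a fixed, data-independent index set), the quantity of interest is $Y_{\mb v} \doteq \norm{\mb v^* \mb M_{\mc I^c}}{1} - \norm{\mb v^* \mb M_{\mc I}}{1} = \sum_{k=1}^{n_2} \sigma_k \abs{\mb v^* \mb m_k}$, a sum of independent terms with $\expect{Y_{\mb v}} = \sqrt{2/\pi}\,(n_2 - 2\abs{\mc I})\,\expect{\norm{\mb v_{S_1}}{}} \ge \sqrt{2/\pi}\,\theta\,(n_2 - \tfrac94 \theta n_2) \ge \tfrac14\, n_2 \sqrt{2/\pi}\,\theta$, where we used $\abs{\mc I} \le \tfrac98\theta n_2$ and $\theta < 1/3$.

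The next step is a Hoeffding-type concentration of $Y_{\mb v}$ for fixed $\mb v$. Conditioning on $S_k$ and using $\norm{\mb v_{S_k}}{} \le 1$ shows each $\abs{\mb v^* \mb m_k}$, and hence each centered summand $\sigma_k(\abs{\mb v^* \mb m_k} - \expect{\abs{\mb v^* \mb m_k}})$, is sub-Gaussian with an absolute variance proxy, so $\prob{\abs{Y_{\mb v} - \expect{Y_{\mb v}}} > s} \le 2\exp(-c\, s^2/n_2)$ for all $s > 0$. Choosing $s = \tfrac{1}{24}\, n_2 \sqrt{2/\pi}\,\theta$, a fixed fraction of the mean lower bound, yields for each fixed $\mb v \in \bb S^{n_1-1}$
\[
\prob{\,Y_{\mb v} < \tfrac{5}{24}\, n_2 \sqrt{2/\pi}\,\theta\,} \;\le\; \prob{\,Y_{\mb v} < \expect{Y_{\mb v}} - s\,} \;\le\; 2\exp(-c'\, n_2 \theta^2).
\]

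Finally I would upgrade this to a uniform statement by a net argument, reserving the remaining slack $\tfrac{1}{24}\, n_2\sqrt{2/\pi}\,\theta$ of the gap between $\tfrac{5}{24}$ and $\tfrac16 = \tfrac{4}{24}$ for the discretization error. The map $\mb v \mapsto Y_{\mb v}$ is globally Lipschitz: $\abs{Y_{\mb v} - Y_{\mb v'}} \le \norm{(\mb v - \mb v')^* \mb M}{1} \le \sqrt{n_2}\,\norm{\mb M}{}\,\norm{\mb v - \mb v'}{}$, and on $\mc E_{\mathrm{op}} \doteq \set{\norm{\mb M}{} \le C_1(\sqrt{n_1} + \sqrt{n_2})}$, which holds with probability at least $1 - 2e^{-c_2 n_2}$ by standard bounds for random matrices with sub-Gaussian entries (and $n_2 \ge n_1$), the Lipschitz constant is at most $L \doteq 2C_1 n_2$. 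Take an $r$-net $N_r$ of $\bb S^{n_1-1}$ with $r \doteq \tfrac{1}{24}\,n_2\sqrt{2/\pi}\,\theta / L = \Theta(\theta)$, so $\abs{N_r} \le (3/r)^{n_1} = \exp\!\big(O(n_1 \log(n_1/\theta))\big)$; on $\mc E_{\mathrm{op}}$ one has $Lr$ equal to the reserved slack, so if $Y_{\mb v_0} \ge \tfrac{5}{24}\,n_2\sqrt{2/\pi}\,\theta$ at the nearest net point $\mb v_0$ then $Y_{\mb v} \ge \tfrac{n_2}{6}\sqrt{2/\pi}\,\theta$. Union-bounding the fixed-$\mb v$ estimate over $N_r$ and adding $\prob{\mc E_{\mathrm{op}}^c}$, the total failure probability is at most $2\exp\!\big(O(n_1\log(n_1/\theta)) - c'n_2\theta^2\big) + 2e^{-c_2 n_2}$; since $n_2 \ge C n_1\log(n_1/\theta)/\theta^2$, taking the absolute constant $C$ large enough makes the exponent at most $-c'n_2\theta^2/2 \le -6\log n_2$, so this is at most $c\, n_2^{-6}$ (in the intended application $n_2 = p$), which is the claim.

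The only genuinely delicate part is the quantitative bookkeeping. One must split the comfortable margin $\tfrac14 - \tfrac16 = \tfrac1{12}$ of the mean over the target cleanly between the concentration deviation and the net discretization error, check that $\theta < 1/3$ is exactly what makes $n_2 - 2\abs{\mc I} \ge n_2/4$ (via $\abs{\mc I} \le \tfrac98\theta n_2$) combine with $\expect{\norm{\mb v_{S_1}}{}} \ge \theta$ to push the mean above $\tfrac{n_2}{6}\sqrt{2/\pi}\,\theta$ with room to spare, and verify that the hypothesis $n_2 \gtrsim n_1\log(n_1/\theta)/\theta^2$ is precisely the regime in which the sub-Gaussian deviation exponent $c'n_2\theta^2$ dominates both the net entropy $O(n_1\log(n_1/\theta))$ and $6\log n_2$. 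All the probabilistic ingredients invoked (chi-square/sub-Gaussian tail bounds, the operator-norm estimate) are routine.
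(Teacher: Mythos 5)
Your proposal is correct and follows essentially the same route as the paper: reduce to the sphere by homogeneity, lower-bound the mean of $\sum_k \sigma_k\abs{\mb v^*\mb m_k}$ by $\tfrac14 n_2\sqrt{2/\pi}\,\theta$ via $\expect{\abs{\mb v^*\mb m_k}}=\sqrt{2/\pi}\,\expect{\norm{\mb v_{S_k}}{}}\ge \sqrt{2/\pi}\,\theta$ and $\abs{\mc I}\le \tfrac98\theta n_2$, apply a pointwise concentration bound with tail $\exp(-c\theta^2 n_2)$, and finish with an $\eps$-net union bound. The only immaterial differences are that the paper uses a moment-controlled Bernstein inequality in place of your sub-Gaussian Hoeffding bound, and controls the discretization error through the entrywise bound $\norm{\mb M}{\infty}\lesssim\sqrt{\log(n_1n_2)}$ on column norms (net radius $\Theta(\theta/\sqrt{n_1\log(n_1n_2)})$) rather than your operator-norm Lipschitz estimate.
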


\begin{proof}
By homogeneity, it is sufficient to consider all $\mb v \in \bb S^{n_1}$. For any $i \in [n_2]$, let $\mb m_i\in \bb R^{n_1}$ be a column of $\mb M$. For a fixed $\mb v$ such that $\norm{\mb v}{} = 1$, we have
\begin{align*}
T\paren{\mb v} \doteq \norm{\mb v^* \mb M_{\mc I^c}}{1} - \norm{\mb v^* \mb M_{\mc I}}{1} = \sum_{i \in \mc I^c} \abs{\mb v^* \mb m_i} - \sum_{i \in \mc I} \abs{\mb v^* \mb m_i},
\end{align*}
namely as a sum of independent random variables. Since $\abs{\mc I} \le 9n_2\theta/8$, we have 
\begin{align*}
\expect{T\paren{\mb v}} \ge \paren{n_2 - \frac{9}{8} \theta n_2 - \frac{9}{8} \theta n_2} \expect{\abs{\mb v^* \mb m_1}} = \paren{1-\frac{9}{4} \theta } n_2 \expect{\abs{\mb v^* \mb m_1}} \geq \frac{1}{4}n_2 \expect{\abs{\mb v^* \mb m_1}},
\end{align*}
where the expectation $\expect{\abs{\mb v^* \mb m_1}}$ can be lower bounded as 
\begin{align*}
\expect{\abs{\mb v^* \mb m_1}} 
& \;=\; \sum_{k=0}^{n_1} \theta^k \paren{1-\theta}^{n_1 - k} \sum_{\mc J \in \binom{[n_1]}{k}} \bb E_{\mb g \sim \mc N\paren{\mb 0, \mb I}} \brac{\abs{\mb v^*_{\mc J} \mb g}}\\
& \;=\; \sum_{k=0}^{n_1} \theta^k \paren{1-\theta}^{n_1 - k} \sum_{\mc J \in \binom{[n_1]}{k}} \sqrt{\frac{2}{\pi}} \norm{\mb v_{\mc J}}{} 
\geq \sqrt{\frac{2}{\pi}} \norm{\bb E_{\mc J}\brac{\mb v_{\mc J}}}{} = \sqrt{\frac{2}{\pi}} \theta. 
\end{align*}
Moreover, by Lemma~\ref{lem:U-moments-bound} and Lemma \ref{lem:guassian_moment}, for any $i \in [n_2]$ and any integer $m \ge 2$, 
\begin{align*}
\expect{\abs{\mb v^* \mb m_i}^m} \le \bb E_{\mb Z \sim \mc N\paren{0, 1}} \brac{\abs{Z}^m } \le \paren{m-1}!! \le \frac{m!}{2}. 
\end{align*} 
So invoking the moment-control Bernstein's inequality in Lemma~\ref{lem:mc_bernstein_scalar}, we obtain 
\begin{align*}
\prob{T\paren{\mb v} < \frac{n_2}{4}\sqrt{\frac{2}{\pi}} \theta - t} \le \prob{T\paren{\mb v} < \expect{T\paren{\mb v}} - t} \le \exp\paren{-\frac{t^2}{2n_2 + 2t}}. 
\end{align*}
Taking $t = \tfrac{n_2}{20}\sqrt{\tfrac{2}{\pi}} \theta$ and simplifying, we obtain that 
\begin{align}
\prob{T\paren{\mb v} < \frac{n_2}{5}\sqrt{\frac{2}{\pi}} \theta} \le \exp\paren{-c_1 \theta^2 n_2}. 
\end{align}
Fix $\eps = \sqrt{\frac{2}{\pi}}\frac{\theta}{120} \brac{ n_1 \log \paren{n_1 n_2}}^{-1/2} < 1$. The unit sphere $\bb S^{n_1}$ has an $\eps$-net $N_\eps$ of cardinality at most $\paren{3/\eps}^{n_1}$. Consider the event
\begin{align*}
\event_{bg} \doteq \set{T\paren{\mb v} \ge \frac{n_2}{5}\sqrt{\frac{2}{\pi}} \theta\;\; \forall\; \mb v \in N_\eps}. 
\end{align*} 
A simple union bound implies 
\begin{align}\label{eqn:rounding-union}
\prob{\event_{bg}^c} 
& \le \exp\paren{-c_1\theta^2 n_2 + n_1\log\paren{\frac{3}{\eps}}}
\le \exp\paren{-c_1\theta^2 n_2 + c_2n_1\log \frac{n_1 \log n_2}{\theta}}.  
\end{align}
Conditioned on $\event_{bg}$, we have that any $\mb z \in \bb S^{n_1-1}$ can be written as $\mb z = \mb v + \mb e$ for some $\mb v \in N_\eps$ and $\norm{\mb e}{} \le \eps$. Moreover, 
\begin{align*}
T\paren{\mb z} 
&\; =\; \norm{\paren{\mb v + \mb e}^* \mb M_{\mc I^c}}{1} - \norm{\paren{\mb v + \mb e}^* \mb M_{\mc I}}{1} \ge T\paren{\mb v} - \norm{\mb e^* \mb M_{\mc I^c}}{1} - \norm{\mb e^* \mb M_{\mc I}}{1} \nonumber \\
& \;=\; \frac{n_2}{5}\sqrt{\frac{2}{\pi}} \theta - \norm{\mb e^* \mb M}{1} = \frac{n_2}{5}\sqrt{\frac{2}{\pi}} \theta - \sum_{k=1}^{n_2} \abs{\mb e^* \mb m_k} \nonumber \\
& \;\geq\;  \frac{n_2}{5}\sqrt{\frac{2}{\pi}} \theta - \eps \sum_{k=1}^{n_2} \norm{\mb m_k}{}. 
\end{align*} 
By Lemma~\ref{lem:X-infinty-tail-bound}, with probability at least $1-\theta\paren{n_1n_2}^{-7} - \exp\paren{-0.3\theta n_1 n_2}$, $\norm{\mb M}{\infty} \le 4\sqrt{\log\paren{n_1n_2}}$. Thus, 
\begin{align}
T\paren{\mb z} \ge \frac{n_2}{5}\sqrt{\frac{2}{\pi}} \theta - \sqrt{\frac{2}{\pi}}\frac{\theta}{120}\frac{n_2\sqrt{n_1} 4\sqrt{\log\paren{n_1n_2}}}{\sqrt{n_1} \sqrt{\log\paren{n_1 n_2}}} = \frac{n_2}{6}\sqrt{\frac{2}{\pi}} \theta. \label{eqn:rounding-lower-bound}
\end{align}
Thus, by \eqref{eqn:rounding-union}, it is enough to take $n_2 > Cn_1\log\paren{n_1/\theta}/\theta^2$ for sufficiently large $C > 0$ to make the overall failure probability small enough so that the lower bound \eqref{eqn:rounding-lower-bound} holds. 
\end{proof}

\subsection{Proof of Lemma~\ref{lem:alg_rounding_orth}} \label{proof:lem_alg_rounding_orth}
\begin{proof}
The proof is similar to that of \cite{qu2014finding}. First, let us assume the dictionary $\mb A_0 = \mb I$. W.l.o.g., suppose that the Riemannian TRM algorithm returns a solution $\widehat{\mb q}$, to which $\mb e_n$ is the nearest signed basis vector. Thus, the rounding LP~\eqref{eqn:LP_rounding} takes the form: 
\begin{align}\label{eqn:rounding-00}
\mini_{\mb q}\; \norm{\mb q^* \mb X_0}{1}, \quad \st \quad \innerprod{\mb r}{\mb q} = 1. 
\end{align}
where the vector $\mb r = \widehat{\mb q}$. Next, We will show whenever $\widehat{\mb q}$ is close enough to $\mb e_n$, w.h.p., the above linear program returns $\mb e_n$.
Let $\mb X_0 = \brac{\ol{\mb X}; \mb x_n^* }$, where $\ol{\mb X} \in \bb R^{(n-1)\times p}$ and $\mb x_n^*$ is the last row of $\mb X_0$. Set $\mb q= \brac{\overline{\mb q}, q_n}$, where $\overline{\mb q}$ denotes the first $n-1$ coordinates of $\mb q$ and $q_n$ is the last coordinate; similarly for $\mb r$. Let us consider a relaxation of the problem~\eqref{eqn:rounding-00},
\begin{align}
	\mini_{\mb q} \norm{\mb q^* \mb X_0}{1},\quad \st \quad q_nr_n+\innerprod{\overline{\mb q}}{\overline{\mb r}} \geq 1, \label{eqn:rounding-1}
\end{align}
It is obvious that the feasible set of \eqref{eqn:rounding-1} contains that of \eqref{eqn:rounding-00}. So if $\mb e_n$ is the unique optimal solution (UOS) of \eqref{eqn:rounding-1}, it is the UOS of \eqref{eqn:rounding-00}. Suppose $\mc I = \supp (\mb x_n)$ and define an event $\event_0 = \Brac{\abs{\mc I}\leq \frac{9}{8} \theta p }$. By Hoeffding's inequality, we know that
$
\prob{\event_0^c }  \le  \exp\paren{- \theta^2 p/2}. 
$
Now conditioned on $\event_0$ and consider a fixed support $\mc I$. \eqref{eqn:rounding-1} can be further relaxed as 
\begin{align} 
	\mini_{\mb q} \norm{\mb x_n}{1} \abs{q_n} - \norm{ \overline{\mb q}^* \ol{\mb X}_{\mc I} }{1} + \norm{\overline{\mb q}^* \ol{\mb X}_{\mc I^c} }{1},\quad \st \quad q_n r_n + \norm{\overline{\mb q}}{} \norm{\overline{\mb r}}{}\geq 1.\label{eqn:rounding-2}
\end{align}
The objective value of \eqref{eqn:rounding-2} lower bounds that of \eqref{eqn:rounding-1}, and are equal when $\mb q = \mb e_n$. So if $\mb q = \mb e_n$ is UOS of \eqref{eqn:rounding-2}, it is UOS of \eqref{eqn:rounding-00}. By Lemma \ref{lem:rounding-0}, we know that
\begin{align*}
	\norm{\overline{\mb q}^*\ol{\mb X}_{\mc I^c} }{1}- \norm{ \overline{\mb q}^* \ol{\mb X}_{\mc I} }{1}  \;\geq\; \frac{p}{6}\sqrt{\frac{2}{\pi}}\theta  \norm{\overline{\mb q}}{}
\end{align*}
holds w.h.p. when $p \ge C_1 (n-1)\log\paren{(n-1)/\theta}/\theta^2$. Let $\zeta = \frac{p}{6}\sqrt{\frac{2}{\pi}}\theta$, thus we can further lower bound the objective value in \eqref{eqn:rounding-2} by
\begin{align}
	\mini_{\mb q} \norm{\mb x_n}{1}\abs{q_n}+ \zeta \norm{\overline{\mb q} }{},\quad \st \quad q_n r_n + \norm{\overline{\mb q}}{} \norm{\overline{\mb r}}{}\geq 1. \label{eqn:rounding-3}
\end{align}
By similar arguments, if $\mb e_n$ is the UOS of \eqref{eqn:rounding-3}, it is also the UOS of \eqref{eqn:rounding-00}. For the optimal solution of \eqref{eqn:rounding-3}, notice that it is necessary to have $\sign\paren{q_n} = \sign\paren{r_n}$ and $q_n r_n + \norm{\overline{\mb q}}{} \norm{\overline{\mb r}}{}=1$. Therefore, the problem \eqref{eqn:rounding-3} is equivalent to
\begin{align}
	\mini_{q_n} \norm{\mb x_n}{1} \abs{q_n} + \zeta \frac{1-\abs{r_n} \abs{q_n} }{\norm{\overline{\mb r}}{}}, \quad \st \quad \abs{q_n}\leq \frac{1}{\abs{r_n}}. \label{eqn:rounding-4}
\end{align}
Notice that the problem \eqref{eqn:rounding-4} is a linear program in $\abs{q_n}$ with a compact feasible set, which indicates that the optimal solution only occurs at the boundary points $\abs{q_n}=0$ and $\abs{q_n} = 1/\abs{r_n}$. Therefore, $\mb q = \mb e_n$ is the UOS of \eqref{eqn:rounding-4} if and only if
\begin{align}
	\frac{1}{\abs{r_n}} \norm{\mb x_n}{1} < \frac{\zeta }{\norm{\overline{\mb r}}{} }. \label{eqn:rounding-5}
\end{align}
Conditioned on $\event_0$, by using the Gaussian concentration bound, we have
\begin{align*}
	\prob{\norm{\mb x_n}{1} \geq  \frac{9}{8}  \sqrt{\frac{2}{\pi}} \theta p + t } \;\leq\; \prob{\norm{\mb x_n}{1} \geq \expect{\norm{\mb x_n}{1}} + t } \;\leq\; \exp\paren{- \frac{t^2}{2p} },
\end{align*}
which means that
\begin{align}
	\prob{\norm{\mb x_n}{1} \geq \frac{5}{4} \sqrt{\frac{2}{\pi}} \theta p  } \;\leq\;\exp\paren{- \frac{\theta^2 p }{64 \pi } }. \label{eqn:rounding-6}
\end{align}
Therefore, by \eqref{eqn:rounding-5} and \eqref{eqn:rounding-6}, for $\mb q = \mb e_n$ to be the UOS of \eqref{eqn:rounding-00} w.h.p., it is sufficient to have
\begin{align}
	\frac{5}{4 \abs{r_n}} \sqrt{\frac{2}{\pi }} \theta p \;<\; \frac{\theta p}{6\sqrt{1-\abs{r_n}^2}} \sqrt{\frac{2}{\pi }},
\end{align}
which is implied by 
\begin{align*}
	\abs{r_n}\;>\; \frac{249}{250}. 
\end{align*}
The failure probability can be estimated via a simple union bound. Since the above argument holds uniformly for any fixed support set $\mc I$, we obtain the desired result. 

When our dictionary $\mb A_0$ is an arbitrary orthogonal matrix, it only rotates the row subspace of $\mb X_0$. Thus, w.l.o.g., suppose the TRM algorithm returns a solution $\widehat{\mb q}$, to which $\mb A_0 \mb q_\star$ is the nearest ``target'' with $\mb q_\star$ a signed basis vector. By a change of variable $\tilde{\mb q} = \mb A_0^* \mb q$, the problem \eqref{eqn:rounding-00} is of the form
\begin{align*}
	\mini_{\tilde{\mb q}} \norm{\widetilde{\mb q}^*\mb X_0 }{1},\quad \st \quad \innerprod{\mb A_0^* \mb r}{\tilde{\mb q} }=1,
\end{align*}
obviously our target solution for $\tilde{\mb q}$ is again the standard basis $\mb q_\star$. By a similar argument above, we only need $\innerprod{\mb A_0^* \mb r}{\mb e_n}>249/250$ to exactly recover the target, which is equivalent to 
$
	\innerprod{\mb r}{\widehat{\mb q}_\star} > 249/250.
$ 
This implies that our rounding \eqref{eqn:LP_rounding} is invariant to change of basis, completing the proof. 
\end{proof}

\subsection{Proof of Lemma~\ref{lem:alg_rounding_comp}} \label{proof:lem_alg_rounding_comp}
\begin{proof}
Define $\wt{\mb q} \doteq (\mb U \mb V^* + \mb \Xi)^* \mb q$. By Lemma~\ref{lem:pert_key_mag}, and in particular~\eqref{eq:pert_upper_bound}, when 
\begin{align*}
p \ge \frac{C_1}{c_\star^2 \theta} \max\set{\frac{n^4}{\mu^4}, \frac{n^5}{\mu^2}} \kappa^8\paren{\mb A_0} \log^4\paren{\frac{\kappa\paren{\mb A_0} n}{\mu \theta}}, 
\end{align*}
$\norm{\mb \Xi}{} \le 1/2$ so that $\mb U \mb V^* + \mb \Xi$ is invertible. 
Then the LP rounding can be written as 
\begin{align*} 
\mini_{\wt{\mb q}} \norm{\wt{\mb q}^*\mb X_0}{1},\quad \st \quad \innerprod{(\mb U \mb V^*+ \mb \Xi)^{-1}\mb r}{\wt{\mb q}} = 1. 
\end{align*}
By Lemma~\ref{lem:alg_rounding_orth}, to obtain $\wt{\mb q} = \mb e_n$ from this LP, it is enough to have 
\begin{align*}
\innerprod{(\mb U \mb V^*+ \mb \Xi)^{-1}\mb r}{\mb e_n} \ge 249/250, 
\end{align*}
and $p \ge C_2n^2\log(n/\theta)/\theta$ for some large enough $C_2$. This implies that to obtain $\mb q_\star$ for the original LP, such that $(\mb U \mb V^* + \mb \Xi)^* \mb q_\star = \mb e_n$, it is enough that 
\begin{align*}
\innerprod{(\mb U \mb V^*+ \mb \Xi)^{-1}\mb r}{(\mb U \mb V^* + \mb \Xi)^* \mb q_\star} = \innerprod{\mb r}{\mb q_\star} \ge 249/250, 
\end{align*}
completing the proof. 
\end{proof}

\subsection{Proof of Lemma~\ref{lem:deflation-bound}}  \label{proof:lem_deflation-bound}
\begin{proof}
Note that $[\mb q_\star^1, \dots, \mb q_\star^\ell] = (\mb Q^* + \mb \Xi^*)^{-1} [\mb e_1, \dots, \mb e_\ell]$, we have
\begin{align*}
\mb U^* (\mb Q + \mb \Xi) \mb X_0 
& = \mb U^* (\mb Q^* + \mb \Xi^*)^{-1} (\mb Q + \mb \Xi)^* (\mb Q + \mb \Xi) \mb X_0 \\
& = \mb U^* \left[\mb q_\star^1, \dots, \mb q_\star^\ell \;\vert\; \wh{\mb V}\right] (\mb I + \mb \Delta_1) \mb X_0, 
\end{align*}
where $\wh{\mb V} \doteq (\mb Q^* + \mb \Xi^*)^{-1} [\mb e_{\ell+1}, \dots, \mb e_n]$, and the matrix $\mb \Delta_1 = \mb Q^*\mb \Xi+ \mb \Xi^*\mb Q+\mb \Xi^*\mb \Xi$ so that $\norm{\mb \Delta_1}{}\leq 3\norm{\mb \Xi}{}$. Since $\mb U^* \left[\mb q_\star^1, \dots, \mb q_\star^\ell \;\vert\; \wh{\mb V}\right] = \brac{\mb 0 \; \vert\; \mb U^* \wh{\mb V}}$, we have
\begin{align} \label{eqn:deflation-comp-1}
\mb U^* (\mb Q + \mb \Xi) \mb X_0  = \brac{\mb 0 \; \vert\; \mb U^* \wh{\mb V}} \mb X_0 + \brac{\mb 0 \; \vert\; \mb U^* \wh{\mb V}} \mb \Delta_1 \mb X_0 = \mb U^* \wh{\mb V} \mb X_0^{[n-\ell]} + \mb \Delta_2 \mb X_0, 
\end{align}
where $\mb \Delta_2 = \brac{\mb 0 \; \vert\; \mb U^* \wh{\mb V}} \mb \Delta_1$. Let $\delta = \norm{\mb \Xi}{}$, so that
\begin{align}\label{eqn:delta-2-bound}
	\norm{\mb \Delta_2}{} \leq \frac{\norm{\mb \Delta_1}{}}{\sigma_{\min} \paren{\mb Q+\mb \Xi} } \leq \frac{3\norm{\mb \Xi}{} }{\sigma_{\min} \paren{\mb Q+\mb \Xi} }\leq \frac{3\delta}{1-\delta}.
\end{align}
Since the matrix $\wh{\mb V}$ is near orthogonal, it can be decomposed as $\wh{\mb V} = \mb V + \mb \Delta_3$, where $\mb V$ is orthogonal, and $\mb \Delta_3$ is a small perturbation. Obviously, $\mb V = \mb U \mb R$ for some orthogonal matrix $\mb R$, so that $\mb V$ spans the same subspace as that of $\mb U$. Next, we control the spectral norm of $\mb \Delta_3$: 
\begin{align}
	\norm{\mb \Delta_3}{} = \min_{\mb R \in O_{\ell}}\norm{\mb U \mb R - \wh{\mb V} }{} \le \min_{\mb R \in O_{\ell}} \norm{\mb U \mb R - \mb Q_{[n-\ell]} }{} + \norm{ \mb Q_{[n-\ell]} - \wh{\mb V} }{},
\end{align} 
where $\mb Q_{[n-\ell]}$ collects the last $n -\ell$ columns of $\mb Q$, i.e., $\mb Q=[\mb Q_{[\ell]}, \mb Q_{[n-\ell]}] $. 

To bound the second term on the right, we have 
\begin{align*}
\norm{\mb Q_{[n-\ell]} - \wh{\mb V} }{} \le \norm{\mb Q^{-1} - (\mb Q + \mb \Xi)^{-1}}{} \le \frac{\norm{\mb Q^{-1}}{} \norm{\mb Q^{-1} \mb \Xi}{}}{1-\norm{\mb Q^{-1} \mb \Xi}{}} \le \frac{\delta}{1-\delta} , 
\end{align*}
where we have used perturbation bound for matrix inverse (see, e.g., Theorem 2.5 of Chapter III in~\cite{stewart1990matrix}). 

To bound the first term, from Lemma~\ref{lem:sp_angle_norm}, it is enough to upper bound the largest principal angle $\theta_1$ between the subspaces $\text{span}([\mb q_\star^1, \dots, \mb q_\star^\ell])$, and that spanned by $\mb Q [\mb e_1, \dots, \mb e_\ell]$. Write $\mb I_{[\ell]} \doteq [\mb e_1, \dots, \mb e_\ell]$ for short, we bound $\sin \theta_1$ as
\begin{align*}
\sin \theta_1 \le\; & \norm{\mb Q \mb I_{[\ell]} \mb I_{[\ell]}^* \mb Q^* - (\mb Q^* + \mb \Xi^*)^{-1} \mb I_{[\ell]} \paren{\mb I_{[\ell]}^* (\mb Q + \mb \Xi)^{-1} (\mb Q^* + \mb \Xi^*)^{-1} \mb I_{[\ell]} }^{-1} \mb I_{[\ell]}^* (\mb Q + \mb \Xi)^{-1}}{} \\
=\; & \norm{\mb Q \mb I_{[\ell]} \mb I_{[\ell]}^* \mb Q^* - (\mb Q^* + \mb \Xi^*)^{-1} \mb I_{[\ell]} \paren{ \mb I_{[\ell]}^* (\mb I +\mb \Delta_1)^{-1} \mb I_{[\ell]} }^{-1} \mb I_{[\ell]}^* (\mb Q + \mb \Xi)^{-1}}{} \\
\le\; & \norm{\mb Q \mb I_{[\ell]} \mb I_{[\ell]}^* \mb Q^* - (\mb Q^* + \mb \Xi^*)^{-1} \mb I_{[\ell]}\mb I_{[\ell]}^* (\mb Q + \mb \Xi)^{-1}}{} \\
& \qquad + \norm{(\mb Q^* + \mb \Xi^*)^{-1} \mb I_{[\ell]} \brac{\mb I - \paren{ \mb I_{[\ell]}^* (\mb I +\mb \Delta_1)^{-1} \mb I_{[\ell]} }^{-1} }\mb I_{[\ell]}^* (\mb Q + \mb \Xi)^{-1} }{} \\
\le\; & \paren{1+\frac{1}{\sigma_{\min}(\mb Q + \mb \Xi)}}\norm{\mb Q^{-1} - (\mb Q + \mb \Xi)^{-1}}{} + \frac{1}{\sigma^2_{\min}(\mb Q + \mb \Xi)} \norm{\mb I - \paren{ \mb I_{[\ell]}^* (\mb I +\mb \Delta_1)^{-1} \mb I_{[\ell]} }^{-1} }{} \\
\le\; & \paren{1 + \frac{1}{1-\delta}} \frac{\delta}{1-\delta}  + \frac{1}{(1-\delta)^2} \frac{\norm{\mb I_{[\ell]}^*(\mb I+\mb \Delta_1)^{-1} \mb I_{[\ell]}-\mb I  }{}}{1-\norm{\mb I_{[\ell]}^*(\mb I+\mb \Delta_1)^{-1} \mb I_{[\ell]}-\mb I  }{}  } \\
\le\;& \paren{1 + \frac{1}{1-\delta}} \frac{\delta}{1-\delta}  + \frac{1}{(1-\delta)^2} \frac{ \norm{\mb \Delta_1}{} }{1-2\norm{\mb \Delta_1}{} },
\end{align*}
where to obtain the first line we used that for any full column rank matrix $\mb M$, $\mb M (\mb M^* \mb M)^{-1} \mb M^*$ is the orthogonal projector onto the its column span, and to obtain the fifth and six lines we invoked the matrix inverse perturbation bound again. Using $\delta < 1/20$ and $\norm{\mb\Delta_1}{} \le 3\delta < 1/2$, we have
\begin{align*}
\sin \theta_1 \leq \frac{(2-\delta)\delta}{(1-\delta)^2}	 + \frac{3\delta}{(1-\delta)^2(1-6\delta)} = \frac{5\delta - 13\delta^2+6\delta^3}{(1-\delta)^2(1-6\delta)}\leq 8\delta.
\end{align*}
For $\delta < 1/20$, the upper bound is nontrivial. By Lemma~\ref{lem:sp_angle_norm}, 
\begin{align*}
\min_{\mb R \in O_\ell}\norm{\mb U \mb R - \mb Q_{[n-\ell]} }{} \le \sqrt{2-2\cos\theta_1} \le \sqrt{2-2\cos^2\theta_1} = \sqrt{2} \sin \theta_1 \le 8\sqrt{2} \delta. 
\end{align*}
Put the estimates above, there exists an orthogonal matrix $\mb R \in O_\ell$ such that $\mb V = \mb U\mb R$ and $\wh{\mb V} = \mb V + \mb \Delta_3$ with 
\begin{align}\label{eqn:delta-3-bound}
	\norm{\mb \Delta_3}{} \leq \delta/(1-\delta) +8\sqrt{2}\delta\leq 25/(2\delta).
\end{align}
Therefore, by \eqref{eqn:deflation-comp-1}, we obtain
\begin{align}
	\mb U^*(\mb Q+\mb \Xi)\mb X_0 = \mb U^*\mb V\mb X_0^{[n-\ell]} +\mb \Delta,\quad \text{with}\quad\mb \Delta \doteq \mb U^* \mb \Delta_3 \mb X_0^{[n-\ell]} + \mb \Delta_2 \mb X_0.
\end{align}
By using the results in \eqref{eqn:delta-2-bound} and \eqref{eqn:delta-3-bound}, we get the desired result.
\end{proof}

\begin{appendices}
\section{Technical Tools and Basic Facts Used in Proofs}
In this section, we summarize some basic calculations that are useful throughout, and also record major technical tools we use in proofs. 
\begin{lemma}[Derivates and Lipschitz Properties of $h_{\mu}\paren{z}$] \label{lem:derivatives_basic_surrogate}
For the sparsity surrogate 
\begin{align*}
h_{\mu}\left(z\right)= \mu \log \cosh\paren{z/\mu}, 
\end{align*}
the first two derivatives are
\begin{align*}
\dot{ h}_\mu (z) = \tanh(z/\mu),\;  \ddot{h}_\mu (z) = \brac{1-\tanh^2(z/\mu)}/\mu. 
\end{align*}
Also, for any $z>0$, we have
\begin{align*}
(1-\exp(-2z/\mu)/2\;& \le \;\tanh(z/\mu) \;\le\; 1-\exp(-2z/\mu), \\
\exp(-2z/\mu)\;&\le \;1-\tanh^2(z/\mu) \;\le\; 4\exp(-2z/\mu). 
\end{align*}
Moreover, for any $z,~z^\prime\in \reals$, we have
\begin{align*}
|\dot{h}_{\mu}(z) - \dot{h}_{\mu}(z^\prime) | \le |z - z^\prime|/\mu,\; |\ddot{h}_{\mu}(z) - \ddot{h}_{\mu}(z^\prime) | \le 2|z - z^\prime|/\mu^2. 
\end{align*}
\end{lemma}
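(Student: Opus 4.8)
The plan is to treat this as a short sequence of elementary one-variable calculus facts, all of which reduce to the substitution $u = z/\mu$. First I would compute the two derivatives directly. Differentiating $h_\mu(z) = \mu\log\cosh(z/\mu)$ by the chain rule gives $\dot h_\mu(z) = \mu\cdot\frac{\sinh(z/\mu)}{\cosh(z/\mu)}\cdot\frac1\mu = \tanh(z/\mu)$, and differentiating once more, using $\frac{d}{dv}\tanh v = \mathrm{sech}^2 v = 1-\tanh^2 v$, yields $\ddot h_\mu(z) = \frac1\mu\bigl(1-\tanh^2(z/\mu)\bigr)$.

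Next, for the two-sided bounds in the regime $z>0$, I would rewrite everything in terms of $e^{-2u}$ with $u = z/\mu>0$. Using $\tanh u = \frac{1-e^{-2u}}{1+e^{-2u}}$ and dividing through by the positive quantity $1-e^{-2u}$, the claimed upper bound $\tanh u \le 1-e^{-2u}$ becomes $\frac{1}{1+e^{-2u}}\le 1$ and the lower bound $\tanh u \ge \frac12(1-e^{-2u})$ becomes $\frac{2}{1+e^{-2u}}\ge 1$; both are immediate from $0<e^{-2u}<1$. Similarly $1-\tanh^2 u = \frac{4e^{-2u}}{(1+e^{-2u})^2}$, so after dividing by $e^{-2u}$ the upper bound $1-\tanh^2 u\le 4e^{-2u}$ reduces to $(1+e^{-2u})^{-2}\le 1$ and the lower bound $1-\tanh^2 u\ge e^{-2u}$ reduces to $(1+e^{-2u})^2\le 4$, again clear from $e^{-2u}<1$.

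Finally, the Lipschitz estimates follow from the mean value theorem once the relevant derivatives are bounded. Since $\frac{d}{dz}\dot h_\mu(z) = \ddot h_\mu(z) = \frac1\mu\bigl(1-\tanh^2(z/\mu)\bigr)\in[0,\tfrac1\mu]$, we get $\abs{\dot h_\mu(z)-\dot h_\mu(z')}\le\frac1\mu\abs{z-z'}$. For the second derivative, differentiate $\ddot h_\mu$ once more: $\frac{d}{dz}\ddot h_\mu(z) = -\frac{2}{\mu^2}\tanh(z/\mu)\bigl(1-\tanh^2(z/\mu)\bigr)$, whose absolute value is at most $\frac{2}{\mu^2}$ because $\abs{\tanh}\le 1$ and $1-\tanh^2\in[0,1]$; the mean value theorem then gives $\abs{\ddot h_\mu(z)-\ddot h_\mu(z')}\le\frac{2}{\mu^2}\abs{z-z'}$.

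There is no genuine obstacle here: the lemma is a routine collection of scalar identities and inequalities. The only points requiring mild care are consistently tracking the $1/\mu$ factor produced by each differentiation, and keeping the direction of each inequality correct after clearing the positive denominators $1+e^{-2u}$ and $(1+e^{-2u})^2$.
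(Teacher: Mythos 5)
Your proof is correct and complete: the derivative computations, the reduction of all four inequalities to $0<e^{-2u}<1$ after writing $\tanh u=\frac{1-e^{-2u}}{1+e^{-2u}}$ and $1-\tanh^2 u=\frac{4e^{-2u}}{(1+e^{-2u})^2}$, and the mean-value-theorem Lipschitz bounds via $|\ddot h_\mu|\le 1/\mu$ and $|\dddot h_\mu|\le 2/\mu^2$ all check out. The paper states this lemma without proof as a basic calculus fact, and your verification is exactly the standard argument one would supply.
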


\begin{lemma}[Gaussian Tail Estimates] \label{lem:gaussian_tail_est}
Let $X \sim \mc N\paren{0, 1}$ and $\Phi\paren{x}$ be CDF of $X$. For any $x \ge 0$, we have the following estimates for $\Phi^c\paren{x} \doteq 1 - \Phi\paren{x}$: 
\begin{align*}
\paren{\frac{1}{x} - \frac{1}{x^3}}\frac{\exp\paren{-x^2/2}}{\sqrt{2\pi}} & \le \Phi^c\paren{x} \le \paren{\frac{1}{x} - \frac{1}{x^3} + \frac{3}{x^5}}\frac{\exp\paren{-x^2/2}}{\sqrt{2\pi}}, \quad (\text{Type I}) \\
\frac{x}{x^2 + 1} \frac{\exp\paren{-x^2/2}}{\sqrt{2\pi}}& \le \Phi^c\paren{x} \le \frac{1}{x} \frac{\exp\paren{-x^2/2}}{\sqrt{2\pi}},  \quad (\text{Type II}) \\
\frac{\sqrt{x^2 + 4} - x}{2} \frac{\exp\paren{-x^2/2}}{\sqrt{2\pi}} & \le  \Phi^c\paren{x} \le \paren{\sqrt{2 + x^2} - x}  \frac{\exp\paren{-x^2/2}}{\sqrt{2\pi}} \quad (\text{Type III}). 
\end{align*}
\end{lemma}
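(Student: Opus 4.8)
The plan is to reduce all six inequalities to two-sided bounds on the Mills ratio and then deploy two standard tools. Write $\varphi(x) \doteq \exp(-x^2/2)/\sqrt{2\pi}$ for the standard normal density, so that $\Phi^c(x) = \int_x^\infty \varphi(t)\,dt$, and let $R(x) \doteq \Phi^c(x)/\varphi(x)$ denote the Mills ratio; each of the six estimates is simply an upper or lower bound on $R(x)$ (the Type~III expressions simplify after rationalization: $\tfrac12(\sqrt{x^2+4}-x) = 2/(x+\sqrt{x^2+4})$ and $\sqrt{x^2+2}-x = 2/(x+\sqrt{x^2+2})$). I would handle Type~I and the upper Type~II bound by repeated integration by parts, and the lower Type~II bound together with both Type~III bounds by a Riccati-type differential comparison built on the identity $R'(x) = xR(x)-1$.

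For the integration-by-parts part, the starting point is $\int_x^\infty e^{-t^2/2}\,dt = \int_x^\infty t^{-1}\,(t e^{-t^2/2})\,dt$; one integration by parts gives $\Phi^c(x) = \varphi(x)/x - \int_x^\infty t^{-2}\varphi(t)\,dt$, and dropping the nonnegative remainder yields the upper Type~II bound. Integrating by parts once more on the remainder produces the exact identity $\Phi^c(x) = \varphi(x)(1/x - 1/x^3) + 3\int_x^\infty t^{-4}\varphi(t)\,dt$, whose nonnegative remainder gives the lower Type~I bound; a further integration by parts, $\int_x^\infty t^{-4}\varphi(t)\,dt = \varphi(x)/x^5 - 5\int_x^\infty t^{-6}\varphi(t)\,dt \le \varphi(x)/x^5$, gives the upper Type~I bound. (For the small $x$ at which $1/x$ is large these statements are vacuous, so one may restrict to $x>0$ in this part.)

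For the Riccati part, note first that $(\Phi^c)'(x) = -\varphi(x)$ and $\varphi'(x) = -x\varphi(x)$ give $R'(x) = xR(x) - 1$ on $[0,\infty)$. The key sublemma I would prove is: if $u \in C^1([0,\infty))$ with $u(x)\to 0$ as $x\to\infty$ and $u'(x) \ge xu(x) - 1$ for all $x \ge 0$, then $R(x) \ge u(x)$ on $[0,\infty)$; symmetrically, $u'(x) \le xu(x) - 1$ everywhere together with $u(x)\to0$ forces $R(x)\le u(x)$. The argument is short: with $h(x) \doteq (R(x) - u(x))e^{-x^2/2}$, the identity $R' = xR - 1$ gives $h'(x) = -e^{-x^2/2}\bigl(u'(x) - (xu(x) - 1)\bigr)$, so $h$ is monotone with the indicated sign; since $R(x)e^{-x^2/2} = \sqrt{2\pi}\,\Phi^c(x) \to 0$ and $u(x)e^{-x^2/2}\to 0$, we have $h(x)\to 0$ at $+\infty$, and a monotone function vanishing at $+\infty$ retains the corresponding sign throughout. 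It then remains to plug in the three candidates and compute the ``Riccati residual'' $\rho(x)\doteq u'(x) - (xu(x) - 1)$: for $u(x) = x/(x^2+1)$ one gets $\rho(x) = 2/(x^2+1)^2 \ge 0$ (lower Type~II); for $u(x) = 2/(x+\sqrt{x^2+4})$ one gets $\rho(x) = 2(\sqrt{x^2+4}-x)/\bigl(\sqrt{x^2+4}\,(x+\sqrt{x^2+4})^2\bigr)\ge 0$ (lower Type~III); and for $u(x) = 2/(x+\sqrt{x^2+2})$ one gets $\rho(x) = -2x/\bigl(\sqrt{x^2+2}\,(x+\sqrt{x^2+2})^2\bigr)\le 0$ (upper Type~III).

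Everything here is elementary and I do not expect a substantive obstacle. The only points needing a little care are bookkeeping the sign of each residual $\rho(x)$ correctly — i.e.\ deciding which side of the Riccati inequality each candidate sits on, since that dictates whether one obtains an upper or a lower bound on $R$ — and confirming the $x\to\infty$ limit used in the monotonicity argument, which is immediate because each candidate $u$ is of order $1/x$.
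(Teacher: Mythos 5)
Your proof is correct: the integration-by-parts chain delivers the Type~I bounds and the Type~II upper bound exactly as you describe, and I have checked your three Riccati residuals --- $\rho = 2/(x^2+1)^2$ for $u = x/(x^2+1)$, $\rho = 8/\bigl(\sqrt{x^2+4}\,(x+\sqrt{x^2+4})^3\bigr) > 0$ for the Type~III lower candidate, and $\rho = -x/\bigl(\sqrt{x^2+2}\,(x\sqrt{x^2+2}+x^2+1)\bigr) \le 0$ for the Type~III upper candidate --- all of which agree with your formulas after simplification, so the comparison sublemma applies with the right sign in each case. The paper itself defers this proof to the companion technical report, so a line-by-line comparison is not possible here, but your route (repeated integration by parts for the asymptotic-expansion bounds plus the ODE/monotonicity comparison $h(x) = (R(x)-u(x))e^{-x^2/2}$ for the Mills-ratio bounds) is the standard and complete way to establish all six inequalities; no gaps.
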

\begin{proof}
See proof of Lemma A.5 in the technical report~\cite{sun2015complete_tr}. 
\end{proof}

\begin{lemma}[Moments of the Gaussian RV] \label{lem:guassian_moment}
If $X \sim \mc N\left(0, \sigma^2\right)$, then it holds for all integer $p \geq 1$ that
\begin{align*}
\expect{\abs{X}^p} \leq \sigma^p \paren{p -1}!!. 
\end{align*}
\end{lemma}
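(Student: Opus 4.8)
The plan is to reduce to the standard normal by homogeneity and then run a short induction on $p$ using the classical integration-by-parts recursion for Gaussian absolute moments. First I would write $X = \sigma Z$ with $Z \sim \mc N(0,1)$, so that $\expect{\abs{X}^p} = \sigma^p \expect{\abs{Z}^p}$; it therefore suffices to prove $\expect{\abs{Z}^p} \le (p-1)!!$ for every integer $p \ge 1$, adopting the convention $0!! = (-1)!! = 1$.

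The key step is the recursion $\expect{\abs{Z}^p} = (p-1)\,\expect{\abs{Z}^{p-2}}$, valid for $p \ge 2$. To obtain it, write $\expect{\abs{Z}^p} = \sqrt{2/\pi}\,\int_0^\infty z^p \e^{-z^2/2}\,dz$ and integrate by parts with $u = z^{p-1}$ and $dv = z\e^{-z^2/2}\,dz$, so that $v = -\e^{-z^2/2}$; the boundary contribution $[-z^{p-1}\e^{-z^2/2}]_0^\infty$ vanishes (at $\infty$ by Gaussian decay, at $0$ since $p-1 \ge 1$), leaving $\int_0^\infty z^p \e^{-z^2/2}\,dz = (p-1)\int_0^\infty z^{p-2}\e^{-z^2/2}\,dz$, which is the claimed identity after reinserting the normalizing constant. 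One also records the two base values directly: $\expect{\abs{Z}} = \sqrt{2/\pi} < 1 = 0!!$ and $\expect{\abs{Z}^2} = 1 = 1!!$.

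Then I would close the argument by induction: for $p \ge 3$, the recursion together with the inductive hypothesis at $p-2$ (which has the same parity as $p$) gives $\expect{\abs{Z}^p} = (p-1)\,\expect{\abs{Z}^{p-2}} \le (p-1)(p-3)!! = (p-1)!!$. Combined with the scaling step, this proves the lemma; note that for even $p$ one in fact gets equality, $\expect{\abs{Z}^{2m}} = (2m-1)!!$, and only the odd moments are genuinely smaller, by the factor $\sqrt{2/\pi} < 1$. There is no substantive obstacle here — the computation is routine; the only thing needing care is the parity bookkeeping, since the induction really consists of two interleaved chains, on even and on odd $p$, anchored respectively at $p = 2$ and $p = 1$, together with the choice of double-factorial convention that makes these base cases match $(p-1)!!$.
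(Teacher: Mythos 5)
Your proof is correct. The paper states this lemma as a standard fact without including a proof, so there is nothing to compare against; your argument — reducing to the standard normal by homogeneity, deriving the recursion $\expect{\abs{Z}^p} = (p-1)\expect{\abs{Z}^{p-2}}$ by integration by parts, and closing the two parity-separated induction chains from the base cases $\expect{\abs{Z}} = \sqrt{2/\pi} \le 0!!$ and $\expect{Z^2} = 1!!$ — is the canonical derivation and is complete.
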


\begin{lemma}[Moments of the $\chi^2$ RV] \label{lem:chi_sq_moment}
If $X \sim \mc \chi^2\paren{n}$, then it holds for all integer $p \geq 1$ that
\begin{align*}
\expect{X^p} = 2^p \frac{\Gamma\paren{p + n/2}}{\Gamma\paren{n/2}} =  \prod_{k=1}^p (n+2k-2) \le p!\paren{2n}^p/2. 
\end{align*}
\end{lemma}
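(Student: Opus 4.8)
The plan is to compute the moments in closed form using the Gamma integral, then obtain the product expression via the functional equation of $\Gamma$, and finally derive the stated bound by an elementary factor-by-factor comparison.

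First I would recall that $X \sim \chi^2(n)$ has density $\frac{1}{2^{n/2}\Gamma(n/2)} x^{n/2-1} e^{-x/2}$ on $x > 0$, i.e.\ $X$ is $\mathrm{Gamma}(n/2, 2)$. Then for any integer $p \ge 1$, substituting $u = x/2$,
\begin{align*}
\expect{X^p} = \frac{1}{2^{n/2}\Gamma(n/2)} \int_0^\infty x^{p+n/2-1} e^{-x/2}\, dx = \frac{2^{p+n/2}}{2^{n/2}\Gamma(n/2)} \int_0^\infty u^{p+n/2-1} e^{-u}\, du = 2^p \frac{\Gamma(p+n/2)}{\Gamma(n/2)},
\end{align*}
which is the first claimed identity. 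Next I would peel off the product form using $\Gamma(z+1) = z\,\Gamma(z)$ applied $p$ times: $\Gamma(p+n/2) = \Gamma(n/2)\prod_{k=1}^p \paren{n/2 + k - 1}$, so that $2^p \Gamma(p+n/2)/\Gamma(n/2) = \prod_{k=1}^p 2\paren{n/2 + k - 1} = \prod_{k=1}^p \paren{n + 2k - 2}$, giving the second identity.

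For the inequality I would compare $\prod_{k=1}^p \paren{n+2k-2}$ with $p!\,(2n)^p = \prod_{k=1}^p (2nk)$ one factor at a time. For $k = 1$ the left factor equals $n$ and the right factor equals $2n$, which contributes the overall factor $\tfrac12$; for $k \ge 2$, using that $n \ge 1$ (since $n$ is the integer number of degrees of freedom), one has $2nk - (n + 2k - 2) = n(2k-1) - (2k-2) \ge (2k-1) - (2k-2) = 1 > 0$, hence $n + 2k - 2 \le 2nk$. Multiplying these bounds yields $\prod_{k=1}^p \paren{n+2k-2} \le \tfrac12 \prod_{k=1}^p (2nk) = \tfrac12\, p!\,(2n)^p$, as claimed.

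There is essentially no obstacle here: the statement is a routine computation. The only points requiring a little care are that the advertised factor $1/2$ is accounted for precisely by the $k = 1$ term, and that the $k \ge 2$ comparison uses $n \ge 1$; both are trivially satisfied.
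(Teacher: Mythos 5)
Your proof is correct: the Gamma-integral computation, the repeated use of $\Gamma(z+1)=z\Gamma(z)$, and the factor-by-factor comparison (with the $k=1$ term supplying the $1/2$ and $n\ge 1$ handling $k\ge 2$) all check out. The paper states this lemma as a standard fact without proof, and your argument is exactly the routine derivation one would expect.
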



\begin{lemma}[Moment-Control Bernstein's Inequality for Scalar RVs, Theorem 2.10 of~\cite{foucart2013mathematical}] \label{lem:mc_bernstein_scalar}
Let $X_1, \dots, X_p$ be i.i.d. real-valued random variables. Suppose that there exist some positive numbers $R$ and $\sigma^2$ such that
\begin{align*}
\expect{\abs{X_k}^m} \leq m! \sigma^2 R^{m-2}/2, \; \; \text{for all integers}\; m \ge 2. 
\end{align*} 
Let $S \doteq \frac{1}{p}\sum_{k=1}^p X_k$, then for all $t > 0$, it holds  that 
\begin{align*}
\prob{\abs{S - \expect{S}} \ge t} \leq 2\exp\left(-\frac{pt^2}{2\sigma^2 + 2Rt}\right).   
\end{align*}
\end{lemma}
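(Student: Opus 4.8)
This is a classical Chernoff-type estimate, and the plan is to run the exponential moment method. First I would reduce to the centered, un-normalized sum. Writing $\wt X_k \doteq X_k - \expect{X_k}$ and $T \doteq \sum_{k=1}^p \wt X_k = p\paren{S - \expect{S}}$, it suffices to bound $\prob{T \ge pt}$ and $\prob{-T \ge pt}$ each by $\exp\paren{-pt^2/(2\sigma^2 + 2Rt)}$ and then conclude with a union bound over the two tails. The moment hypothesis is to be read for these mean-zero variables; if one only knows it for the $X_k$ themselves, the passage to $\wt X_k$ costs a harmless constant factor in $R$ and $\sigma^2$ (via $\expect{\abs{\wt X_k}^m} \le 2^m \expect{\abs{X_k}^m}$), which is the only way the inequality is invoked elsewhere in the paper.

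The heart of the argument is a bound on the moment generating function of a single $\wt X_k$. For any $0 < \lambda < 1/R$, expanding $\exp(\lambda \wt X_k)$, discarding the linear term since $\expect{\wt X_k} = 0$, and applying the moment bound term by term gives
\begin{align*}
\expect{\exp(\lambda \wt X_k)} &\;=\; 1 + \sum_{m \ge 2} \frac{\lambda^m \expect{\wt X_k^m}}{m!} \;\le\; 1 + \sum_{m \ge 2} \frac{\lambda^m}{m!}\cdot\frac{m!\,\sigma^2 R^{m-2}}{2} \\
&\;=\; 1 + \frac{\sigma^2 \lambda^2}{2}\sum_{m \ge 2}(\lambda R)^{m-2} \;=\; 1 + \frac{\sigma^2 \lambda^2/2}{1 - \lambda R} \;\le\; \exp\paren{\frac{\sigma^2 \lambda^2/2}{1 - \lambda R}},
\end{align*}
where the geometric series converges precisely because $\lambda R < 1$, and the last step uses $1 + x \le \exp(x)$. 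By independence, $\expect{\exp(\lambda T)} = \prod_{k=1}^p \expect{\exp(\lambda \wt X_k)} \le \exp\paren{p\sigma^2\lambda^2/(2(1-\lambda R))}$.

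Next I would apply Markov's inequality to $\exp(\lambda T)$: for any $0 < \lambda < 1/R$,
\begin{align*}
\prob{T \ge pt} \;\le\; \exp(-\lambda p t)\,\expect{\exp(\lambda T)} \;\le\; \exp\paren{p\paren{\frac{\sigma^2\lambda^2/2}{1-\lambda R} - \lambda t}},
\end{align*}
and then optimize the exponent over $\lambda$. Rather than the exact minimizer it is cleaner to take $\lambda = t/(\sigma^2 + Rt)$, which lies in $(0,1/R)$ and makes $1 - \lambda R = \sigma^2/(\sigma^2 + Rt)$; the bracketed quantity then collapses to $\tfrac{t^2}{2(\sigma^2 + Rt)} - \tfrac{t^2}{\sigma^2 + Rt} = -\tfrac{t^2}{2(\sigma^2 + Rt)}$, giving $\prob{T \ge pt} \le \exp\paren{-pt^2/(2\sigma^2 + 2Rt)}$. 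Since $\abs{-\wt X_k}^m = \abs{\wt X_k}^m$, the identical argument applied to $-T$ yields the matching lower-tail bound, and the union bound over the two events completes the proof.

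The computation is essentially routine; the only steps meriting care are (i) enforcing $\lambda < 1/R$ so that the geometric series — and hence the mgf bound — stays finite, which is the structural reason the sub-exponential term $2Rt$ appears in the denominator rather than a purely sub-Gaussian bound, and (ii) the centering reduction noted above. I would flag (i) explicitly, since it is exactly the place where the sub-exponential tail (as opposed to a Gaussian tail) is forced.
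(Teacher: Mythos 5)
The paper does not prove this lemma; it is quoted from Foucart--Rauhut, and your argument is precisely the standard Cram\'er--Chernoff proof given there: bound the moment generating function by summing the moment hypothesis into a geometric series (valid because $\lambda R<1$), apply Markov's inequality, and substitute $\lambda = t/(\sigma^2+Rt)$, whose algebra you have carried out correctly. Your caveat about centering is also well taken: the cited theorem assumes the variables are mean-zero and imposes the moment condition on them directly, whereas the lemma as stated here imposes it on the uncentered $X_k$ while concluding for $S-\expect{S}$; passing between the two via $\expect{\abs{X_k-\expect{X_k}}^m}\le 2^m\expect{\abs{X_k}^m}$ degrades $(\sigma^2,R)$ to $(4\sigma^2,2R)$, which is immaterial everywhere the lemma is invoked in this paper.
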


\begin{lemma}[Moment-Control Bernstein's Inequality for Matrix RVs, Theorem 6.2 of~\cite{tropp2012user}] \label{lem:mc_bernstein_matrix}
Let $\mb X_1, \dots, \mb X_p\in \R^{d \times d}$ be i.i.d. random, symmetric matrices. Suppose there exist some positive number $R$ and $\sigma^2$ such that
\begin{align*}
\expect{\mb X_k^m} \preceq m! \sigma^2 R^{m-2}/2 \cdot \mb I \; \text{and} -\expect{\mb X_k^m} \preceq m! \sigma^2 R^{m-2}/2 \cdot \mb I\;,  \; \text{for all integers $m \ge 2$}. 
\end{align*}
Let $\mb S \doteq \frac{1}{p} \sum_{k = 1}^p \mb X_k$, then for all $t > 0$, it holds that 
\begin{align*}
\prob{\norm{\mb S - \expect{\mb S}}{} \ge t} \le 2d\exp\paren{-\frac{pt^2}{2\sigma^2 + 2Rt}}.
\end{align*}
\end{lemma}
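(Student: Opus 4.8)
This statement is the matrix analogue of the scalar moment-control Bernstein bound in Lemma~\ref{lem:mc_bernstein_scalar} (it is Theorem 6.2 of~\cite{tropp2012user}), and the plan is to follow the matrix Laplace-transform method. It suffices to bound the two one-sided deviations $\prob{\lambda_{\max}\paren{\mb S - \expect{\mb S}} \ge t}$ and $\prob{\lambda_{\max}\paren{\expect{\mb S} - \mb S} \ge t}$ separately and union them --- this produces the leading factor $2$, while the factor $d$ will emerge from a trace bound. Fixing $\theta > 0$ and writing $\mb Z \doteq \mb S - \expect{\mb S} = \tfrac1p\sum_{k=1}^p\paren{\mb X_k - \expect{\mb X_k}}$, Markov's inequality applied to $\lambda_{\max}$ together with $\lambda_{\max}\paren{e^{\theta\mb Z}} \le \trace e^{\theta\mb Z}$ gives
\begin{align*}
\prob{\lambda_{\max}(\mb Z) \ge t} \;\le\; e^{-\theta t}\,\expect{\trace\exp\paren{\theta\mb Z}}.
\end{align*}

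The first main step is to invoke Lieb's concavity theorem (equivalently, subadditivity of the matrix cumulant generating function) to decouple the independent summands: with $s \doteq \theta/p$,
\begin{align*}
\expect{\trace\exp\paren{\textstyle\sum_{k=1}^p s\paren{\mb X_k - \expect{\mb X_k}}}} \;\le\; \trace\exp\paren{\textstyle\sum_{k=1}^p \log\expect{\exp\paren{s\paren{\mb X_k - \expect{\mb X_k}}}}}.
\end{align*}
The second, and the delicate, step is to control each per-term matrix cumulant in the semidefinite order. Expanding the matrix exponential, using the hypothesis $\pm\expect{\mb X_k^m} \preceq \tfrac{m!}{2}\sigma^2 R^{m-2}\mb I$ for all integers $m \ge 2$ (together with a routine centering step passing from $\mb X_k$ to $\mb X_k - \expect{\mb X_k}$), and summing the resulting geometric-type series, one obtains the operator Bernstein bound
\begin{align*}
\log\expect{\exp\paren{s\paren{\mb X_k - \expect{\mb X_k}}}} \;\preceq\; \frac{s^2\sigma^2}{2\paren{1 - sR}}\,\mb I, \qquad 0 < s < 1/R.
\end{align*}

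To finish, I would sum over $k$, set $s = \theta/p$, and use $\trace\exp\paren{c\,\mb I} = d\,e^{c}$ together with monotonicity of $\trace\circ\exp$ in the PSD order to obtain $\prob{\lambda_{\max}(\mb Z)\ge t} \le d\exp\paren{-\theta t + \tfrac{\theta^2\sigma^2}{2\paren{p - \theta R}}}$ for $0 < \theta < p/R$; the classical choice $\theta = \tfrac{pt}{\sigma^2 + Rt}$ lies in that range and drives the exponent to $-\tfrac{pt^2}{2\paren{\sigma^2 + Rt}} = -\tfrac{pt^2}{2\sigma^2 + 2Rt}$. Running the identical argument on $-\mb X_k$ --- where the second half of the hypothesis, $-\expect{\mb X_k^m}\preceq\tfrac{m!}{2}\sigma^2 R^{m-2}\mb I$, is precisely what is needed to absorb the odd moments --- bounds $\prob{\lambda_{\max}(-\mb Z)\ge t}$ by the same quantity, and the union bound yields the claimed $2d\exp\paren{-\tfrac{pt^2}{2\sigma^2 + 2Rt}}$. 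The main obstacle is the per-term cumulant bound: in the scalar case (Lemma~\ref{lem:mc_bernstein_scalar}) summing the exponential series is immediate, whereas in the matrix case one must keep the semidefinite order intact through the non-commutative maps $\mb X \mapsto e^{\mb X}$ and $\mb X \mapsto \log\mb X$, which forces an appeal to operator monotonicity/concavity (and, for the decoupling step, to Lieb's theorem). Since these ingredients are standard and available off the shelf, in the paper we simply cite~\cite{tropp2012user}.
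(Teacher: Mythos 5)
Your proof is correct and follows the same route as the source the paper cites for this lemma (Theorem 6.2 of Tropp's survey, via Lemma A.10 of the technical report): the matrix Laplace-transform method, with Lieb's theorem giving subadditivity of the matrix cumulant generating function and the operator Bernstein mgf bound controlling each summand, applied to $\pm\paren{\mb X_k - \expect{\mb X_k}}$ so that the two-sided moment hypothesis absorbs the odd moments. The only step worth making explicit is the centering — transferring the moment bounds from $\mb X_k$ to $\mb X_k - \expect{\mb X_k}$ — which you correctly flag but do not carry out; that is where the stated constants are actually pinned down.
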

\begin{proof}
See proof of Lemma A.10 in the technical report~\cite{sun2015complete_tr}. 
\end{proof}

\begin{lemma}[Integral Form of Taylor's Theorem]\label{lem:Taylor-integral-form}
	Let $f(\mb x): \bb R^n \mapsto \bb R$ be a twice continuously differentiable function, then for any direction $\mb y\in \bb R^n$, we have
	\begin{align*}
		f(\mb x+t\mb y) &= f(\mb x) + t \int_0^1\innerprod{\nabla f(\mb x+st\mb y)}{\mb y} \; ds, \\
		f(\mb x+t\mb y) &= f(\mb x) + t \innerprod{\nabla f(\mb x)}{\mb y} + t^2 \int_0^1 (1-s)\innerprod{\nabla^2 f(\mb x+st\mb y) \mb y}{\mb y}\; ds.
	\end{align*}
\end{lemma}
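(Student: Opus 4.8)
The plan is to reduce the multivariate statement to the one-dimensional fundamental theorem of calculus by restricting $f$ to the line segment emanating from $\mb x$ in the direction $\mb y$. Concretely, I would fix $\mb x, \mb y \in \bb R^n$ and $t \in \bb R$, and introduce the scalar function $\phi(s) \doteq f(\mb x + s t \mb y)$ for $s \in [0,1]$. Since $f$ is twice continuously differentiable and the map $s \mapsto \mb x + st\mb y$ is affine (hence $C^\infty$), the composition $\phi$ is twice continuously differentiable on $[0,1]$, and the chain rule yields $\dot\phi(s) = t\innerprod{\nabla f(\mb x + st\mb y)}{\mb y}$ and $\ddot\phi(s) = t^2 \innerprod{\nabla^2 f(\mb x + st\mb y)\,\mb y}{\mb y}$. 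In particular $\phi(0) = f(\mb x)$, $\phi(1) = f(\mb x + t\mb y)$, and $\dot\phi(0) = t \innerprod{\nabla f(\mb x)}{\mb y}$.

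For the first identity I would simply invoke the fundamental theorem of calculus for $\phi$ on $[0,1]$, namely $\phi(1) - \phi(0) = \int_0^1 \dot\phi(s)\, ds$, and substitute the expressions above to recover exactly the claimed formula. For the second identity I would start from the same relation and integrate by parts, using $-(1-s)$ as an antiderivative of the constant $1$: this gives $\int_0^1 \dot\phi(s)\, ds = \bigl[-(1-s)\dot\phi(s)\bigr]_0^1 + \int_0^1 (1-s)\ddot\phi(s)\, ds = \dot\phi(0) + \int_0^1 (1-s)\ddot\phi(s)\, ds$, and substituting the formulas for $\dot\phi(0)$ and $\ddot\phi$ produces the stated second-order expansion with integral remainder.

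I do not anticipate any real obstacle: the only points requiring care are the applicability of the chain rule and the $C^2$-regularity of $\phi$ (both immediate from the hypothesis that $f \in C^2$), and the legitimacy of integration by parts (standard, since $\dot\phi$ is $C^1$ on the compact interval $[0,1]$). Hence the argument is short and elementary, and the write-up would consist essentially of the two displays above together with the substitutions.
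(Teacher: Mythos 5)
Your argument is correct and is the standard one-dimensional reduction: the paper states this lemma as a known fact without proof, and your write-up (chain rule for $\phi(s) = f(\mb x + st\mb y)$, the fundamental theorem of calculus for the first identity, integration by parts with antiderivative $-(1-s)$ for the second) fills it in correctly. No gaps.
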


\section{Auxillary Results for Proofs}

\begin{lemma} \label{lem:bg_identity_diff}
There exists a positive constant $C$ such that for any $\theta \in \paren{0, 1/2}$ and $n_2 > C n_1^2 \log n_1$, the random matrix $\mb X \in \R^{n_1 \times n_2}$ with $\mb X\sim_{i.i.d.} \mathrm{BG}\paren{\theta}$ obeys 
\begin{align}
\norm{\frac{1}{n_2 \theta} \mb X \mb X^* - \mb I}{} \le 10\sqrt{\frac{\theta n_1 \log n_2}{n_2}}
\end{align}
with probability at least $1 - n_2^{-8}$. 
\end{lemma}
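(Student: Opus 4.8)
The plan is to recognize $\tfrac1{n_2}\mb X\mb X^* = \tfrac1{n_2}\sum_{k=1}^{n_2}\mb x_k\mb x_k^*$ as an i.i.d.\ sum of rank-one PSD matrices ($\mb x_k\in\R^{n_1}$ being the $k$-th column of $\mb X$) and to apply the moment-control matrix Bernstein inequality, Lemma~\ref{lem:mc_bernstein_matrix}. Since each coordinate of $\mb x_k$ is distributed as $\mathrm{Ber}(\theta)\cdot\mc N(0,1)$, a direct computation gives $\expect{\mb x_k\mb x_k^*} = \theta\mb I$, so $\mb W_k\doteq\mb x_k\mb x_k^* - \theta\mb I$ is centered and symmetric, and $\tfrac1{n_2\theta}\mb X\mb X^* - \mb I = \tfrac1{n_2\theta}\sum_k\mb W_k$. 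It therefore suffices to show $\|\tfrac1{n_2}\sum_k\mb W_k\|\le 10\,\theta\sqrt{\theta n_1\log n_2/n_2}$ with the claimed probability, after which dividing by $\theta$ finishes the proof.

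The core of the argument is to verify the moment hypotheses of Lemma~\ref{lem:mc_bernstein_matrix}, i.e.\ to exhibit $\sigma^2,R>0$ with $\pm\expect{\mb W_k^m}\preceq\tfrac{m!}{2}\sigma^2 R^{m-2}\mb I$ for every integer $m\ge2$. The eigenvalues of $\mb W_k$ are $\|\mb x_k\|^2-\theta$, with eigenvector $\mb u_k\doteq\mb x_k/\|\mb x_k\|$, and $-\theta$ with multiplicity $n_1-1$; hence $\mb W_k^m = (\|\mb x_k\|^2-\theta)^m\,\mb u_k\mb u_k^* + (-\theta)^m(\mb I-\mb u_k\mb u_k^*)$, and consequently $\pm\expect{\mb W_k^m}\preceq\expect{|\|\mb x_k\|^2-\theta|^m\,\mb u_k\mb u_k^*} + \theta^m\mb I$. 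The key point is that the potentially large factor $|\|\mb x_k\|^2-\theta|^m$ multiplies $\mb u_k\mb u_k^*$, which has trace one. Conditioning on the support $\mc S\subseteq[n_1]$ of $\mb x_k$, given $\mc S$ the vector $\mb x_k$ is Gaussian on $\R^{\mc S}$, so $\|\mb x_k\|^2\sim\chi^2_{|\mc S|}$ is independent of $\mb u_k$ and $\expect{\mb u_k\mb u_k^*\mid\mc S} = |\mc S|^{-1}\mb P_{\mc S}\preceq|\mc S|^{-1}\mb I$. Combining the $\chi^2$ moment bound (Lemma~\ref{lem:chi_sq_moment}) with moment bounds for $|\mc S|\sim\mathrm{Bin}(n_1,\theta)$ (itself a bounded i.i.d.\ sum, so amenable to Lemma~\ref{lem:mc_bernstein_scalar}-type control), one obtains $\expect{|\|\mb x_k\|^2-\theta|^m\,\mb u_k\mb u_k^*}\preceq\tfrac{m!}{2}\sigma^2 R^{m-2}\mb I$ with $\sigma^2\asymp\theta+n_1\theta^2$ and $R\asymp 1+n_1\theta$; the case $m=2$ is pinned down by the exact identity $\expect{\mb W_k^2} = \bigl(3\theta+(n_1-2)\theta^2\bigr)\mb I$.

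With these parameters, Lemma~\ref{lem:mc_bernstein_matrix} yields $\prob{\|\tfrac1{n_2}\sum_k\mb W_k\|\ge s}\le 2n_1\exp\!\bigl(-\tfrac{n_2 s^2}{2\sigma^2+2Rs}\bigr)$. Taking $s = 10\,\theta\sqrt{\theta n_1\log n_2/n_2}$, one checks that under $\theta\le\tfrac12$ and $n_2 > Cn_1^2\log n_1$ the ``$Rs$'' term in the denominator is dominated by $\sigma^2$ (using $s$ small), so the exponent is of order $\theta^2 n_1\log n_2/(1+n_1\theta)\asymp\theta\log n_2$; for $C$ a sufficiently large absolute constant this exceeds $9\log n_2\ge\log(2n_1 n_2^8)$, giving failure probability at most $n_2^{-8}$.

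The step I expect to be the main obstacle is the moment bookkeeping in the second paragraph: the naive estimate $\|\mb W_k\|\le\|\mb x_k\|^2+\theta$ forces $\sigma^2\gtrsim(n_1\theta)^2$ and is far too lossy to reach the scale $s\asymp\theta^{3/2}\sqrt{n_1\log n_2/n_2}$, so one genuinely has to exploit that the large eigenvalue of $\mb W_k$ lives in the random direction $\mb u_k$ and ``self-averages'' against $\mb u_k\mb u_k^*$, and then keep all constants small enough that the Bernstein exponent still beats $\log n_2$. A cleaner alternative worth trying is to split $\mb X\mb X^* - n_2\theta\mb I$ into its diagonal part $\sum_k(\mathrm{diag}(\mb x_k\mb x_k^*)-\theta\mb I)$, handled by a union bound over $n_1$ scalar Bernstein inequalities (Lemma~\ref{lem:mc_bernstein_scalar}), and its hollow off-diagonal part, handled by matrix Bernstein with variance proxy $\asymp n_1 n_2\theta^2$, and then add the two bounds.
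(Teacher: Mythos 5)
The paper itself gives no argument for this lemma (it points to Lemma B.3 of the companion paper), so there is nothing in-text to compare against; your route --- centering $\mb W_k = \mb x_k\mb x_k^* - \theta\mb I$, using the rank-one-plus-shift eigenstructure of $\mb W_k$, conditioning on the support to control $\expect{\mb W_k^m}$, and invoking Lemma~\ref{lem:mc_bernstein_matrix} --- is certainly the natural one. One internal inconsistency first: the chain $\expect{\mb u_k\mb u_k^*\mid\mc S} = \abs{\mc S}^{-1}\mb P_{\mc S}\preceq\abs{\mc S}^{-1}\mb I$ discards the projector and only yields $\sigma^2\asymp 1+n_1\theta$, not the $\theta+n_1\theta^2$ you assert. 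To get the extra factor of $\theta$ you must keep $\mb P_{\mc S}$ and average it over the support, via $\expect{f(\abs{\mc S})\,\mb P_{\mc S}} = n_1^{-1}\expect{\abs{\mc S}\,f(\abs{\mc S})}\,\mb I$; that is what makes the bound consistent with your exact $m=2$ identity. This is fixable, and the rest of the moment bookkeeping (using $B_m \le m!$ for the binomial moments and $m^m\le e^m m!$ for the small-support $\chi^2$ contribution) does go through with $\sigma^2\asymp\theta(1+n_1\theta)$ and $R\asymp 1+n_1\theta$.

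The fatal gap is the last step. With $s=10\sqrt{\theta^3 n_1\log n_2/n_2}$ and $\sigma^2\asymp\theta(1+n_1\theta)$, the Bernstein exponent is $n_2 s^2/(2\sigma^2+2Rs)\asymp\theta^2 n_1\log n_2/(1+n_1\theta)\le\theta\log n_2$ --- exactly as you write --- and since $\theta<1/2$ this is below $\tfrac{1}{2}\log n_2$ and can never reach $9\log n_2$. Enlarging $C$ cannot help: $C$ only constrains $n_2$ relative to $n_1$, and $n_2$ cancels from the exponent because $s^2\propto 1/n_2$. Nor is this a repairable constant-chasing issue: a single diagonal entry of $\tfrac{1}{n_2\theta}\mb X\mb X^*-\mb I$ has standard deviation $\sqrt{(3-\theta)/(\theta n_2)}$, which exceeds the right-hand side $10\sqrt{\theta n_1\log n_2/n_2}$ once $\theta\lesssim (n_1\log n_2)^{-1/2}$, so in that (permitted) regime the stated inequality fails with probability bounded away from zero, far above $n_2^{-8}$. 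What your variance computation actually proves is the dimensionally consistent bound $\norm{\tfrac{1}{n_2}\mb X\mb X^*-\theta\mb I}{}\le 10\sqrt{\theta n_1\log n_2/n_2}$, equivalently $\norm{\tfrac{1}{n_2\theta}\mb X\mb X^*-\mb I}{}\le 10\sqrt{n_1\log n_2/(\theta n_2)}$; a factor of $\theta$ appears to have migrated across the normalization in the statement. You should prove that corrected version (and check it still suffices for the one place the lemma is used, namely $\sigma_{\min}(\mb X_0\mb X_0^*)\ge\theta p/2$, which it does once $n_2\gtrsim n_1\log n_2/\theta$) rather than asserting an exponent bound that your own estimate $\asymp\theta\log n_2$ contradicts.
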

\begin{proof}
See proof of Lemma B.3 in~\cite{sun2015complete_a}. 
\end{proof}

\begin{lemma} \label{lem:sp_angle_norm}
Consider two linear subspaces $\mc U$, $\mc V$ of dimension $k$ in $\R^n$ ($k \in [n]$) spanned by orthonormal bases $\mb U$ and $\mb V$, respectively. Suppose $\pi/2 \ge \theta_1 \ge \theta_2 \dots \ge \theta_k \ge 0$ are the principal angles between $\mc U$ and $\mc V$. Then it holds that \\
i) $\min_{\mb Q \in O_k} \norm{\mb U - \mb V \mb Q}{} \le \sqrt{2-2\cos \theta_1}$; \\
ii) $\sin \theta_1 = \norm{\mb U\mb U^* - \mb V\mb V^*}{}$;\\
iii) Let $\mc U^\perp$ and $\mc V^\perp$ be the orthogonal complement of $\mc U$ and $\mc V$, respectively. Then $\theta_1(\mc U, \mc V) = \theta_1(\mc U^\perp, \mc V^\perp)$. 
\end{lemma}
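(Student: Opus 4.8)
The plan is to reduce all three parts to the singular value decomposition of the $k\times k$ matrix $\mb U^*\mb V$. Write $\mb U^*\mb V = \mb A\mb\Sigma\mb B^*$ with $\mb A,\mb B\in O_k$ and $\mb\Sigma = \diag(\sigma_1,\dots,\sigma_k)$; since $\mb U,\mb V$ have orthonormal columns, $0\le\sigma_i\le 1$, and by definition the principal angles satisfy $\sigma_i=\cos\theta_i$ up to reindexing, so in the paper's ordering $\theta_1\ge\cdots\ge\theta_k$ the \emph{largest} angle $\theta_1$ corresponds to the \emph{smallest} singular value, i.e.\ $\cos\theta_1=\min_i\sigma_i$. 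Before handling the three items I would record the two elementary facts used throughout: writing $P\doteq\mb U\mb U^*$ and $Q\doteq\mb V\mb V^*$ for the orthoprojectors, $PQP = \mb U(\mb U^*\mb V)(\mb U^*\mb V)^*\mb U^*$, so on $\mathrm{range}(P)$ the operator $PQP$ is unitarily similar to $(\mb U^*\mb V)(\mb U^*\mb V)^*$ and hence has eigenvalues $\cos^2\theta_i$; symmetrically, $QPQ$ on $\mathrm{range}(Q)$ has eigenvalues $\cos^2\theta_i$.

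For (i), take $\mb Q = \mb B\mb A^*\in O_k$. Expanding, $(\mb U-\mb V\mb Q)^*(\mb U-\mb V\mb Q) = 2\mb I_k - \mb U^*\mb V\mb Q - (\mb U^*\mb V\mb Q)^* = 2\mb A(\mb I_k-\mb\Sigma)\mb A^*$, whose operator norm is $2(1-\min_i\sigma_i) = 2(1-\cos\theta_1)$. Taking square roots gives $\|\mb U-\mb V\mb Q\| = \sqrt{2-2\cos\theta_1}$, which dominates the minimum over $O_k$.

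For (ii), I would prove $\|P-Q\|=\sin\theta_1$ by block-diagonalizing $(P-Q)^2$ relative to $P$. A short computation shows $(P-Q)^2 = P-PQ-QP+Q$ satisfies $P(P-Q)^2 = (P-Q)^2 P = P-PQP$, so $(P-Q)^2$ commutes with $P$ and therefore $(P-Q)^2 = P(P-Q)^2P + (I-P)(P-Q)^2(I-P)$. The first block is $P-PQP$, with largest eigenvalue $1-\min_i\cos^2\theta_i = \sin^2\theta_1$ on $\mathrm{range}(P)$ by the fact above. For the second block one checks $(I-P)(P-Q)^2(I-P) = (I-P)Q(I-P) = \big((I-P)Q\big)\big((I-P)Q\big)^*$, so its norm equals $\|(I-P)Q\|^2 = \|Q(I-P)Q\| = \|Q-QPQ\| = \sin^2\theta_1$, again using the symmetric fact on $\mathrm{range}(Q)$. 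Hence $\|(P-Q)^2\| = \sin^2\theta_1$, and since the first block actually attains this value, $\|P-Q\|^2 = \|(P-Q)^2\| = \sin^2\theta_1$; as $\theta_1\in[0,\pi/2]$, $\|P-Q\| = \sin\theta_1$. Part (iii) is then immediate: $P_{\mc U^\perp}-P_{\mc V^\perp} = (I-P)-(I-Q) = -(P-Q)$, and applying (ii) to $(\mc U,\mc V)$ and to the equal-dimensional pair $(\mc U^\perp,\mc V^\perp)$ gives $\sin\theta_1(\mc U,\mc V) = \|P-Q\| = \sin\theta_1(\mc U^\perp,\mc V^\perp)$; injectivity of $\sin$ on $[0,\pi/2]$ yields $\theta_1(\mc U,\mc V) = \theta_1(\mc U^\perp,\mc V^\perp)$.

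This lemma is classical, so I do not expect a genuine obstacle; the one place demanding care is the bookkeeping in (ii)—verifying the commutation $P(P-Q)^2=(P-Q)^2P$, the identity $(I-P)(P-Q)^2(I-P)=(I-P)Q(I-P)$, and keeping the ordering convention $\theta_1\ge\cdots\ge\theta_k$ consistent with $\cos\theta_1=\min_i\sigma_i$ at every step. I would also remark that the degenerate cases ($k=0$, $n=k$, or all $\theta_i=0$) are trivially consistent with all three claims.
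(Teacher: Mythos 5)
Your proof is correct and complete. The paper itself does not reprove this lemma (it defers to Lemma B.4 of the technical report \cite{sun2015complete_tr}), but your argument is the classical one: part (i) via the SVD $\mb U^*\mb V=\mb A\mb\Sigma\mb B^*$ and the choice $\mb Q=\mb B\mb A^*$, giving $(\mb U-\mb V\mb Q)^*(\mb U-\mb V\mb Q)=2\mb A(\mb I-\mb\Sigma)\mb A^*$ with norm $2(1-\cos\theta_1)$; part (ii) via the commutation of $(P-Q)^2$ with $P$ and the resulting orthogonal block decomposition, where both blocks $P-PQP$ and $(I-P)Q(I-P)$ have norm $\sin^2\theta_1$; and part (iii) from $(I-P)-(I-Q)=-(P-Q)$ together with injectivity of $\sin$ on $[0,\pi/2]$. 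All the identities you flag as needing care check out, the ordering convention $\cos\theta_1=\min_i\sigma_i$ is used consistently, and the application of (ii) to the equal-dimensional pair $(\mc U^\perp,\mc V^\perp)$ is legitimate. No gaps.
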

\begin{proof}
See proof of Lemma B.4 in~\cite{sun2015complete_tr}. 
\end{proof}
Below are restatements of several technical results from~\cite{sun2015complete_a} that are important for proofs in this paper. 
\begin{proposition}[Hessian Lipschitz]\label{prop:lip-hessian-negative}
Fix any $\rconcave \in \paren{0, 1}$. Over the set $\Gamma \cap \set{\mb w: \norm{\mb w}{} \ge \rconcave}$, $\mb w^* \nabla^2 g(\mb w; \mb X_0) \mb w/\norm{\mb w}{}^2$ is $\Lconcave$-Lipschitz with 
\begin{align*}
\Lconcave \le \frac{16n^3}{\mu^2} \norm{\mb X_0}{\infty}^3 + \frac{8n^{3/2}}{\mu \rconcave} \norm{\mb X_0}{\infty}^2 + \frac{48 n^{5/2} }{\mu} \norm{\mb X_0}{\infty}^2 + 96 n^{5/2} \norm{\mb X_0}{\infty}.
\end{align*}
\end{proposition}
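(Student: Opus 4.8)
The final statement to prove is Proposition~\ref{prop:lip-hessian-negative}, which asserts a Lipschitz bound for the scalar function $\mb w \mapsto \mb w^* \nabla^2 g(\mb w; \mb X_0) \mb w / \norm{\mb w}{}^2$ over the annular region $\Gamma \cap \set{\norm{\mb w}{} \ge \rconcave}$.

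\medskip

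The plan is to decompose the target function and bound each piece. First I would recall that $g(\mb w; \mb X_0) = f(\mb q(\mb w); \mb X_0)$ with $\mb q(\mb w) = (\mb w, \sqrt{1-\norm{\mb w}{}^2})$, so that $\nabla^2 g$ can be written explicitly via the chain rule in terms of $\nabla f$, $\nabla^2 f$ evaluated at $\mb q(\mb w)$, together with the first and second derivatives of the map $\mb w \mapsto \mb q(\mb w)$. The quantity $\mb w^* \nabla^2 g(\mb w) \mb w / \norm{\mb w}{}^2$ is a Rayleigh-type quotient; writing $\mb u = \mb w / \norm{\mb w}{}$, it becomes a sum of terms of the schematic form $\mb u^* (\text{matrix-valued function of } \mb w) \mb u$ plus lower-order terms coming from the curvature of the sphere (the $q_n = \sqrt{1-\norm{\mb w}{}^2}$ factor and its derivatives). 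I would enumerate these terms explicitly — there will be a leading term involving $\nabla^2 f(\mb q(\mb w))$, a term involving $\langle \nabla f(\mb q(\mb w)), \mb q(\mb w)\rangle$ (the Hessian-of-restriction correction), and several terms where derivatives hit the $\sqrt{1-\norm{\mb w}{}^2}$ factor, producing powers of $1/q_n$.

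\medskip

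Next, for each term I would invoke the composition/product rules for Lipschitz functions (Lemma V.5 of~\cite{sun2015complete_a}) together with the magnitude and Lipschitz estimates from Lemma~\ref{lem:mag_lip_fq}: $\norm{\nabla f}{} \le \sqrt{n}\norm{\mb X_0}{\infty}$, $\norm{\nabla^2 f}{} \le \tfrac{n}{\mu}\norm{\mb X_0}{\infty}^2$, $L_{\nabla} \le \tfrac{n}{\mu}\norm{\mb X_0}{\infty}^2$, $L_{\nabla^2} \le \tfrac{2}{\mu^2}n^{3/2}\norm{\mb X_0}{\infty}^3$. The Jacobian of $\mb q(\mb w)$ has operator norm controlled by $O(\sqrt{n})$ type factors on $\Gamma$ (this is where the $n^{3/2}$ and $n^{5/2}$ powers in the final bound arise — each application of the chain rule costs a factor related to $\norm{D\mb q(\mb w)}{}$ and the Lipschitz constant of $D\mb q$), and on $\set{\norm{\mb w}{}\ge \rconcave}$ the restriction $q_n = \sqrt{1-\norm{\mb w}{}^2} \le \sqrt{1-\rconcave^2}$ is bounded away from the degenerate regime only in the sense that $1/q_n$ appears — wait, actually on $\Gamma$ we have $\norm{\mb w}{} < \sqrt{(4n-1)/(4n)}$ so $q_n^2 > 1/(4n)$, giving $1/q_n < 2\sqrt{n}$; this is the source of the $\rconcave$ in the denominator of one term (the term where the gradient-along-$\mb w$ piece, which is $\Theta(\norm{\mb w}{})$, gets divided by $\norm{\mb w}{}$, and one needs $\norm{\mb w}{} \ge \rconcave$ to bound the Lipschitz constant of $1/\norm{\mb w}{}$ by $1/\rconcave^2$). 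I would then collect the five resulting summands and match them against the claimed bound $\tfrac{16n^3}{\mu^2}\norm{\mb X_0}{\infty}^3 + \tfrac{8n^{3/2}}{\mu\rconcave}\norm{\mb X_0}{\infty}^2 + \tfrac{48n^{5/2}}{\mu}\norm{\mb X_0}{\infty}^2 + 96n^{5/2}\norm{\mb X_0}{\infty}$.

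\medskip

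The main obstacle will be the bookkeeping: correctly expanding $\mb w^*\nabla^2 g(\mb w)\mb w/\norm{\mb w}{}^2$ into its constituent terms via repeated chain/product rule, tracking which derivatives of $\mb q(\mb w)$ and of the normalization $1/\norm{\mb w}{}^2$ hit which factor, and being careful that the Lipschitz estimates compose correctly (a product of an $L$-Lipschitz bounded-by-$M_1$ function with an $L'$-Lipschitz bounded-by-$M_2$ function is $(M_1 L' + M_2 L)$-Lipschitz). Since the proof is essentially a careful but routine computation of the same flavor as Lemma~\ref{lem:alg_approx_bd2} and Lemma~\ref{lem:rie_hess_lip}, I would present the decomposition, state the per-term bounds with a one-line justification each (citing Lemma~\ref{lem:mag_lip_fq} and the Lipschitz composition rule), and defer the fully explicit constant-chasing — or, following the paper's practice, refer to the companion technical report~\cite{sun2015complete_tr} for the detailed arithmetic. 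A secondary subtlety worth flagging is uniformity: the Lipschitz property must hold with respect to the Euclidean metric on $\mb w$-space across the whole annulus, so I would make sure every bound used is a global bound on $\Gamma \cap \set{\norm{\mb w}{}\ge\rconcave}$ rather than a pointwise one.
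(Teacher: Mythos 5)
This paper never proves Proposition~\ref{prop:lip-hessian-negative}; it is imported verbatim from the companion paper~\cite{sun2015complete_a}, so there is no in-document proof to compare against. Your plan --- expand $\nabla^2 g = J^* \nabla^2 f(\mb q(\mb w)) J + \partial_n f(\mb q(\mb w))\, \nabla^2 q_n(\mb w)$ with $J = \partial \mb q/\partial \mb w$, contract with $\mb w\mb w^*/\norm{\mb w}{}^2$, and bound each summand by the Lipschitz product/composition rule using the magnitude and Lipschitz estimates of Lemma~\ref{lem:mag_lip_fq} together with $1/q_n \le 2\sqrt{n}$ on $\Gamma$ --- is exactly the route the companion paper takes, and it is essentially the only sensible one.

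Two points to fix before this becomes a proof rather than a plan. First, your accounting of where $\rconcave$ enters is off: the Lipschitz constant of $t \mapsto 1/t$ on $\set{t \ge \rconcave}$ is indeed $1/\rconcave^2$, but the stated bound carries only $1/\rconcave$. The correct source is the normalized direction $\mb w \mapsto \mb w/\norm{\mb w}{}$, whose differential $(\mb I - \mb w\mb w^*/\norm{\mb w}{}^2)/\norm{\mb w}{}$ has norm at most $1/\rconcave$ on the annulus; the ratio $\norm{J\mb w}{}^2/\norm{\mb w}{}^2 = 1/q_n^2$ involves no $\rconcave$ at all. If you literally route the argument through the Lipschitz constant of $1/\norm{\mb w}{}$ you will land on a $1/\rconcave^2$ term and fail to match the claimed constant. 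Second, the proposal defers all constant verification, but the statement asserts specific constants ($16$, $8$, $48$, $96$); since the whole content of the proposition is the explicit polynomial dependence on $n$, $\mu$, $\rconcave$, and $\norm{\mb X_0}{\infty}$, the term-by-term arithmetic cannot be waved off --- it must be carried out (or explicitly delegated to~\cite{sun2015complete_a}, which is what this paper itself does).
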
 

\begin{proposition}[Gradient Lipschitz]\label{prop:lip-gradient}
Fix any $r_g \in \paren{0, 1}$. Over the set $\Gamma \cap \set{\mb w: \norm{\mb w}{} \ge r_g}$, $\mb w^* \nabla g( \mb w; \mb X_0 )/\norm{\mb w}{}$ is $L_g$-Lipschitz with 
\begin{align*}
L_g \le \frac{2 \sqrt{n} \norm{\mb X_0}{\infty}}{r_g} + 8 n^{3/2} \norm{\mb X_0}{\infty} + \frac{4 n^2}{\mu} \norm{\mb X_0}{\infty}^2.
\end{align*}
\end{proposition}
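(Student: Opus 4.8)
The plan is to write $\mb w\mapsto \mb w^*\nabla g(\mb w;\mb X_0)/\norm{\mb w}{}$ as an average over the $p$ samples of a product of two factors, each of which I will show is bounded and Lipschitz on $\Gamma\cap\set{\mb w:\norm{\mb w}{}\ge r_g}$, and then assemble the pieces using the composition rule for Lipschitz functions (Lemma V.5 of~\cite{sun2015complete_a}) together with the elementary product rule (a product of functions bounded by $M_1,M_2$ with Lipschitz constants $L_1,L_2$ is $(M_1L_2+M_2L_1)$-Lipschitz); since an average of $L$-Lipschitz functions is $L$-Lipschitz, this directly produces the stated $L_g$. Concretely, writing $\mb q(\mb w)=[\mb w;q_n(\mb w)]$ with $q_n(\mb w)=\sqrt{1-\norm{\mb w}{}^2}$, and letting $\ol{\mb x}_k\in\R^{n-1}$ be the first $n-1$ coordinates of $(\mb x_0)_k$ and $\zeta_k\doteq[(\mb x_0)_k]_n$, the same differentiation used in the proof of Lemma~\ref{lem:alg_gradient_lb} gives
\begin{align*}
\frac{\mb w^*\nabla g(\mb w;\mb X_0)}{\norm{\mb w}{}} \;=\; \frac{1}{p}\sum_{k=1}^p \underbrace{\tanh\paren{\frac{\mb q(\mb w)^*(\mb x_0)_k}{\mu}}}_{A_k(\mb w)}\;\underbrace{\paren{\frac{\mb w^*\ol{\mb x}_k}{\norm{\mb w}{}} - \frac{\norm{\mb w}{}}{q_n(\mb w)}\zeta_k}}_{B_k(\mb w)}.
\end{align*}

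First I would record the elementary facts valid on $\Gamma\cap\set{\norm{\mb w}{}\ge r_g}$. Since $\mb w\in\Gamma$ forces $1-\norm{\mb w}{}^2>1/(4n)$, we get $q_n(\mb w)>1/(2\sqrt n)$ and hence $\norm{\mb w}{}/q_n(\mb w)<2\sqrt n$; also $\norm{\ol{\mb x}_k}{}\le\sqrt n\norm{\mb X_0}{\infty}$ and $\abs{\zeta_k}\le\norm{\mb X_0}{\infty}$. For the Lipschitz ingredients: the radial map $t\mapsto t/\sqrt{1-t^2}$ has derivative $(1-t^2)^{-3/2}<(4n)^{3/2}=8n^{3/2}$ on $\Gamma$, so $\mb w\mapsto\norm{\mb w}{}/q_n(\mb w)$ is $8n^{3/2}$-Lipschitz on the (convex) ball $\Gamma$; $q_n$ has gradient $-\mb w/q_n$ of norm $<2\sqrt n$ on $\Gamma$, so $q_n$ is $2\sqrt n$-Lipschitz there; and $\mb w\mapsto\mb w/\norm{\mb w}{}$ is $O(1/r_g)$-Lipschitz on $\set{\norm{\mb w}{}\ge r_g}$, either via the two-point estimate $\norm{\mb w/\norm{\mb w}{}-\mb w'/\norm{\mb w'}{}}{}\le 2\norm{\mb w-\mb w'}{}/\min(\norm{\mb w}{},\norm{\mb w'}{})$ or, more simply, by bounding its gradient by $1/\norm{\mb w}{}$ on the small balls on which the proposition is actually invoked (cf.\ the $\mc B(\mb w,\tfrac{3\Delta}{2\pi\sqrt n})$ in Lemma~\ref{lem:alg_gradient_func}). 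Then I would bound the two factors: $\abs{A_k}\le1$, and since $\tanh$ is $1$-Lipschitz while $\mb w\mapsto\mu^{-1}\mb q(\mb w)^*(\mb x_0)_k=\mu^{-1}(\mb w^*\ol{\mb x}_k+q_n(\mb w)\zeta_k)$ has gradient $\mu^{-1}(\ol{\mb x}_k-q_n(\mb w)^{-1}\mb w\,\zeta_k)$ of norm $\le\mu^{-1}(\norm{\ol{\mb x}_k}{}+2\sqrt n\abs{\zeta_k})\le 3\sqrt n\norm{\mb X_0}{\infty}/\mu$, the composition rule gives $A_k$ is $O(\sqrt n\norm{\mb X_0}{\infty}/\mu)$-Lipschitz; and $\abs{B_k}\le\norm{\ol{\mb x}_k}{}+2\sqrt n\abs{\zeta_k}\le 3\sqrt n\norm{\mb X_0}{\infty}$ with $B_k$ being $\paren{\tfrac{2\sqrt n}{r_g}+8n^{3/2}}\norm{\mb X_0}{\infty}$-Lipschitz by the estimates above. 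The product rule then yields that $A_kB_k$ is Lipschitz with constant $\tfrac{2\sqrt n}{r_g}\norm{\mb X_0}{\infty}+8n^{3/2}\norm{\mb X_0}{\infty}+O\!\paren{\tfrac{n}{\mu}\norm{\mb X_0}{\infty}^2}$, which after the harmless weakening of the last term to $\tfrac{4n^2}{\mu}\norm{\mb X_0}{\infty}^2$ is exactly the claimed $L_g$; averaging over $k$ preserves this constant.

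I do not anticipate a genuine difficulty: the only care needed is (i) reading off the product structure of $\mb w^*\nabla g/\norm{\mb w}{}$ correctly — the key being that the $\mb e_n$-direction contribution collapses into the $-(\norm{\mb w}{}/q_n)\zeta_k$ term — and (ii) the Lipschitz estimates for the ``radial'' ingredients $\mb w/\norm{\mb w}{}$, $\norm{\mb w}{}/q_n(\mb w)$, and $q_n(\mb w)$, where precisely the lower bound $q_n>1/(2\sqrt n)$ supplied by $\mb w\in\Gamma$ and the lower bound $\norm{\mb w}{}\ge r_g$ keep the constants finite. The one mildly delicate point is that the annulus $\Gamma\cap\set{\norm{\mb w}{}\ge r_g}$ is not convex, so the $\mb w/\norm{\mb w}{}$ Lipschitz claim should be justified either by the two-point inequality above or by noting that the Lipschitz property of $\mb w^*\nabla g/\norm{\mb w}{}$ is only ever used locally, on small balls around points of $\Gamma\cap\set{\norm{\mb w}{}\ge r_g}$, on which a gradient bound suffices.
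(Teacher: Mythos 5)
Your proof is correct, and it follows the route one would expect for this proposition (whose proof lives in the companion paper~\cite{sun2015complete_a} and is only restated here): expand $\mb w^*\nabla g(\mb w)/\norm{\mb w}{}$ as an average of products $A_kB_k$ exactly as in the computation inside the proof of Lemma~\ref{lem:alg_gradient_lb}, control each factor via $q_n(\mb w)>1/(2\sqrt n)$ on $\Gamma$ and the convexity-free two-point estimate for $\mb w/\norm{\mb w}{}$ on $\set{\norm{\mb w}{}\ge r_g}$, and assemble with the product and composition rules. The only nit is that your cross term comes out as $9n\norm{\mb X_0}{\infty}^2/\mu$, which is dominated by the stated $4n^2\norm{\mb X_0}{\infty}^2/\mu$ only for $n\ge 3$ --- harmless, since for such $n$ your constant is in fact slightly sharper than the one claimed.
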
 

\begin{proposition}[Lipschitz for Hessian around zero]\label{prop:lip-hessian-zero}
Fix any $\rconvex \in (0, 1/2)$. Over the set $\Gamma \cap \set{\mb w: \norm{\mb w}{} \le \rconvex}$, $\nabla^2 g( \mb w; \mb X_0)$ is $\Lconvex$-Lipschitz with  
\begin{align*}
\Lconvex \;\le\;\frac{4n^2}{\mu^2} \norm{\mb X_0}{\infty}^3+\frac{4n}{\mu}\norm{\mb X_0}{\infty}^2 +  \frac{8\sqrt{2}\sqrt{n}}{\mu}\norm{\mb X_0}{\infty}^2 + 8\norm{\mb X_0}{\infty}. 
\end{align*}
\end{proposition}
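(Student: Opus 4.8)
The plan is to compute $\nabla^2 g(\mb w;\mb X_0)$ explicitly by the chain rule and then bound its Lipschitz constant in $\mb w$ factor-by-factor, using the fact that the restriction $\norm{\mb w}{}\le\rconvex<1/2$ keeps the reparametrization $\mb q(\mb w)=(\mb w;\sqrt{1-\norm{\mb w}{}^2})$ well-conditioned (in contrast to the boundary of $\Gamma$, where $q_n\to 0$). Write $w_n\doteq\sqrt{1-\norm{\mb w}{}^2}$, $(\mb x_0)_k=(\bar{\mb x}_k;x_{k,n})$, and $z_k(\mb w)\doteq\mb q(\mb w)^*(\mb x_0)_k=\mb w^*\bar{\mb x}_k+w_n x_{k,n}$. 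Differentiating $g(\mb w;\mb X_0)=\frac1p\sum_k h_\mu(z_k(\mb w))$ twice gives
\begin{align*}
\nabla^2 g(\mb w;\mb X_0)=\frac1p\sum_{k=1}^p\brac{\ddot h_\mu(z_k)\,\nabla z_k\,\nabla z_k^*+\dot h_\mu(z_k)\,\nabla^2 z_k},
\end{align*}
where $\nabla z_k=\bar{\mb x}_k-(x_{k,n}/w_n)\mb w$ and $\nabla^2 z_k=-x_{k,n}\paren{\mb I/w_n+\mb w\mb w^*/w_n^3}$ (only the last coordinate of $\mb q(\mb w)$ is nonlinear in $\mb w$, so there are no other second-order contributions).

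On $\Gamma\cap\set{\norm{\mb w}{}\le\rconvex}$ one has $w_n\ge\sqrt3/2$, so $1/w_n$, $\norm{\mb w}{}/w_n$, $\norm{\mb w}{}^2/w_n^3$, $\norm{\mb w}{}/w_n^3$, etc. are all bounded by absolute constants. Combined with $\abs{z_k}\le\norm{(\mb x_0)_k}{}\le\sqrt n\norm{\mb X_0}{\infty}$ (as $\mb q(\mb w)\in\bb S^{n-1}$), this yields uniform magnitude bounds $\norm{\nabla z_k}{}\le 2\sqrt n\norm{\mb X_0}{\infty}$, $\norm{\nabla^2 z_k}{}\le O(\sqrt n)\norm{\mb X_0}{\infty}$, together with $\abs{\dot h_\mu(z_k)}\le1$ and $\abs{\ddot h_\mu(z_k)}\le1/\mu$ from Lemma~\ref{lem:derivatives_basic_surrogate}.

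Next I would establish Lipschitz-in-$\mb w$ bounds for each factor. The scalar $z_k(\mb w)$ is $2\sqrt n\norm{\mb X_0}{\infty}$-Lipschitz (its gradient is bounded above); composing with the $\tfrac1\mu$-Lipschitzness of $\dot h_\mu$ and $\tfrac{2}{\mu^2}$-Lipschitzness of $\ddot h_\mu$ (Lemma~\ref{lem:derivatives_basic_surrogate}) via the composition rule for Lipschitz maps (Lemma V.5 of~\cite{sun2015complete_a}) controls $\dot h_\mu(z_k(\mb w))$ and $\ddot h_\mu(z_k(\mb w))$. For the matrix-valued factors, $\mb w\mapsto\mb w/w_n$ has Jacobian $\mb I/w_n+\mb w\mb w^*/w_n^3$ of $O(1)$ operator norm on the region, hence $\nabla z_k(\mb w)$ is $O(\norm{\mb X_0}{\infty})$-Lipschitz; then $\nabla z_k\nabla z_k^*$ is Lipschitz with constant $\lesssim\norm{\mb X_0}{\infty}\cdot\sqrt n\norm{\mb X_0}{\infty}$ via $\norm{\mb a\mb a^*-\mb b\mb b^*}{}\le\norm{\mb a-\mb b}{}(\norm{\mb a}{}+\norm{\mb b}{})$. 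Similarly $\mb w\mapsto\mb I/w_n+\mb w\mb w^*/w_n^3$ is $O(1)$-Lipschitz on $\norm{\mb w}{}<1/2$ (its Jacobian involves only $1/w_n^3$ and $1/w_n^5$ terms, all bounded there), so $\nabla^2 z_k(\mb w)$ is $O(\norm{\mb X_0}{\infty})$-Lipschitz.

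Finally, apply the product rule for Lipschitz functions: the Lipschitz constant of a product of bounded, Lipschitz factors is at most the sum, over each factor, of its Lipschitz constant times the product of the sup-norms of the others. Applied to $\ddot h_\mu(z_k)\nabla z_k\nabla z_k^*$ this produces contributions scaling like $\tfrac{n^2}{\mu^2}\norm{\mb X_0}{\infty}^3$ and $\tfrac{\sqrt n}{\mu}\norm{\mb X_0}{\infty}^2$, and applied to $\dot h_\mu(z_k)\nabla^2 z_k$ it produces $\tfrac{n}{\mu}\norm{\mb X_0}{\infty}^2$ and $\norm{\mb X_0}{\infty}$; since every per-sample bound is uniform in $k$, averaging over $k$ preserves them, giving $\nabla^2 g(\cdot;\mb X_0)$ is $\Lconvex$-Lipschitz with $\Lconvex$ of the stated four-term form. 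The main obstacle is entirely the bookkeeping — in particular verifying that the Lipschitz estimates for the matrix-valued pieces $\nabla z_k\nabla z_k^*$ and $\nabla^2 z_k$ genuinely rely on $\norm{\mb w}{}\le\rconvex<1/2$ to keep $1/w_n$ (and especially $1/w_n^3$) bounded: this is precisely why the near-zero bound here is clean, whereas the global estimate in Proposition~\ref{prop:lip-hessian-negative} must pay an extra $1/\rconcave$ and carry additional terms — and checking that differentiating the rank-one and curvature pieces drops no cross-terms.
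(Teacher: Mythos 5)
This proposition is not proved in this paper: it appears in the appendix under the heading ``restatements of several technical results from~\cite{sun2015complete_a},'' so there is no in-paper argument to compare yours against; the proof lives in the companion paper. Judged on its own, your route is the natural one and I see no gap in it. The chain-rule decomposition $\nabla^2 g = \tfrac1p\sum_k[\ddot h_\mu(z_k)\nabla z_k\nabla z_k^* + \dot h_\mu(z_k)\nabla^2 z_k]$ is correct, your formulas for $\nabla z_k$ and $\nabla^2 z_k$ are right (only the last coordinate of $\mb q(\mb w)$ is nonlinear), and the decisive point --- that $q_n=\sqrt{1-\norm{\mb w}{}^2}\ge \sqrt3/2$ on $\norm{\mb w}{}\le \rconvex<1/2$, so $1/q_n$ and $1/q_n^3$ are absolute constants --- is exactly what makes this ``around zero'' estimate clean and free of the $1/\rconcave$ penalty in Proposition~\ref{prop:lip-hessian-negative}. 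The magnitude bounds, the composition rule through $\dot h_\mu,\ddot h_\mu$ (Lemma~\ref{lem:derivatives_basic_surrogate}), the identity $\norm{\mb a\mb a^*-\mb b\mb b^*}{}\le\norm{\mb a-\mb b}{}(\norm{\mb a}{}+\norm{\mb b}{})$, and the sum-of-products Lipschitz rule for bounded factors together yield precisely the four-term structure of $\Lconvex$; convexity of the ball $\norm{\mb w}{}\le\rconvex$ makes the gradient-based Lipschitz estimates legitimate on the set. The only thing to be careful about in the write-up is that your natural accounting gives slightly different powers in places (e.g., the leading contribution comes out as $O(n^{3/2}\norm{\mb X_0}{\infty}^3/\mu^2)$ rather than $4n^2\norm{\mb X_0}{\infty}^3/\mu^2$, and the $\dot h_\mu$-Lipschitz-times-$\sup\norm{\nabla^2 z_k}{}$ term as $O(\sqrt n)$ rather than $O(n)$); these are dominated by the stated terms for moderate $n$ but do not literally reproduce the quoted constants, which come from the companion paper's particular bookkeeping. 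That is a matter of constants, not of substance, and you have flagged it as such.
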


\begin{lemma}\label{lem:X-infinty-tail-bound}
For any $\theta \in \paren{0, 1}$, consider the random matrix $\mb X \in \R^{n_1 \times n_2}$ with $\mb X \sim_{i.i.d.} \mathrm{BG}\paren{\theta}$. Define the event $\event_\infty \doteq \Brac{ 1 \le \norm{\mb X}{\infty}\leq 4\sqrt{\log\paren{n p}}}$. It holds that 
\begin{align*}
\prob{\event_\infty^c} \leq \theta \paren{np}^{-7} + \exp\paren{-0.3\theta np}. 
\end{align*}
\end{lemma}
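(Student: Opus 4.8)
The plan is to split the bad event $\event_\infty^c = \{\norm{\mb X}{\infty} < 1\} \cup \{\norm{\mb X}{\infty} > 4\sqrt{\log(np)}\}$ into its two pieces, bound each separately, and add. Write $N$ for the total number of entries of $\mb X$ (the quantity denoted $np$ in the statement). Since the entries are i.i.d.\ copies of $X \doteq \Omega Z$ with $\Omega \sim \mathrm{Ber}(\theta)$ and $Z \sim \mc N(0,1)$, both pieces reduce to one-dimensional tail estimates: $\prob{|X| = 0} = 1-\theta$ and $\prob{|X| > t} = \theta\,\prob{|Z| > t}$ for every $t \ge 0$.

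For the lower tail, independence of the entries gives $\prob{\norm{\mb X}{\infty} < 1} = \prob{|X| < 1}^N = (1 - \theta\,\prob{|Z| \ge 1})^N \le \exp(-\theta\,\prob{|Z|\ge 1}\,N)$, so it suffices to check the crude numerical fact $\prob{|Z| \ge 1} = 2\Phi^c(1) \ge 0.3$, which already yields the second summand $\exp(-0.3\,\theta N)$. For the upper tail, a union bound over the $N$ entries gives $\prob{\norm{\mb X}{\infty} > t} \le N\theta\,\prob{|Z| > t}$, and I would invoke the elementary Gaussian bound $\prob{|Z| > t} = 2\Phi^c(t) \le \exp(-t^2/2)$ for $t \ge 0$. (This follows because the map $t \mapsto 2\Phi^c(t)e^{t^2/2}$ has derivative $2e^{t^2/2}(t\Phi^c(t) - \phi(t)) \le 0$, by the Mills-ratio bound $\Phi^c(t) \le \phi(t)/t$ which is the Type II estimate of Lemma~\ref{lem:gaussian_tail_est}, and equals $1$ at $t = 0$.) Taking $t = 4\sqrt{\log N}$ so that $t^2/2 = 8\log N$ gives $\prob{\norm{\mb X}{\infty} > 4\sqrt{\log N}} \le N\theta\,N^{-8} = \theta N^{-7}$, the first summand; adding the two bounds and recalling $N = np$ finishes the proof.

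There is no real obstacle: the argument is routine. The only points needing a little care are fixing the constant $0.3$ in the lower-tail exponent — the true value $2\Phi^c(1) \approx 0.3173$ leaves only a thin margin, so one must use the analytic tail bounds of Lemma~\ref{lem:gaussian_tail_est} carefully or supply a short direct estimate of $\int_{-1}^{1}\phi(z)\,dz$ — and choosing the form of the Gaussian upper tail so that the constant $4$ in the threshold $4\sqrt{\log(np)}$ produces precisely the exponent $-7$ after the union bound.
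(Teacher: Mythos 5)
Your proof is correct, and it is the standard argument: the paper itself does not prove this lemma here (it is restated from the companion paper \cite{sun2015complete_a}), but the intended proof is exactly your decomposition --- a union bound with the Gaussian tail $\prob{\abs{Z}>t}\le e^{-t^2/2}$ at $t=4\sqrt{\log(np)}$ for the upper piece, and independence plus $(1-x)^{np}\le e^{-x\,np}$ with $2\Phi^c(1)\approx 0.3173\ge 0.3$ for the lower piece. You correctly flag the only delicate spot: the Type~III lower bound of Lemma~\ref{lem:gaussian_tail_est} gives only $2\Phi^c(1)\ge 0.299$, so one does need a slightly sharper direct estimate (e.g., a truncated alternating series for $\int_0^1 e^{-z^2/2}\,dz$) to certify the constant $0.3$.
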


\begin{lemma} \label{lem:pert_key_mag}
For any $\theta \in \paren{0, 1/2}$, suppose $\mb A_0$ is complete with condition number $\kappa\paren{\mb A_0}$ and $\mb X_0 \sim_{i.i.d.} \mathrm{BG}\paren{\theta}$. Provided $p \ge C\kappa^4\paren{\mb A_0} \theta n^2 \log (n \theta \kappa\paren{\mb A_0})$, one can write $\overline{\mb Y}$ as defined in~\eqref{eq:precon_def} as 
\begin{align*}
\overline{\mb Y} = \mb U \mb V^* \mb X_0 + \mb \Xi \mb X_0, 
\end{align*}
for a certain $\mb \Xi$ obeying $\norm{\mb \Xi}{} \le 20\kappa^4\paren{\mb A} \sqrt{\frac{\theta n \log p}{p}}$, with probability at least $1-p^{-8}$. Here $\mb U \mb \Sigma \mb V^* = \mathtt{SVD}\paren{\mb A_0}$, and $C$ is a positive numerical constant. 
\end{lemma}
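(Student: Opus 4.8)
The plan is to reduce the complete case to a small perturbation of the orthogonal case by an exact algebraic identification, and then bound the perturbation with standard matrix-perturbation estimates. First I would substitute $\mb Y = \mb A_0 \mb X_0$ and observe that, writing $\mb P \doteq \frac{1}{\theta p}\mb X_0 \mb X_0^*$ and $\wt{\mb B} \doteq \mb A_0 \mb P \mb A_0^*$, one has $\mb Y \mb Y^* = \theta p\, \wt{\mb B}$, hence $\ol{\mb Y} = \sqrt{\theta p}\paren{\mb Y \mb Y^*}^{-1/2}\mb Y = \wt{\mb B}^{-1/2}\mb A_0 \mb X_0$. With $\mb U \mb \Sigma \mb V^* = \mathtt{SVD}\paren{\mb A_0}$ we have $\mb B \doteq \mb A_0 \mb A_0^* = \mb U \mb \Sigma^2 \mb U^*$, so $\mb B^{-1/2}\mb A_0 = \mb U \mb V^*$. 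Thus, \emph{by construction}, if we set $\mb \Xi \doteq \paren{\wt{\mb B}^{-1/2} - \mb B^{-1/2}}\mb A_0$, then $\ol{\mb Y} = \mb B^{-1/2}\mb A_0 \mb X_0 + \mb \Xi \mb X_0 = \mb U \mb V^* \mb X_0 + \mb \Xi \mb X_0$ identically in $\mb X_0$, so it only remains to bound $\norm{\mb \Xi}{}$.

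For the quantitative input I would apply Lemma~\ref{lem:bg_identity_diff} (with $n_1 = n$, $n_2 = p$): whenever $p \gtrsim n^2 \log n$, with probability at least $1 - p^{-8}$ one has $\norm{\mb P - \mb I}{} \le \eps$ with $\eps \doteq 10\sqrt{\theta n \log p / p}$. On this event, $\norm{\wt{\mb B} - \mb B}{} = \norm{\mb A_0\paren{\mb P - \mb I}\mb A_0^*}{} \le \eps\, \sigma_{\max}^2\paren{\mb A_0}$, and $\wt{\mb B} \succeq \paren{1-\eps}\mb A_0 \mb A_0^* \succeq \paren{1-\eps}\sigma_{\min}^2\paren{\mb A_0}\mb I$ (here the hypothesis on $p$ also makes $\eps < 1$, so $\wt{\mb B}$ is invertible), giving $\lambda_{\min}\paren{\wt{\mb B}} \ge \paren{1-\eps}\sigma_{\min}^2\paren{\mb A_0}$ while $\lambda_{\min}\paren{\mb B} = \sigma_{\min}^2\paren{\mb A_0}$.

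Next I would pass from $\wt{\mb B}-\mb B$ to $\wt{\mb B}^{-1/2}-\mb B^{-1/2}$ using the identity $\wt{\mb B}^{-1/2} - \mb B^{-1/2} = \wt{\mb B}^{-1/2}\paren{\mb B^{1/2} - \wt{\mb B}^{1/2}}\mb B^{-1/2}$ (valid since $\mb B^{1/2}$ commutes with $\mb B^{-1/2}$ and likewise for $\wt{\mb B}$, no cross-commutativity needed) together with the operator-norm square-root perturbation inequality $\norm{\mb B^{1/2} - \wt{\mb B}^{1/2}}{} \le \norm{\mb B - \wt{\mb B}}{}\big/\paren{\lambda_{\min}^{1/2}\paren{\mb B} + \lambda_{\min}^{1/2}\paren{\wt{\mb B}}}$, which I would prove by solving the Sylvester equation $\mb B^{1/2}\mb E + \mb E \wt{\mb B}^{1/2} = \mb B - \wt{\mb B}$ for $\mb E = \mb B^{1/2}-\wt{\mb B}^{1/2}$ via $\mb E = \int_0^\infty e^{-t\mb B^{1/2}}\paren{\mb B - \wt{\mb B}}e^{-t\wt{\mb B}^{1/2}}\,dt$. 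Combining these with $\norm{\wt{\mb B}^{-1/2}}{} = \lambda_{\min}^{-1/2}\paren{\wt{\mb B}}$, $\norm{\mb B^{-1/2}}{} = \lambda_{\min}^{-1/2}\paren{\mb B}$, and then multiplying by $\norm{\mb A_0}{} = \sigma_{\max}\paren{\mb A_0}$, I obtain
\begin{align*}
\norm{\mb \Xi}{} \;\le\; \frac{\sigma_{\max}^3\paren{\mb A_0}}{\sigma_{\min}^3\paren{\mb A_0}}\cdot\frac{\eps}{\sqrt{1-\eps}\,\paren{1 + \sqrt{1-\eps}}} \;=\; \kappa^3\paren{\mb A_0}\cdot\frac{\eps}{\sqrt{1-\eps}\,\paren{1 + \sqrt{1-\eps}}}.
\end{align*}
Finally, the hypothesis $p \ge C\kappa^4\paren{\mb A_0}\theta n^2 \log\paren{n\theta\kappa\paren{\mb A_0}}$ (for $C$ large) forces both $p \gtrsim n^2\log n$, so Lemma~\ref{lem:bg_identity_diff} applies, and $\eps \le 3/4$, whence $\sqrt{1-\eps}\paren{1+\sqrt{1-\eps}} \ge 1/2$ and $\norm{\mb \Xi}{} \le 2\kappa^3\paren{\mb A_0}\eps = 20\kappa^3\paren{\mb A_0}\sqrt{\theta n\log p/p} \le 20\kappa^4\paren{\mb A_0}\sqrt{\theta n\log p/p}$.

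I do not anticipate a real obstacle: the decomposition $\ol{\mb Y} = \mb U \mb V^*\mb X_0 + \mb \Xi\mb X_0$ is an exact identity holding for every $\mb X_0$, so the only stochastic ingredient is the single spectral estimate of Lemma~\ref{lem:bg_identity_diff}, after which everything is deterministic matrix bookkeeping. The two spots that need care are the square-root perturbation inequality (which I would establish via the Sylvester-equation integral formula rather than cite loosely) and the tracking of powers of $\kappa\paren{\mb A_0}$; the $\kappa^4$ in the stated sample-complexity hypothesis is deliberately conservative (only $\kappa^3$ is needed for the conclusion of this lemma) and is kept for uniformity with the downstream requirements, e.g.\ Theorem~\ref{thm:geometry_comp}.
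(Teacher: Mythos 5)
Your proposal is correct and takes the same route as the proof this paper defers to the companion paper: the exact algebraic decomposition $\ol{\mb Y}=\wt{\mb B}^{-1/2}\mb A_0\mb X_0$ with $\mb \Xi=(\wt{\mb B}^{-1/2}-\mb B^{-1/2})\mb A_0$, concentration of $\tfrac{1}{p\theta}\mb X_0\mb X_0^*$ via Lemma~\ref{lem:bg_identity_diff}, and a deterministic perturbation bound for the inverse matrix square root, with the correct $\kappa^3\le\kappa^4$ bookkeeping. The one blemish is your remark that the stated hypothesis ``forces $p\gtrsim n^2\log n$'': for small $\theta$ (e.g.\ $\theta\sim 1/n$) it does not, since Lemma~\ref{lem:bg_identity_diff} requires $p>Cn^2\log n$ with no $\theta$ factor; but this mismatch is inherent to the lemma as stated rather than to your argument, and is moot in the regimes where the lemma is invoked.
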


\begin{lemma}\label{lem:U-moments-bound}
Suppose $\mb z, \mb z' \in \R^n$ are independent and obey $\mb z \sim_{i.i.d.} \mathrm{BG}\paren{\theta}$ and $\mb z' \sim_{i.i.d.} \mc N\paren{0, 1}$. Then, for any fixed vector $\mb v \in \R^n$, it holds that 
\begin{align*}
\expect{\abs{\mb v^* \mb z}^m} & \le \expect{\abs{\mb v^* \mb z'}^m } = \bb E_{Z \sim \mc N\paren{0, \norm{\mb v}{}^2}}\brac{\abs{Z}^m}, \\
\expect{\norm{\mb z}{}^m} & \le \expect{\norm{\mb z'}{}^m}, 
\end{align*}
for all integers $m \ge 1$. 
\end{lemma}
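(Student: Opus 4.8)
The plan is to compare with the Gaussian case by conditioning (for the first bound) and by an almost-sure coupling (for the second). Write the Bernoulli--Gaussian vector as $z_i = \Omega_i Z_i$, where $\Omega_i \sim_{i.i.d.} \mathrm{Ber}(\theta)$ and $Z_i \sim_{i.i.d.} \mc N(0,1)$ are all independent, and set $\mb \Omega = (\Omega_1, \dots, \Omega_n)$ and $\mc J \doteq \supp(\mb \Omega) \subseteq [n]$.

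For the first inequality, I would condition on $\mb \Omega$. Given $\mb \Omega$, the quantity $\mb v^* \mb z = \sum_{i \in \mc J} v_i Z_i$ is a centered Gaussian random variable with variance $\norm{\mb v_{\mc J}}{}^2 \le \norm{\mb v}{}^2$. Invoking the elementary scaling identity $\bb E_{X \sim \mc N(0,\sigma^2)}\brac{\abs{X}^m} = \sigma^m\, \bb E_{G \sim \mc N(0,1)}\brac{\abs{G}^m}$, which is nondecreasing in $\sigma \ge 0$, yields
\begin{align*}
\bb E\brac{\abs{\mb v^* \mb z}^m \mid \mb \Omega} \;=\; \norm{\mb v_{\mc J}}{}^m\, \bb E_{G \sim \mc N(0,1)}\brac{\abs{G}^m} \;\le\; \norm{\mb v}{}^m\, \bb E_{G \sim \mc N(0,1)}\brac{\abs{G}^m} \;=\; \bb E_{Z \sim \mc N(0, \norm{\mb v}{}^2)}\brac{\abs{Z}^m}.
\end{align*}
Taking the expectation over $\mb \Omega$ on the left gives $\expect{\abs{\mb v^* \mb z}^m} \le \bb E_{Z \sim \mc N(0, \norm{\mb v}{}^2)}\brac{\abs{Z}^m}$, and the asserted equality is simply the observation that $\mb v^* \mb z' \sim \mc N(0, \norm{\mb v}{}^2)$.

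For the second inequality, I would put $\mb z$ and $\mb z'$ on a common probability space via the coupling $z_i' = Z_i$, $z_i = \Omega_i Z_i$. Since $\Omega_i \in \{0,1\}$, we have $z_i^2 = \Omega_i Z_i^2 \le Z_i^2 = (z_i')^2$ coordinatewise, hence $\norm{\mb z}{}^2 \le \norm{\mb z'}{}^2$ surely; as $t \mapsto t^{m/2}$ is nondecreasing on $[0, \infty)$ for every integer $m \ge 1$, this gives $\norm{\mb z}{}^m \le \norm{\mb z'}{}^m$ surely, and taking expectations concludes. The coupling leaves the marginal laws of $\mb z$ and of $\mb z'$ unchanged, so the resulting inequality between $\expect{\norm{\mb z}{}^m}$ and $\expect{\norm{\mb z'}{}^m}$ is exactly the claimed one.

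There is no substantive obstacle here; the only points requiring (minor) care are the monotonicity of the Gaussian absolute moment in the standard deviation used at the conditional step, and the remark that introducing an auxiliary coupling for the second part is harmless because the two expectations being compared each depend on only one of the two vectors. An alternative route for the second bound, avoiding coupling, is to note that conditionally on $\mb \Omega$ one has $\norm{\mb z}{}^2 \sim \chi^2(\abs{\mc J})$, which is stochastically dominated by $\chi^2(n) \stackrel{d}{=} \norm{\mb z'}{}^2$; but the coupling argument is cleaner.
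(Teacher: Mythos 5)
Your proof is correct: conditioning on the Bernoulli support to reduce the first bound to the monotonicity of Gaussian absolute moments in the standard deviation, and the coordinatewise domination $\Omega_i Z_i^2 \le Z_i^2$ for the second, are exactly the natural arguments here. The paper itself defers the proof to the companion paper~\cite{sun2015complete_a}, and the argument there is of the same form, so there is nothing to flag.
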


\end{appendices}

\section*{Acknowledgment}
We thank Dr. Boaz Barak for pointing out an inaccurate comment made on overcomplete dictionary learning using SOS. We thank Cun Mu and Henry Kuo of Columbia University for discussions related to this project. We also thank the anonymous reviewers for their careful reading of the paper, and for comments which have helped us to substantially improve the presentation. JS thanks the Wei Family Private Foundation for their generous support. This work was partially supported by grants ONR N00014-13-1-0492, NSF 1343282, NSF CCF 1527809, NSF IIS 1546411, and funding from the Moore and Sloan Foundations.

\bibliographystyle{IEEEtran}
\bibliography{IEEEabrv,../dl_tit,../ncvx}








\end{document}